\newcommand{\hi}{\mathcal{H}}
\newcommand{\his}{\mathcal{H}_{\mathcal{S}}}
\newcommand{\hir}{\mathcal{H}_{\mathcal{R}}}
\newcommand{\hirprim}{\mathcal{H}_{\mathcal{R}'}}
\newcommand{\hik}{\mathcal{K}}
\newcommand{\Poinc}{\mathcal{P}}
\newcommand{\Poincup}{\mathcal{P}_+^\uparrow}
\newcommand{\mink}{\mathbb{M}}
\newcommand{\euc}{\mathbb{E}}
\newcommand{\Lup}{L_+^\uparrow}
\newcommand{\mLup}{\mathcal{\Lup}}
\newcommand{\Lupd}{\overline{\Lup}}
\newcommand{\supp}{\text{supp }}
\newcommand{\Y}{\yen}
\newcommand{\Bor}{\text{Bor}}
\def\framestate{\mathscr{D}(\mathcal{H}_\mathcal{R})}
\def\framestatep{\mathscr{D}(\mathcal{H}_\mathcal{R}')}
\def\state{\mathscr{D}(\mathcal{H})}
\def\systemstate{\mathscr{D}(\mathcal{H}_\mathcal{S})}
\def\operationalstate{\mathfrak{S}_\R}
\newcommand{\bh}{\mathcal{B}(\hi)}
\newcommand{\bhs}{\mathcal{B}(\his)}
\newcommand{\bhsr}{\mathcal{B}(\his \otimes \hir)}
\newcommand{\bhr}{\mathcal{B}(\hir)}
\newcommand{\bhrp}{\mathcal{B}(\hirprim)}
\newcommand{\lh}{\mathcal{L}(\hi)}
\newcommand{\ldhs}{\mathcal{L}(D_S,\his)}
\newcommand{\thi}{\mathcal{T}(\hi)}
\newcommand{\thr}{\mathcal{T}(\hir)}
\newcommand{\ths}{\mathcal{T}(\his)}
\newcommand{\thsr}{\mathcal{T}(\his \otimes \hir)}
\newcommand{\vacstate}{\systemstate^{\Poincup}}
\newcommand{\eh}{\mathcal{E}(\hi)} 
\newcommand{\Eff}{\mathcal{E}}
\newcommand{\ehir}{\mathcal{E}(\hir)}
\newcommand{\R}{\mathcal{R}}
\renewcommand{\S}{\mathcal{S}}
\newcommand{\A}{\mathcal{A}}
\newcommand{\C}{\mathcal{C}}
\newcommand{\D}{\mathcal{D}}
\newcommand{\I}{\mathcal{I}}
\newcommand{\T}{\mathcal{T}}
\newcommand{\Afrak}{\mathfrak{A}}
\newcommand{\W}{\mathcal{W}}
\newcommand{\Wfrak}{\mathfrak{W}}
\newcommand{\Wscr}{\mathscr{W}}
\newcommand{\Wbf}{\mathbf{W}}
\newcommand{\Cn}{\mathbb{C}}
\newcommand{\Nn}{\mathbb{N}}
\newcommand{\id}{\mathbb{1}}
\newcommand{\E}{\mathsf{E}}
\newcommand{\F}{\mathsf{F}}
\newcommand{\Ef}{\mathsf{F}}
\newcommand{\G}{\mathsf{G}}
\newcommand{\h}{\hspace{2pt}}
\newcommand\indep{\protect\mathpalette{\protect\independenT}{\perp}}
\def\independenT#1#2{\mathrel{\rlap{$#1#2$}\mkern2mu{#1#2}}}
\newcommand\indepER{\protect\mathpalette{\protect\independenTER}{\perp}}
\def\independenTER#1#2{\mathrel{\rlap{$#1#2$}\mkern2mu{#1#2}^{\R}}}
\newcommand\indepERsigma{\protect\mathpalette{\protect\independenTERsigma}{\perp}}
\def\independenTERsigma#1#2{\mathrel{\rlap{$#1#2$}\mkern2mu{#1#2}^{\R}_{\sigma}}}
\newcommand\indepsigma{\protect\mathpalette{\protect\independenTsigma}{\perp}}
\def\independenTsigma#1#2{\mathrel{\rlap{$#1#2$}\mkern2mu{#1#2}_{\sigma}}}
\pgfplotsset{compat=1.15}
\newtheorem{theorem}{Theorem}[section] 
\newtheorem{lemma}[theorem]{Lemma} 
\newtheorem{corollary}[theorem]{Corollary}
\newtheorem{proposition}[theorem]{Proposition}
\newtheorem{definition}[theorem]{Definition}
\newtheorem{example}[theorem]{Example}
\newtheorem{notation}[theorem]{Notation}
\newcommand{\U}{\mathcal{U}} 
\newcommand{\V}{\mathcal{V}}
\tikzset{every picture/.style={line width=0.75pt}} 
\tikzset{
pattern size/.store in=\mcSize, 
pattern size = 5pt,
pattern thickness/.store in=\mcThickness, 
pattern thickness = 0.3pt,
pattern radius/.store in=\mcRadius, 
pattern radius = 1pt}
\pgfpoint{\mcSize}{\mcSize}}
\tikzset{
pattern size/.store in=\mcSize, 
pattern size = 5pt,
pattern thickness/.store in=\mcThickness, 
pattern thickness = 0.3pt,
pattern radius/.store in=\mcRadius, 
pattern radius = 1pt}
\pgfpoint{\mcSize}{\mcSize}}
\pgfpoint{\mcSize}{\mcSize}}
\tikzset{every picture/.style={line width=0.75pt}} 
\tikzset{
pattern size/.store in=\mcSize, 
pattern size = 5pt,
pattern thickness/.store in=\mcThickness, 
pattern thickness = 0.3pt,
pattern radius/.store in=\mcRadius, 
pattern radius = 1pt}
\pgfpoint{\mcSize}{\mcSize}}
\pgfpathcircle\pgfpointorigin{\mcRadius}
\tikzset{
pattern size/.store in=\mcSize, 
pattern size = 5pt,
pattern thickness/.store in=\mcThickness, 
pattern thickness = 0.3pt,
pattern radius/.store in=\mcRadius, 
pattern radius = 1pt}
\pgfpoint{\mcSize}{\mcSize}}
\pgfpathcircle\pgfpointorigin{\mcRadius}
\tikzset{
pattern size/.store in=\mcSize, 
pattern size = 5pt,
pattern thickness/.store in=\mcThickness, 
pattern thickness = 0.3pt,
pattern radius/.store in=\mcRadius, 
pattern radius = 1pt}
\pgfpoint{\mcSize}{\mcSize}}
\pgfpathcircle\pgfpointorigin{\mcRadius}
\tikzset{
pattern size/.store in=\mcSize, 
pattern size = 5pt,
pattern thickness/.store in=\mcThickness, 
pattern thickness = 0.3pt,
pattern radius/.store in=\mcRadius, 
pattern radius = 1pt}
\pgfpoint{\mcSize}{\mcSize}}
\pgfpathcircle\pgfpointorigin{\mcRadius}
\tikzset{every picture/.style={line width=0.75pt}} 
\tikzset{
pattern size/.store in=\mcSize, 
pattern size = 5pt,
pattern thickness/.store in=\mcThickness, 
pattern thickness = 0.3pt,
pattern radius/.store in=\mcRadius, 
pattern radius = 1pt}
\pgfpoint{\mcSize}{\mcSize}}
\pgfpathcircle\pgfpointorigin{\mcRadius}
\tikzset{
pattern size/.store in=\mcSize, 
pattern size = 5pt,
pattern thickness/.store in=\mcThickness, 
pattern thickness = 0.3pt,
pattern radius/.store in=\mcRadius, 
pattern radius = 1pt}
\pgfpoint{\mcSize}{\mcSize}}
\pgfpathcircle\pgfpointorigin{\mcRadius}
\tikzset{every picture/.style={line width=0.75pt}} 
\author[1]{Samuel Fedida\thanks{sylf2@cam.ac.uk}}
\author[2,3,4]{Jan G\l{}owacki\thanks{jan.glowacki@oeaw.ac.at}}
\affil[1]{\emph{Centre for Quantum Information and Foundations, DAMTP, Centre for Mathematical Sciences, University of Cambridge, Wilberforce Road, Cambridge CB3 0WA, UK}}
\affil[2]{\emph{Basic Research Community for Physics, Leipzig, GERMANY}}
\affil[3]{\emph{Institute for Quantum Optics and Quantum Information, Vienna, AUSTRIA}}
\affil[4]{\emph{Department of Computer Science, University of Oxford, UK}}
\title{\vspace{-40pt}\textbf{Foundations of Relational Quantum Field Theory I: Scalars
}}
\date{\vspace{-30pt}}
\begin{document}
\maketitle
\thispagestyle{empty}

\begin{abstract}
    We develop foundations for a relational approach to quantum field theory (RQFT) based on the operational quantum reference frames (QRFs) framework considered in a relativistic setting. Unlike other efforts in combining QFT with QRFs, we use the latter to provide novel mathematical and conceptual foundations for the former. We focus on scalar fields in Minkowski spacetime and discuss the emergence of relational local (bounded) observables and (pointwise) fields from the consideration of Poincaré-covariant (quantum) frame observables defined over the space of (classical) inertial reference frames. We recover a relational notion of Poincaré covariance, with transformations on the system directly linked to the state preparations of the QRF. We introduce and analyse various causality conditions, and construct an explicit example of a covariant scalar relational quantum field which is causal relative to operationally meaningful preparations of a relativistic QRF. The theory makes direct contact with established foundational approaches to QFT. We demonstrate that the vacuum expectation values derived within our framework reproduce many of the essential properties of Wightman functions and carry out a detailed comparison of the proposed formalism with Wightman QFT with the frame smearing functions describing the QRF's localisation uncertainty playing the role of the Wightmanian test functions. We also show how the properties of algebras generated by relational local observables suitably extend the core axioms of Algebraic QFT. This work is an early step in revisiting the mathematical foundations of QFT from a relational and operational perspective.
\end{abstract}

\bigskip

\renewcommand{\contentsname}{}
\begin{multicols}{2}
\tableofcontents
\end{multicols}

\newpage

\section{Introduction}
\label{sec:Introduction}

In recent decades, a significant conceptual shift has been underway in quantum foundations, with relational and perspectival approaches gaining considerable traction \cite{rovelli_relational_1996,timpson_quantum_2008,gambini_montevideo_2009}. The core idea of this shift is that the properties of a quantum system, including its state, are not intrinsic and absolute. Instead, they are meaningful only in relation to another physical system.

The modern formalism of quantum reference frames (QRFs) (see \cite{loveridge_quantum_2012,giacomini_quantum_2019,vanrietvelde_change_2020,de_la_hamette_perspective-neutral_2021,castro-ruiz_relative_2025,belenchia_quantum_2018,kabel_quantum_2023,kabel_quantum_2025,giacomini_einsteins_2023,de_vuyst_gravitational_2025,hoehn_quantum_2023,loveridge_relative_2019,loveridge_symmetry_2018,carette_operational_2025,glowacki_operational_2023} for a far from exhaustive selection of contributions) provides a rigorous set of tools to investigate this relational nature of quantum theory. Classically, a reference frame is a passive, abstract coordinate system. The QRF formalism, however, is built on the premise that any real-world reference frame---be it a ruler or a clock---must ultimately be a physical object subject to quantum mechanics. Consequently, a QRF is treated as a quantum system in its own right. Transforming physical descriptions between different QRFs is no longer a simple change of coordinates; it requires a transformation of the quantum states themselves, revealing that how a system is described is inextricably linked to the quantum frame serving as~reference.

This paper develops foundations for a relational approach to QFT (RQFT) starting from first principles, namely by investigating the operational approach to QRFs \cite{carette_operational_2025,glowacki_operational_2023} (see also preceding developments \cite{loveridge_quantum_2012,loveridge_symmetry_2018,loveridge_relativity_2017,loveridge_relative_2019,loveridge_relational_2020}, and recent advances \cite{glowacki_relativization_2024,glowacki_quantum_2024,fewster_quantum_2024}) in the context of relativistic symmetries. We begin with only the Poincaré group as the underlying symmetry structure and demonstrate how local observables and the very notion of a quantum field can be understood as \emph{derived} concepts that emerge from the relational formalism. Remarkably, the framework arising from our purely QRF-theoretic considerations exhibits striking structural similarities to the established axiomatic frameworks of Quantum Field Theory (QFT)---namely, Wightman QFT (WQFT) \cite{wightman_theorie_1964,wightman_fields_1965,streater_outline_1975,streater_pct_1989} and Algebraic QFT (AQFT) \cite{haag_algebraic_1964,halvorson_algebraic_2006,fewster_algebraic_2019}. While other recent works have explored relational ideas in the context of specific QFT models (see e.g. \cite{hoehn_matter_2023,fewster_quantum_2024,de_vuyst_gravitational_2025,carrozza_edge_2024}), the work presented here is, to our knowledge, the first to propose a foundational, constructive take on Relational~QFT, following up on the ideas presented in \cite{glowacki_towards_2024}.

In this paper, we focus on developing a relational theory of scalar quantum fields on Minkowski spacetime. We believe the results achieved constitute a promising starting point for a new research program (as outlined in \cite{glowacki_towards_2024}). It is our hope that by re-deriving field-theoretic structures from operational and relational principles, and doing so by alternative means to those traditionally used in axiomatic QFT research, this approach may shed new light on the very foundations of the subject. Such a theory may not only offer new perspectives on long-standing foundational issues within QFT but could also have profound implications for quantum gravity, where a relational description of spacetime and matter is widely believed to be a necessity. To ensure this work is accessible to a broad audience, no previous exposure to any QRF formalism is assumed. \\

\subsection{Organization of the paper}

The paper is structured as follows. In Sec. \ref{sec:relativistic RFs}, we introduce the formalism via a thought experiment, with the notions of relativistic QRFs and relational quantum fields emerging from heuristic considerations. We also place the formalism in a broader context of the operational approach to QRFs. 

In Sec. \ref{sec: covariance} we investigate a relational notion of (scalar) Poincaré covariance, directly relating the system and the frame. The transformation properties of relational local observables follow immediately from the covariance of frame observables. Here we also introduce relational local quantum fields, which can be understood as (spacetime) integral kernels of relational local observables. Their transformation properties turn out to resemble those of \enquote{physicists'} QFT, with a relational twist. By considering a special class of frames -- namely globally-oriented ones -- we minimize the gap between the relational and non-relational transformation rules.

In Sec. \ref{sec:Vacuum state orthogonality} we discuss an important no-go result by Giannitrapani, which in our context leads to the conclusion that quantum frames, in order to be useful, cannot be prepared in vacuum states. In the context of Algebraic QFT, in which the result is stated, we infer that frame observables should generally be supported on the orthogonal complement of the vacuum sector.

In Sec. \ref{sec:Causality}, we explore various implementations of causality which is now generally to be understood relationally with respect to QRFs. We first discuss how Einstein causality can be seen as an operational constraint on both the system and the frames. We then discuss an ontological implementation of the relativistic no-signalling principle, namely microcausality, and how it implies the relational notion of Einstein causality previously discussed. We construct an explicit nontrivial example of a relational quantum field which is (up to finite precision) causal with respect to operationally meaningful preparations of the frame.

In Sec. \ref{sec:vevs}, we define analogues of the n-point vacuum expectation values for our relational quantum fields, and provide many properties that these satisfy---notably relativistic invariance at the level of both the system and the frame, spectral properties of its kernel, Hermiticity, local commutativity, positivity conditions and cluster decomposition. We then discuss time-ordered vacuum expectation values, and examine properties that these satisfy under microcausality.

In Sec. \ref{sec:Wightman comparison}, we examine the axioms of Wightman QFT before comparing and contrasting them with their RQFT analogues. We show that Wightman quantum fields share many similarities with the relational quantum fields developed in this paper, with certain important caveats---namely the issue of (un)boundedness, the existence and covariance properties of the kernels.

In Sec. \ref{sec:AQFT comparison}, we review the main principles of Algebraic QFT, namely isotony, covariance and causality, and show that the local algebras naturally associated to relational local observables satisfy suitably extended versions of such properties, appropriate for the relational and operational setup presented in this work. We also show that, when slightly adjusted, the relational local algebras also satisfy the time-slice axiom, which is a relevant condition for the determinism of time-evolution in spacetime.

We finish with a summary and a brief description of further research perspectives arising from the presented considerations in Sec. \ref{sec:Summary}.

\subsection{Notation}

Before we move to the main part of the paper, let us introduce some (standard) notation. We denote by $\hi$ a (separable) Hilbert space, by $\bh$ the operator algebra of bounded operators on $\hi$, and by $\state$ the set of density operators (states) on $\hi$. The set of effects, i.e., the unit interval in $\bh^{sa}$ is denoted by $\eh$, the set of trace-class operators by $\thi$. We denote by $L = SO(1,d-1)$ the Lorentz group and by $\Lup = SO(1,d-1)^\uparrow$ the proper orthochronous subgroup in $d$ spacetime dimensions. Likewise, we write $\Poinc = T(1,d-1) \rtimes L$ for the Poincaré group and $\Poincup = T(1,d-1) \rtimes \Lup$ for the proper orthochronous subgroup. We denote by $\mink$ $d$-dimensional Minkowski spacetime and work in mostly-minus signature, and by $T(1,3)$ the spacetime translation group (in $4$ dimensions). By $\bh^G$ and $\state^G$, where $G$ is a (typically locally compact second countable Hausdorff) group, we denote the space of invariant operators and states, respectively. Whenever $x,y \in \mink$ are spacelike separated, we write $x \indep y$. Likewise, whenever $\U, \V \subset \mink$ are spacelike separated, we write $\U \indep \V$. The basics of information-theoretic/operational perspective on infinite-dimensional Quantum Theory, some Measure Theory and Distribution Theory (needed for the comparison with Wightman and Algebraic QFT) are included in App. \ref{App: tech prem.}.

\section{Relational quantum fields}

\label{sec:relativistic RFs}

We will introduce our framework for relational quantum field theory via a thought experiment described, although not in such detail, already in \cite{glowacki_towards_2024}. These heuristic considerations support the definition of a relativistic quantum reference frame (Def. \ref{def: relativistic QRF} below) as a principal QRF for the orthochronous Poincar{\'e} group and relational local observables (RLOs) as conditioned relative observables, as defined within the operational approach \cite{carette_operational_2025}.

\subsection{A tale of quantum frames and fields}

\label{sec:tale RFs}

Consider first the set $F$ of (abstract, classical) inertial frames. Elements $X \in F$ are thought of as different choices of viewpoints from which physical systems can be described. Assuming we are interested in mutually inertial frames that can (in principle) be aligned with each other by means of a physically meaningful procedure, for any pair of elements $X,X' \in F$ there exist a \emph{unique} proper orthochronous Poincar{\'e} transformation, denoted $\Poincup(X,X') \in \Poincup$, relating them. This makes $F$ a \emph{torsor} for the Poincar{\'e} group, meaning that we have a (left, free and transitive) action of $\Poincup$ on $F$ fixed by requiring $(a,\Lambda) = \Poincup(X,(a,\Lambda) \cdot X)$. The set $F$ is then non-uniquely homeomorphic to $\Poincup$ itself
\begin{equation}
    F \cong T(1,d-1) \times \Lup \,.
\end{equation}
Indeed, one can construct a whole family of such homeomorphisms by picking an arbitrary frame $X_0 \in F$ and defining
\begin{equation}
    T_{X_0}: F \ni X \mapsto \Poincup(X_0,X)  \in \Poincup,
\end{equation}
so that $T_{X_0}((a,\Lambda)\cdot X_0) = (a,\Lambda)$. Now notice that the group of translations is homeomorphic to the Minkowski spacetime so we get, now uniquely
\begin{equation}
    F = \mink \times \mLup \, ,
\end{equation}
where $\mink$ is the Minkowski space-time, understood as a manifold, and $\mLup$ is a torsor for the Lorentz group $\Lup$, understood as the (global\footnote{In extensions of this kinematical picture to curved spacetimes, the choice of coordinate system should be understood locally, as a tetrad at a point.\cite{glowacki_towards_2024}}) choice of coordinate system; under this identification, we will write $(x,\lambda) \in F$. We then see the the structure of the set of frames is ultimately that of a (trivial) Lorentzian \emph{principal bundle} over the Minkowski space-time.\footnote{Results achieved in this work will be generalized to the context of non-trivial Lorentz frame bundles and spin bundles in the future; see \cite{glowacki_towards_2024} for preliminary description of this research direction.} This is portrayed in Fig. \ref{fig:Visualisation pointwise}. \\

Now consider a quantum system $\S$. According to Quantum Theory, the description of $\S$ should be given in terms of quantum states, so elements of $\systemstate$---the space of density operators on a separable complex Hilbert space.\footnote{See App. \ref{App: tech prem.} for the mathematical basics of infinite-dimensional Quantum Theory. We work with quantum systems modelled by type $I$ von Neumann algebras, postponing generalization to arbitrary such algebras and beyond, into the realm of order unit spaces (see e.g. \cite{kuramochi_compact_2020} for a recent exposition), to future work.} Likewise, according to Special Relativity, such state should be given for any choice of an inertial frame. We write
\begin{equation}
\rho^{(\cdot)}: F \ni X \mapsto \rho^{(X)} \in \systemstate
\end{equation}
for such an assignment. Before the frame has been specified, it then only makes sense to specify the state of $\S$ as the image $\rho^{(F)} \subset \framestate$, so a $\Poincup$-orbit in $\systemstate$. The expectation values of observables given a state $\rho^{(X')}$ with $X'=(a,\Lambda)\cdot X$ ought to be related to those given the state $\rho^{(X)}$ in the original frame through a projective unitary transformation\footnote{Note that unitary representations of the Poincaré group are necessarily infinite-dimensional. Further note that if one wants to work with nonlinear quantum theory (e.g. with objective collapse models) it is possible to define non-unitary actions in this scenario. One may then expect to recover a modified notion of QRFs and, consequently, one may attempt to build a nonlinear RQFT for such models. Since building a QFT for objective collapse models is notoriously tricky, this framework could bring a new light to such efforts.} $U_\S(a,\Lambda)$ on $\his$; we thus assume such representation exists and is ultraweakly continuous.\footnote{The ultraweak topology is natural to consider in operational contexts as it reflects convergence of expectation values, however it might turn out to be useful and operationally meaningful to weaken this requirement in the future. The (pre)dual topology on $\systemstate$ is referred to as \emph{operational} \cite{carette_operational_2025}. See App. \ref{App: tech prem.} for the definitions.} We take this unitary action of the Poincaré group to act on the left on $\bhs$ via
\begin{equation}
    (a,\Lambda) \cdot \phi := U_\S(a,\Lambda)\,\phi\, U^\dagger_\S(a,\Lambda) \text{  for all  } \phi \in \bhs
\end{equation}
with a dual right action on $\systemstate$ written by\footnote{We will occasionally also use the left action on the states given by $(a,\Lambda) \cdot \rho := \rho \cdot (a,\Lambda)^{-1} = U_\S(a,\Lambda)\,\rho \, U_\S(a,\Lambda)^\dagger$.}
\begin{equation}
    \rho \cdot (a,\Lambda) := U^\dagger_\S(a,\Lambda)\,\rho\, U_\S(a,\Lambda) \text{  for all  } \rho \in \systemstate.
\end{equation}
We then have\footnote{Notice that, due to the left action of $\Poincup$ on $F$ and right on $\systemstate$, the frame-to-state assignment map is not equivariant but satisfy
\begin{equation}
    \rho^{((a_1,\Lambda_1)(a_2,\Lambda_2)\cdot X)} = \rho^{((a_1,\Lambda_1) \cdot X)}\cdot (a_2,\Lambda_2)  = \rho^{(X)}\cdot (a_2,\Lambda_2)(a_1,\Lambda_1)
\end{equation}
instead. This transformation reflects the correct order in which the Poincar{\'e} transformations are applied to the system.}
\begin{equation}
    \rho^{((a,\Lambda) \cdot X)} = \rho \cdot (a,\Lambda),
\end{equation}
and the expectation value of an operator $\phi \in \bhs$ in a transformed frame reads
\begin{equation}
	\Tr[\rho^{((a,\Lambda) \cdot X)} \phi]  
    = \Tr[U_\S(a,\Lambda)^\dagger \, \rho^{(X)} U_\S(a,\Lambda) \, \phi]
    = \Tr[\rho^{(X)} \, U_\S(a,\Lambda) \phi U_\S(a,\Lambda)^\dagger]
    = \Tr[\rho^{(X)} \, (a,\Lambda) \cdot \phi] \, .
\end{equation}

Further, consider a classical physical reference system $\R$. Think of it as physical rods and a clock combined to provide a system of coordinates, say on the table of a lab\footnote{More precisely, one should consider this thought experiment without a mention of the table (which could be seen as a reference system of its own), however we include it here for visualisation purposes.}. Such a coordinate system will necessarily be deficient in the sense of unavoidably limited precision of the measurement of distances and time differences it can provide. Moreover, consider a situation when we cannot be sure of whether the frame has been rotated or otherwise transformed (with respect to the system $\S$) or not. For example, imagine you leave the quantum state system $\S$ in a state $\rho^{(X)}$ in your lab, and when you come back the next day to work you learn that your colleague has been in the lab since then and maybe touched the table on which you prepared your rods and clocks. You assign a probability $q \in [0,1]$ that your relative state has been left untouched, and a probability $1-q$ that your colleague has rotated the frame by $(a,\Lambda) \in \Poincup$ so that the frame is in $X' = (a,\Lambda) \cdot X$. In such a situation, the expectation values are given by the (proper\footnote{In unitary quantum theory, proper and improper mixtures are operationally indistinguishable. This need not be the case for nonlinear extensions of quantum theory \cite{fedida_mixture_2025}.}) mixed state that you should assign to $\S$ relative to $\R$, which is given by
\begin{equation} 
	\rho=q \rho^{(X)} + (1-q) \rho^{((a,\Lambda)\cdot X)} \Rightarrow \Tr[\rho\, \phi] = q \Tr[\rho^{(X)}\, \phi] + (1-q) \Tr[U_\S(a,\Lambda)^\dagger \rho^{(X)} U_\S(a,\Lambda) \,\phi] \, .
\end{equation}
One can see this as saying that the frame is being rotated with some probability with a fixed state or, equivalently, that the state is rotated with some probability with a fixed frame when evaluated on that operator $\phi \in \bhs$ - there is an intrinsic notion of \emph{covariance} at the level of this relational description, details of which will be uncovered in due course. Notice also, that if we think of $X$ and $X'$ as related by a physical transformation, the Poincar{\'e} group element relating them will indeed belong to the proper orthochronous subgroup $\Poincup \subset \Poinc$. \\

More generally, if one assigns a probability distribution $\mu \in \text{Prob}(F)$\footnote{The set of frames $F$ inherits a topology from $\Poinc$, and thus becomes a measurable space under the Borel $\sigma$-algebra of subsets.} describing the uncertainty of the frame orientation, the corresponding state will be given by\footnote{This integral can be understood in terms of Bochner theory for Banach space valued functions \cite{bochnera,diestel,bourgin,Mikusinski1978}.}
\begin{equation}
	\rho^{(\mu)} := \int_F \rho^{(X)} \, d\mu(X) \, .
\end{equation}

By choosing an arbitrary frame $X_0 \in F$ we can use the homeomorphism $T_{X_0}$ to parametrize the space of frames by Poincar{\'e} group elements and write

\begin{equation}
    \rho^{(\mu)} =  \int_F \rho^{(X)} \, d\mu(X) =
    \int_{\Poincup} \rho^{((a,\Lambda) \cdot X_0)} \, d\mu_{X_0}(a,\Lambda)
    =\int_{\Poincup} \rho^{(X_0)} \cdot (a,\Lambda) \, d\mu_{X_0}(a,\Lambda)
    \, ,
\end{equation}
where $\mu_{X_0}$ denotes a measure on $\Poincup$ given by $\mu \circ T_{X_0}^{-1}$. Crucially, the result of this integration does \emph{not} depend on the choice of $X_0 \in F$.\footnote{Indeed, taking $X'_0 = (a',\Lambda')\cdot X_0$ gives
\begin{equation}
    \int_F \rho^{(X)} \, d\mu(X) =
    \int_{\Poincup}\rho^{((a,\Lambda)(a',\Lambda') \cdot X_0)} \, d\mu_{X'_0}(a,\Lambda)
    = \int_{\Poincup}\rho^{((a,\Lambda)(a',\Lambda') \cdot X_0)} \, d\mu_{X_0}((a,\Lambda)(a',\Lambda'))
    = \int_{\Poincup}\rho^{((a,\Lambda) \cdot X_0)} \, d\mu_{X_0}(a,\Lambda),
\end{equation}
where we have used the easily verifiable fact that $T_{X'_0} = R_{(a',\Lambda')^{-1}} \circ T_{X_0}$ with $R$ denoting the action of $\Poincup$ on itself on the right, which gives $\mu_{X'_0} = \mu_{X_0} \circ R_{(a',\Lambda')}$, the last step being a simple change of variables.
}
In what follows, we will use this fact to allow ourselves to abuse notation  and write formulas like $\rho \cdot (x,\lambda)$ and $U_\S(x,\lambda)^\dagger \rho U_\S(x,\lambda)$ to denote $\rho^{(x,\lambda)}$, and always integrate directly over $F = \mink \times \mLup$. For any operator $\phi \in \bhs$ we then have 
\begin{equation}\begin{aligned}
    \Tr[\rho^{(\mu)} \phi] &= \Tr[\int_{F} \rho \cdot (x,\lambda) \, \phi \, d\mu(x,\lambda)] \\
    &=\Tr[ \int_{F} U_\S(x,\lambda)^\dagger \rho U_\S(x,\lambda) \, \phi] d\mu(x,\lambda) \\
    &=\int_{F} \Tr[ \rho \, U_\S(x,\lambda) \phi U_\S(x,\lambda)^\dagger] d\mu(x,\lambda) \\
    &= \Tr[\rho \, \int_{F} (x,\lambda) \cdot \phi \,d\mu(x,\lambda)]
    =: \Tr[\rho \, \phi^{(\mu)}] \,,
\end{aligned}
\end{equation}
where we have moved from the \enquote{Schrödinger}-like picture to a \enquote{Heisenberg}-like picture and introduced the notation
\begin{equation}
    \phi^{(\mu)} := 
    \int_{F} (x,\lambda) \cdot \phi \,d\mu(x,\lambda) \, 
\end{equation}
for what can be called a $\mu$-\emph{relative observable}. The integrand leads to a natural definition.

\begin{definition}
    Let $\lambda \in \mLup$. We call
    \begin{equation}
        \hat{\phi}_{\lambda} : \mink \ni x \mapsto (x,\lambda) \cdot \phi \in \bhs
    \end{equation}
    a \emph{Lorentz-oriented (absolute) quantum field}. 
\end{definition}

In the above notation we can write an $\mu$-relative observable as

\begin{equation}
    \phi^{(\mu)} = 
    \int_{F} \hat{\phi}_{\lambda}(x) \,d\mu(x,\lambda) \, .
\end{equation}

Just like the $\mu$-relative observables, relational local observables and fields (see below) will also be constructed from such Lorentz-oriented quantum fields. Under the Poincar{\'e} transformations they transform as

\begin{equation}
        (a,\Lambda) \cdot \hat{\phi}_\lambda(x) = (a,\Lambda) \cdot (x,\lambda) \cdot \phi = (\Lambda x + a,\Lambda \lambda) \cdot\phi = \hat{\phi}_{\Lambda \lambda} (\Lambda x + a) \,,
\end{equation}
which gives the following transformation law of the $\mu$-relative observables
\begin{equation}
    (a,\Lambda) \cdot \phi^{(\mu)} = \int_F\hat{\phi}_{\Lambda \lambda} (\Lambda x + a) \,d\mu(x,\lambda).
\end{equation}

Here comes the crucial step, lifting our consideration to the realm of relational quantum physics: suppose now that the probabilistic uncertainty in the frame orientation stems from the fact that $\R$ is a quantum system. If the physical rods and clock are quantum mechanical systems themselves, such uncertainty is unavoidable. Thus, a \enquote{quantum frame} will be endowed with a Hilbert space $\hir$  and an \emph{observable of orientation} modelled as a positive-operator valued measure (POVM) $\E_\R : \Bor(F) \to \ehir$ on the space of frames---the data necessary and sufficient to capture probabilistic orientation entirely within the context of operational quantum physics \cite{glowacki_operational_2023}.\footnote{See App. \ref{App: POVMs} for a brief introduction to operator-valued measures and basic constructions they are subject to.} We will refer to $\mathcal{H_S}$ as the Hilbert space of the \textit{system} $\mathcal{S}$, and to $\mathcal{H_R}$ as the Hilbert space of the \textit{frame} $\mathcal{R}$. Here is the formal definition, based on the general one as given in \cite{carette_operational_2025}.

\begin{definition}\label{def: relativistic QRF}
    A \emph{relativistic quantum reference frame} (QRF) is a tuple $\R = (U_\R,\E_\R,\hir)$, where
    \begin{itemize}
        
        \item $\hir$ is a separable complex Hilbert space,
        
        \item $U_\R : \Poincup \to U(\hir)$ is an ultraweakly continuous projective unitary representation,\footnote{For a unitary representation of the Poincaré group, ultraweak, strong and weak continuity are all equivalent \cite{dixmier_13_1982}.}
        
        \item $\E_\R : \Bor(F) \to \ehir$ is a POVM, i.e., for any \emph{countable} family $(\Delta_n)_n$ of disjoint sets in $\Bor(F)$ we have
            \begin{equation}
        \E_\R\Bigg(\bigcup_{n=1}^\infty \Delta_n\Bigg) = \sum_{n=1}^{\infty} \E_\R(\Delta_n)
            \end{equation}
        in the ultraweak topology. We say that $\E_\R$ is \emph{normalised} if $\E_\R(F) = \mathbb{1}_{\bhr}$. Moreover, $\E_\R$ is required to be $\Poincup$-\emph{covariant},\footnote{Covariant POVMs are sometimes called \emph{system of covariance}, or \emph{system of imprimitivity} when the POVM is sharp \cite{carette_operational_2025}.} meaning that for all $(a,\Lambda) \in \Poincup$ and all $W \in \Bor(F)$ we have
        \begin{equation}
             (a,\Lambda) \cdot \E_\R(W) = \E_\R((a,\Lambda) \cdot W)\,,
        \end{equation}
        where $(a,\Lambda) \cdot \E_\R(W)$ denotes the left unitary action $U_\R(a,\Lambda)\E_\R(W) U_\R(a,\Lambda)^\dagger$ and \begin{equation}(a,\Lambda) \cdot W = \{(a,\Lambda) \cdot (x,\lambda) \mid (x,\lambda) \in W\} = \{(\Lambda x + a,\Lambda \lambda) \mid (x,\lambda) \in W\}\,.
        \end{equation}
        We call this POVM the \emph{frame observable} and refer to $\R$ as being $\Poincup$-\emph{covariant}.
    \end{itemize}
    An \emph{oriented relativistic quantum reference frame} is a tuple $(\R,\omega)$, where $\omega \in \framestate$.
\end{definition}

A relativistic QRF is simply a quantum reference frame based on the $\Poincup$-set $F$, as defined in \cite{carette_operational_2025}. If the frame observable is normalised, we can now take $\mu$ to be a probability measure arising from the Born rule:
\begin{equation}
	\mu(\W) \equiv \mu^{\E_\R}_\omega(\W) := \Tr[\omega \E_{\R}(\W)] \in [0,1] \,,
\end{equation}
where $\W \in \Bor(F)$ is a (Borel) subset of $F$. The POVM should again really be thought of as capturing the essence of \enquote{quantum rods and clocks}, for which the state determines their orientations and localisations in a probabilistic fashion. Given a relativistic QRF, a quantum state $\omega \in \framestate$ gives a probability distribution to play the role of $\mu$, only now it has quantum mechanical origin. This is depicted in Fig. \ref{fig:Visualisation probability distributions}, and leads to the following crucial definition.

\begin{figure*}[t!]
    \centering
    \begin{subfigure}[t]{0.4\textwidth}
    \begin{tikzpicture}[x=0.75pt,y=0.75pt,yscale=-1,xscale=1]

\draw   (209.56,47.56) .. controls (219.21,39.02) and (439.36,31.69) .. (431.16,47.06) .. controls (422.96,62.43) and (430.68,190.51) .. (430.68,212.71) .. controls (430.68,234.91) and (198.64,236.62) .. (188.99,211) .. controls (179.34,185.39) and (199.91,56.1) .. (209.56,47.56) -- cycle ;
\draw    (458.79,42.79) -- (458.79,221.25) ;
\draw    (188.03,263.09) .. controls (207.5,244.66) and (415.24,281.87) .. (434.53,256.26) ;
\draw  [fill={rgb, 255:red, 0; green, 0; blue, 0 }  ,fill opacity=1 ] (301.22,134.21) .. controls (301.22,132.56) and (301.98,131.22) .. (302.91,131.22) .. controls (303.85,131.22) and (304.6,132.56) .. (304.6,134.21) .. controls (304.6,135.87) and (303.85,137.21) .. (302.91,137.21) .. controls (301.98,137.21) and (301.22,135.87) .. (301.22,134.21) -- cycle ;
\draw   (302.39,110.16) .. controls (302.39,100.97) and (308.84,93.52) .. (316.8,93.52) .. controls (324.75,93.52) and (331.2,100.97) .. (331.2,110.16) .. controls (331.2,119.35) and (324.75,126.8) .. (316.8,126.8) .. controls (308.84,126.8) and (302.39,119.35) .. (302.39,110.16) -- cycle ;
\draw    (328.1,100.33) -- (316.8,110.16) ;
\draw [line width=3.75]    (280.06,122.53) -- (300.01,122.53) ;
\draw [line width=3.75]    (280.06,124.24) -- (280.06,98.63) ;
\draw [line width=3.75]    (280.06,122.53) -- (294.1,107.16) ;

\draw (439.07,245.66) node [anchor=north west][inner sep=0.75pt]    {$\mink$};
\draw (467.46,31.83) node [anchor=north west][inner sep=0.75pt]    {$\mathcal{\Lup}$};
\draw (174.59,62.94) node [anchor=north west][inner sep=0.75pt]    {$F$};

\end{tikzpicture}
    \caption{A classical inertial reference frame, which can be thought of as a clock and rods, sharply localised in spacetime and with a definite Lorentz orientation.}
    \label{fig:Visualisation pointwise}
    \end{subfigure}
    \qquad \qquad
    \begin{subfigure}[t]{0.4\textwidth}
        \begin{tikzpicture}[x=0.75pt,y=0.75pt,yscale=-0.98,xscale=0.8]
\draw   (202.6,45.39) .. controls (213.62,36.76) and (465.17,29.36) .. (455.8,44.89) .. controls (446.43,60.42) and (455.25,189.83) .. (455.25,212.26) .. controls (455.25,234.69) and (190.12,236.42) .. (179.09,210.54) .. controls (168.07,184.65) and (191.57,54.02) .. (202.6,45.39) -- cycle ;
\draw    (491.79,40.58) -- (491.79,220.89) ;
\draw    (177.99,263.16) .. controls (200.24,244.54) and (437.61,282.14) .. (459.66,256.26) ;
\draw   (311.75,74.28) .. controls (311.75,65) and (318.82,57.47) .. (327.55,57.47) .. controls (336.28,57.47) and (343.36,65) .. (343.36,74.28) .. controls (343.36,83.57) and (336.28,91.1) .. (327.55,91.1) .. controls (318.82,91.1) and (311.75,83.57) .. (311.75,74.28) -- cycle ;
\draw    (339.95,64.35) -- (327.55,74.28) ;
\draw [line width=3.75]    (287.36,86.79) -- (309.97,86.79) ;
\draw [line width=3.75]    (287.36,88.51) -- (287.36,62.63) ;
\draw [line width=3.75]    (287.36,86.79) -- (303.27,71.26) ;
\draw  [pattern=_nikqpx0v0,pattern size=6pt,pattern thickness=0.75pt,pattern radius=0pt, pattern color={rgb, 255:red, 0; green, 0; blue, 0}][dash pattern={on 4.5pt off 4.5pt}] (259.02,109.42) .. controls (270.04,100.8) and (391.31,95.62) .. (410.6,111.15) .. controls (429.89,126.68) and (378.08,174.99) .. (328.47,141.34) .. controls (278.86,107.7) and (226.5,209.5) .. (215.47,183.62) .. controls (204.45,157.74) and (248,118.05) .. (259.02,109.42) -- cycle ;
\draw  [fill={rgb, 255:red, 0; green, 0; blue, 0 }  ,fill opacity=1 ] (312.29,105.86) .. controls (312.29,104.24) and (313.12,102.93) .. (314.16,102.93) .. controls (315.19,102.93) and (316.02,104.24) .. (316.02,105.86) .. controls (316.02,107.47) and (315.19,108.78) .. (314.16,108.78) .. controls (313.12,108.78) and (312.29,107.47) .. (312.29,105.86) -- cycle ;
\draw   (389.17,186.29) .. controls (380.65,184.94) and (374.75,176.4) .. (376,167.21) .. controls (377.25,158.02) and (385.17,151.67) .. (393.69,153.02) .. controls (402.21,154.37) and (408.11,162.91) .. (406.86,172.1) .. controls (405.61,181.29) and (397.69,187.64) .. (389.17,186.29) -- cycle ;
\draw    (384.09,155.16) -- (391.43,169.66) ;
\draw [line width=3.75]    (363.37,182.37) -- (374.05,158.66) ;
\draw [line width=3.75]    (364.61,183.49) -- (346.03,166.68) ;
\draw [line width=3.75]    (363.37,182.37) -- (359.73,155.6) ;
\draw  [fill={rgb, 255:red, 0; green, 0; blue, 0 }  ,fill opacity=1 ] (374.48,143.23) .. controls (374.48,141.61) and (375.32,140.31) .. (376.35,140.31) .. controls (377.38,140.31) and (378.22,141.61) .. (378.22,143.23) .. controls (378.22,144.85) and (377.38,146.15) .. (376.35,146.15) .. controls (375.32,146.15) and (374.48,144.85) .. (374.48,143.23) -- cycle ;
\draw   (263.66,167.08) .. controls (271.85,163.02) and (281.72,166.51) .. (285.7,174.86) .. controls (289.68,183.21) and (286.26,193.27) .. (278.07,197.33) .. controls (269.87,201.39) and (260,197.9) .. (256.02,189.55) .. controls (252.04,181.19) and (255.46,171.13) .. (263.66,167.08) -- cycle ;
\draw    (285.28,189.73) -- (270.86,182.2) ;
\draw [line width=3.75]    (234.47,194.05) -- (247.53,211.64) ;
\draw [line width=3.75]    (233.36,195.18) -- (250.07,178.21) ;
\draw [line width=3.75]    (234.47,194.05) -- (253.68,196.24) ;
\draw  [fill={rgb, 255:red, 0; green, 0; blue, 0 }  ,fill opacity=1 ] (238.08,171.6) .. controls (238.08,169.99) and (238.92,168.68) .. (239.95,168.68) .. controls (240.98,168.68) and (241.82,169.99) .. (241.82,171.6) .. controls (241.82,173.22) and (240.98,174.53) .. (239.95,174.53) .. controls (238.92,174.53) and (238.08,173.22) .. (238.08,171.6) -- cycle ;
\draw    (233.11,91.72) -- (267.88,125.02) ;
\draw [shift={(270.04,127.09)}, rotate = 223.77] [fill={rgb, 255:red, 0; green, 0; blue, 0 }  ][line width=0.08]  [draw opacity=0] (8.93,-4.29) -- (0,0) -- (8.93,4.29) -- cycle    ;
\draw  [dash pattern={on 0.84pt off 2.51pt}]  (213.71,172.51) -- (213.71,257.17) ;
\draw  [dash pattern={on 0.84pt off 2.51pt}]  (415.04,117.17) -- (415.71,263.84) ;
\draw  [pattern=_ng5lqn19c,pattern size=6pt,pattern thickness=0.75pt,pattern radius=0pt, pattern color={rgb, 255:red, 0; green, 0; blue, 0}] (258.37,256.64) .. controls (280.37,256.64) and (317.71,259.31) .. (340.37,260.64) .. controls (363.04,261.97) and (399.04,253.31) .. (415.71,263.84) .. controls (432.37,274.37) and (187.04,261.71) .. (213.71,257.17) .. controls (240.37,252.64) and (236.37,256.64) .. (258.37,256.64) -- cycle ;
\draw    (188.86,235.48) -- (250.23,258.26) ;
\draw [shift={(253.04,259.31)}, rotate = 200.37] [fill={rgb, 255:red, 0; green, 0; blue, 0 }  ][line width=0.08]  [draw opacity=0] (8.93,-4.29) -- (0,0) -- (8.93,4.29) -- cycle    ;

\draw (466.12,245.71) node [anchor=north west][inner sep=0.75pt]    {$\mink$};
\draw (498.94,29.68) node [anchor=north west][inner sep=0.75pt]    {$\mathcal{\Lup}$};
\draw (163.53,61.09) node [anchor=north west][inner sep=0.75pt]    {$F$};
\draw (199.28,70.3) node [anchor=north west][inner sep=0.75pt]    {$\supp \mu_{\omega}^{\E_\R}$};
\draw (123.95,223.2) node [anchor=north west][inner sep=0.75pt]    {$\supp \mu_{\omega}^{\F_\R}$};

\end{tikzpicture}

    \caption{Born probability distribution over possible clocks and rods associated to an oriented quantum reference frame. The \enquote{fuzziness} in the space of inertial reference frames projects to the level of the localisation in spacetime.}
    \label{fig:Visualisation probability distributions}
    \end{subfigure}
    \caption{Space of inertial frames for which every point represent a different viewpoint from which physical systems can be described. A classical inertial reference frame can be thought of as a Dirac delta distribution over this space, while an oriented relativistic quantum reference frame gives a Born probability distribution over it. The measure projected to $\mink$ is a Born probability measure of a marginal POVM $\F_\R$ as defined in Sec. \ref{sec: covariance} below.}
    \label{fig:Visualisation bundle}
\end{figure*}

\begin{definition}
    Given a relativistic QRF $\R$ and a quantum system $\S$, a \emph{(scalar) relational quantum field} is the map
    \begin{equation}
       \hat{\Phi}^\R: \framestate \ni \omega \mapsto \int_{F} \hat{\phi}_\lambda(x) \, d\mu^{\E_\R}_{\omega}(x,\lambda) \in \bhs,
    \end{equation}
    where $\phi \in \bhs$ is an arbitrary fixed operator. $\Phi^\R(\omega)$
    is referred to as a \emph{(scalar) relational local observable} (RLO).\footnote{This integral can be understood either in terms of Bochner \cite{bochnera,diestel,bourgin,Mikusinski1978}, or as restricted relativization of \cite{loveridge_relativity_2017} (see below).}
\end{definition}

The interplay between relational quantum fields and Lorentz-oriented absolute quantum fields can be further understood in the case where $\R$ is localisable \cite{carette_operational_2025}. Localisable QRFs are those for which $\mu^{\E_\R}_{\omega}$ can arbitrarily well approximate a Dirac delta distribution in the space of frames. In our context, this means that if $\R$ is localisable, then for any $(x,\lambda) \in F$ there exists a sequence of pure states $(\omega_n^{(x,\lambda)}) \subset \framestate$ such that \begin{equation}
    \lim_{n\to \infty} \hat{\Phi}^\R(\omega_n^{(x,\lambda)}) = \hat{\phi}_{\lambda}(x)
\end{equation}
for all $\phi \in \bhs$. That is, when described in localisable QRFs, relational quantum fields can approximate Lorentz-oriented quantum fields. In other words, for localisable QRFs, the Born probability distribution over $F$, shown in Fig. \ref{fig:Visualisation probability distributions}, can be made to \enquote{shrink} to a \emph{classical} reference frame in some appropriate limit, as shown in Fig. \ref{fig:Visualisation pointwise}. Beyond localisable QRFs, individual spacetime points lose their operational meaning, and only \enquote{averages} of pointwise quantities are operationally accessible.

Before we discuss how relational quantum fields provide a novel perspective on relativistic covariance in Quantum Theory in the next section, we place them precisely in the context of the operational approach to QRFs.

\subsection{Relativization and restriction}

The construction of relational local observables can be understood in terms of the \emph{relativisation map} \cite{loveridge_quantum_2012,carette_operational_2025}, which in our context is given by
\begin{equation}\begin{aligned}
    \yen^\R : \bhs &\to \bhsr^{\Poincup} \\
    \yen^\R(\phi) 
    &= \int_{F} \hat{\phi}_\lambda(x) \otimes d\E_\R(x,\lambda),
\end{aligned}\end{equation}
where the superscript $^{\Poincup}$ indicates that the image of the relativization map is invariant under the diagonal action of $\Poincup$ on the composite system. This map is related to the construction of relational local observables by
\begin{equation}
    \Tr[\rho \, \hat{\Phi}^\R(\omega)] = \Tr[(\rho \otimes \omega) \, \yen^\R(\phi)] \qquad \forall \rho \in \systemstate, \, \omega \in \framestate \, ,
\end{equation}
so it can really be seen as an \emph{extension} of the preceding discussion to the case when the states of the system and frame are not treated independently, e.g. allowing for entanglement between the two. In this paper we work under the operationally meaningful assumption that the two relata can be treated independently in the sense of considering only product (or separable) states and diagonal actions on $\bhsr$. However, both these assumptions might be dropped in future work.

The relativization map $\yen^\R$ is a contracting quantum channel \cite{carette_operational_2025}, i.e., it is linear, unital, adjoint-preserving, effect-preserving, bounded (thus continuous), completely positive, normal and contractive. The definition of operator-valued integrals providing the $\yen^\R$ maps was originally introduced in \cite{loveridge_relativity_2017} and further explored in \cite{glowacki_quantum_2023,fewster_quantum_2024,glowacki_quantum_2024}. To see how precisely it relates to our relational quantum fields, let us recall the \emph{restriction map} $\Gamma_\omega$ \cite{loveridge_relativity_2017}. It is a quantum channel satisfying
\begin{equation}\begin{aligned}
    \Gamma_\omega : \bhsr &\to \bhs \\
    \Tr[\rho \Gamma_\omega(\mathcal{O})] &= \Tr[(\rho \otimes \omega) \mathcal{O}]\,,
\end{aligned}\end{equation}
for all $\rho \in \systemstate, \, \omega \in \framestate$ and $\mathcal{O} \in \bhsr$. It can be viewed as a partial trace conditioned on a state $\omega \in \framestate$ in the following sense.

\begin{proposition}
    \label{prop: Gamma is partial trace}
    For all $\omega \in \framestate$ and all $\mathcal{O} \in \thsr$\footnote{This ensures the partial trace of the operator is indeed well-defined since $\mathcal{O} \in \thsr \Rightarrow (\mathbb{1}_{\bhs} \otimes \omega) \mathcal{O} \in \thsr$.},
    \begin{equation}
        \Gamma_\omega(\mathcal{O}) = \Tr_{\hir}[(\mathbb{1}_{\bhs} \otimes \omega) \mathcal{O}] \, .
    \end{equation}
\end{proposition}

\begin{proof}
    See App. \ref{proof 2.5}.
\end{proof}

The restricted relativization is then defined \cite{carette_operational_2025} as composition of the unrestricted one and the restriction map and recovers our relational local observables in the sense that for all $\phi \in \bhs$ we have
\begin{equation}
\yen^\R_\omega (\phi) := \Gamma_\omega \circ \yen^\R = \hat{\Phi}^\R(\omega) \,.
\end{equation}

While $\yen^\R$ is understood as providing observables relative to the quantum reference frame $\R$, invariant under simultaneous symmetry transformations of both the system and the frame, $\yen^\R_\omega$ may be understood as providing a reduced description in terms of $\S$ alone, contingent upon the state of the reference being $\omega$. $\Gamma_\omega$ is thus fundamentally a \enquote{conditional expectation} at the operator level \cite{loveridge_relativity_2017}. For example, $\Gamma_\omega(\sum_{i=1}^N A_i \otimes B_i) = \sum_{i=1}^N\Tr[\omega B_i] A_i$ for $N \in \mathbb{N}$. 

In accordance with the perspective presented in \cite{carette_operational_2025}, the operationally meaningful observables on $\S$ are those arising through relativization. In our case this strips the absolute quantum fields $\hat{\phi}_\lambda(x)$ of operational significance and forces one to work with smeared versions of those, i.e., relational local observables, which is aligned with the traditional motivations for the need of smearing QFT observables \cite{wightman_theorie_1964,wightman_fields_1965,haag_quantum_1955,hall_theorem_1957}---the spacetime points, and thus also the point-wise quantum fields, are not generally operationally meaningful.

\section{Relational covariance}
\label{sec: covariance}

As a simple consequence of invariance of the relativization map, the relational local observables transform covariantly in the following sense.

\begin{theorem}
    \label{thm:full covariance}
    Let $(\R,\omega)$ be an oriented relativistic QRF. Then for all $(a,\Lambda) \in \Poincup$ and $\phi \in \bhs$,
    \begin{equation}
        (a,\Lambda) \cdot \hat{\Phi}^\R(\omega) = \hat{\Phi}(\omega \cdot (a,\Lambda)^{-1}) = \hat{\Phi}((a,\Lambda) \cdot \omega) \, .
    \end{equation}
\end{theorem}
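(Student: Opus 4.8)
The plan is to deduce the statement from the diagonal $\Poincup$-invariance of the relativization map, $\YR(\phi)\in\bhsr^{\Poincup}$, recorded above, together with the relation $\Tr[\rho\,\hat{\Phi}^\R(\omega)]=\Tr[(\rho\otimes\omega)\,\YR(\phi)]$. First note that the second equality in the theorem carries no content: by the conventions fixed above, $\omega\cdot(a,\Lambda)^{-1}=U_\R(a,\Lambda)\,\omega\,U_\R(a,\Lambda)^\dagger=(a,\Lambda)\cdot\omega$, so $\hat{\Phi}^\R(\omega\cdot(a,\Lambda)^{-1})$ and $\hat{\Phi}^\R((a,\Lambda)\cdot\omega)$ are literally the same operator. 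Everything is in the first equality.

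I would reduce the first equality to a statement about expectation values: since the density operators in $\systemstate$ separate the elements of $\bhs$, it suffices to show $\Tr[\rho\,(a,\Lambda)\cdot\hat{\Phi}^\R(\omega)]=\Tr[\rho\,\hat{\Phi}^\R((a,\Lambda)\cdot\omega)]$ for every $\rho\in\systemstate$. Using cyclicity of the trace and unitarity of $U_\S(a,\Lambda)$ to move the adjoint action onto the state, $\Tr[\rho\,(a,\Lambda)\cdot\hat{\Phi}^\R(\omega)]=\Tr[(\rho\cdot(a,\Lambda))\,\hat{\Phi}^\R(\omega)]$, and then the displayed relation for $\hat{\Phi}^\R$ rewrites the right-hand side as $\Tr[((\rho\cdot(a,\Lambda))\otimes\omega)\,\YR(\phi)]$.

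The key step is then to insert $\YR(\phi)=(U_\S(a,\Lambda)\otimes U_\R(a,\Lambda))^\dagger\,\YR(\phi)\,(U_\S(a,\Lambda)\otimes U_\R(a,\Lambda))$, which is precisely the diagonal invariance, and to transport the two conjugating unitaries back onto the state factor by cyclicity. The $\S$-leg collapses, $U_\S(a,\Lambda)\,(\rho\cdot(a,\Lambda))\,U_\S(a,\Lambda)^\dagger=\rho$, while the $\R$-leg becomes $U_\R(a,\Lambda)\,\omega\,U_\R(a,\Lambda)^\dagger=(a,\Lambda)\cdot\omega$; hence the expression equals $\Tr[(\rho\otimes((a,\Lambda)\cdot\omega))\,\YR(\phi)]=\Tr[\rho\,\hat{\Phi}^\R((a,\Lambda)\cdot\omega)]$, which closes the argument. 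As an equivalent alternative I might instead act with $(a,\Lambda)\cdot$ directly under the defining integral $\hat{\Phi}^\R(\omega)=\int_F\hat{\phi}_\lambda(x)\,d\mu^{\E_\R}_\omega(x,\lambda)$, use the transformation law $(a,\Lambda)\cdot\hat{\phi}_\lambda(x)=\hat{\phi}_{\Lambda\lambda}(\Lambda x+a)$ established just before the theorem, substitute $(x',\lambda')=(\Lambda x+a,\Lambda\lambda)$, and invoke $\Poincup$-covariance of $\E_\R$ to recognise the transformed Born measure as $d\mu^{\E_\R}_{(a,\Lambda)\cdot\omega}(x',\lambda')$.

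I do not expect a genuine obstacle. The only points needing care are bookkeeping — keeping left versus right actions and the inverse in $\omega\cdot(a,\Lambda)^{-1}$ straight — and, in the change-of-variables route, justifying that the ultraweakly continuous conjugation $U_\S(a,\Lambda)(\cdot)\,U_\S(a,\Lambda)^\dagger$ may be pulled through the operator-valued integral; this is legitimate because $\YR$ is normal (and hence so is the relational local observable map, being $\YR$ post-composed with the restriction channel $\Gamma_\omega$), so it commutes with the relevant ultraweak limits. The projectivity of $U_\R$ is harmless, since only conjugations appear and the phase ambiguities cancel.
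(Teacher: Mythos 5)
Your proof is correct and follows essentially the same route as the paper's: the second equality is definitional, and the first is established on expectation values by moving the $U_\S$-conjugation onto $\rho$, invoking the defining relation $\Tr[\rho\,\hat{\Phi}^\R(\omega)]=\Tr[(\rho\otimes\omega)\,\yen^\R(\phi)]$, and using the diagonal $\Poincup$-invariance of $\yen^\R(\phi)$ to shift the action onto the frame state. Your sketched alternative via the defining integral and covariance of $\E_\R$ is a valid extra, but the main argument coincides with the paper's proof.
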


\begin{proof}
    The second equality holds by definition, for the first notice that for all $\rho \in \systemstate$ and $\omega \in \framestate$ we~have
    \begin{align*}
        \Tr[\rho \, (a,\Lambda) \cdot \hat{\Phi}^\R(\omega)] 
        &= \Tr[\rho \cdot (a,\Lambda) \, \hat{\Phi}^\R(\omega)] \\
        &= \Tr[(\rho \cdot (a,\Lambda)) \otimes \omega \, \yen^\R(\phi)] \\
        &= \Tr[(\rho \otimes (\omega\cdot (a,\Lambda)^{-1})) \, (a,\Lambda) \cdot \yen^\R(\phi)] \\
        &= \Tr[\rho \otimes (\omega\cdot (a,\Lambda)^{-1}) \, \yen^\R(\phi)] = \Tr[\rho \, \hat{\Phi}^\R\, (\omega\cdot (a,\Lambda)^{-1})] \,.
    \end{align*}
\end{proof}

We see that this transformation law is very natural: an active transformation at the level of the system shifts the description at the level of the quantum reference frame. 

\subsection{Relational local quantum fields}

We can make direct contact with the picture of point-wisely defined quantum fields smeared around space-time regions by decomposing the integral defining relational local observables into a conditional and marginal measures:

\begin{equation}
    \hat{\Phi}^\R(\omega) = 
    \int_F \hat{\phi}_{\lambda}(x) \, d\mu^{\E_\R}_\omega(x,\lambda) = \int_\mink \left(\int_{\mLup} \hat{\phi}_{\lambda}(x) d\nu^{\E_\R}_{\omega}(\lambda \mid x)\right) d\mu^{\F_\R}_{\omega}(x) 
    \,,
\end{equation}
where the POVM $\F_\R$ is defined as the marginal of $\E_\R$, i.e.,
\begin{equation}
        \F_\R : \Bor(\mink) \ni \U \mapsto
        \E_\R(\U \times \mLup) \in \ehir \,
\end{equation}
and the measure $\nu^{\E_\R}_{\omega}(\cdot\mid x)$ is the \emph{disintegration} \cite{fremlin_45_1998} (or $x$-conditioned measure) of $\mu^{\E_\R}_{\omega}$. The integrand deserves a name.

\begin{definition}
    \label{def: framed scalar quantum field}
    Let $\R$ be a relativistic QRF and $\omega \in \framestate$. The operator-valued function
    \begin{equation}
        \hat{\phi}^\R_{\omega} : \mink \ni x \mapsto \int_{\mLup} \hat{\phi}_{\lambda}(x) \, d\nu^{\E_\R}_{\omega}(\lambda \mid x) \in \bhs
    \end{equation}
    will be called a \emph{(scalar) relational local quantum field}.\footnote{Strictly speaking, the disintegration measure, and therefore the relational local quantum fields as well, are defined only on the support of the original measure, so we have $\hat{\phi}^\R_{\omega} : \pi\left(\supp(\mu^{\E_\R}_\omega)\right) \to \bhs$ where $\pi:F \ni (x,\lambda) \mapsto x \in \mink$. Here, we extend it by zero to a $\mu^{\E_\R}_\omega$-a.e. equivalent measure on the whole of $\mink$ for convenience.}
\end{definition}

We compare and contrast different types of fields in Table \ref{table:comparison between different types of fields} at the end of this section. Let us emphasize here that our relational local quantum fields are defined $\mu^{\E_\R}_\omega$-almost everywhere and have values in bounded operators. By construction, the relational local observables are then spacetime smearings of the relational local quantum fields, i.e., for any $\omega \in \framestate$ we have
\begin{equation}
    \hat{\Phi}^\R(\omega) = 
    \int_\mink \hat{\phi}^\R_\omega(x) \,  d\mu^{\F_\R}_{\omega}(x)\,.
\end{equation}

Notice that the marginal POVM $\F_\R$ is $\Poincup$-covariant. Indeed, for all $(a,\Lambda) \in \Poincup$ and all $\U \in \Bor(\mink)$ we have
    \begin{equation}
        (a,\Lambda) \cdot \F_\R(\U) = (a,\Lambda) \cdot \E_\R(\U,\Lup) = \E_\R((a,\Lambda) \cdot \U, \Lambda \cdot \Lup) = \E_\R((a,\Lambda) \cdot \U, \Lup) = \F_\R((a,\Lambda) \cdot \U) \,.
    \end{equation}

To analyse the transformation properties of the relational local quantum fields we need the following Lemma.

\begin{lemma}
	\label{lem:nu covariance}
    For all $(a,\Lambda) \in \Poincup$,  $\kappa \in \Bor(\mathcal{\Lup})$ and $\mu^{\F_\R}_{\omega\cdot(a,\Lambda)}$-almost every $x \in \mink$ we have
	\begin{equation}
		\nu^{\E_\R}_{\omega \cdot (a,\Lambda)}(\kappa \mid x) = \nu^{\E_\R}_{\omega}(\Lambda \cdot \kappa \mid \Lambda x + a).
	\end{equation}
\end{lemma}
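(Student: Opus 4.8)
The plan is to identify the claimed right-hand side, regarded as a kernel in $x$, as \emph{a} disintegration of $\mu^{\E_\R}_{\omega\cdot(a,\Lambda)}$ along the projection $\pi:F\to\mink$, and then to invoke the essential uniqueness of disintegrations to conclude that it agrees $\mu^{\F_\R}_{\omega\cdot(a,\Lambda)}$-almost everywhere with the genuine disintegration $\nu^{\E_\R}_{\omega\cdot(a,\Lambda)}(\cdot\mid x)$. The first ingredient is the transformation law for the Born measures: writing $\omega\cdot(a,\Lambda)=U_\R(a,\Lambda)^\dagger\omega\,U_\R(a,\Lambda)$ and using cyclicity of the trace together with the $\Poincup$-covariance of $\E_\R$, one gets for every $W\in\Bor(F)$
\begin{equation}
    \mu^{\E_\R}_{\omega\cdot(a,\Lambda)}(W)=\Tr[\omega\,(a,\Lambda)\cdot\E_\R(W)]=\Tr[\omega\,\E_\R((a,\Lambda)\cdot W)]=\mu^{\E_\R}_\omega((a,\Lambda)\cdot W).
\end{equation}
Specialising to a measurable rectangle $W=\V\times\U$ with $\V\in\Bor(\mink)$ and $\U\in\Bor(\mLup)$, and using $(a,\Lambda)\cdot(\V\times\U)=(\Lambda\V+a)\times(\Lambda\cdot\U)$, yields a transformation rule for the joint Born measure; the case $\U=\mLup$ reproduces the covariance of $\F_\R$ already noted above, which is exactly the statement that the Borel bijection $x\mapsto\Lambda x+a$ pushes $\mu^{\F_\R}_{\omega\cdot(a,\Lambda)}$ forward onto $\mu^{\F_\R}_\omega$.

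Next I would unfold the disintegration of $\mu^{\E_\R}_\omega$ on the right-hand side of the transformation rule and change variables $y=\Lambda x+a$:
\begin{equation}
    \mu^{\E_\R}_{\omega\cdot(a,\Lambda)}(\V\times\U)=\int_{\Lambda\V+a}\nu^{\E_\R}_\omega(\Lambda\cdot\U\mid y)\,d\mu^{\F_\R}_\omega(y)=\int_{\V}\nu^{\E_\R}_\omega(\Lambda\cdot\U\mid\Lambda x+a)\,d\mu^{\F_\R}_{\omega\cdot(a,\Lambda)}(x),
\end{equation}
the second equality being the change-of-variables formula for the pushforward established in the previous step. Thus $\tilde\nu(\U\mid x):=\nu^{\E_\R}_\omega(\Lambda\cdot\U\mid\Lambda x+a)$ is a genuine probability kernel — being the kernel $\nu^{\E_\R}_\omega$ precomposed with the Borel map $x\mapsto\Lambda x+a$ — and it satisfies the defining disintegration identity on all measurable rectangles. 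A routine monotone-class argument (both sides being measures in $W$ agreeing on the generating $\pi$-system of rectangles) promotes this to all of $\Bor(F)$, while the $\mink$-marginal of $\tilde\nu$ is $\mu^{\F_\R}_{\omega\cdot(a,\Lambda)}$ by construction, so $(\mu^{\F_\R}_{\omega\cdot(a,\Lambda)},\tilde\nu)$ is a disintegration of $\mu^{\E_\R}_{\omega\cdot(a,\Lambda)}$.

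Finally, since $\mink$ and $\mLup$ (hence $F$) are standard Borel spaces, disintegrations are essentially unique, so for each fixed $\U\in\Bor(\mLup)$ one has $\nu^{\E_\R}_{\omega\cdot(a,\Lambda)}(\U\mid x)=\tilde\nu(\U\mid x)=\nu^{\E_\R}_\omega(\Lambda\cdot\U\mid\Lambda x+a)$ for $\mu^{\F_\R}_{\omega\cdot(a,\Lambda)}$-almost every $x$, which is the assertion; the extension-by-zero convention of Definition~\ref{def: framed scalar quantum field} makes both sides defined on all of $\mink$. I expect the only genuinely delicate points to be bookkeeping ones: keeping the direction of the pushforward straight in the change of variables, and invoking uniqueness of disintegration in a setting where it actually applies. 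Note that the lemma only asserts the \emph{weak} form, with the exceptional null set allowed to depend on $\U$, which is precisely what the uniqueness argument delivers; upgrading to ``for a.e. $x$, for all $\U$'' would require running the comparison over a countable generating subalgebra of $\Bor(\mLup)$, but that is not needed here.
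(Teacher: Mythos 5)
Your proposal is correct and follows essentially the same route as the paper's proof: the covariance of the Born measure, evaluation on measurable rectangles, unfolding the disintegration of $\mu^{\E_\R}_{\omega}$, and the change of variables coming from the pushforward relation $\mu^{\F_\R}_{\omega\cdot(a,\Lambda)} \mapsto \mu^{\F_\R}_{\omega}$ under $x\mapsto \Lambda x+a$. The only difference is cosmetic: where you package the last step as exhibiting $\tilde\nu$ as a disintegration and invoking essential uniqueness (plus a monotone-class extension), the paper concludes directly that two integrands whose integrals agree over all Borel sets $\U$ with respect to $\mu^{\F_\R}_{\omega\cdot(a,\Lambda)}$ must coincide almost everywhere, which yields exactly the per-$\U$ statement you note.
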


\begin{proof}
    See App. \ref{proof 3.3}.
\end{proof}

Equipped with this result we now easily establish the transformation properties of relational local quantum fields.

\begin{proposition}
	\label{lem:covariance framed scalar field}
	Let $(\R,\omega)$ be an oriented relativistic QRF and $\phi \in \bhs$. Then for all $(a,\Lambda) \in \Poincup$ and $\mu^{\F_\R}_{\omega}$-almost every $x \in \mink$ we have
	\begin{equation}
		(a,\Lambda) \cdot \hat{\phi}^\R_{\omega}(x) = \hat{\phi}^\R_{\omega \cdot (a,\Lambda)^{-1}}(\Lambda x + a) \, .
	\end{equation}
\end{proposition}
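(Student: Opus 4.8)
The plan is to unfold the definition of $\hat\phi^\R_\omega(x)$, pull the unitary conjugation through the integral over $\mLup$, invoke the transformation law $(a,\Lambda)\cdot\hat\phi_\lambda(x) = \hat\phi_{\Lambda\lambda}(\Lambda x+a)$ established in Sec.~\ref{sec:tale RFs}, and then rewrite the resulting integral by a Lorentz change of variables using Lemma~\ref{lem:nu covariance}.

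First I would write $(a,\Lambda)\cdot\hat\phi^\R_\omega(x) = U_\S(a,\Lambda)\,\bigl(\int_{\mLup}\hat\phi_\lambda(x)\,d\nu^{\E_\R}_\omega(\lambda\mid x)\bigr)\,U_\S(a,\Lambda)^\dagger$ and note that conjugation by a fixed unitary, $\psi\mapsto U_\S(a,\Lambda)\,\psi\,U_\S(a,\Lambda)^\dagger$, is a bounded, ultraweakly continuous (normal) linear map on $\bhs$, hence commutes with the operator-valued integral; so the left-hand side equals $\int_{\mLup}\bigl((a,\Lambda)\cdot\hat\phi_\lambda(x)\bigr)\,d\nu^{\E_\R}_\omega(\lambda\mid x)$. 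Applying the absolute-field transformation law turns the integrand into $\hat\phi_{\Lambda\lambda}(\Lambda x+a)$. The final step is a change of variables $\lambda\mapsto\Lambda\lambda$ on the Lorentz torsor: the push-forward of $\nu^{\E_\R}_\omega(\cdot\mid x)$ along this map sends a Borel set $\Delta\subseteq\mLup$ to $\nu^{\E_\R}_\omega(\Lambda^{-1}\cdot\Delta\mid x)$, and Lemma~\ref{lem:nu covariance}, applied with the group element $(a,\Lambda)^{-1} = (-\Lambda^{-1}a,\Lambda^{-1})$ and evaluated at the spacetime point $\Lambda x + a$ (so that $\Lambda^{-1}\bigl((\Lambda x+a)-a\bigr) = x$), identifies this push-forward measure with $\nu^{\E_\R}_{\omega\cdot(a,\Lambda)^{-1}}(\cdot\mid\Lambda x+a)$. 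The integral then reads $\int_{\mLup}\hat\phi_\kappa(\Lambda x+a)\,d\nu^{\E_\R}_{\omega\cdot(a,\Lambda)^{-1}}(\kappa\mid\Lambda x+a) = \hat\phi^\R_{\omega\cdot(a,\Lambda)^{-1}}(\Lambda x+a)$, which is the claim. The almost-everywhere qualifier is inherited from Lemma~\ref{lem:nu covariance}, transported to the appropriate marginal using the covariance of $\F_\R$ shown above (the marginals $\mu^{\F_\R}_\omega$ and $\mu^{\F_\R}_{\omega\cdot(a,\Lambda)}$ are related by the affine change of variables $x\mapsto\Lambda x+a$).

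I do not expect a genuine obstacle: the statement is essentially a corollary of Lemma~\ref{lem:nu covariance} together with the transformation rule for $\hat\phi_\lambda$. The two points that do require care are (i) pulling the unitary conjugation inside the operator-valued integral, which rests only on its linearity and ultraweak continuity (exactly as for the relativization and restriction channels), and (ii) the bookkeeping of disintegration measures and their supports under the Lorentz change of variables --- but Lemma~\ref{lem:nu covariance} does the real work there. If one prefers to sidestep pulling the unitary through the $\mLup$-integral, there is an essentially equivalent route: using the defining property of the disintegration, $\int_\U \hat\phi^\R_\omega \, d\mu^{\F_\R}_\omega = \int_{\U\times\mLup}\hat\phi_\lambda(x)\,d\mu^{\E_\R}_\omega(x,\lambda)$ for every $\U\in\Bor(\mink)$, one checks that $x\mapsto(a,\Lambda)\cdot\hat\phi^\R_\omega(x)$ and $x\mapsto\hat\phi^\R_{\omega\cdot(a,\Lambda)^{-1}}(\Lambda x+a)$ have equal integrals over each $\U$ against $\mu^{\F_\R}_\omega$ --- this time using the covariance $(a,\Lambda)\cdot\E_\R(W) = \E_\R((a,\Lambda)\cdot W)$ and the induced push-forward identities for $\mu^{\E_\R}_\omega$ and $\mu^{\F_\R}_\omega$ --- and then concludes by almost-everywhere uniqueness of disintegrations.
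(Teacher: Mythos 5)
Your proposal is correct and follows essentially the same route as the paper's own proof: unfold the definition, push the conjugation through the fiber integral, apply the transformation law $(a,\Lambda)\cdot\hat\phi_\lambda(x)=\hat\phi_{\Lambda\lambda}(\Lambda x+a)$, and absorb the change of variables on $\mLup$ via Lemma~\ref{lem:nu covariance} applied with $(a,\Lambda)^{-1}$ at the point $\Lambda x+a$. The extra care you spell out (normality of unitary conjugation to justify exchanging it with the operator-valued integral, and the push-forward bookkeeping for the almost-everywhere qualifier) is left implicit in the paper but is exactly the right justification.
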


\begin{proof}
	We calculate:
	\begin{equation}\begin{aligned}
		(a,\Lambda) \cdot \hat{\phi}^\R_{\omega}(x) &= \int_{\mLup} (\Lambda x + a, \Lambda \lambda) \cdot \phi \, d\nu^{\E_\R}_{\omega}(\lambda \mid x) \\ 
		&= \int_{\mLup} (\Lambda x + a, \lambda) \cdot \phi \, d\nu^{\E_\R}_{\omega}(\Lambda^{-1} \lambda \mid x) \\
        &= \int_{\mLup} (\Lambda x + a, \lambda) \cdot \phi \, d\nu^{\E_\R}_{\omega \cdot (a,\Lambda)^{-1}}(\lambda, \Lambda x + a) \\
		&= \hat{\phi}^\R_{\omega \cdot (a,\Lambda)^{-1}}(\Lambda x + a) \, .
	\end{aligned}\end{equation}
\end{proof}

From this, we then recover a form of scalar Poincaré covariance under the integral in the following form.

\begin{theorem}
    \label{thm: scalar covariance}
    Let $(\R,\omega)$ be an oriented relativistic QRF and $\phi \in \bhs$. Then for all $(a,\Lambda) \in \Poincup$,
    \begin{equation}\label{rlqfcov}
    		(a,\Lambda) \cdot \hat{\Phi}^\R(\omega) = \int_{\mink} \hat{\phi}^\R_{\omega \cdot (a,\Lambda)^{-1}}(\Lambda x + a) d\mu^{\F_\R}_{\omega}(x) \, .
    \end{equation}
\end{theorem}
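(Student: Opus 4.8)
The plan is to start from the integral representation of the relational local observable in terms of the relational local quantum field. Immediately before Definition \ref{def: framed scalar quantum field} we established that $\hat{\Phi}^\R(\omega) = \int_{\mink} \hat{\phi}^\R_\omega(x)\, d\mu^{\F_\R}_\omega(x)$, a Bochner integral of a $\bhs$-valued function against the marginal Born measure $\mu^{\F_\R}_\omega$. First I would apply the unitary adjoint action $(a,\Lambda)\cdot(-) = U_\S(a,\Lambda)(-)U_\S(a,\Lambda)^\dagger$ to both sides. Since this action is a bounded (indeed isometric) linear map on $\bhs$, it commutes with the Bochner integral, giving $(a,\Lambda)\cdot\hat{\Phi}^\R(\omega) = \int_{\mink} \big((a,\Lambda)\cdot\hat{\phi}^\R_\omega(x)\big)\, d\mu^{\F_\R}_\omega(x)$.

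Next I would feed in the pointwise transformation law established in Proposition \ref{lem:covariance framed scalar field}, namely $(a,\Lambda)\cdot\hat{\phi}^\R_\omega(x) = \hat{\phi}^\R_{\omega\cdot(a,\Lambda)^{-1}}(\Lambda x + a)$, which holds for almost every $x$. Substituting this identity under the integral sign changes the integrand only on a null set and hence leaves the integral unchanged, yielding exactly $(a,\Lambda)\cdot\hat{\Phi}^\R(\omega) = \int_{\mink}\hat{\phi}^\R_{\omega\cdot(a,\Lambda)^{-1}}(\Lambda x + a)\, d\mu^{\F_\R}_\omega(x)$, which is \eqref{rlqfcov}.

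The step requiring care — the ``main obstacle'' such as it is — is the measure-theoretic bookkeeping around the ``almost every $x$'': the substitution inside $\int(\cdot)\,d\mu^{\F_\R}_\omega$ is legitimate only if the exceptional set on which the pointwise identity can fail is $\mu^{\F_\R}_\omega$-null. Tracing back through the proof of Proposition \ref{lem:covariance framed scalar field}, the identity there comes from Lemma \ref{lem:nu covariance} applied with frame state $\omega\cdot(a,\Lambda)^{-1}$, so the exceptional set is $\mu^{\F_\R}_{\omega\cdot(a,\Lambda)^{-1}\cdot(a,\Lambda)} = \mu^{\F_\R}_\omega$-null, as needed; combined with the $\Poincup$-covariance of the marginal POVM $\F_\R$ (verified in the text just above), this closes the gap. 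A second, essentially equivalent route that avoids manipulating null sets inside the integral is to test against an arbitrary $\rho\in\systemstate$, use Theorem \ref{thm:full covariance} to rewrite $(a,\Lambda)\cdot\hat{\Phi}^\R(\omega)$ as $\hat{\Phi}^\R(\omega\cdot(a,\Lambda)^{-1})$, expand the latter through the integral decomposition, and perform the change of variables $y = \Lambda x + a$, using that $\mu^{\F_\R}_{\omega\cdot(a,\Lambda)^{-1}}$ is the pushforward of $\mu^{\F_\R}_\omega$ along $(a,\Lambda)\cdot(-)$ (again by covariance of $\F_\R$); since density operators separate $\bhs$, this delivers the claimed operator identity. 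Beyond these bookkeeping points the argument is routine.
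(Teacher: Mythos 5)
Your proposal is correct, but your primary route is not the one the paper takes; interestingly, your ``second, essentially equivalent route'' \emph{is} the paper's proof almost verbatim. The paper simply writes $(a,\Lambda)\cdot\hat{\Phi}^\R(\omega)=\hat{\Phi}^\R(\omega\cdot(a,\Lambda)^{-1})$ by Theorem \ref{thm:full covariance}, expands the right-hand side through the kernel decomposition $\int_\mink \hat{\phi}^\R_{\omega\cdot(a,\Lambda)^{-1}}(x)\,d\mu^{\F_\R}_{\omega\cdot(a,\Lambda)^{-1}}(x)$, and then uses the $\Poincup$-covariance of $\F_\R$ to rewrite $d\mu^{\F_\R}_{\omega\cdot(a,\Lambda)^{-1}}(x)=d\mu^{\F_\R}_{\omega}(\Lambda^{-1}(x-a))$ and change variables $x\mapsto \Lambda x+a$; no appeal to Proposition \ref{lem:covariance framed scalar field} or Lemma \ref{lem:nu covariance} is made, and no almost-everywhere substitution inside the integral is needed. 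Your main route instead commutes the (bounded, hence Bochner-compatible) adjoint action with the integral and then invokes the pointwise law of Proposition \ref{lem:covariance framed scalar field}, which forces exactly the null-set bookkeeping you identify: as stated, that proposition's exceptional set is $\mu^{\F_\R}_{\omega\cdot(a,\Lambda)}$-null, and you correctly trace through Lemma \ref{lem:nu covariance} (applied with frame state $\omega\cdot(a,\Lambda)^{-1}$) to see that the set relevant here is in fact $\mu^{\F_\R}_{\omega}$-null, so the substitution under $\int(\cdot)\,d\mu^{\F_\R}_\omega$ is legitimate. The trade-off is clear: your route exhibits the theorem as a direct corollary of the pointwise covariance of the relational local quantum fields (at the cost of the measure-theoretic care you supply), whereas the paper's route is shorter, bypasses the disintegration covariance entirely, and only uses relational covariance of the observable plus covariance of the spacetime marginal. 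Both arguments are sound.
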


\begin{proof}
	We have
	\begin{multline}
	(a,\Lambda) \cdot \hat{\Phi}^\R(\omega) 
    = \hat{\Phi}^\R(\omega \cdot (a,\Lambda)^{-1}) 
    =\int_\mink \hat{\phi}^\R_{\omega \cdot (a,\Lambda)^{-1}}(x) d\mu^{\F_\R}_{\omega \cdot (a,\Lambda)^{-1}}(x) 
    =\int_{\mink} \hat{\phi}^\R_{\omega \cdot (a,\Lambda)^{-1}}(x) d\mu^{\F_\R}_{\omega}(\Lambda^{-1}(x-a))\\  = \int_{\mink} \hat{\phi}^\R_{\omega \cdot (a,\Lambda)^{-1}}(\Lambda x + a) d\mu^{\F_\R}_{\omega}(x) \, .
	\end{multline}
\end{proof}

Note that this is not the \enquote{naïve physicists'} transformation law of scalar fields under Poincaré transformations---one could have expected the relational version of covariance of relational local quantum fields to be of the form
\begin{equation}
    		\text{``}\int_\mink (a,\Lambda) \cdot \hat{\phi}^\R_\omega(x) d\mu^{\F_\R}_{\omega}(x) = \int_{\mink} \hat{\phi}^\R_\omega (\Lambda x + a) d\mu^{\F_\R}_{\omega}(x) \text{''} \, .
\end{equation}
However, this could not be right, for at least two reasons. Firstly, we would need to give up the locality of relational local observables---they would need to be supported on the whole of $\mink$. Indeed, if the relational local quantum fields are only non-zero \emph{locally}, on $\supp \mu^{\F_\R}_\omega \subsetneq \mink$, the \enquote{extra} transformation factor by the frame's state is necessary for the formula to make sense in the context of arbitrary Poincar{\'e} transformations. To see why this is the case, notice that\footnote{Supports of measures will always be considered closed sets.}
\begin{equation}
x \notin \supp \mu^{\F_\R}_\omega \Rightarrow \hat{\phi}^\R_\omega(x) = 0 \Rightarrow (a,\Lambda) \cdot \hat{\phi}^\R_{\omega}(x) = 0 \qquad \forall (a,\Lambda) \in \Poincup \,,
\end{equation}
so that $\hat{\phi}^\R_\omega(x) = 0$ for all $x \in \mink$ unless $\supp \mu^{\F_\R}_{\omega} = \mink$ (we shall consider this case further below). On the other hand, the \enquote{extra} transformation factor gives
\begin{equation}
x \in \supp \mu^{\F_\R}_\omega \Rightarrow \Lambda x + a \in \supp \mu^{\F_\R}_{\omega \cdot (a,\Lambda)^{-1}} \qquad \forall (a,\Lambda) \in \Poincup \,,
\end{equation}
by the covariance of $\F_\R$, which transforms the support of the field arbitrarily all over $\mink$, as needed for the covariance formula to be meaningful in the case of relational local quantum fields that are e.g. compactly supported. \\

Secondly, the transformation law \eqref{rlqfcov} renders the relational local quantum fields neither translation- nor Poincar{\'e}-covariant, i.e., generically we have
\begin{equation}
	(a,e) \cdot \hat{\phi}^\R_\omega(x) \neq \hat{\phi}^\R_\omega(x+a) \; , \quad
	(a,\Lambda) \cdot \hat{\phi}^\R_\omega(x) \neq \hat{\phi}^\R_\omega(\Lambda x+a)\,,
\end{equation}
which is a feature and not a bug---the second property allows us to evade a no-go theorem by Wizimirski (see App.~\ref{App:no go Wizimirski}) that would again render our relational local quantum fields trivial. Thus, we endorse the newly discovered transformations for relational local quantum fields stemming from the Poincar{\'e} covariance of relational local observables.
\begin{table}[t!]
\begin{center}
\tabulinesep=1.2mm
\begin{tabu}{||c| c | c ||} 
 \hline
 Objects & Definition & Meaning \\ [0.5ex] 
 
 \hline\hline
 Operator & $\phi \in \bhs$  & \enquote{Absolute} operator on $\his$. \\ 
 \hline
 Lorentz-oriented quantum field & $\begin{aligned}
     \hat{\phi}_\lambda : \mink \ni x \mapsto (x,\lambda) \cdot \phi \in \bhs
 \end{aligned}$ & \enquote{Absolute} pointwise quantum field. \\
  \hline
 Relational local observable & $\begin{aligned}
     \hat{\Phi}^\R(\omega)  = \int_F \hat{\phi}_{\lambda}(x) d\mu^{\E_\R}_{\omega}(x,\lambda) = \yen^\R_\omega(\phi)
 \end{aligned}$ & \makecell{Locally accessible, operationally \\ meaningful operator on $\his$ \\ \emph{relative to an oriented QRF} $(\R,\omega)$.}  \\
 \hline 
  Relational local quantum field & 
  $\begin{aligned}
      \hat{\phi}^\R_\omega(x) = \int_{\mathcal{\Lup}} \hat{\phi}_\lambda(x) d\nu^{\E_\R}_{\omega}(\lambda \mid x)
  \end{aligned}$
  & \makecell{Disintegrated spacetime kernel \\ of a relational local observable.} \\
  \hline
  Relational quantum field & {$\begin{aligned}\hat{\Phi}^\R : \framestate &\to \bhs \\ \omega &\mapsto \int_{\mink} \hat{\phi}^\R_\omega(x) d\mu^{\F_\R}_{\omega}(x)\end{aligned}$} &
  \makecell{Associates relational local observables\\ to frame preparations.\\}
  \\ \hline
\end{tabu}
\caption{Comparison between different objects in Relational Quantum Field Theory.}
\label{table:comparison between different types of fields}
\end{center}
\end{table}

\subsection{Globally oriented frames}

\label{subsec:global orientation}

Inspired by this general formula, we now consider a class of oriented relativistic QRFs which come closer in resembling the transformation properties of kernels in Wightman QFT.\footnote{See Sec. \ref{sec:Wightman comparison} for the discussion of the results achieved here in the context of this established approach.}

\begin{definition}\label{def:globally oriented QRF}
    An oriented relativistic QRF $(\R,\omega)$ is called \emph{globally oriented} if the conditional measure over the Lorentz fiber is independent of position, i.e, for $\mu^{\F_\R}_\omega$-almost every $x,x' \in \mink$ we have
    \begin{equation}
        \nu^{\E_\R}_\omega(\cdot \mid x) = \nu^{\E_\R}_\omega(\cdot \mid x')\,.
    \end{equation}
\end{definition}

A conditional probability measure over a product measurable space which is independent of the variable being conditioned on is just the marginal probability measure of the other variable. That is, for a globally oriented relativistic QRF $(\R,\omega)$ we have
\begin{equation}
    \mu^{\E_\R}_{\omega} = \mu^{\F_\R}_{\omega} \times \mu^{\G_\R}_{\omega}
\end{equation}
where the POVM $\G_\R$ is defined as the Lorentz marginal of $\E_\R$, i.e.,
\begin{equation} \G_\R : \Bor(\mathcal{\Lup}) \ni \kappa \mapsto \E_\R(\mink \times \kappa) \in \ehir \, . \end{equation}

This can be thought of heuristically as having a \enquote{probabilistically smeared constant global section} in the Lorentz frame bundle. In a globally oriented relativistic QRF, a scalar relational local quantum field can be written as
\begin{equation}
    \hat{\phi}^\R_{\omega}(x) = 
    \int_{\mathcal{\Lup}} \hat{\phi}_{\lambda}(x) \, d\mu^{\G_\R}_{\omega}(\lambda) = 
    (x,e) \cdot \int_{\mathcal{\Lup}} (0,\lambda)\cdot \phi \, d\mu^{\G_\R}_{\omega}(\lambda) \, .
\end{equation}
We then see that $\hat{\phi}^\R_{\omega}$ is now translation-covariant. Indeed, we have
\begin{equation}
    (a,e) \cdot \hat{\phi}^\R_{\omega}(x)  
    = (a,e)(x,e) \cdot \int_{\mathcal{\Lup}} (e,\lambda)\cdot \phi \, d\mu^{\G_\R}_{\omega}(\lambda)
    = (x+a,e) \cdot \int_{\mathcal{\Lup}} (e,\lambda)\cdot \phi \, d\mu^{\G_\R}_{\omega}(\lambda)
    = \hat{\phi}^\R_{\omega}(x+a) 
    \,.
\end{equation}

This implies the following result.

\begin{proposition}
    \label{prop:global orientation implies noncompact support}
    Let $(\R,\omega)$ be a globally oriented relativistic QRF. Then $\supp \mu^{\F_\R}_{\omega} = \mink$.
\end{proposition}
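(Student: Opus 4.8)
The plan is to run a short proof by contradiction, turning the translation covariance of $\hat\phi^\R_\omega$ on a globally oriented frame into a rigidity statement: a translation-covariant operator-valued field that is a fixed unitary conjugate of a fixed non-zero operator cannot vanish at any point, whereas by the extension convention of Definition~\ref{def: framed scalar quantum field} it vanishes off $\supp\mu^{\F_\R}_{\omega}$. Since the assertion concerns only the measure $\mu^{\F_\R}_{\omega}$ while the only leverage we have is through the field, the first move is to pick a convenient operator whose associated field is manifestly non-zero; I would take $\phi=\id_{\bhs}$.

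Concretely, I would proceed as follows. First, record that $\supp\mu^{\F_\R}_{\omega}$ is non-empty, being the support of the Borel probability measure $\U\mapsto\Tr[\omega\F_\R(\U)]$ on $\mink$. Next, for $\phi=\id_{\bhs}$ we have $(x,\lambda)\cdot\id_{\bhs}=\id_{\bhs}$ for every $(x,\lambda)\in F$, so $\hat\phi_\lambda(x)\equiv\id_{\bhs}$, and hence $\hat\phi^\R_{\omega}(x)=\int_{\mLup}\id_{\bhs}\,d\nu^{\E_\R}_{\omega}(\lambda\mid x)=\id_{\bhs}$ for $\mu^{\F_\R}_{\omega}$-almost every $x$ (the disintegration measures $\nu^{\E_\R}_{\omega}(\cdot\mid x)$ being probability measures), while $\hat\phi^\R_{\omega}(x)=0$ for $x\notin\supp\mu^{\F_\R}_{\omega}$ by the extension-by-zero convention. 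Now invoke the translation covariance of $\hat\phi^\R_{\omega}$ for globally oriented frames derived just above, $(a,e)\cdot\hat\phi^\R_{\omega}(x)=\hat\phi^\R_{\omega}(x+a)$: choosing $x$ in the support with $\hat\phi^\R_{\omega}(x)=\id_{\bhs}$, we get $\hat\phi^\R_{\omega}(x+a)=U_\S(a,e)\,\id_{\bhs}\,U_\S(a,e)^\dagger=\id_{\bhs}\neq 0$ for every $a\in\mink$. As $a$ ranges over $\mink$, $x+a$ ranges over all of $\mink$, so $\hat\phi^\R_{\omega}$ is nowhere zero; comparison with the previous sentence forces $\mink\setminus\supp\mu^{\F_\R}_{\omega}=\emptyset$, i.e.\ $\supp\mu^{\F_\R}_{\omega}=\mink$.

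The step I expect to need the most care is the interplay between the almost-everywhere-defined field and the everywhere-valid translation covariance used above. I would make this airtight by noting that for a globally oriented frame the decomposition $\mu^{\E_\R}_{\omega}=\mu^{\F_\R}_{\omega}\times\mu^{\G_\R}_{\omega}$ lets one take the canonical disintegration to be the constant family $\nu^{\E_\R}_{\omega}(\cdot\mid x)=\mu^{\G_\R}_{\omega}$ for all $x\in\supp\mu^{\F_\R}_{\omega}$, under which the representation $\hat\phi^\R_{\omega}(x)=(x,e)\cdot\!\int_{\mLup}(0,\lambda)\cdot\phi\,d\mu^{\G_\R}_{\omega}(\lambda)$ holds at every point of the support rather than merely almost everywhere; the content of the Proposition is then precisely that this translation-covariant representative is compatible with the vanishing-off-the-support convention only when the support exhausts $\mink$. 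One could instead argue purely at the level of the covariant marginal POVM $\F_\R$, but the field-based route above is the most direct and matches the way translation covariance was set up.
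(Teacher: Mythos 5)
Your proof is correct and follows essentially the same route as the paper's: take $\phi=\mathbb{1}_{\bhs}$, use the vanishing-off-the-support convention together with the translation covariance $a\cdot\hat{\phi}^\R_{\omega}(x)=\hat{\phi}^\R_{\omega}(x+a)$ valid for globally oriented frames, and conclude from transitivity of translations that the field is nowhere zero, forcing $\supp \mu^{\F_\R}_{\omega}=\mink$. Your extra remark about fixing the constant disintegration $\nu^{\E_\R}_{\omega}(\cdot\mid x)=\mu^{\G_\R}_{\omega}$ so the covariant representative is defined at every point of the support is a welcome refinement of a step the paper treats implicitly, but it does not change the argument.
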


\begin{proof}
    First, note that $\hat{\phi}^\R_{\omega}(x) = 0$ if $x \notin \supp \mu^{\F_\R}_{\omega}$. Moreover, $a \cdot \hat{\phi}^\R_{\omega}(x) \neq 0$ iff $\hat{\phi}^\R_{\omega}(x) \neq 0$. Furthermore, in a globally oriented relativistic QRF, $a \cdot \hat{\phi}^\R_{\omega}(x) = \hat{\phi}^\R_{\omega}(x+a)$. But for any $y \in \mink$ and $x \in \supp \mu^{\F_\R}_{\omega}$, $\exists a \in T(1,d-1)$ such that $y = x+a$. So for any non-zero $\hat{\phi}^\R_{\omega}(x)$ (which exists, e.g. take $\phi = \mathbb{1}_{\bhs}$) $\hat{\phi}^\R_{\omega}(x+a) \neq 0$ for all $a \in T(1,d-1)$ so $\supp \mu^{\F_\R}_{\omega} = \mink$. 
\end{proof}

Hence, globally oriented relativistic QRFs cannot have compact spacetime supports. This makes them non-operational yet interesting to study as a reasonable approximate model to make contact with previous results in the field of Wightman QFT, where quantum fields are translation covariant. In particular, they are useful to examine the links with Wightman QFT when the tails at infinity are very small, which will be further discussed in section \ref{sec:Wightman comparison}.

As a final remark, notice here that the POVM $\G_\R$ is Lorentz-covariant and translation-invariant:
\begin{equation}
	G_\R((a,\Lambda) \cdot \kappa) = 
	\E_\R((a,\Lambda) \cdot (\mink \times \kappa)) = \E_\R((0,\Lambda) \cdot (\mink \times \kappa))= 
	(0,\Lambda) \cdot \E_\R(\mink \times \kappa) = (0,\Lambda) \cdot G_\R(\kappa) \, .
\end{equation}

This gives the following general transformation law for globally relational local quantum fields
\begin{equation}
	(a,\Lambda)\cdot \hat{\phi}^\R_\omega(x) = \hat{\phi}^\R_{\Lambda \cdot \omega}(\Lambda x + a).
\end{equation}
They thus transform as if $\omega$ would be translation-invariant, which would be in tension with vacuum-orthogonality of $\F_\R$ as discussed in the next Section.

\section{Vacuum-orthogonality}
\label{sec:Vacuum state orthogonality}

Since our marginal POVMs $\F_\R$ are translation-covariant, it is appropriate to mention here a no-go result by Giannitrapani \cite{giannitrapani_quantum_1998} placing a constraint on such POVMs. We begin by some definitions.

\begin{definition}
	Denote by $\underline{\Bor}(\mink)$ the precompact elements of the Borel $\sigma$-algebra of subsets of $\mink$. We say that a POVM $\F: \Bor(\mink) \to \eh$ is \emph{operationally $\mathcal{O}$-orthogonal}, where $\mathcal{O} \subset \D(\hi)$, if
	\begin{equation}
	\Tr[\Omega \, \F(\V)]=0 \text{ for all } \V \subset \underline{\Bor}(\mink) \text{ and } \Omega \in \mathcal{O}\,.
	\end{equation}
	It is called \emph{operationally trivial} if $\F(\U) = 0$ for all precompact $\U \in \underline{\Bor}(\mink)$. 
\end{definition}

\begin{definition}
	A POVM $\F: \Bor(\mink) \to \eh$ is \emph{vacuum-orthogonal} if it is operationally $\state^{T(1,d-1)}$-orthogonal.
\end{definition}

The term \enquote{operational} here is justified with an observation that real-life measurements can only be carried out within finite regions of spacetime, while \enquote{orthogonality} with the following simple result.

\begin{lemma}
	Let $\F: \Bor(\mink) \to \eh$ be a POVM and $\mathcal{O} \subset \D(\hi)$. Then the following are equivalent:
	\begin{enumerate}
		\item $\F$ is operationally $\mathcal{O}$-orthogonal,
		\item $\hi_\mathcal{O} \subset \underline{\ker}{\F}$, where $\hi_\mathcal{O}:=\left(\bigcup_{\Omega \in \mathcal{O}} {\rm im }\, \Omega \right)^{\rm cl}$ and $\underline{\ker}\,{\F} := \bigcap_{\V \in \underline{\Bor}(\mink)} \ker \,\F(\V)$.
		\item $\underline{{\rm im }}\, \F \subset \hi^\perp_\mathcal{O}$, where $\underline{{\rm im }}\, \F := \bigcup_{\V \in \underline{\Bor}(\mink)} {\rm im }\, \F(\V)$ and $\hi \cong \hi_\mathcal{O} \oplus \hi^\perp_\mathcal{O}$.
	\end{enumerate}
\end{lemma}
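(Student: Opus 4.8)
The plan is to funnel the whole statement through one elementary fact about positive operators, and then read off $(1)\Leftrightarrow(2)$ and $(2)\Leftrightarrow(3)$ as immediate consequences together with standard facts about closed subspaces of a Hilbert space.

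First I would record the engine of the argument: for any $\Omega\in\D(\hi)$ and any effect $A\in\eh$ (so $0\le A\le\id$), one has $\Tr[\Omega A]=0$ if and only if $\operatorname{im}\Omega\subseteq\ker A$. To see this, take a spectral decomposition $\Omega=\sum_i p_i\,\lvert\psi_i\rangle\langle\psi_i\rvert$ with $p_i>0$ and $(\psi_i)_i$ orthonormal, so $\operatorname{im}\Omega=(\operatorname{span}\{\psi_i\})^{\rm cl}$ and $\Tr[\Omega A]=\sum_i p_i\langle\psi_i,A\psi_i\rangle=\sum_i p_i\,\lVert A^{1/2}\psi_i\rVert^2$. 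Since every summand is non-negative, the trace vanishes precisely when $A^{1/2}\psi_i=0$ for all $i$; using $\ker A^{1/2}=\ker A$ for a positive operator, this says every $\psi_i\in\ker A$, and since $\ker A$ is closed, this is equivalent to $\operatorname{im}\Omega\subseteq\ker A$.

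Next, for $(1)\Leftrightarrow(2)$ I would apply this observation with $A=\F(\V)$. Condition $(1)$ then says: for every $\V\in\underline{\Bor}(\mink)$ and every $\Omega\in\mathcal{O}$, $\operatorname{im}\Omega\subseteq\ker\F(\V)$. Because $\ker\F(\V)$ is a closed subspace, this is equivalent to $\bigl(\bigcup_{\Omega\in\mathcal{O}}\operatorname{im}\Omega\bigr)^{\rm cl}=\hi_\mathcal{O}\subseteq\ker\F(\V)$ for every such $\V$, i.e. to $\hi_\mathcal{O}\subseteq\bigcap_{\V\in\underline{\Bor}(\mink)}\ker\F(\V)=\underline{\ker}\,\F$, which is exactly $(2)$. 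For $(2)\Leftrightarrow(3)$ I would use that $\F(\V)$ is bounded and self-adjoint, so $\overline{\operatorname{im}\F(\V)}=(\ker\F(\V))^{\perp}$; hence for the closed subspace $\hi_\mathcal{O}^{\perp}$ one has $\operatorname{im}\F(\V)\subseteq\hi_\mathcal{O}^{\perp}$ iff $(\ker\F(\V))^{\perp}\subseteq\hi_\mathcal{O}^{\perp}$ iff $\hi_\mathcal{O}\subseteq\ker\F(\V)$ (taking orthogonal complements of closed subspaces twice). Since $\underline{\operatorname{im}}\,\F=\bigcup_{\V}\operatorname{im}\F(\V)$ is contained in the closed set $\hi_\mathcal{O}^{\perp}$ iff each $\operatorname{im}\F(\V)$ is, condition $(3)$ is equivalent to $\hi_\mathcal{O}\subseteq\ker\F(\V)$ for all $\V$, hence to $(2)$.

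The argument is essentially routine, and I expect no serious obstacle; the only points needing care are the infinite-dimensional subtleties — ranges of $\Omega$ and of $\F(\V)$ need not be closed, and $\underline{\operatorname{im}}\,\F$ is a union of subspaces rather than a subspace — all of which are handled simply by passing to closures and invoking the closedness of $\ker\F(\V)$ and $\hi_\mathcal{O}^{\perp}$. The one spot to be genuinely attentive is the elementary fact above: the vanishing of the (possibly infinite) spectral sum $\sum_i p_i\lVert A^{1/2}\psi_i\rVert^2$ forces every term to vanish only because all terms are non-negative, and it is this term-by-term vanishing, not merely the vanishing of the sum, that delivers $\operatorname{im}\Omega\subseteq\ker A$.
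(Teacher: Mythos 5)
Your proposal is correct and follows essentially the same route as the paper: the spectral decomposition of $\Omega$ together with positivity of the effects yields $\Tr[\Omega\,\F(\V)]=0\Leftrightarrow \operatorname{im}\Omega\subseteq\ker\F(\V)$, from which $(1)\Leftrightarrow(2)$ follows by closedness of the kernels, and $(2)\Leftrightarrow(3)$ is the elementary orthogonal-complement duality the paper leaves implicit. You merely spell out the steps the paper labels an "easy extrapolation" and "elementary" in more detail, which is fine.
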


\begin{proof}
	1. $\Rightarrow$ 2.: Any state $\Omega \in \mathcal{O}$ can be diagonalized such that $\Omega = \sum_{i=1}^\infty p_i \dyad{\Omega_i}$ with $p_i \geq 0$. The image of $\Omega$ is then the closed subspace spanned by $\Omega_i \in \hi$. Thus if $\F$ is operationally $\{\Omega\}$-orthogonal we have for all $\V \in \underline{\Bor}(\mink)$
	\begin{equation}
		0 = \Tr[\Omega \F(\V)] 
		= \sum_i p_i\braket{\Omega_i}{\F(\V)\Omega_i} 
		\Leftrightarrow \F(\V)\Omega_i=0 \forall_i \Leftrightarrow {\rm im}\, \Omega \subset \ker\F(\V),
	\end{equation}
	where we have used positivity of the effects of $\F$. The claim is an easy extrapolation of this simple observation.
	
	2. $\Rightarrow$ 1.: Conversely, if $\hi_\mathcal{O} \subset \ker{\F}$ holds, for any $\V \in \underline{\Bor}(\mink)$ and $\Omega \in \mathcal{O}$ we have $F(\V)\Omega=0$ so $\Tr[F(\V)\Omega]=0$.\\
	Equivalence $2. \Leftrightarrow 3.$ is elementary.
\end{proof}

Thus, when we restrict our attention to precompact Borel subsets, an operationally $\mathcal{O}$-orthogonal POVM is equivalent to its restriction to the orthogonal complement of $\hi_\mathcal{O}$, i.e.,
\begin{equation}
	\underline{\F}: \Bor(\mink) \ni U \mapsto P_\mathcal{O}\F(U) \in \Eff(\hi_\mathcal{O}^\perp)\,,
\end{equation}
where $P_\mathcal{O}: \hi \to \mathcal{O}^\perp$ is the subspace projection and $\F(U)$ is considered on the domain restricted to $\hi_\mathcal{O}^\perp$.\\

The mentioned result can now be phrased as follows.

\begin{theorem}[\cite{giannitrapani_quantum_1998}]
    \label{thm: no-go giannitrapani}
    Let $\F: \Bor(\mink) \to \eh$ be a translation-covariant POVM on the Minkowski spacetime. Then $\F$ is vacuum-orthogonal. Moreover, if $\bh$ admits a faithful\footnote{A state $\omega$ on a von Neumann algebra $\Afrak \subset \bh$ is said to be \emph{faithful} if for all nonzero positive $A \in \Afrak$, $\omega(A) > 0$ \cite{bratteli_c-algebras_1979}.} 
    translation-invariant state, $\F$ is operationally~trivial.
\end{theorem}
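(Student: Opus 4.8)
The plan is to exploit a clash between translation-covariance and countable additivity. First I would fix a translation-invariant state $\Omega \in \state^{T(1,d-1)}$ and a precompact Borel set $\V \in \underline{\Bor}(\mink)$, and show that the number $\Tr[\Omega\,\F(\V+a)]$ does not depend on the translation $a \in T(1,d-1)$. Writing $U$ for the unitary representation of $T(1,d-1)$ with respect to which $\F$ is covariant, covariance gives $\F(\V+a) = U(a)\F(\V)U(a)^\dagger$, while translation-invariance of $\Omega$ gives $U(a)^\dagger \Omega\, U(a) = \Omega$; cyclicity of the trace then yields $\Tr[\Omega\,\F(\V+a)] = \Tr[\Omega\,\F(\V)] =: c \ge 0$ for all $a$.

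Next I would produce infinitely many pairwise disjoint translates of $\V$. Since $\V$ is precompact, the difference set $\V - \V$ is bounded, say contained in $\{x \in \Reals^d : |x| < r\}$ under the identification $\mink \cong \Reals^d$; choosing a single vector $a_0$ with $|a_0| \ge r$ (for instance spacelike) and setting $a_n := n\,a_0$, the sets $\{\V + a_n\}_{n \in \Nn}$ are pairwise disjoint. Since every $\F(\W)$ is an effect, countable additivity of $\F$ in the ultraweak topology (as in Definition \ref{def: relativistic QRF}) gives
\[
\sum_{n=1}^{\infty} \F(\V + a_n) = \F\Big( \bigcup_{n=1}^{\infty} (\V + a_n) \Big) \le \id ,
\]
and applying the ultraweakly continuous functional $\Tr[\Omega\,\cdot\,]$ (recall $\Omega$ is trace class) yields $\sum_{n=1}^{\infty} c \le 1$. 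A series all of whose terms equal the constant $c \ge 0$ can converge only if $c = 0$, so $\Tr[\Omega\,\F(\V)] = 0$ for every $\Omega \in \state^{T(1,d-1)}$ and every $\V \in \underline{\Bor}(\mink)$; this is exactly vacuum-orthogonality of $\F$.

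For the ``moreover'' clause, suppose $\bh$ admits a faithful translation-invariant state $\Omega$, diagonalised as $\Omega = \sum_i p_i \dyad{e_i}$ with $\{e_i\}$ an orthonormal basis and all $p_i > 0$. By the first part, for every precompact $\V$ we get $0 = \Tr[\Omega\,\F(\V)] = \sum_i p_i \ip{e_i}{\F(\V)e_i}$; each summand is nonnegative because $\F(\V) \ge 0$, hence $\ip{e_i}{\F(\V)e_i} = 0$ for all $i$, which (again by $\F(\V) \ge 0$) forces $\F(\V)e_i = 0$ for all $i$, and therefore $\F(\V) = 0$. Thus $\F$ is operationally trivial. (Equivalently: $\Tr[\Omega\,\F(\V)] = 0$ with $\F(\V)$ positive and $\Omega$ faithful directly gives $\F(\V)=0$; the diagonalisation is only to make this explicit without assuming the density operator of $\Omega$ is boundedly invertible.)

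There is no serious obstacle here; the only points needing a little care are (i) checking that in Lorentzian signature one really can translate a precompact set off itself infinitely often — which works because ``precompact'' in the manifold topology of $\mink$ just means ``bounded'', so translating far along any fixed direction suffices — and (ii) invoking the countable additivity of $\F$ in the correct (ultraweak) topology, so that the passage from the operator inequality to the numerical series $\sum_n c \le 1$ is legitimate, since $A \mapsto \Tr[\Omega A]$ is ultraweakly continuous for trace-class $\Omega$.
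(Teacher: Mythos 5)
Your proposal is correct and follows essentially the same route as the paper's proof: translation-covariance plus invariance of $\Omega$ makes $\Tr[\Omega\,\F(\V+a)]$ constant, infinitely many pairwise disjoint translates of the precompact set $\V$ together with ultraweak $\sigma$-additivity and boundedness of $\F$ force that constant to be zero, and faithfulness then upgrades $\Tr[\Omega\,\F(\V)]=0$ to $\F(\V)=0$. The only differences are cosmetic: you argue directly from convergence of the series bounded by $\Tr[\Omega\,\id]=1$ rather than by contradiction with an infinite sum, and you spell out the explicit construction of the disjoint translates and the positivity/faithfulness step, which the paper merely asserts.
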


\begin{proof}
     Due to translation covariance of $\F$ we have for any $\U \in \Bor(\mink)$ and any $a \in T(1,d-1)$,
    \begin{equation}
        \Tr[\Omega \, \F(\U)] = \Tr[\Omega \cdot a \, \F(\U)] = \Tr[\Omega \, a \cdot \F(\U)] = \Tr[\Omega \, \F(\U+a)] \, .
    \end{equation}

    Now assume $\V \in \underline{\Bor}(\mink)$ to be precompact and take $\V$ such that $\Tr[\Omega \F(\V)]  = \epsilon > 0$. Then we can find an \emph{infinite} sequence of translations $\{a_n\}_{n \in \mathbb{N}} \in T(1,d-1)$ such that $a_i \cdot\V \cap a_j \cdot \V = \varnothing$ for all $i \neq j$. Then by additivity of $\F$ we get
    \begin{equation}
        \Tr\left[\Omega \F\left(\bigcup_{n=1}^\infty \V.a_{n}\right)\right] = \sum_{n=1}^\infty \Tr[\Omega \F(\V)] = \sum_{n=1}^\infty \epsilon = \infty \, .
    \end{equation}
    But $\F$ is bounded, which gives a contradiction. Thus, $\Tr[\Omega \F(\V)] = 0$ for all precompact $\V \in \underline{\Bor}(\mink)$.
    
    Finally, since the effects of $\F$ are positive, if $\Omega$ is faithful, from $\Tr[\Omega \, \F(\V)]=0$ for all precompact $\V \in \underline{\Bor}(\mink)$ we can conclude that $\F(\V) = 0$ for all such $\V$.
\end{proof}

\begin{corollary}
	The spacetime marginalized frame observables of relativistic QRFs are vacuum-orthogonal.
\end{corollary}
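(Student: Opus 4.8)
The plan is to reduce the Corollary immediately to the no-go Theorem~\ref{thm: no-go giannitrapani}. The only thing that needs checking is that the spacetime marginal $\F_\R$ of an arbitrary relativistic QRF $\R = (U_\R, \E_\R, \hir)$ is a translation-covariant POVM on $\mink$ in precisely the sense required there, i.e.\ covariant with respect to the unitary representation $a \mapsto U_\R(a,e)$ of the translation subgroup $T(1,d-1) \subset \Poincup$ on $\hir$.

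First I would note that $\F_\R : \Bor(\mink) \to \ehir$, $\F_\R(\U) = \E_\R(\U \times \mLup)$, is itself a (normalised) POVM whenever $\E_\R$ is: countable additivity passes to the marginal because $\bigcup_n (\U_n \times \mLup) = \big(\bigcup_n \U_n\big) \times \mLup$ for disjoint $\U_n$, and normalisation because $\mink \times \mLup = F$. Then I would invoke the covariance computation already carried out in Section~\ref{sec: covariance}, which establishes that $\F_\R$ is $\Poincup$-covariant. Restricting that identity to pure translations $(a,e)$, and using $(a,e)\cdot\U = \U + a$, yields $U_\R(a,e)\,\F_\R(\U)\,U_\R(a,e)^\dagger = \F_\R(\U + a)$ for all $a \in T(1,d-1)$ and all $\U \in \Bor(\mink)$. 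This is exactly the hypothesis of Theorem~\ref{thm: no-go giannitrapani} applied with $\hi = \hir$.

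Applying that theorem then gives directly that $\F_\R$ is vacuum-orthogonal, i.e.\ operationally $\framestate^{T(1,d-1)}$-orthogonal: $\Tr[\Omega\,\F_\R(\V)] = 0$ for every precompact $\V \in \underline{\Bor}(\mink)$ and every translation-invariant $\Omega \in \framestate$. This is the claim. There is essentially no substantive obstacle here, since the two ingredients -- translation-covariance of $\F_\R$ and the Giannitrapani bound -- are already in place; the proof is the observation that the spacetime marginal inherits translation-covariance and that this is all the no-go theorem needs. The one point worth flagging explicitly is the bookkeeping: one should make sure that the ``translation-invariant state'' appearing in Theorem~\ref{thm: no-go giannitrapani} is interpreted as invariance under $U_\R(\cdot,e)$, so that the statement is about frame states $\omega \in \framestate$ that are stationary under spacetime translations of the reference, which is the physically relevant notion.
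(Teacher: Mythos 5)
Your proof is correct and is precisely the argument the paper intends: the covariance computation of Sec.~\ref{sec: covariance} shows $\F_\R$ is $\Poincup$- (hence translation-) covariant, and Theorem~\ref{thm: no-go giannitrapani} then yields vacuum-orthogonality immediately. The paper leaves this as an unproved corollary for exactly this reason, so your write-up matches its approach, with the POVM bookkeeping and the interpretation of translation-invariance of $\omega$ under $U_\R(\cdot,e)$ being sensible added detail.
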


 Thus, our marginal POVMs $\F_\R$ cannot give a non-zero probability of a translation-invariant state to be localized in any pre-compact region. Since such states usually exist for representations of the Poincar{\'e} group\footnote{For finite dimensional Hilbert spaces, the set of states $\systemstate$ is a compact convex subset of a locally convex vector space $\bhs$, so for ultraweakly continuous unitary representations, Day's fixed point theorem \cite{runde_1_2002} ensures the existence of translation-invariant states by the amenability of the translation group. In infinite dimensions, the compactness of the set of states depends on the topology considered on $\bhs$, so this existence may or may not be ensured (in particular, $\systemstate$ is not compact in the ultraweak topology if $\dim \his = \infty$).} and, operationally speaking, measurements can only be carried out within a finite region of spacetime, this is of real concern. However, we do not see this limitation as worrying or surprising---it merely says that we cannot orient the frame using such a state. Physically, we do not expect to be able to prepare a QRF in the vacuum state anyway: meaningful clocks and rods are themselves expected to be some excitations of the vacuum. Thus, a slightly stronger yet arguably still very reasonable condition that relativistic QRFs may satisfy, is the following.

\begin{definition}
	A relativistic QRF $\R=(\U_\R,\E_\R,\hir)$ is \emph{strictly vacuum-orthogonal} if the image of the spacetime marginal POVM $\F_\R: \Bor(\mink) \ni U \mapsto \E_\R(U \times \mLup)$ lies in the orthogonal complement of the vacuum sector, i.e.,
	\begin{equation}
		{\rm im} \,\F_\R \subseteq \hi_{\rm vac}^\perp, \,\text{ where } \,{\rm im} \,\F_\R = \bigcup_{\U \in \Bor(\mink)} {\rm im}\,\F_\R(\U)\, \text{ and } \, \hi_{\rm vac}:=\left(\bigcup_{\Omega \in \framestate^{T(1,d-1)}} {\rm im }\, \Omega \right)^{\rm cl}.
	\end{equation}
\end{definition}

We finish the discussion of vacuum-orthogonality by placing Giannitrapani's result in the context of algebraic QFT \cite{haag_algebraic_1964,halvorson_algebraic_2006,fewster_algebraic_2019}. If one would like the effects of $\F$ to be local in the sense of belonging to the appropriate local algebras of an algebraic QFT satisfying the spectrum condition\footnote{The spectrum condition states that $\sigma(P_\mu) \subset \overline{V_+} = \{p \mid p^0 \geq 0, p^2 \geq 0\}$, i.e. the joint spectrum $\sigma(P_\mu)$ of the energy-momentum lies in the forward causal cone.}  and admitting a pure Poincaré-invariant state---both being very standard requirements, the result above renders the POVM operationally trivial as the Reeh-Schlieder theorem \cite{reeh_bemerkungen_1961} ensures that the vacuum state is cyclic and separating for all precompact regions. In \cite{giannitrapani_quantum_1998} it is shown that, under these conditions, the effects of $\F$ can not even be quasilocal, i.e., they won't belong to the algebra generated by all the local algebras together. This could be worrying for us since given a Poincar{\'e} covariant POVM on $\mink$, one can \emph{generate} a quantum field theory in the algebraic sense by taking local algebras of (Borel) regions $\U \in \Bor(\mink)$ to be
\begin{equation}
	\mathfrak{A}(\U) := \{\F(\V) \mid \V \subset \U\}''.
\end{equation}
	However, even if there is a pure vacuum vector in $\hi$, either considering the operationally equivalent restricted observable $\underline{\F}$ or assuming strict vacuum-orthogonality will lead to an AQFT that does \emph{not} admit a vacuum state. Indeed, since all $\F(\U)$ can be embedded into $B(\hi^\perp_{\rm vac}) \subseteq B(\hi)$ which is a von Neumann algebra itself, the local algebras $\mathfrak{A}(\U)$ need to be subalgebras of $B(\hi^\perp_{\rm vac})$ and thus won't admit pure vacuum states by construction. Thus, one can safely assume that $[\F_\R(\U),\F_\R(\V)]=0$ for causally separated regions in $\mink$ without rendering $\F$ operationally trivial.

\section{Relational causality}
\label{sec:Causality}

A cornerstone of modern physics is the understanding that relativistic no-signalling is paramount to avoid paradoxes related to killing one's beloved grandparents. The relativistic no-signalling principle is fundamentally expressed in a probabilistic fashion: if Alice and Bob can perform experiments across spacelike-separated locations using input parameters $a$ and $b$ yielding outcomes $x$ and $y$, respectively, then the conditional expectation values for each agent should not depend on the other's choice. Mathematically, this is expressed as
\begin{equation}
    p(x \mid a,b) = p(x \mid a) \qquad \& \qquad p(y \mid a,b) = p(y \mid b) \, .
\end{equation}

This principle is however highly non-trivial to precisely implement at the level of individual relativistic theories. For example, a plethora of implementations of this principle have been proposed in relativistic quantum theory (e.g. microcausality, Einstein causality, split property, $C^*$- and $W^*$-independence, the spectrum condition, etc.), with no unique universally accepted set of conditions. In this paper, we do not try to solve this conundrum, but rather discuss different implementations of some well-known quantum causality principles in the framework of relational QFT. 

\subsection{The spacelike resolution of relativistic QRFs}

Importantly, relativistic no-signalling is an \emph{epistemic}, rather than an ontological principle: it asserts what can be determined by observers, not what ``really happens" beyond observations. Motivated by this, a conservative approach to implementing the relativistic no-signalling principle should be taking available resources of the QRFs into account. First, we introduce some notation. 

\begin{definition}
    Let $\sigma \geq 0$. If $\U, \V \subset \mink$, we say that $\U$ and $\V$ are \emph{$\sigma$-spacelike separated} and write $\U \indepsigma \V$ if
    \begin{equation}
        \sup \{(x-y)^2 \mid x \in \U, \, y \in \V\} < -\sigma \, .
    \end{equation} Moreover, let $\R = (\U_\R,\E_\R,\hir)$ be a relativistic QRF and $\omega_1,\omega_2 \in \framestate$. We say that $\omega_1$ and $\omega_2$ are $(\R,\sigma)$-spacelike separated, written $\omega_1 \indepERsigma \omega_2$, if $\supp \mu^{\F_\R}_{\omega_1} \indepsigma \supp \mu^{\F_\R}_{\omega_2}$. We write $\omega_1 \indepER \omega_2$ if $\omega_1 \indepERsigma \omega_2$ for $\sigma=0$. 
\end{definition}

That is, if $\omega_1 \indepERsigma \omega_2$ then $(\R,\omega_1)$ and $(\R,\omega_2)$ are such that the support of their respective spacetime marginal measures are spacelike separated in the sense of this ``thickened" light cone. The rationale for this is that local commutativity holds beyond some distance scale (e.g. the Planck scale), after which it becomes operationally meaningful to discuss superluminal signalling in the first place. We also introduce another operationally meaningful notion: that of compact QRF preparations.

\begin{definition}
    Let $\R$ be a relativistic QRF. A preparation $\omega \in \framestate$ is said to be $\R$-compact if $\supp \mu^{\E_\R}_{\omega}$ is compact. We write
    \begin{equation}
        \operationalstate^c := \{\omega \in \framestate \mid \omega \text{ is } \R\text{-compact}\}
    \end{equation}
    as the (operationally meaningful) subset of all states which yield a compact $\mu^{\E_\R}_{\omega}$.
\end{definition}

Operationally, this is meaningful because the laboratory's spacetime localization is supported in a bounded region, and the preparation does not place the frame in superpositions and mixtures of arbitrarily large rapidities. Note that $\operationalstate^c$ is closed under Poincaré transformations. However, the union of those sets with compact rapidities need not be uniformly bounded in rapidities, and likewise the union of those sets with bounded spacetime support need not be uniformly bounded in spacetime. We thus consider the following further restriction.
\begin{notation}
    Let $\R$ be a relativistic QRF and $K \subseteq F$. We write
    \begin{equation}
        \operationalstate^K := \{\omega \in \framestate \mid \supp \mu^{\E_\R}_{\omega} \subset K \} \, .
    \end{equation}
    If $K$ is compact, then $\operationalstate^K \subset \operationalstate^c$.
\end{notation}
This is the subset of all states $\omega$ such that the oriented QRF $(\R,\omega)$ is supported in $K \subset F$. Both $\operationalstate^K$ and $\operationalstate^c$ are easily seen to be convex: for example, if $\supp \mu^{\E_\R}_{\omega_1} \subset K$ and $\supp \mu^{\E_\R}_{\omega_2} \subset K$ for $\omega_1,\omega_2 \in \operationalstate^K$, then for any $\omega = p \omega_1 + (1-p) \omega_2$, $\supp \mu^{\E_\R}_{\omega} = \supp \mu^{\E_\R}_{\omega_1} \cup \supp \mu^{\E_\R}_{\omega_2} \subset K$. However, unlike $\operationalstate^c$, $\operationalstate^K$ is not generally closed under Poincaré transformations, as we now show.

\begin{proposition}
    \label{prop: SRK covariance}
    Let $\R$ be a relativistic QRF and $K \subseteq F$. Then $\forall (a,\Lambda) \in \Poincup$,
    \begin{equation}
        (a,\Lambda) \cdot \operationalstate^K = \operationalstate^{(a,\Lambda) \cdot K} \, .
    \end{equation}
\end{proposition}

\begin{proof}
    We have
    \begin{align}
        (a,\Lambda) \cdot \operationalstate^K &= \{(a,\Lambda) \cdot \omega \in \framestate \mid \supp \mu^{\E_\R}_{\omega} \subset K \} \\
        &= \{\tilde{\omega} \in \framestate \mid \supp \mu^{\E_\R}_{\tilde{\omega} \cdot (a,\Lambda)} \subset K \} \\
        &= \{\tilde{\omega} \in \framestate \mid (a,\Lambda)^{-1} \cdot \supp \mu^{\E_\R}_{\tilde{\omega}} \subset K \} \\
        &= \{\omega \in \framestate \mid  \supp \mu^{\E_\R}_{\omega} \subset (a,\Lambda) \cdot K \} = \operationalstate^{(a,\Lambda) \cdot K} \, .
    \end{align}
\end{proof}

To remain operational, one should understand local commutativity with finite resources. Indeed, to avoid superluminal signalling, one should ensure there is no violations of Einstein causality \emph{within the resolution of the frame}. This is further motivated by the fact that while Poincaré-covariant POVMs on Minkowski spacetime can be constructed, in their most general form \cite{toller_localization_1999,mazzucchi_observables_2001}, their localizability properties seem to be constrained \cite{beneduci_note_2013}. This supports the claim that there is a generic lack of \enquote{localizability} in the spacelike directions of the marginal POVM $\F_\R$. In effect, the distinguished family of states naturally defines a \enquote{cut-off scale} for spacelike separation as follows.

\begin{definition}
    Let $\R$ be a relativistic QRF and $\operationalstate \subseteq \framestate$ be convex. The quantity
    \begin{equation}
        \sigma(\operationalstate) := \inf_{\omega \in \operationalstate} \sup \{\abs{(x-y)^2} \mid x\indep y \in \supp \mu^{\F_\R}_\omega\}
    \end{equation}
    is the \emph{$\operationalstate$-spacelike resolution of $\R$}.
\end{definition}
Any spacelike separation $(x-y)^2 \geq -\sigma(\operationalstate)$ is, from the \enquote{blurry} point of view of the QRF, potentially causally separated. It is only if the frame can, in principle, perfectly distinguish spacelike-separated points that $\sigma(\operationalstate)=0$. From this viewpoint, it is quite natural to assume that a \enquote{relevant} (in some appropriate sense, e.g. irreducible in the sense of Def.~ \ref{def:irreducible}) $\phi \in \bhs$ must be $(\operationalstate,\sigma(\operationalstate))$-(micro)causal, in the sense we will define below. This idea is depicted pictorially in Fig. \ref{fig:approximate causality}.

\begin{figure}
    \centering
    \begin{subfigure}[t]{0.3\textwidth}
        \begin{tikzpicture}[x=0.75pt,y=0.75pt,yscale=-1,xscale=1]

\draw   (353.6,97.1) .. controls (374.6,89.1) and (383.6,99.1) .. (389.6,104.1) .. controls (395.6,109.1) and (418.93,122.43) .. (423.6,153.1) .. controls (428.27,183.77) and (367.6,221.1) .. (355.67,223.93) .. controls (343.73,226.77) and (314.27,183.77) .. (317.6,153.1) .. controls (320.93,122.43) and (332.6,105.1) .. (353.6,97.1) -- cycle ;
\draw    (260.67,175.67) -- (260.03,137.27) ;
\draw [shift={(260,135.27)}, rotate = 89.05] [color={rgb, 255:red, 0; green, 0; blue, 0 }  ][line width=0.75]    (10.93,-3.29) .. controls (6.95,-1.4) and (3.31,-0.3) .. (0,0) .. controls (3.31,0.3) and (6.95,1.4) .. (10.93,3.29)   ;
\draw    (260.67,175.67) -- (295.33,175.29) ;
\draw [shift={(297.33,175.27)}, rotate = 179.37] [color={rgb, 255:red, 0; green, 0; blue, 0 }  ][line width=0.75]    (10.93,-3.29) .. controls (6.95,-1.4) and (3.31,-0.3) .. (0,0) .. controls (3.31,0.3) and (6.95,1.4) .. (10.93,3.29)   ;
\draw    (320.6,153.1) -- (420.6,153.1) ;
\draw [shift={(423.6,153.1)}, rotate = 180] [fill={rgb, 255:red, 0; green, 0; blue, 0 }  ][line width=0.08]  [draw opacity=0] (8.93,-4.29) -- (0,0) -- (8.93,4.29) -- cycle    ;
\draw [shift={(317.6,153.1)}, rotate = 0] [fill={rgb, 255:red, 0; green, 0; blue, 0 }  ][line width=0.08]  [draw opacity=0] (8.93,-4.29) -- (0,0) -- (8.93,4.29) -- cycle    ;

\draw (256,123.4) node [anchor=north west][inner sep=0.75pt]  [font=\scriptsize]  {$t$};
\draw (300,168.07) node [anchor=north west][inner sep=0.75pt]  [font=\scriptsize]  {$\vec{x}$};
\draw (356.67,158.73) node [anchor=north west][inner sep=0.75pt]  [font=\footnotesize]  {$\sigma ( \operationalstate)$};
\draw (330,111.4) node [anchor=north west][inner sep=0.75pt]    {$\supp \mu _{\omega_\text{min}}^{\F_{\R}}$};
\end{tikzpicture}

    \caption{Spacetime support of the state preparation $\omega_\text{min} \in \operationalstate$ which provides the ``smallest`` spacelike resolution $\sigma(\operationalstate)$ for the frame $\R$ out of all ``operationally meaningful'' states $\omega \in \operationalstate$.}
    \end{subfigure}
    \qquad
    \begin{subfigure}[t]{0.6\textwidth}
        \begin{tikzpicture}[x=0.75pt,y=0.75pt,yscale=-0.85,xscale=0.85]

\draw   (118,116.27) .. controls (138,106.27) and (144,112.93) .. (156.67,118.93) .. controls (169.33,124.93) and (173.33,135.6) .. (180,160.93) .. controls (186.67,186.27) and (153.33,208.93) .. (126,204.93) .. controls (98.67,200.93) and (98,126.27) .. (118,116.27) -- cycle ;
\draw  [dash pattern={on 0.84pt off 2.51pt}]  (172,189.67) -- (245.33,88.93) ;
\draw   (290.67,118) .. controls (310.67,108) and (352,112.4) .. (380.67,118) .. controls (409.33,123.6) and (418,152.93) .. (380.67,178) .. controls (343.33,203.07) and (310.67,208) .. (290.67,178) .. controls (270.67,148) and (270.67,128) .. (290.67,118) -- cycle ;
\draw  [dash pattern={on 0.84pt off 2.51pt}]  (280,126) -- (225.33,208.27) ;
\draw    (189.6,170.62) -- (247,170.92) ;
\draw [shift={(250,170.93)}, rotate = 180.3] [fill={rgb, 255:red, 0; green, 0; blue, 0 }  ][line width=0.08]  [draw opacity=0] (8.93,-4.29) -- (0,0) -- (8.93,4.29) -- cycle    ;
\draw [shift={(186.6,170.6)}, rotate = 0.3] [fill={rgb, 255:red, 0; green, 0; blue, 0 }  ][line width=0.08]  [draw opacity=0] (8.93,-4.29) -- (0,0) -- (8.93,4.29) -- cycle    ;
\draw    (224.6,145.64) -- (263.67,146.22) ;
\draw [shift={(266.67,146.27)}, rotate = 180.85] [fill={rgb, 255:red, 0; green, 0; blue, 0 }  ][line width=0.08]  [draw opacity=0] (8.93,-4.29) -- (0,0) -- (8.93,4.29) -- cycle    ;
\draw [shift={(221.6,145.6)}, rotate = 0.85] [fill={rgb, 255:red, 0; green, 0; blue, 0 }  ][line width=0.08]  [draw opacity=0] (8.93,-4.29) -- (0,0) -- (8.93,4.29) -- cycle    ;
\draw   (486,84.4) .. controls (506,74.4) and (537.33,89.2) .. (554.67,97.87) .. controls (572,106.53) and (603.33,115.87) .. (576,144.4) .. controls (548.67,172.93) and (506,174.4) .. (486,144.4) .. controls (466,114.4) and (466,94.4) .. (486,84.4) -- cycle ;
\draw  [dash pattern={on 0.84pt off 2.51pt}]  (396.67,164.33) -- (470,63.6) ;
\draw  [dash pattern={on 0.84pt off 2.51pt}]  (402.67,193) -- (476,92.27) ;
\draw    (416.6,142.16) -- (436.33,142.57) ;
\draw [shift={(439.33,142.63)}, rotate = 181.19] [fill={rgb, 255:red, 0; green, 0; blue, 0 }  ][line width=0.08]  [draw opacity=0] (8.93,-4.29) -- (0,0) -- (8.93,4.29) -- cycle    ;
\draw [shift={(413.6,142.1)}, rotate = 1.19] [fill={rgb, 255:red, 0; green, 0; blue, 0 }  ][line width=0.08]  [draw opacity=0] (8.93,-4.29) -- (0,0) -- (8.93,4.29) -- cycle    ;
\draw    (458.47,85.98) -- (497.53,86.56) ;
\draw [shift={(500.53,86.6)}, rotate = 180.85] [fill={rgb, 255:red, 0; green, 0; blue, 0 }  ][line width=0.08]  [draw opacity=0] (8.93,-4.29) -- (0,0) -- (8.93,4.29) -- cycle    ;
\draw [shift={(455.47,85.93)}, rotate = 0.85] [fill={rgb, 255:red, 0; green, 0; blue, 0 }  ][line width=0.08]  [draw opacity=0] (8.93,-4.29) -- (0,0) -- (8.93,4.29) -- cycle    ;

\draw (207.67,175.4) node [anchor=north west][inner sep=0.75pt]  [font=\scriptsize]  {$\sigma _{12}$};
\draw (234.33,129.07) node [anchor=north west][inner sep=0.75pt]  [font=\scriptsize]  {$\sigma ( \operationalstate)$};
\draw (109.33,148.73) node [anchor=north west][inner sep=0.75pt]  [font=\small]  {$\supp \mu _{\omega _{1}}^{\F_{\R}}$};
\draw (305.33,136.73) node [anchor=north west][inner sep=0.75pt]  [font=\small]  {$\supp \mu _{\omega _{2}}^{\F_{\R}}$};
\draw (498.67,113.4) node [anchor=north west][inner sep=0.75pt]  [font=\small]  {$\supp \mu _{\omega _{3}}^{\F_{\R}}$};
\draw (415.6,145.5) node [anchor=north west][inner sep=0.75pt]  [font=\scriptsize]  {$\sigma _{23}$};
\draw (467.33,69.07) node [anchor=north west][inner sep=0.75pt]  [font=\scriptsize]  {$\sigma ( \operationalstate)$};

\end{tikzpicture}

    \caption{Examples of $\sigma$-spacelike separations for different state preparations of the frame $\R$ with $\operationalstate$-spacelike resolution $\sigma(\operationalstate)$. We see that $\sigma_{12} > \sigma(\operationalstate)$ i.e. the preparations $\omega_1$ and $\omega_2$ are ``operationally spacelike separated'': the frame can resolve this spacelike separation. On the other hand, $\sigma_{23} < \sigma(\operationalstate)$ i.e. the frame's spacelike resolution is too small to determine whether $\omega_2$ and $\omega_3$ should be understood as yielding causally disconnected supports: relational quantum fields generated by these preparations may not commute.}
    \end{subfigure}
    \caption{Pictorial representations of $\operationalstate$-spacelike resolution of frames and $\sigma(\operationalstate)$-spacelike separation of regions.}
    \label{fig:approximate causality}
\end{figure}

Conversely, an observer may have finite resources available for experiments, and in particular a finite resolution $\sigma_\text{ex}$ for resolving spacelike separations, and may thus want to restrict the set of allowed states of the frame to respect that, i.e. considering $\operationalstate$ be such that $\sigma(\operationalstate) \geq \sigma_\text{ex}$. With these notions in mind, let us examine different implementations of the no-superluminal signalling principle.

\subsection{Einstein causality}

\label{subsec:Einstein causality}

As we highlighted, relativistic no-signalling is an epistemic rather than an ontological principle. An epistemic approach to implementing the relativistic no-signalling principle is via Einstein causality, which has a natural definition in RQFT.

\begin{definition}
    Let $\R=(\U_\R,\E_\R,\hir)$ be a relativistic QRF\footnote{Here, we fix a single QRF $\R$ for the causality conditions, although the definitions could easily be generalised to pairs $\R_1$ and $\R_2$ of relativistic QRFs. Such an approach may lead to some notion of relative locality \cite{amelino-camelia_principle_2011}. It would also be interesting to examine how causality conditions change under QRF transformations. We leave these possibilities to future work.}, $\operationalstate \subseteq \framestate$ be a convex subset of frame preparations\footnote{One can think of this subset as an \enquote{operationally meaningful} subset of all state preparations. For example, we may want to restrict to state preparations $\omega$ for which $\supp \mu^{\E_\R}_{\omega}$ is compact. Convexity is natural to allow for statistical mixtures of operationally meaningful preparations. Ultimately, the choice of a suitable $\operationalstate$ is reliant on the resources available to agents, and a deeper understanding of the interplay between resource theory and QRFs may shed light on the precise properties of such sets.} and $\sigma \geq 0$. Then
    \begin{itemize}
        \item An operator $\phi \in \bhs$ is said to be $(\operationalstate,\sigma)$-\emph{causal}, if for all $\omega_i \in \operationalstate$,
     \begin{equation}
        \label{eqn:phi Einstein causality}
        \omega_1 \indepERsigma \omega_2 \; \Longrightarrow \; \comm{\hat{\Phi}^\R(\omega_1)}{\hat{\Phi}^\R(\omega_2)} =
        \comm{\hat{\Phi}^\R(\omega_1)^\dagger}{\hat{\Phi}^\R(\omega_2)} = 0 \, .
    \end{equation}
        \item A subset\footnote{It may be appropriate to assume $\mathcal{O}_\S$ to be closed under adjoints. In particular, if $\mathcal{O}_\S$ is $(\operationalstate,\sigma)$-causal, then $\tilde{\mathcal{O}}_\S = \mathcal{O}_\S \cup \{\phi^\dagger \mid \phi \in \mathcal{O}_\S\}$ is also $(\operationalstate,\sigma)$-causal.} $\mathcal{O}_\S \subseteq \bhs$ is said to be $(\operationalstate,\sigma)$-\emph{causal}, if for all $\phi_i \in \mathcal{O}_\S$ and all $\omega_i \in \operationalstate$:
    \begin{equation}
        \label{eqn:Einstein causality}
        \omega_1 \indepERsigma \omega_2 \; \Longrightarrow \; \comm{\hat{\Phi}_1^\R(\omega_1)}{\hat{\Phi}_2^\R(\omega_2)} =
        \comm{\hat{\Phi}_1^\R(\omega_1)^\dagger}{\hat{\Phi}_2^\R(\omega_2)} = 0\, .
    \end{equation}
    \end{itemize}
\end{definition}

Requiring the observable $\phi$ to be $(\operationalstate,\sigma)$-causal can be justified along the following lines. The relational local observable $\hat{\Phi}^\R(\omega)$, being the restricted relativized version of the observable $\phi \in \bhs$ in that we have $\hat{\Phi}^\R(\omega)=\Y^\R_\omega (\phi)$, admits the following interpretation---it is the observable $\phi$ accessed via the QRF $\R$ prepared in the state $\omega \in \operationalstate$.\footnote{See \cite{carette_operational_2025,glowacki_operational_2023} for more of the conceptual discussion of the operational QRF framework. The interpretation of local QFT operators as observables is supported by the Fewster-Versch approach to measurement theory in Algebraic QFT \cite{fewster_measurement_2023}, and arguably also by the issues arising from interpreting them as accessible local operations, famously highlighted by Sorkin \cite{Sorkin1993-SORIMO}, and resolved in \cite{fewster_measurement_2023}.} Further, the frame prepared in $\omega$ is understood as being located in $\supp \mu^{\F_\R}_\omega \subseteq \mink$. Thus, requiring \eqref{eqn:phi Einstein causality} is understood as assuring the local observables generated by $\phi$ with respect to the frame $\R$ are commensurable whenever they are spacelike separated up to the resolution provided by $\sigma$; whether a collection of operators mutually satisfy this operational causality principle is determined by the condition \eqref{eqn:Einstein causality}.\\

In constructive quantum field theory, the conditions with $\sigma > 0$ can be shown to be equivalent to the case of $\sigma=0$, highlighting the \enquote{global nature of local commutativity} \cite{streater_pct_1989,wightman_quantum_1960}. This fact depends however on some properties of the kernels of Wightman functions, such as translation invariance that does not generally hold in relational quantum field theory (unless one assumes global orientation). It is thus an open question to determine when these conditions are equivalent to those with $\sigma=0$;  we expect generically they are not.\\

We note that finite-precision relational causality is insensitive to Poincar{\'e} transformations in the following sense.

\begin{lemma}
    \label{lem:causality and covariance}
    Let $\R$ be a relativistic QRF, $\operationalstate \subseteq \framestate$, $\mathcal{O}_\S \subseteq \bhs$, $\sigma \geq 0$ and $(a,\Lambda) \in \Poincup$. Then $\mathcal{O}_\S$ is $(\operationalstate,\sigma)$-causal iff it is $(\operationalstate \cdot (a,\Lambda),\sigma)$-causal.
\end{lemma}

\begin{proof}
    See App. \ref{proof causality and covariance}.
\end{proof}

\begin{corollary}
    \label{cor:causality and covariance}
    Let $\R$ be a relativistic QRF and $\operationalstate \subseteq \framestate$ be convex. Then for all $\phi \in \bhs$, $\sigma \geq 0$ and $(a,\Lambda) \in \Poincup$, $\phi$ is $(\operationalstate,\sigma)$-causal iff it is $(\operationalstate \cdot (a,\Lambda),\sigma)$-causal.
\end{corollary}

In other words, finite-precision $(\operationalstate,\sigma)$-causality is stable under Poincaré transformations.\footnote{Note that this does not mean that $\phi$ is $(\operationalstate,\sigma)$-causal iff it is $(\framestate,\sigma)$-causal, as can easily be seen if $\operationalstate$ is a proper convex subset of $\framestate$ which is closed under Poincaré transformations. This is unlike, for example, nonempty subsets $\W$ of $F$, for which the transitivity of $\Poincup \curvearrowright F$ implies that $$\bigcup_{(a,\Lambda) \in \Poincup} (a,\Lambda) \cdot \W  = F \, .$$}  Note that some examples of $\operationalstate$ can be closed under all $(a,\Lambda) \in \Poincup$. This includes, for example, \enquote{trivial} cases such as $\operationalstate = \framestate$ and $\operationalstate = \varnothing$ as well as more interesting ones such as $\operationalstate = \framestate^{\Poincup}$. Other nontrivial examples include superselection sectors, e.g. if $\hir = \hir^{(0)} \oplus \hir^{(1)} \oplus \cdots$ is some scalar Fock space, where $\hir^{(0)}$ is the vacuum sector, $\hir^{(1)}$ the $1$-particle sector etc; then $\operationalstate=\mathscr{D}(\hir^{(1)}) \subset \framestate$ is indeed convex and closed under unitary actions of Poincaré.
Further note that combining Prop. \ref{prop: SRK covariance} with Lem. \ref{lem:causality and covariance}, it follows that for any $\sigma > 0$ and $(a,\Lambda) \in \Poincup$, a subset $\mathcal{O}_\S \subseteq \bhs$ is $(\operationalstate^K,\sigma)$-causal iff it is $(\operationalstate^{(a,\Lambda)\cdot K)},\sigma)$-causal. Thus, when considering the $\operationalstate^K$-causal operators, the set $K$ should be thought of as being specified only up to Poincar{\'e} transformations.

Regarding the $(\operationalstate,\sigma)$-causality of the whole system $\S$, i.e., taking $\mathcal{O}_\S = \bhs$, consider the following. If observers in spacetime can access everything (and only that) located within the support of the spacetime marginal of their respective oriented quantum reference frames, then $(\operationalstate,\sigma)$-causality prevents faster-than-light communication through non-selective measurements. The usual argument---adjusted to our present relational context---goes as follows: suppose Alice has access to $\S$ through the oriented relativistic QRF $(\R,\omega_1)$ whilst Bob has access to $\S$ through the oriented relativistic QRF $(\R,\omega_2)$, where both Alice and Bob are using the same QRF $\R$ but prepared in two different states such that $\omega_1 \indepERsigma \omega_2$. Then if Bob does nothing, the expectation value of an observable $\phi \in \bhs$ with respect to some state $\rho \in \systemstate$ from Alice's perspective is 
$$\expval{\hat{\Phi}^\R(\omega_1)}_{\rho} = \Tr[\rho \hat{\Phi}^\R(\omega_1)] \, . $$ 

A plausible (though nontrivial and arguably contentious) assumption that one can make is that an observer can only implement operations associated to the relational local observables accessible via their oriented frame.\footnote{Perhaps, as mentioned before, the relational local observables correspond to those operators that observers can \emph{measure} rather than those they can \emph{use} to manipulate the system, in which case this assumption, standard as it is, would fail to hold. Certainly, this discussion depends heavily on which resources observers have access to, and is contingent on one's approach to measurement schemes in relativistic quantum theory. It would be important to figure out how measurements are conducted in RQFT to get a fully satisfactory picture--- this however constitutes a research program on its own. See the Outlook for an outline of how this can be approached.} Then, if Bob instead performs a measurement in $\supp \mu^{\F_\R}_{\omega_2}$ based on some measurement operators which themselves stem from relational local observables $\hat{\Psi}_i^\R(\omega_2)$ where $\psi_i \in \bhs$, $i=1,\cdots,n \in \mathbb{N}$, such that 
\begin{equation}
    \sum_{i=1}^n \hat{\Psi}_i^\R(\omega_2)^\dagger \hat{\Psi}_i^\R(\omega_2) = \mathbb{1}_{\bhs} \, ,
\end{equation}
and assuming that the states then update following non-selective measurements as
$$\rho \mapsto \rho'=\sum_{i=1}^n \hat{\Psi}_i^\R(\omega_2) \rho \hat{\Psi}_i^\R(\omega_2)^\dagger \, ,$$
the expectation value of Alice's relational local observable takes the form
$$\expval{\hat{\Phi}^\R(\omega_1)}_{\rho'} = \sum_{i=1}^n \Tr[\hat{\Psi}_i^\R(\omega_2) \rho \hat{\Psi}_i^\R(\omega_2)^\dagger \hat{\Phi}^\R(\omega_1)] \, .$$

If this is not equal to $\expval{\hat{\Phi}^\R(\omega_1)}_{\rho}$, then Bob could signal to Alice faster-than-light by choosing (or not) to perform a series of measurements on his side. One way to ensure no-superluminal signalling is thus by imposing Einstein causality\footnote{Note that in the formulation of Einstein causality of Eqn. \eqref{eqn:Einstein causality}, we require that fields and their adjoints commute. For scalar quantum fields in RQFT, this is redundant: since $\yen^\R_{\omega}$ is adjoint-preserving, if one restricted relativisation of $\phi \in \bhs$ commutes with \emph{any} spacelike separated restricted relativised operator $\psi \in \bhs$, then it also commutes with the restricted relativisation of $\psi^\dagger$. This may not be the case when one considers more general spinors, should restricted relativisation not be adjoint-preserving.}, so that
\begin{multline}
    \expval{\hat{\Phi}^\R(\omega_1)}_{\rho'} = \sum_{i=1}^n \Tr[\rho \hat{\Psi}_i^\R(\omega_2)^\dagger \hat{\Phi}^\R(\omega_1)\hat{\Psi}_i^\R(\omega_2) ] \stackrel{\eqref{eqn:Einstein causality}}{=} \sum_{i=1}^n \Tr[\rho \hat{\Psi}_i^\R(\omega_2)^\dagger \hat{\Psi}_i^\R(\omega_2) \hat{\Phi}^\R(\omega_1)]  \\ = \Tr\left[\rho \left(\sum_{i=1}^n \hat{\Psi}_i^\R(\omega_2)^\dagger \hat{\Psi}_i^\R(\omega_2) \right) \hat{\Phi}^\R(\omega_1) \right] = \Tr[\rho \hat{\Phi}^\R(\omega_1)] = \expval{\hat{\Phi}^\R(\omega_1)}_{\rho} \, .
\end{multline}

This restriction is operationally meaningful and naturally links with common considerations of quantum causality in other contexts of relativistic quantum theory. Note however that Einstein causality does not impose any restriction on globally oriented relativistic QRFs (Def. \ref{def:globally oriented QRF}), since the spacetime marginal measures of those have the whole of $\mink$ as support. It remains a rather weak epistemic requirement; let us now examine a stronger, yet also common, assumption which also relates to the no-superluminal signalling principle.

\subsection{Microcausality}

A possible \emph{ontological} implementation of the no-superluminal signalling principle in relativistic quantum theory is that of microcausality. In RQFT, this assumption can be written in different ways.

\begin{definition}
    Let $\R=(\U_\R,\E_\R,\hir)$ be a relativistic QRF, $\operationalstate \subseteq \framestate$ and $\sigma \geq 0$. Then
    \begin{itemize}
        \item An operator $\phi \in \bhs$ is said to be \emph{strongly $(\operationalstate,\sigma)$-microcausal} if for all $\omega_i \in \operationalstate$ and $x_i \in \supp \mu^{\F_\R}_{\omega_i}$:
     \begin{equation}
        \label{eqn:phi strong microcausality}
        x_1 \indepsigma x_2 \; \Longrightarrow \; \comm{\hat{\phi}^\R_{\omega_1}(x_1)}{\hat{\phi}^\R_{\omega_2}(x_2)} =
        \comm{\hat{\phi}^\R_{\omega_1}(x_1)^\dagger}{\hat{\phi}^\R_{\omega_2}(x_2)} = 0 \;.
    \end{equation}
        \item An operator $\phi \in \bhs$ is said to be \emph{weakly $(\operationalstate,\sigma)$-microcausal} if for all $\omega_i \in \operationalstate$:
     \begin{equation}
        \label{eqn:phi weak microcausality}
        \omega_1 \indepERsigma \omega_2 \; \Longrightarrow \; \comm{\hat{\phi}^\R_{\omega_1}(x_1)}{\hat{\phi}^\R_{\omega_2}(x_2)} =
        \comm{\hat{\phi}^\R_{\omega_1}(x_1)^\dagger}{\hat{\phi}^\R_{\omega_2}(x_2)} = 0 \qquad \forall x_i \in \supp \mu^{\F_\R}_{\omega_i}\;.
    \end{equation}
        \item A subset $\mathcal{O}_\S \subseteq B(\his)$ is said to be \emph{strongly $(\operationalstate,\sigma)$-microcausal} if for all $\phi_i \in \mathcal{O}_\S$, $\omega_i \in \operationalstate$ and $x_i \in \supp \mu^{\F_\R}_{\omega_i}$:
    \begin{equation}
    \label{eqn:strong microcausality}
        x_1 \indepsigma x_2 \; \Longrightarrow \comm{(\hat{\phi}_1)^\R_{\omega_1}(x_1)}{(\hat{\phi}_2)^\R_{\omega_2}(x_2)} =
            \comm{(\hat{\phi}_1)^\R_{\omega_1}(x_1)^\dagger}{(\hat{\phi}_2)^\R_{\omega_2}(x_2)} = 0 \; .
    \end{equation}
        \item A subset $\mathcal{O}_\S \subset \bhs$ is said to be \emph{weakly $(\operationalstate,\sigma)$-microcausal} if for all $\phi_i \in \mathcal{O}_\S$ and $\omega_i \in \operationalstate$:
    \begin{equation}
    \label{eqn:weak microcausality}
        \omega_1 \indepERsigma \omega_2 \; \Longrightarrow \; \comm{(\hat{\phi}_1)^\R_{\omega_1}(x_1)}{(\hat{\phi}_2)^\R_{\omega_2}(x_2)} =
            \comm{(\hat{\phi}_1)^\R_{\omega_1}(x_1)^\dagger}{(\hat{\phi}_2)^\R_{\omega_2}(x_2)} = 0 \qquad \forall x_i \in \supp \mu^{\F_\R}_{\omega_i} \; .
    \end{equation}
    \end{itemize}
\end{definition}

These requirements impose that the physics \emph{is} causal in such a sense, rather than just \emph{appearing} causal. The difference between strong and weak $(\operationalstate,\sigma)$-microcausality comes from the fact that the former really represents the ontic requirement of local commutativity while the latter only asserts the commutativity of pointwise-defined fields should the preparations already be spacelike-separated, and as such carries slightly less ontic weight. For example, if $\phi$ is weakly $(\operationalstate,\sigma)$-microcausal, then if $\omega_1$ and $\omega_2$ are not $(\R,\sigma)$-spacelike separated, it does not follow that $\comm{\hat{\phi}^\R_{\omega_1}(x_1)}{\hat{\phi}^\R_{\omega_2}(x_2)} = 0$, even if $x_1 \indepsigma x_2$. The difference between both requirements is highlighted in Fig. \ref{fig:weak vs strong microcausality}.

\begin{figure}
    \centering
    \begin{subfigure}[t]{0.4\textwidth}
        \begin{tikzpicture}[x=0.75pt,y=0.75pt,yscale=-0.6,xscale=0.6]

\draw   (253.71,12) -- (584.36,12) -- (442.65,139) -- (112,139) -- cycle ;
\draw  [dash pattern={on 0.84pt off 2.51pt}]  (287.36,112.42) -- (361.68,45) ;
\draw   (256.71,163) -- (587.36,163) -- (445.65,290) -- (115,290) -- cycle ;
\draw  [pattern=_5n7rcgqhm,pattern size=6pt,pattern thickness=0.75pt,pattern radius=0.75pt, pattern color={rgb, 255:red, 0; green, 0; blue, 0}] (200,247.68) .. controls (200,245.1) and (202.1,243) .. (204.68,243) -- (274.68,243) .. controls (277.26,243) and (279.36,245.1) .. (279.36,247.68) -- (279.36,261.74) .. controls (279.36,264.32) and (277.26,266.42) .. (274.68,266.42) -- (204.68,266.42) .. controls (202.1,266.42) and (200,264.32) .. (200,261.74) -- cycle ;
\draw  [pattern=_7o1tsuk3p,pattern size=6pt,pattern thickness=0.75pt,pattern radius=0.75pt, pattern color={rgb, 255:red, 0; green, 0; blue, 0}] (378,242.68) .. controls (378,240.1) and (380.1,238) .. (382.68,238) -- (452.68,238) .. controls (455.26,238) and (457.36,240.1) .. (457.36,242.68) -- (457.36,256.74) .. controls (457.36,259.32) and (455.26,261.42) .. (452.68,261.42) -- (382.68,261.42) .. controls (380.1,261.42) and (378,259.32) .. (378,256.74) -- cycle ;
\draw  [dash pattern={on 0.84pt off 2.51pt}]  (274.68,266.42) -- (349,199) ;
\draw  [dash pattern={on 0.84pt off 2.51pt}]  (382.68,261.42) -- (315.36,202.42) ;
\draw  [pattern=_zubggx6qk,pattern size=6pt,pattern thickness=0.75pt,pattern radius=0.75pt, pattern color={rgb, 255:red, 0; green, 0; blue, 0}] (232.36,77.42) .. controls (252.36,67.42) and (255.36,72.42) .. (279.36,76.42) .. controls (303.36,80.42) and (329.36,76.42) .. (306.36,95.42) .. controls (283.36,114.42) and (278.36,126.42) .. (243.36,106.42) .. controls (208.36,86.42) and (212.36,87.42) .. (232.36,77.42) -- cycle ;
\draw   (333.36,97.42) .. controls (305.36,87.42) and (326.36,75.42) .. (306.36,95.42) .. controls (286.36,115.42) and (292.36,108.42) .. (287.36,112.42) .. controls (282.36,116.42) and (307.36,110.42) .. (312.36,109.42) .. controls (317.36,108.42) and (361.36,107.42) .. (333.36,97.42) -- cycle ;
\draw  [dash pattern={on 0.84pt off 2.51pt}]  (341.36,106.42) -- (400.36,52.42) ;
\draw   (400.36,52.42) .. controls (428.36,30.42) and (383.36,34.42) .. (380.36,31.42) .. controls (377.36,28.42) and (355.36,52.42) .. (350.36,56.42) .. controls (345.36,60.42) and (363.36,63.42) .. (377.36,66.42) .. controls (391.36,69.42) and (372.36,74.42) .. (400.36,52.42) -- cycle ;
\draw  [pattern=_mcbkrmesn,pattern size=6pt,pattern thickness=0.75pt,pattern radius=0.75pt, pattern color={rgb, 255:red, 0; green, 0; blue, 0}] (464.36,50.42) .. controls (391.36,30.42) and (418.36,40.42) .. (406.36,48.42) .. controls (394.36,56.42) and (366.36,76.42) .. (402.36,83.42) .. controls (438.36,90.42) and (441.36,78.42) .. (455.36,76.42) .. controls (469.36,74.42) and (537.36,70.42) .. (464.36,50.42) -- cycle ;
\end{tikzpicture}
    \caption{An example of strong $(\operationalstate,\sigma)$-microcausality: whether or not two preparations are $(\R,\sigma)$-spacelike separated, the ensuing relational local quantum fields commute pointwise over spacelike-separated (to resolution $\sigma$) points.}
    \end{subfigure}
    \qquad \qquad
    \begin{subfigure}[t]{0.4\textwidth}
        \begin{tikzpicture}[x=0.75pt,y=0.75pt,yscale=-0.6,xscale=0.6]

\draw   (233.71,9) -- (564.36,9) -- (422.65,136) -- (92,136) -- cycle ;
\draw  [dash pattern={on 0.84pt off 2.51pt}]  (267.36,109.42) -- (341.68,42) ;
\draw   (236.71,160) -- (567.36,160) -- (425.65,287) -- (95,287) -- cycle ;
\draw  [pattern=_72x0qoqws,pattern size=6pt,pattern thickness=0.75pt,pattern radius=0.75pt, pattern color={rgb, 255:red, 0; green, 0; blue, 0}] (180,244.68) .. controls (180,242.1) and (182.1,240) .. (184.68,240) -- (254.68,240) .. controls (257.26,240) and (259.36,242.1) .. (259.36,244.68) -- (259.36,258.74) .. controls (259.36,261.32) and (257.26,263.42) .. (254.68,263.42) -- (184.68,263.42) .. controls (182.1,263.42) and (180,261.32) .. (180,258.74) -- cycle ;
\draw  [pattern=_f1y3jgxey,pattern size=6pt,pattern thickness=0.75pt,pattern radius=0.75pt, pattern color={rgb, 255:red, 0; green, 0; blue, 0}] (358,239.68) .. controls (358,237.1) and (360.1,235) .. (362.68,235) -- (432.68,235) .. controls (435.26,235) and (437.36,237.1) .. (437.36,239.68) -- (437.36,253.74) .. controls (437.36,256.32) and (435.26,258.42) .. (432.68,258.42) -- (362.68,258.42) .. controls (360.1,258.42) and (358,256.32) .. (358,253.74) -- cycle ;
\draw  [dash pattern={on 0.84pt off 2.51pt}]  (254.68,263.42) -- (329,196) ;
\draw  [dash pattern={on 0.84pt off 2.51pt}]  (362.68,258.42) -- (295.36,199.42) ;
\draw   (212.36,74.42) .. controls (232.36,64.42) and (235.36,69.42) .. (259.36,73.42) .. controls (283.36,77.42) and (309.36,73.42) .. (286.36,92.42) .. controls (263.36,111.42) and (258.36,123.42) .. (223.36,103.42) .. controls (188.36,83.42) and (192.36,84.42) .. (212.36,74.42) -- cycle ;
\draw   (313.36,94.42) .. controls (285.36,84.42) and (306.36,72.42) .. (286.36,92.42) .. controls (266.36,112.42) and (272.36,105.42) .. (267.36,109.42) .. controls (262.36,113.42) and (287.36,107.42) .. (292.36,106.42) .. controls (297.36,105.42) and (341.36,104.42) .. (313.36,94.42) -- cycle ;
\draw  [dash pattern={on 0.84pt off 2.51pt}]  (321.36,103.42) -- (380.36,49.42) ;
\draw   (380.36,49.42) .. controls (408.36,27.42) and (363.36,31.42) .. (360.36,28.42) .. controls (357.36,25.42) and (335.36,49.42) .. (330.36,53.42) .. controls (325.36,57.42) and (343.36,60.42) .. (357.36,63.42) .. controls (371.36,66.42) and (352.36,71.42) .. (380.36,49.42) -- cycle ;
\draw   (444.36,47.42) .. controls (371.36,27.42) and (398.36,37.42) .. (386.36,45.42) .. controls (374.36,53.42) and (346.36,73.42) .. (382.36,80.42) .. controls (418.36,87.42) and (421.36,75.42) .. (435.36,73.42) .. controls (449.36,71.42) and (517.36,67.42) .. (444.36,47.42) -- cycle ;
\end{tikzpicture}
    \caption{An example of weak $(\operationalstate,\sigma)$-microcausality: if two preparations are $(\R,\sigma)$-spacelike separated (bottom picture), the ensuing relational local quantum fields commute pointwise. However, if two preparations are not $(\R,\sigma)$-spacelike separated (top picture), the ensuing relational local quantum fields need not commute pointwise even over spacelike-separated (to resolution $\sigma$) points.}
    \end{subfigure}
    \caption{A comparison of weak and strong microcausality: pointwise commutation is shown through the circles. On the left, $\phi$ is strongly $(\operationalstate,\sigma)$-microcausal: for any two preparations, the relational local quantum fields commute over spacelike-separated (to resolution $\sigma$) points. On the right, $\phi$ is weakly $(\operationalstate,\sigma)$-microcausal: relational local quantum fields commute (over spacelike-separated---to resolution $\sigma$---points) only if the preparations are $(\R,\sigma)$-spacelike separated.}
    \label{fig:weak vs strong microcausality}
\end{figure}

It is easily seen that strong $(\operationalstate,\sigma)$-microcausality implies weak $(\operationalstate,\sigma)$-microcausality. In fact, weak relational microcausality implies relational causality, as we now show.\footnote{It is unclear for now if the two conditions are equivalent, perhaps under some additional assumptions---like localizability of the QRF--- or if microcausality is strictly stronger. This issue will be subject to further investigation.}

\begin{theorem}
    \label{thm:microcausal implies causal}
    Let $\R$ be a relativistic QRF, $\operationalstate \subseteq \framestate$ and $\sigma \geq 0$. If an operator $\phi \in \bhs$ is weakly $(\operationalstate,\sigma)$-microcausal, then it is $(\operationalstate,\sigma)$-causal. Likewise, if a subset $\mathcal{O}_\S \subseteq \bhs$ is weakly $(\operationalstate,\sigma)$-microcausal, it is also $(\operationalstate,\sigma)$-causal.
\end{theorem}

\begin{proof}
    Suppose $\R$ is a relativistic QRF such that $\mathcal{O}_\S$ is weakly $(\operationalstate,\sigma)$-microcausal, and $\omega_1 \indepERsigma \omega_2$. Then for all $\phi_1,\phi_2 \in \mathcal{O}_\S$,
    \begin{equation}\begin{aligned}
        \hat{\Phi}^\R_1(\omega_1) \hat{\Phi}^\R_2(\omega_2) &= \left(\int_{\supp \mu^{\F_\R}_{\omega_1}} (\hat{\phi}_1)^\R_{\omega_1}(x_1) d\mu^{\F_\R}_{\omega_1}(x_1) \right) \left( \int_{\supp \mu^{\F_\R}_{\omega_2}} (\hat{\phi}_2)^\R_{\omega_2}(x_2) d\mu^{\F_\R}_{\omega_2}(x_2)\right) \\
        &\stackrel{\ref{thm:Fubini-Tonelli Bochner}}{=} \iint_{\supp \mu^{\F_\R}_{\omega_1} \times \supp \mu^{\F_\R}_{\omega_2}} (\hat{\phi}_1)^\R_{\omega_1}(x_1) (\hat{\phi}_2)^\R_{\omega_2}(x_2)  \, d(\mu^{\F_\R}_{\omega_1} \times \mu^{\F_\R}_{\omega_2})(x_1,x_2) \\
        &\stackrel{\omega_1 \indepERsigma \omega_2}{=} \iint_{\supp \mu^{\F_\R}_{\omega_1} \times \supp \mu^{\F_\R}_{\omega_2}} (\hat{\phi}_2)^\R_{\omega_2}(x_2) (\hat{\phi}_1)^\R_{\omega_1}(x_1) \, d(\mu^{\F_\R}_{\omega_1} \times \mu^{\F_\R}_{\omega_2})(x_1,x_2) \\
        &\stackrel{\ref{thm:Fubini-Tonelli Bochner}}{=} \iint_{\supp \mu^{\F_\R}_{\omega_2} \times\supp \mu^{\F_\R}_{\omega_1}} (\hat{\phi}_2)^\R_{\omega_2}(x_2) (\hat{\phi}_1)^\R_{\omega_1}(x_1) \, d(\mu^{\F_\R}_{\omega_2} \times \mu^{\F_\R}_{\omega_1})(x_2,x_1) \\
        &\stackrel{\ref{thm:Fubini-Tonelli Bochner}}{=} \left(\int_{\supp \mu^{\F_\R}_{\omega_2}} (\hat{\phi}_2)^\R_{\omega_2}(x_2) d\mu^{\F_\R}_{\omega_2}(x_2) \right) \left( \int_{\supp \mu^{\F_\R}_{\omega_1}} (\hat{\phi}_1)^\R_{\omega_1}(x_1) d\mu^{\F_\R}_{\omega_2}(x_1)\right) \\
        &= \hat{\Phi}^\R_2(\omega_2) \hat{\Phi}^\R_1(\omega_1) \, .
    \end{aligned}\end{equation}
    The same argument holds for $\hat{\Phi}^\R_1(\omega_1)^\dagger \hat{\Phi}^\R_2(\omega_2)$, which concludes the proof regarding $\mathcal{O}_\S$. The proof of the first statement is a simplified version of this argument.
\end{proof}

Microcausality thus ensures that Einstein causality holds, while allowing one to make contact with \enquote{physicists'} approaches to QFT. However, as for the case of Poincaré-covariance, making the logical leap of imposing local commutativity pointwise (as an ontological principle) rather than at the level of relational local observables (as an epistemic principle) can sometimes be too strong. This is explicit in Wightman's no go theorem.

\begin{theorem}[Wightman \cite{wightman_theorie_1964}]
    \label{thm:Wightman}
	Let $\hat{\phi} : \mink \to \bhs$ be an operator-valued function and $U$ be a weakly continuous unitary representation of the translation group on $\his$ such that
	\begin{enumerate}
		\item $\comm{\hat{\phi}(x)}{\hat{\phi}(y)} = 0$ whenever $x$ and $y$ are spacelike separated,
		\item $U(x)^\dagger \hat{\phi}(y) U(x) = \hat{\phi}(y+x)$ for all $x,y \in \mink$,
		\item $U$ satisfies the spectrum condition,
		\item There is a unique translation invariant vector $\ket{\Omega} \in \his$.
	\end{enumerate}
	Then there is a $c \in \mathbb{C}$ such that $\hat{\phi}(x) \ket{\Omega} = c\ket{\Omega}$ for all $x \in \mink$.
\end{theorem}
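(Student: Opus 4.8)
The plan is to follow Wightman's classical argument, reducing the statement to an analyticity‑plus‑positivity property of the two‑point function built from $\hat{\phi}(x)\ket{\Omega}$. First I would reduce to the case where $\hat{\phi}$ is Hermitian (otherwise split it into self‑adjoint and anti‑self‑adjoint parts, which still satisfy hypotheses 1--4 once microcausality is read --- consistently with our Definition of $\R$‑microcausality --- to include commutators with adjoints). Translation invariance of $\ket{\Omega}$ and hypothesis 2 with $y=0$ give $U(x)^\dagger\hat{\phi}(0)U(x)=\hat{\phi}(x)$, so the vacuum expectation value $c:=\ip{\Omega}{\hat{\phi}(0)\,\Omega}\in\Reals$ is $x$‑independent; setting $\psi_0:=(\hat{\phi}(0)-c\,\mathbb{1})\ket{\Omega}$ one gets $(\hat{\phi}(x)-c\,\mathbb{1})\ket{\Omega}=U(x)^\dagger\psi_0$, so it suffices to show $\psi_0=0$.

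Next I would study $W(\xi):=\ip{\psi_0}{U(\xi)\,\psi_0}$. Hypothesis 4 makes the spectral projection of $U$ at momentum $0$ equal to $\dyad{\Omega}$, and since $\ip{\Omega}{\psi_0}=c-c=0$ we get $\psi_0\perp\ket{\Omega}$; by the SNAG theorem and the spectrum condition (hypothesis 3), $W=\hat{\mu}$ for a finite positive measure $\mu$ on $\mink$ with $\mathrm{supp}\,\mu\subseteq\overline{V_+}\setminus\{0\}$ and total mass $\lVert\psi_0\rVert^2$, and $\xi\mapsto W(\xi)$ extends holomorphically to the forward tube $\mink+iV_+$ (with $V_+$ the open forward cone), bounded there by $\lVert\psi_0\rVert^2$ because $p\cdot\eta\ge 0$ for $p,\eta\in\overline{V_+}$, with continuous boundary value $\hat{\mu}$ on $\mink$. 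Microcausality then enters: applying $\comm{\hat{\phi}(x)}{\hat{\phi}(x+\xi)}=0$, valid for spacelike $\xi$, to $\ket{\Omega}$, pairing with $\ket{\Omega}$ and using $U(x)\ket{\Omega}=\ket{\Omega}$, yields $W(\xi)=W(-\xi)$, i.e.\ $W$ real, for every spacelike $\xi$.

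From here I would invoke complex analysis in several variables: $W$ (from the forward tube) and $\xi\mapsto W(-\xi)$ (from the backward tube) have matching continuous boundary values on the open set $S$ of spacelike vectors, so by the edge‑of‑the‑wedge theorem they glue to a single bounded function holomorphic on $(\mink+iV_+)\cup(\mink+iV_-)\cup\mathcal{N}(S)$, with $\mathcal{N}(S)$ a complex neighbourhood of $S$. Since $\mathrm{conv}(V_+\cup V_-)=\mink$ and $S$ is a nonempty open cone, the tube theorem (Bochner), which preserves sup‑norms, upgrades this to a bounded entire function on $\Cn^d$; Liouville then forces $W$ constant, hence $\mu=\hat{\mu}(0)\,\delta_0$, and as $\mu$ has no atom at the origin, $\mu=0$, so $\lVert\psi_0\rVert^2=\mu(\mink)=0$ and $\psi_0=0$, giving $\hat{\phi}(x)\ket{\Omega}=c\ket{\Omega}$ for all $x$.

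The hard part will be exactly this last step: arranging the edge‑of‑the‑wedge hypotheses (the correct boundedness and continuity of the boundary values from both tubes) and then confirming that the glued domain has holomorphy envelope all of $\Cn^d$ with the sup‑norm carried along --- this is the technical heart of Wightman's theorem, and as an alternative one may simply cite the classical result. By contrast the earlier steps are routine: the covariance/invariance bookkeeping, the SNAG spectral representation, the inequality $p\cdot\eta\ge 0$, and a Fubini‑type interchange of exactly the kind used above in the proof that microcausality implies $\R$‑causality.
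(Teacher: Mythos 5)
You should first note that the paper does not actually prove this statement: immediately after Thm.~\ref{thm:Wightman} it defers to the cited literature (``See \cite{wightman_theorie_1964,halvorson_algebraic_2006,horuzhy_axiomatic_1990} for proofs''), so there is no internal proof to match; the closest in-house analogue is the Appendix~\ref{App:no go Wizimirski} proof of Wizimirski's theorem, which shares your general strategy (reduce to the two-point function and its spectral measure). Your set-up is the standard one and is essentially fine: the reduction $\psi_0=(\hat\phi(0)-c\,\mathbb{1})\ket{\Omega}$, the identity $W(\xi)=\ip{\psi_0}{U(\xi)\psi_0}=\hat\mu(\xi)$ with $\mu\geq 0$ finite, $\mathrm{supp}\,\mu\subset \overline{V}_+$ and $\mu(\{0\})=0$, and the locality consequence $W(\xi)=W(-\xi)$ for spacelike $\xi$ are all correct (minor caveats: the Hermitian splitting uses $\comm{\hat\phi(x)^\dagger}{\hat\phi(y)}=0$, which is not among hypotheses 1--4 as stated, and ``$\mathrm{supp}\,\mu\subseteq\overline{V}_+\setminus\{0\}$'' should read ``$\mathrm{supp}\,\mu\subseteq\overline{V}_+$ and $\mu$ has no atom at $0$'').

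The genuine gap is the final complex-analytic step. After edge-of-the-wedge you have a function holomorphic on $\mathcal{D}=(\mink+iV_+)\cup(\mink+iV_-)\cup\mathcal{N}(S)$; this is \emph{not} a tube (and $V_+\cup V_-$ is disconnected), so Bochner's tube theorem does not apply, and in fact the conclusion ``bounded entire, hence constant by Liouville'' is false for this domain. Concretely, $h(\zeta)=\exp\!\left(-\sqrt{-\zeta^2}\right)$ (principal branch; recall $\zeta^2\notin[0,\infty)$ throughout both tubes and near real spacelike points) is holomorphic on a domain of exactly this form, bounded by $1$, and nonconstant; relatedly, the analytically continued two-point function of the free field is holomorphic on $\mathcal{D}$ but has light-cone singularities, showing the envelope of holomorphy of $\mathcal{D}$ is the cut domain, not $\Cn^d$. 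So boundedness plus holomorphy on the glued domain cannot force $W$ to be constant; the known proofs must (and do) exploit that $W$ is the Fourier--Stieltjes transform of a \emph{finite positive} measure supported in the closed forward cone --- e.g.\ by reducing along suitable directions to one-variable transforms of measures on a half-line and using F.~and~M.~Riesz/Phragm\'en--Lindel\"of-type arguments --- rather than a bare Liouville argument. As written, your last step would fail; either supply that harder analytic input or, as the paper does, cite the classical proof.
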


See \cite{wightman_theorie_1964,halvorson_algebraic_2006,horuzhy_axiomatic_1990,fedida_pointwise_2025} for proofs of the above. We must thus evade this theorem by either rejecting microcausality of the fields or pointwise translation covariance (the spectrum condition being arguably a very reasonable assumption). That is, the joint ontological implementation of two epistemic principles are too strong to be made. In RQFT, the relational local quantum fields are generally not pointwise translation covariant (since $\omega \stackrel{a}{\mapsto} \omega \cdot a^{-1}$, see Thm. \ref{thm: scalar covariance}), so microcausality can be implemented nontrivially for some relativistic QRFs. However, globally oriented relativistic QRFs do give rise to pointwise covariant quantum fields defined over all of $\mink$, so such QRFs cannot be microcausal. This leads to the following no-go result for strong $(\operationalstate,0)$-microcausality by taking $\omega_1 = \omega_2 \equiv \omega$.\footnote{Whether strong $(\operationalstate,\sigma)$-microcausality can be saved under the joint assumption with global orientation whenever $\sigma \gneq 0$ is an interesting open question. Perhaps the global nature of local commutativity \cite{streater_pct_1989} can be translated from the Wightman QFT setting to that of RQFT when one assumes global orientation, which would thus generalise this no-go result to all $\sigma \geq 0$. Note that the relevance of strong $(\operationalstate,\sigma)$-microcausality for $\sigma > 0$ only appears if $\sigma$ is understood as a fundamental scale rather than the resolution of the frame---otherwise the ontic nature of the principle would be very much questionable.}

\begin{theorem}
    \label{thm:no go globally oriented vacuum}
    Let $(\R,\omega)$ be an oriented relativistic QRF and $\omega \in \operationalstate \subseteq \framestate$. Then the following cannot all jointly hold:
    \begin{enumerate}
        \item $\phi \in \bhs$ is strongly $(\operationalstate,0)$-microcausal,
        \item $(\R,\omega)$ is globally oriented,
        \item $U_\S$ satisfies the spectrum condition,\footnote{The spectrum condition states that $\sigma(P_\mu) \subset \overline{V_+} = \{p \mid p^0 \geq 0, p^2 \geq 0\}$, i.e. the joint spectrum $\sigma(P_\mu)$ of the energy-momentum lies in the forward causal cone.}
        \item There exists a unique (up to scalar multiples) translation-invariant vector $\ket{\Omega} \in \his$,
        \item There exists a $\phi \in \bhs$ and $x_1,x_2 \in \mink$ such that $\hat{\phi}^\R_{\omega}(x_1) \ket{\Omega} \neq \hat{\phi}^\R_{\omega}(x_2) \ket{\Omega}$.
    \end{enumerate}
\end{theorem}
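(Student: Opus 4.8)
The plan is to argue by contradiction: assume conditions 1--4 hold and deduce that condition 5 must fail, i.e. that for every $\phi \in \bhs$ and all $x_1,x_2 \in \mink$ one has $\hat{\phi}^\R_\omega(x_1)\ket{\Omega} = \hat{\phi}^\R_\omega(x_2)\ket{\Omega}$. The whole argument is a reduction to Wightman's no-go theorem (Thm. \ref{thm:Wightman}), applied, for each fixed $\phi \in \bhs$, to the operator-valued function $\hat{\phi}^\R_\omega : \mink \to \bhs$ together with the translation representation $a \mapsto U_\S(a,e)$ restricted from $U_\S$.

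First I would record that global orientation makes $\hat{\phi}^\R_\omega$ a genuine function on all of $\mink$: by Prop. \ref{prop:global orientation implies noncompact support} we have $\supp \mu^{\F_\R}_\omega = \mink$, and, as computed in Sec. \ref{subsec:global orientation}, the conditional measure over the Lorentz fibre is position-independent, so $\hat{\phi}^\R_\omega(x) = (x,e)\cdot A$ with $A := \int_{\mLup} (0,\lambda)\cdot\phi \, d\mu^{\G_\R}_\omega(\lambda) \in \bhs$ a fixed bounded operator; this is well-defined for every $x \in \mink$, and since $A$ is bounded and $U_\S$ is ultraweakly continuous, $x \mapsto \hat{\phi}^\R_\omega(x) = U_\S(x,e)\,A\,U_\S(x,e)^\dagger$ is ultraweakly continuous, supplying whatever regularity the quoted theorem needs. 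Next I would verify the four hypotheses of Thm. \ref{thm:Wightman}: (i) local commutativity, $\comm{\hat{\phi}^\R_\omega(x_1)}{\hat{\phi}^\R_\omega(x_2)} = 0$ for spacelike separated $x_1,x_2 \in \mink$, which is exactly $\R$-microcausality of $\S$ (Eq. \eqref{eqn:microcausality}) with $\phi_1=\phi_2=\phi$, $\omega_1=\omega_2=\omega$, where the fact that $\supp\mu^{\F_\R}_\omega = \mink$ is what lets us range over \emph{all} spacelike pairs; (ii) translation covariance, which is the identity $(a,e)\cdot\hat{\phi}^\R_\omega(x) = \hat{\phi}^\R_\omega(x+a)$ established in Sec. \ref{subsec:global orientation}, rewritten as $U_\S(a,e)\,\hat{\phi}^\R_\omega(x)\,U_\S(a,e)^\dagger = \hat{\phi}^\R_\omega(x+a)$, i.e. the intertwining relation demanded by the theorem up to the conventional sign of the translation parameter; (iii) the spectrum condition for the translation representation, which is assumption 3 (with weak continuity coming from ultraweak continuity of $U_\S$); and (iv) uniqueness of the translation-invariant vector $\ket{\Omega}$, which is assumption 4.

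Thm. \ref{thm:Wightman} then yields a constant $c = c(\phi,\R,\omega) \in \Cn$ with $\hat{\phi}^\R_\omega(x)\ket{\Omega} = c\ket{\Omega}$ for all $x \in \mink$; in particular $\hat{\phi}^\R_\omega(x_1)\ket{\Omega} = \hat{\phi}^\R_\omega(x_2)\ket{\Omega}$ for all $x_1,x_2$ and all $\phi \in \bhs$, which directly contradicts condition 5. Hence conditions 1--5 cannot all hold simultaneously.

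The step I expect to be most delicate is purely a matter of matching the (sometimes implicit) structural hypotheses of the quoted form of Wightman's theorem rather than anything conceptual: one should check that the invoked version does not secretly require the translations to be represented by a \emph{genuine} rather than merely projective unitary representation --- a worry that is mild in our positive-energy setting, where the relevant $U(1)$-cocycle on the abelian group $T(1,d-1)$ can be trivialised --- and one should observe that boundedness of $\hat{\phi}^\R_\omega(x)$ (it takes values in $\bhs$) removes the domain subtleties that complicate the textbook statement. Everything else is bookkeeping; the genuinely new ingredient is the translation covariance of globally oriented relational local quantum fields, already proven above, which is exactly what makes the classical no-go theorem applicable here.
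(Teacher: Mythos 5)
Your proposal is correct and is exactly the argument the paper intends: the theorem is stated immediately after the discussion that globally oriented frames yield translation-covariant fields supported on all of $\mink$, so that Wightman's no-go theorem (Thm.~\ref{thm:Wightman}) applies and forces $\hat{\phi}^\R_\omega(x)\ket{\Omega}=c\ket{\Omega}$, contradicting condition 5. Your verification of the hypotheses (including the sign convention in the intertwining relation and the irrelevance of domain issues for bounded operators) fills in details the paper leaves implicit but does not depart from its route.
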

    
Hence, the globally oriented preparations of QRFs in which strong microcausality holds yield a trivial description of physics, in which the vacuum expectation values are constant over spacetime. It is thus tricky to impose a causality requirement for globally oriented QRFs: these are not affected by relational causality or weak relational microcausality, and cannot satisfy strong microcausality without running into triviality.

We now provide an explicit example of operators which satisfy finite-precision weak $(\operationalstate^K,\sigma)$-microcausality for any chosen compact $K \subset F$ and arbitrarily small $\sigma > 0$.

\begin{theorem}
    \label{thm:Weyl approximate microcausality}
    Let $\R$ be a relativistic QRF, $K$ be any compact subset of $F$, $W_{\hat{\phi}} : \mathscr{S}(\mink,\mathbb{C}) \to \bhs$ be the Weyl operator for a free Klein-Gordon Wightmanian quantum field $\hat{\phi}$\footnote{These satisfy $\comm{W_{\hat{\phi}}(f)}{W_{\hat{\phi}}(g)}=0$ for all $f,g \in \mathscr{S}(\mink,\mathbb{C})$ whenever $\supp f \indep \supp g$, and $W_{\hat{\phi}}(f)^\dagger = W_{\hat{\phi}}(-f)$. We will review Wightman QFT in Sec. \ref{sec:Wightman comparison}.}. Then $\forall \sigma>0$, $\exists f_{\sigma,K} \in C^\infty_c(\mink)$ such that $W_{\hat{\phi}}(f_{\sigma,K}) \in \bhs$ is weakly $(\operationalstate^K,\sigma)$-microcausal.
\end{theorem}

\begin{proof}
    See App. \ref{proof approximate commutativity}.
\end{proof}
This constitutes an explicit example of (finite-precision) relational microcausality for a nontrivial subset of frame preparations. Whether this result can be extended to exact relational (micro)causality over $\operationalstate^c$ is an interesting open question. Likewise, it would be interesting to discover other (nontrivial) examples of system operators satisfying relational causality. 

Regardless of the outcome such investigation---which we postpone to further work---Thm. \ref{thm:Weyl approximate microcausality} provides explicit examples of (covariant) scalar microcausal relational local quantum fields, and highlights that causality can be understood relationally.

\section{Vacuum expectation values}

\label{sec:vevs}

It is important that the mathematical framework developed up until now makes some contact with the empirical content of nonrelational QFT. In \enquote{physicists'} QFT, the main quantities of interest from the operational point of view are arguably the vacuum expectation values. We will thus assume the existence of a (potentially mixed) Poincaré-invariant vacuum state $\Omega \in \systemstate^{\Poincup}$ on the system. Though we are aware of potential issues this may induce down the line, in particular on the side of Haag's no-go theorem \cite{haag_quantum_1955,hall_theorem_1957}, we take it to be a good starting point, at least to discuss in free scalar theories. It may be that RQFT avoids falling into Haag's theorem in an interesting subtle way, or that some assumptions may need to be weakened. Notice also that in the context of operationally meaningful---i.e., localized in precompact regions of spacetime (so e.g. not globally oriented)---oriented QRFs, Haag’s theorem is not an obstacle that should apply anyway.

\subsection{Wightman functions}

\label{sec:Wightman functions}

We now derive the properties of vacuum expectation values in RQFT, establishing their close resemblance to those encountered in Wightman QFT \cite{streater_pct_1989}.

\begin{definition}
	Let $\R$ be a relativistic QRF, $\Omega \in \vacstate$, $n \in \mathbb{N}$, $\omega_1,\cdots,\omega_n\in \framestate$ and $\phi_1,\cdots,\phi_n \in \bhs$. We define the \emph{n-point vacuum expectation values} as
	\begin{equation}\begin{aligned}
	\Wscr_n^{(\Omega,\R)}[\omega_1,\cdots,\omega_n] : \bhs^n &\to \mathbb{C} \\
	\Wscr_n^{(\Omega,\R)}[\omega_1,\cdots,\omega_n](\phi_1,\cdots,\phi_n) &:= \Tr\left[\Omega \prod_{i=1}^n \hat{\Phi}_i^{\R}(\omega_i) \right] \, .
	\end{aligned}\end{equation}
\end{definition}

$\Wscr_n^{(\Omega,\R)}$ has a clear operational meaning: if one prepares a frame in several different orientations, and measures different observables in these respective orientations, what are the vacuum expectation values for these observables that the theory predicts? We can then define the kernel of these operators by introducing the relational local quantum fields into the n-point expectation values above:
\begin{equation}
	\Wscr_n^{(\Omega,\R)}[\omega_1,\cdots,\omega_n](\phi_1,\cdots,\phi_n) = \idotsint_{\mink^n} \Tr\left[\Omega \left(\prod_{i=1}^n (\hat{\phi}_i)^\R_{\omega_i}(x_i)\right)\right] \, d\mu^{\F_\R}_{\omega_1}(x_1) \cdots d\mu^{\F_\R}_{\omega_n}(x_n) \, .
\end{equation}

\begin{definition}\label{def:Wightman kernels}
	Let $\R$ be a relativistic QRF, $\Omega \in \vacstate$, $n \in \mathbb{N}$ and $\phi_1, \cdots, \phi_n \in \bhs$. Given $\omega_1,\cdots,\omega_n \in \framestate$ and $x_1,\cdots,x_n \in \mink$, we define the \emph{n-point vacuum kernels} as
	\begin{equation}\begin{aligned}
		W_n^{(\Omega,\R)}[\omega_1,\cdots,\omega_n;x_1,\cdots,x_n] : \bhs^n &\to \mathbb{C} \\
		W_n^{(\Omega,\R)}[\omega_1,\cdots,\omega_n;x_1,\cdots,x_n](\phi_1,\cdots,\phi_n) &:= \Tr\left[\Omega \left(\prod_{i=1}^n (\hat{\phi}_i)^\R_{\omega_i}(x_i)\right) \right] \, .
	\end{aligned}\end{equation}
\end{definition}

In Wightman QFT, the Schwartz nuclear theorem (Thm. \ref{thm:Schwartz nuclear theorem}) ensures that kernels of n-point vacuum expectation values exist, but these are only \enquote{morally} the n-point vacuum expectation values of the pointwise quantum fields. Here, we see that $W_n^{(\Omega,\R)}$ is more than just being \enquote{morally} $\Tr[\Omega (\hat{\phi}_1)^\R_{\omega_1}(x_1) \cdots (\hat{\phi}_n)^\R_{\omega_n}(x_n)]$---it actually always is exactly that; $W_n^{(\Omega,\R)}$ is really the integral kernel of $\Wfrak_n^{(\Omega,\R)}$, i.e.,
\begin{equation}
    \Wscr_n^{(\Omega,\R)}[\omega_1,\cdots,\omega_n](\phi_1,\cdots,\phi_n)= \idotsint_{\mink^n}  
    W_n^{(\Omega,\R)}[\omega_1,\cdots,\omega_n;x_1,\cdots,x_n](\phi_1,\cdots,\phi_n)
    \, d\mu^{\F_\R}_{\omega_1}(x_1) \cdots d\mu^{\F_\R}_{\omega_n}(x_n) \,.
\end{equation}

Again, if the oriented QRFs are not localisable, then these pointwise kernels do not have an operational meaning, but are of mathematical convenience for several important theorems.
Analogous results to Wightman QFT \cite{streater_pct_1989} can then immediately be recovered in the language of RQFT.

\begin{proposition}[Relativistic Transformation Law]
    \label{prop:symmetry of vevs}
    Let $\R$ be a relativistic QRF, $\Omega \in \vacstate$. Then for all $\omega_1,\cdots,\omega_n \in \framestate$, $n \in \mathbb{N}$, and $\forall (a,\Lambda) \in \Poincup$,
    \begin{equation}
        \Wscr_n^{(\Omega,\R)}[\omega_1 \cdot (a,\Lambda),\cdots,\omega_n \cdot (a,\Lambda)] = \Wscr_n^{(\Omega,\R)}[\omega_1,\cdots,\omega_n] \, .
    \end{equation}
    Moreover, for $\mu^{\F_\R}_{\omega_i}$-a.e. $x_i$, $i=1,\cdots,n$,
    \begin{equation}
        W_n^{(\Omega,\R)}[\omega_1 \cdot (a,\Lambda),\cdots, \omega_n \cdot (a,\Lambda); x_1,\cdots,x_n] = W_n^{(\Omega,\R)}[\omega_1,\cdots,\omega_n;\Lambda x_1 + a,\cdots,\Lambda x_n + a] \, . \label{eqn:Poincare covariance Wightman}
    \end{equation}    
\end{proposition}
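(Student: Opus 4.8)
The plan is to derive all three identities from two inputs already established: the covariance of the relational local observables and of their spacetime kernels (Theorem~\ref{thm:full covariance} and Proposition~\ref{lem:covariance framed scalar field}), and the $\Poincup$-invariance of the vacuum, $U_\S(a,\Lambda)\,\Omega\,U_\S(a,\Lambda)^\dagger=\Omega$ for $\Omega\in\vacstate$. Throughout write $V:=U_\S(a,\Lambda)$ and use that a product of $V$-conjugates telescopes, $\prod_i V^\dagger A_i V = V^\dagger\big(\prod_i A_i\big)V$.

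\emph{First identity.} Applying Theorem~\ref{thm:full covariance} with the group element $(a,\Lambda)^{-1}$ (it holds on all of $\Poincup$) gives, for each $i$,
\begin{equation*}
(\phi_i)^\R_{\omega_i\cdot(a,\Lambda)}=\hat{\Phi}_i^\R\!\big(\omega_i\cdot(a,\Lambda)\big)=(a,\Lambda)^{-1}\cdot\hat{\Phi}_i^\R(\omega_i)=V^\dagger\,(\phi_i)^\R_{\omega_i}\,V .
\end{equation*}
Inserting this into $\Wscr_n^{(\Omega,\R)}[\omega_1\cdot(a,\Lambda),\dots,\omega_n\cdot(a,\Lambda)](\phi_1,\dots,\phi_n)=\Tr\big[\Omega\prod_i (\phi_i)^\R_{\omega_i\cdot(a,\Lambda)}\big]$, the $n-1$ interior factors $VV^\dagger$ collapse, and cyclicity of the trace together with $V\Omega V^\dagger=\Omega$ gives $\Tr\big[\Omega\prod_i (\phi_i)^\R_{\omega_i}\big]=\Wscr_n^{(\Omega,\R)}[\omega_1,\dots,\omega_n](\phi_1,\dots,\phi_n)$.

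\emph{Second identity.} I would run the same telescoping argument after recording how conjugating a seed operator, $\phi_i\mapsto(a,\Lambda)\cdot\phi_i$, propagates through the restricted relativization $\yen^\R_{\omega_i}=\Gamma_{\omega_i}\circ\yen^\R$. Using the $\Poincup$-invariance of $\yen^\R(\phi)$ (its image lies in $\bhsr^{\Poincup}$) and Proposition~\ref{prop: Gamma is partial trace}, one writes $\big((a,\Lambda)\cdot\phi_i\big)^\R_{\omega_i}$ in terms of $(\phi_i)^\R$ and conjugations by $V$; inside $\Tr[\Omega\,\cdot\,]$ the invariance of $\Omega$ then removes the $V$'s as before. (A concrete alternative is to unpack $\big((a,\Lambda)\cdot\phi_i\big)^\R_{\omega_i}=\int_F \widehat{(a,\Lambda)\cdot\phi_i}_\lambda(x)\,d\mu^{\E_\R}_{\omega_i}(x,\lambda)$, identify each integrand as a $V$-conjugate of an absolute field, and feed this into the telescoping computation under the trace.)

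\emph{Kernel identity.} Here I would use Proposition~\ref{lem:covariance framed scalar field} in the rearranged form $(\hat{\phi}_i)^\R_{\omega_i\cdot(a,\Lambda)}(x_i)=V^\dagger\,(\hat{\phi}_i)^\R_{\omega_i}(\Lambda x_i+a)\,V$ for a.e.\ $x_i$ (obtained by substituting $(a,\Lambda)^{-1}$ and relabelling $x_i\mapsto\Lambda x_i+a$; the exceptional null sets are controlled by covariance of $\F_\R$). Substituting into $W_n^{(\Omega,\R)}[\omega_1\cdot(a,\Lambda),\dots;x_1,\dots,x_n](\phi_1,\dots,\phi_n)=\Tr\big[\Omega\prod_i (\hat{\phi}_i)^\R_{\omega_i\cdot(a,\Lambda)}(x_i)\big]$, the conjugations telescope and cyclicity plus $V\Omega V^\dagger=\Omega$ yield $\Tr\big[\Omega\prod_i (\hat{\phi}_i)^\R_{\omega_i}(\Lambda x_i+a)\big]=W_n^{(\Omega,\R)}[\omega_1,\dots;\Lambda x_1+a,\dots,\Lambda x_n+a](\phi_1,\dots,\phi_n)$ for a.e.\ $x_i$.

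\emph{Main obstacle.} The substantive step is the second identity: conjugating a \emph{seed} operator $\phi_i$ by $V$ is a priori a different operation from conjugating the associated relational local observable $(\phi_i)^\R_{\omega_i}$, so one must track carefully how the $V$'s emerge from $\yen^\R$ and check that they reorganise correctly under the trace against $\Omega$ — this is precisely where the invariance of the relativization map's image is needed. Everything else is telescoping and invariance of the vacuum; in particular no interchange of trace and Bochner integral is required, since one works throughout at the operator level with $\Wscr_n^{(\Omega,\R)}=\Tr[\Omega\prod_i(\phi_i)^\R_{\omega_i}]$ and $W_n^{(\Omega,\R)}=\Tr[\Omega\prod_i(\hat{\phi}_i)^\R_{\omega_i}(x_i)]$.
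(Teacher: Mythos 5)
Your handling of the first and third identities is correct and is essentially the paper's own argument: Theorem \ref{thm:full covariance} (resp.\ Proposition \ref{lem:covariance framed scalar field}, in exactly the rearranged form you state) converts the shifted frame state into a conjugation by $V=U_\S(a,\Lambda)$, and then telescoping, cyclicity of the trace and $V\Omega V^\dagger=\Omega$ finish the job, with the a.e.\ caveats handled as in the paper.

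The genuine gap is the second identity, which you defer and only sketch. The mechanism you propose --- extracting the $V$'s from $\yen^\R$ so that $\bigl((a,\Lambda)\cdot\phi_i\bigr)^\R_{\omega_i}$ becomes a $V$-conjugate of $(\phi_i)^\R_{\omega_i}$ (or of $(\phi_i)^\R_{\omega'}$ for some shifted $\omega'$) --- does not go through as described: unwinding the definition gives $\bigl((a,\Lambda)\cdot\phi\bigr)^\R_\omega=\int_F\bigl((x,\lambda)(a,\Lambda)\bigr)\cdot\phi\,d\mu^{\E_\R}_\omega(x,\lambda)$, so the group element enters by \emph{right} multiplication of the group argument, whereas the covariance of $\E_\R$ (hence Theorem \ref{thm:full covariance}, and also the diagonal invariance of the image of $\yen^\R$, which concerns the fixed seed $\phi$ rather than the transformed one) only controls \emph{left} translations $(a,\Lambda)(x,\lambda)$; since $\Poincup$ is non-abelian, the right factor cannot be traded for a conjugation plus a frame-state shift. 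Note what the paper actually does here: after using vacuum invariance to write $\Wscr_n^{(\Omega,\R)}[\omega_1,\dots,\omega_n](\phi_1,\dots,\phi_n)=\Tr\bigl[\Omega\prod_i(a,\Lambda)\cdot\hat{\Phi}_i^\R(\omega_i)\bigr]$ (your telescoping step), it simply \emph{reads} the right-hand side as $\Wscr_n^{(\Omega,\R)}[\omega_1,\dots,\omega_n]\bigl((a,\Lambda)\cdot\phi_1,\dots,(a,\Lambda)\cdot\phi_n\bigr)$, i.e.\ it identifies the conjugated relational local observable with the relational local observable of the conjugated seed; it never manipulates $\yen^\R\bigl((a,\Lambda)\cdot\phi_i\bigr)$ at all. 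That identification is precisely the point you flag as the main obstacle, and your proposal neither proves it nor adopts it as the intended reading of the statement, so as written your argument for the second identity is incomplete, and completing it along your route would require an identity of the form $\yen^\R_{\omega}\bigl((a,\Lambda)\cdot\phi\bigr)=V\,\yen^\R_{\omega'}(\phi)\,V^\dagger$ that the left-covariance structure of the frame observable does not supply.
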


\begin{proof}
    See App. \ref{proof symmetry of vevs}.
\end{proof}

We see that the n-point vacuum expectation values for scalars are invariant under these relativistic transformation laws. This is important and expected for such quantities. Note however that the n-point vacuum kernels \emph{shift} in their supports as one applies Poincaré transformations pointwise, which is expected from the probability measures that may have bounded supports, though deviates from the kernel results of Wightman QFT for which Poincaré invariance also holds pointwise. Let us examine what happens when one considers pointwise translation covariance with no frame state shift.

\begin{proposition}[Global Orientation]
    \label{prop:global orientation}
    Let $\R$ be a relativistic QRF, $\Omega \in \vacstate$ and $\omega_1,\cdots,\omega_n \in \framestate$, $\xi_1,\cdots,\xi_{n-1} \in \mink^{n-1}$, $n \in \mathbb{N}$. Let
    \begin{equation}\begin{aligned}
        \Wbf_n^{(\Omega,\R)}[\omega_1,\cdots,\omega_n;\xi_1,\cdots,\xi_{n-1}] &: \bhs^n \to \mathbb{C} \\
        \Wbf_n^{(\Omega,\R)}[\omega_1,\cdots,\omega_n;\xi_1,\cdots,\xi_{n-1}]&(\phi_1,\cdots,\phi_n) \nonumber \\:= &\Tr\left[\Omega U_\S\left(\sum_{i=1}^{n-1} \xi_i,e\right)^\dagger (\hat{\phi}_1)^\R_{\omega_1}(0) \prod_{j=2}^{n} \left(U_\S(\xi_{j-1},e) (\hat{\phi}_j)^\R_{\omega_j}(0) \right)\right] \, .
    \end{aligned}\end{equation}
    If the $(\R,\omega_i)$ are globally oriented, $i=1,\cdots,n$, then 
    \begin{equation}
        \Wbf_n^{(\Omega,\R)}[\omega_1,\cdots,\omega_n;\xi_1,\cdots,\xi_{n-1}] = W_n^{(\Omega,\R)}[\omega_1,\cdots,\omega_n;x_1,\cdots,x_n]
    \end{equation}
    where $\xi_j = x_j - x_{j+1}$ for $j=1,\cdots,n-1$. In particular, if $(\R,\omega_1)$ and $(\R,\omega_2)$ are globally oriented, then for all $x,y \in \mink$,
        \begin{equation}
            W^{(\Omega,\R)}_{2}[\omega_1,\omega_2;x,y] = W^{(\Omega,\R)}_2[\omega_1,\omega_2;-y,-x] \, .
        \end{equation}
\end{proposition}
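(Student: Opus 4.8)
The plan is to reduce the identity to the familiar statement that, for a translation-invariant state paired with translation-\emph{covariant} fields, vacuum expectation values depend only on coordinate differences; what supplies the translation covariance of the relational local quantum fields $(\hat{\phi}_i)^\R_{\omega_i}$ here is precisely that the $(\R,\omega_i)$ are globally oriented. I would begin by recording the two ingredients. By Sec.~\ref{subsec:global orientation}, if $(\R,\omega)$ is globally oriented and $\phi\in\bhs$, then, writing $A:=(\hat{\phi})^\R_{\omega}(0)=\int_{\mLup}(0,\lambda)\cdot\phi\,d\mu^{\G_\R}_{\omega}(\lambda)$, one has $(\hat{\phi})^\R_{\omega}(x)=(x,e)\cdot A=U_\S(x,e)\,A\,U_\S(x,e)^{\dagger}$ for all $x\in\mink$, and this field is translation-covariant, $(a,e)\cdot(\hat{\phi})^\R_{\omega}(x)=(\hat{\phi})^\R_{\omega}(x+a)$. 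Secondly, since $\Omega\in\vacstate$ is in particular translation-invariant, $U_\S(a,e)^{\dagger}\,\Omega\,U_\S(a,e)=\Omega$, so $\Omega$ commutes with every $U_\S(a,e)$ and $\Tr[\Omega\,(a,e)\cdot O]=\Tr[\Omega\,O]$ for every bounded $O$; moreover, the translation subgroup being abelian, $U_\S(\cdot,e)$ may be taken to be an honest representation, $U_\S(a,e)U_\S(b,e)=U_\S(a+b,e)$, so no cocycle phases need to be tracked.

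Now put $A_i:=(\hat{\phi}_i)^\R_{\omega_i}(0)$, so that, by the first ingredient,
\begin{equation*}
W_n^{(\Omega,\R)}[\omega_1,\dots,\omega_n;x_1,\dots,x_n](\phi_1,\dots,\phi_n)=\Tr\left[\Omega\,\prod_{i=1}^{n}U_\S(x_i,e)\,A_i\,U_\S(x_i,e)^{\dagger}\right].
\end{equation*}
Commuting $U_\S(x_1,e)$ past $\Omega$ and then cyclically moving it to the trailing slot of the trace breaks the leftmost conjugation and lets adjacent unitaries pair up: each interior pair collapses to a single translation unitary, $U_\S(x_k,e)^{\dagger}U_\S(x_{k+1},e)=U_\S(x_{k+1}-x_k,e)=:V_k$, while the wrap-around factor is $U_\S(x_n,e)^{\dagger}U_\S(x_1,e)=U_\S(x_1-x_n,e)=(V_1\cdots V_{n-1})^{\dagger}$, giving
\begin{equation*}
W_n^{(\Omega,\R)}[\dots](\phi_1,\dots,\phi_n)=\Tr\left[\Omega\,A_1\Big(\prod_{k=1}^{n-1}V_k\,A_{k+1}\Big)(V_1\cdots V_{n-1})^{\dagger}\right].
\end{equation*}
A last use of $\Omega$-invariance together with cyclicity carries $(V_1\cdots V_{n-1})^{\dagger}$ from the right end to just after $\Omega$, and writing $\xi_j$ for the consecutive differences (so that $U_\S(\xi_j,e)=V_j$ and $V_1\cdots V_{n-1}=U_\S(\sum_{j=1}^{n-1}\xi_j,e)$, matching the leading factor in the definition of $\Wbf_n$) turns the right-hand side into exactly $\Tr[\Omega\,U_\S(\sum_{j}\xi_j,e)^{\dagger}\,A_1\prod_{j=2}^{n}\big(U_\S(\xi_{j-1},e)A_j\big)]=\Wbf_n^{(\Omega,\R)}[\omega_1,\dots,\omega_n;\xi_1,\dots,\xi_{n-1}](\phi_1,\dots,\phi_n)$, which is the claimed equality. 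Equivalently one can run the same computation backwards from $\Wbf_n$, telescoping $U_\S(\xi_1,e)\cdots U_\S(\xi_k,e)$ into conjugations $P_k A_{k+1}P_k^{\dagger}$ with $P_k=U_\S(\sum_{j\le k}\xi_j,e)$, identifying $P_k A_{k+1}P_k^{\dagger}$ with $(\hat{\phi}_{k+1})^\R_{\omega_{k+1}}$ evaluated at the corresponding point via the translation covariance above, and then shifting all arguments by a common translation.

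For the ``in particular'', specialise the identity to $n=2$: it reads $W_2^{(\Omega,\R)}[\omega_1,\omega_2;x_1,x_2]=\Wbf_2^{(\Omega,\R)}[\omega_1,\omega_2;\xi_1]$ with $\xi_1$ the difference of $x_1$ and $x_2$, so $W_2^{(\Omega,\R)}$ depends on $(x_1,x_2)$ only through that difference. Since $(-y)-(-x)=x-y$, substituting $(x_1,x_2)=(x,y)$ and $(x_1,x_2)=(-y,-x)$ yields the same value, i.e.\ $W^{(\Omega,\R)}_2[\omega_1,\omega_2;x,y]=W^{(\Omega,\R)}_2[\omega_1,\omega_2;-y,-x]$.

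I expect the only real obstacle to be organizational rather than conceptual: carrying out the cyclic rotations and the telescoping without a sign slip — above all pinning down the sign convention for the $\xi_j$ so that it agrees both with the definition of $\Wbf_n$ and with the statement of the proposition — and checking that the slot-by-slot pairing (each factor keeping its own $\omega_i$ and $\phi_i$ throughout every rearrangement) is preserved. There is no analytic subtlety: all $A_i$ are bounded, every product is finite, and translation-invariance of $\Omega$ together with the translation covariance of the globally oriented relational local quantum fields does all the work.
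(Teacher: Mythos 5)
Your proof is correct and follows essentially the same route as the paper's: write each globally oriented field as $U_\S(x_i,e)\,(\hat{\phi}_i)^\R_{\omega_i}(0)\,U_\S(x_i,e)^\dagger$, collapse adjacent unitaries into difference translations, and use cyclicity of the trace together with translation invariance of $\Omega$ to absorb the wrap-around factor, so that only the consecutive differences survive; the $n=2$ claim then follows from dependence on the difference alone, which is equivalent to the paper's short direct computation. The sign/dagger bookkeeping you flag is genuinely ambiguous, but in the source rather than in your argument: the paper's definition of $\Wbf_n$ and its final identification $\xi_j=x_j-x_{j+1}$ differ by exactly the dagger placement you would need to reconcile, so this is a convention issue and not a gap.
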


\begin{proof}
    See App. \ref{proof global orientation}.
\end{proof}

Hence, for globally oriented QRFs in which the relational local quantum fields are pointwise translation covariant with no shift in the frame's state, we see that the n-point vacuum kernels only depend on the distance between the field locations. This is a standard property of \enquote{physicists'} QFT as well as Wightman QFT. Assuming this property holds, one can analyse the spectral properties of the n-point vacuum kernels.

\begin{proposition}[Spectral Conditions]
    \label{prop:spectrum condition vevs}
    Let $\R$ be a relativistic QRF, $\Omega \in \vacstate$ and $\omega_1,\cdots,\omega_n \in \framestate$, $\xi_1,\cdots,\xi_{n-1} \in \mink^{n-1}$, $n \in \mathbb{N}$. If the $(\R,\omega_i)$ are globally oriented, $i=1,\cdots,n$, then the Fourier transforms of $W_n^{(\Omega,\R)}[\omega_1,\cdots,\omega_n;x_1,\cdots,x_n]$ and $\Wbf_n^{(\Omega,\R)}[\omega_1,\cdots,\omega_n;\xi_1,\cdots,\xi_n]$, defined for all $\phi_1,\cdots,\phi_n \in \bhs$ by
    \begin{equation}\begin{aligned}
        \tilde{W}_n^{(\Omega,\R)}&[\omega_1,\cdots,\omega_n;p_1,\cdots,p_n](\phi_1,\cdots,\phi_n) \nonumber \\ &:= \idotsint \exp\left(i \sum_{i=1}^n p_i x_i \right) W_n^{(\Omega,\R)}[\omega_1,\cdots,\omega_n;x_1,\cdots,x_n](\phi_1,\cdots,\phi_n) d^dx_1 \cdots d^dx_n\\
        \tilde{\Wbf}_n^{(\Omega,\R)}&[\omega_1,\cdots,\omega_n;q_1,\cdots,q_{n-1}](\phi_1,\cdots,\phi_n) \nonumber \\ &:= \idotsint \exp\left(i \sum_{j=1}^{n-1} q_j \xi_j \right) \Wbf_n^{(\Omega,\R)}[\omega_1,\cdots,\omega_n;\xi_1,\cdots,\xi_{n-1}](\phi_1,\cdots,\phi_n) d^d\xi_1 \cdots d^d\xi_{n-1}
    \end{aligned}\end{equation}
    for all $p_1,\cdots,p_n \in \mathbb{R}^d$ and all $q_1,\cdots,q_{n-1} \in \mathbb{R}^d$, are related by
    \begin{equation}
        \tilde{W}_n^{(\Omega,\R)}[\omega_1,\cdots,\omega_n;p_1,\cdots,p_n] = (2\pi)^d \delta\left(\sum_{i=1}^n p_j\right) \tilde{\Wbf}_n[\omega_1,\cdots,\omega_n;p_1,p_1+p_2,\cdots,p_1 + p_2 + \cdots +p_{n-1}] \, . 
    \end{equation}
    Furthermore, provided the spectrum condition holds for $U_\S$,\footnote{The spectrum condition states that $\sigma(P_\mu) \subset \overline{V_+} = \{p \mid p^0 \geq 0, p^2 \geq 0\}$, i.e. the joint spectrum of the generators of spacetime translations of $U_\S$ lies in the forward causal cone.}
        \begin{equation}
            \tilde{\Wbf}^{(\Omega,\R)}_n[\omega_1,\cdots,\omega_n;q_1,\cdots,q_{n-1}] = 0
        \end{equation}
        if any $q_1,\cdots,q_{n-1} \notin \sigma(P_\mu)$.
\end{proposition}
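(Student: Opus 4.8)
The plan is to treat the two assertions separately: the first is a change-of-variables identity forced by global orientation, and the second combines the spectral calculus of the translation generators of $U_\S$ with the $\Poincup$-invariance of $\Omega$.

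For the relation between the two Fourier transforms, I would start from the defining integral of $\tilde{W}_n^{(\Omega,\R)}$ and substitute the global-orientation identity of Proposition~\ref{prop:global orientation}, $W_n^{(\Omega,\R)}[\omega_1,\cdots,\omega_n;x_1,\cdots,x_n] = \Wbf_n^{(\Omega,\R)}[\omega_1,\cdots,\omega_n;\xi_1,\cdots,\xi_{n-1}]$ with $\xi_j = x_j - x_{j+1}$. Passing to the coordinates $(\xi_1,\cdots,\xi_{n-1},x_n)$ on $\mink^n$ has unit Jacobian, and using $x_j = x_n + \sum_{k=j}^{n-1}\xi_k$ one obtains
\begin{equation}
    \sum_{i=1}^n p_i\cdot x_i = \Big(\sum_{i=1}^n p_i\Big)\cdot x_n + \sum_{k=1}^{n-1}\Big(\sum_{i=1}^k p_i\Big)\cdot\xi_k\,.
\end{equation}
Since $\Wbf_n$ is independent of $x_n$, the $x_n$-integral factors off as $(2\pi)^d\delta\!\big(\sum_i p_i\big)$, while the remaining integral over $\xi_1,\cdots,\xi_{n-1}$ is, by definition, $\tilde{\Wbf}_n^{(\Omega,\R)}$ evaluated at $q_k = p_1 + \cdots + p_k$ --- exactly the asserted identity, understood as an equality of tempered distributions. (The Fourier transforms make sense because $W_n$ and $\Wbf_n$ are bounded continuous functions, each $(\hat{\phi}_i)^\R_{\omega_i}(x_i)$ having operator norm at most $\|\phi_i\|$, and the splitting of the integral is the factorisation of the Fourier transform of a tensor product; when the $\mu^{\F_\R}_{\omega_i}$ are absolutely continuous with respect to Lebesgue measure one may instead argue classically via Fubini's theorem.)

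For the spectrum-condition part, let $E$ be the joint projection-valued measure of the translation generators $P_\mu$ of $U_\S$, so $U_\S(a,e) = \int_{\Reals^d} e^{-i p\cdot a}\,dE(p)$ and $\supp E = \sigma(P_\mu)$; the spectrum condition is what makes $\sigma(P_\mu)\subseteq\overline{V_+}$ a proper subset of momentum space, and hence the conclusion non-vacuous. The crucial preliminary step is that $\Omega$ is supported on the zero-momentum subspace, $\Omega = E(\{0\})\,\Omega\,E(\{0\})$: $\Poincup$-invariance gives $[\Omega,U_\S(g)]=0$ for all $g\in\Poincup$, so together with the covariance $U_\S(0,\Lambda)\,E(S)\,U_\S(0,\Lambda)^\dagger = E(\Lambda S)$ the probability measure $S\mapsto\Tr[\Omega\,E(S)]$ on $\Reals^d$ is Lorentz-invariant; every finite Lorentz-invariant measure on $\Reals^d$ is concentrated at the origin, and since $\Omega\geq 0$ the vanishing $\Tr[\Omega\,E(\{0\}^c)]=0$ upgrades to $\Omega\,E(\{0\}^c)=0$. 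As $U_\S(a,e)$ restricts to the identity on $E(\{0\})\his$, one gets $\Omega\,U_\S\!\big(\sum_i\xi_i,e\big)^\dagger = \Omega$, so the leading translation factor drops out and $\Wbf_n^{(\Omega,\R)}[\omega_1,\cdots,\omega_n;\xi_1,\cdots,\xi_{n-1}]$ reduces, on $\phi_1,\cdots,\phi_n$, to $\Tr[\Omega\,A_1\,U_\S(\xi_1,e)\,A_2\,U_\S(\xi_2,e)\,A_3\cdots U_\S(\xi_{n-1},e)\,A_n]$ with $A_j := (\hat{\phi}_j)^\R_{\omega_j}(0)\in\bhs$. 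Inserting $U_\S(\xi_j,e)=\int e^{-i p_j\cdot\xi_j}\,dE(p_j)$ for $j=1,\cdots,n-1$ then writes $\Wbf_n$ as $\int_{(\Reals^d)^{n-1}} e^{-i\sum_j p_j\cdot\xi_j}\,d\rho(p_1,\cdots,p_{n-1})$, where $d\rho(p_1,\cdots,p_{n-1}) := \Tr[\Omega\,A_1\,dE(p_1)\,A_2\,dE(p_2)\cdots dE(p_{n-1})\,A_n]$ is a complex measure of finite total variation supported in $\sigma(P_\mu)^{\times(n-1)}$ (each factor $dE(p_j)$ being so supported). Taking the Fourier transform as in the statement recovers a constant multiple of $\rho$, whence $\tilde{\Wbf}_n^{(\Omega,\R)}$ is supported in $\sigma(P_\mu)^{\times(n-1)}$ and vanishes whenever some $q_k\notin\sigma(P_\mu)$; the spectrum condition moreover sharpens this to $\supp\tilde{\Wbf}_n^{(\Omega,\R)}\subseteq(\overline{V_+})^{n-1}$.

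I expect the main obstacle to be the reduction $\Omega = E(\{0\})\,\Omega\,E(\{0\})$: it is the only place where the possibly-mixed nature of $\Omega$ genuinely enters, it is what makes the leading translation factor of $\Wbf_n$ inert, and it rests on the covariance of the spectral measure of $P_\mu$ together with the classification of finite Lorentz-invariant measures on momentum space. A secondary nuisance is making the Fourier transforms rigorous --- since $W_n$ and $\Wbf_n$ are built from the $\mu^{\F_\R}_{\omega_i}$, which need not be absolutely continuous, the identities are best read distributionally, or under an extra Radon--Nikodym hypothesis on the $\mu^{\F_\R}_{\omega_i}$ --- and keeping the Fourier and representation sign conventions consistent so that the support condition emerges as $q_k\in\sigma(P_\mu)$ rather than $-q_k\in\sigma(P_\mu)$.
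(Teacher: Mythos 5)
Your first identity is established exactly as in the paper: pass to the difference variables $(\xi_1,\cdots,\xi_{n-1},x_n)$ with unit Jacobian, use Proposition~\ref{prop:global orientation} to replace $W_n^{(\Omega,\R)}$ by $\Wbf_n^{(\Omega,\R)}$, regroup the exponent so that the $x_n$-integral factors off as $(2\pi)^d\delta\big(\sum_i p_i\big)$, leaving $\tilde{\Wbf}_n^{(\Omega,\R)}$ at $q_k=p_1+\cdots+p_k$; no difference there. For the spectral part, however, you take a genuinely different route. The paper never localizes $\Omega$ in momentum space: it observes that $\int f(a)\,U_\S(a,e)\,d^da=0$ whenever the Fourier transform of $f$ avoids $\sigma(P_\mu)$, applies this with the trace against the trace-class operator obtained by splitting the product of fields at the $j$-th slot, and uses cyclicity together with the difference-variable form of $\Wbf_n^{(\Omega,\R)}$ (so that inserting $U_\S(a,e)^\dagger$ there shifts only $\xi_j$) to conclude that the partial Fourier transform in each $\xi_j$ vanishes off the spectrum. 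You instead first prove $\Omega=\E(\{0\})\,\Omega\,\E(\{0\})$ from Poincar\'e invariance plus the fact that a finite Lorentz-invariant measure on momentum space is concentrated at the origin (a fact the paper itself invokes in the Wizimirski appendix, and your positivity upgrade from $\Tr[\Omega\,\E(\{0\}^c)]=0$ to $\Omega\,\E(\{0\}^c)=0$ is sound), use this to drop the leading translation factor in $\Wbf_n^{(\Omega,\R)}$, and then read off the support from the spectral resolutions of the inserted $U_\S(\xi_j,e)$. Your approach buys a structural statement about the vacuum's momentum content and exhibits $\Wbf_n^{(\Omega,\R)}$ directly as a Fourier--Stieltjes transform; the paper's slot-by-slot argument avoids any claim about $\Omega$ and needs only the operator identity above. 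One caveat: for $n\geq 3$ your object $\Tr[\Omega A_1\,d\E(p_1)A_2\cdots d\E(p_{n-1})A_n]$ is a priori only a multilinear (bi)measure and need not have finite total variation as a measure on the product $\sigma$-algebra, so the clean "Fourier transform of a complex measure" phrasing is not justified as stated; the conclusion survives by reading everything distributionally and testing against products of test functions whose Fourier transforms avoid $\sigma(P_\mu)$ --- which is precisely the mechanism of the paper's proof --- and, as you note, by fixing consistent sign conventions so the support comes out as $q_k\in\sigma(P_\mu)$.
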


\begin{proof}
    See App. \ref{proof spectrum condition vevs}.
\end{proof}

For globally oriented QRFs, one can therefore understand momentum-space vacuum expectation values in terms of the spectrum of the generators of Poincaré transformations on the system. These momentum space contributions vanish outside the spectrum of the generators of translations, as is well-understood in \enquote{standard} quantum field theory. Let us now also highlight the Hermiticity of these Wightman functions.

\begin{proposition}[Hermiticity Conditions]
    \label{prop:Hermiticity Wightman functions}
    For all $\phi_1,\cdots,\phi_n \in \bhs$ and all $x_1,\cdots,x_n \in \mink$,
        \begin{equation}\begin{aligned}
        W_n^{(\Omega,\R)}[\omega_1,\cdots,\omega_n;x_1,\cdots,x_{n}](\phi_1,\cdots,\phi_n) &= \overline{W_n^{(\Omega,\R)}[\omega_n,\cdots,\omega_1;x_{n},\cdots,x_1](\phi_n^\dagger,\cdots,\phi_1^\dagger)} \\
            \Wbf_n^{(\Omega,\R)}[\omega_1,\cdots,\omega_n;\xi_1,\cdots,\xi_{n-1}](\phi_1,\cdots,\phi_n) &= \overline{\Wbf_n^{(\Omega,\R)}[\omega_n,\cdots,\omega_1;-\xi_{n-1},\cdots,-\xi_1](\phi_n^\dagger,\cdots,\phi_1^\dagger)}
        \end{aligned}\end{equation}
        where $\xi_j = x_j - x_{j+1}$ for all $j \in \{1,\cdots,n-1\}$ and the latter holds if the $(\R,\omega_i)$ are globally oriented, $i=1,\cdots,n$, and so
        \begin{equation}
            \Wscr_n^{(\Omega,\R)}[\omega_1,\cdots,\omega_n](\phi_1,\cdots,\phi_n) = \overline{\Wscr_n^{(\Omega,\R)}[\omega_n,\cdots,\omega_1](\phi_n^\dagger,\cdots,\phi_1^\dagger)} \, .
        \end{equation}
\end{proposition}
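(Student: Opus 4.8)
The plan is to reduce all three identities to one short computation resting on three facts: (i) $\Omega \in \vacstate$ is a density operator, so $\Omega^\dagger = \Omega$; (ii) for any trace-class $T$ and bounded $A_1, \cdots, A_n$ one has $\overline{\Tr[T A_1 \cdots A_n]} = \Tr[A_n^\dagger \cdots A_1^\dagger T^\dagger]$, together with cyclicity of the trace on a product of a trace-class operator with bounded ones; and (iii) the disintegrated relational local quantum field is adjoint-preserving, i.e.\ $\big((\hat{\phi}_i)^\R_{\omega_i}(x_i)\big)^\dagger = (\widehat{\phi_i^\dagger})^\R_{\omega_i}(x_i)$, where the right-hand side denotes the relational local quantum field of Def.~\ref{def: framed scalar quantum field} built from $\phi_i^\dagger$ at the same point and frame state.

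First I would record (iii). Since $\hat{\phi}_\lambda(x) = U_\S(x,\lambda)\,\phi\,U_\S(x,\lambda)^\dagger$, we have $\hat{\phi}_\lambda(x)^\dagger = U_\S(x,\lambda)\,\phi^\dagger\,U_\S(x,\lambda)^\dagger = (\widehat{\phi^\dagger})_\lambda(x)$; and because $A \mapsto A^\dagger$ is a bounded, real-linear (conjugate-linear) map on $\bhs$, it commutes with the Bochner integral against the \emph{positive} disintegration measure $\nu^{\E_\R}_{\omega}(\cdot \mid x)$ defining $\hat{\phi}^\R_\omega(x)$. This is the pointwise counterpart of the adjoint-preservation of the restricted relativization $\yen^\R_\omega$ noted earlier.

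With $A_i := (\hat{\phi}_i)^\R_{\omega_i}(x_i)$, the first displayed identity is then immediate:
\[
\overline{W_n^{(\Omega,\R)}[\omega_1,\cdots,\omega_n;x_1,\cdots,x_n](\phi_1,\cdots,\phi_n)} = \overline{\Tr[\Omega A_1 \cdots A_n]} = \Tr[A_n^\dagger \cdots A_1^\dagger \Omega] = \Tr[\Omega A_n^\dagger \cdots A_1^\dagger],
\]
using (i) in the penultimate step and cyclicity in the last; substituting $A_i^\dagger = (\widehat{\phi_i^\dagger})^\R_{\omega_i}(x_i)$ from (iii) identifies the right-hand side with $W_n^{(\Omega,\R)}[\omega_n,\cdots,\omega_1;x_n,\cdots,x_1](\phi_n^\dagger,\cdots,\phi_1^\dagger)$. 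For the $\Wbf_n$ identity I would invoke Prop.~\ref{prop:global orientation}: under global orientation of every $(\R,\omega_i)$, $\Wbf_n^{(\Omega,\R)}[\omega_1,\cdots,\omega_n;\xi_1,\cdots,\xi_{n-1}] = W_n^{(\Omega,\R)}[\omega_1,\cdots,\omega_n;x_1,\cdots,x_n]$ with $\xi_j = x_j - x_{j+1}$; applying the $W_n$ identity just proved and observing that the reversed point tuple $(x_n,\cdots,x_1)$ has consecutive differences $\eta_k = x_{n+1-k}-x_{n-k} = -\xi_{n-k}$, i.e.\ $(\eta_1,\cdots,\eta_{n-1}) = (-\xi_{n-1},\cdots,-\xi_1)$, one applies Prop.~\ref{prop:global orientation} once more to the reversed (still globally oriented) data to rewrite the right-hand side as $\overline{\Wbf_n^{(\Omega,\R)}[\omega_n,\cdots,\omega_1;-\xi_{n-1},\cdots,-\xi_1](\phi_n^\dagger,\cdots,\phi_1^\dagger)}$. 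Finally, for the $\Wscr_n$ identity I would conjugate the integral representation
\[
\Wscr_n^{(\Omega,\R)}[\omega_1,\cdots,\omega_n](\phi_1,\cdots,\phi_n) = \idotsint_{\mink^n} W_n^{(\Omega,\R)}[\omega_1,\cdots,\omega_n;x_1,\cdots,x_n](\phi_1,\cdots,\phi_n)\, d\mu^{\F_\R}_{\omega_1}(x_1)\cdots d\mu^{\F_\R}_{\omega_n}(x_n),
\]
pushing the conjugation inside (the $\mu^{\F_\R}_{\omega_i}$ are positive), inserting the $W_n$ Hermiticity identity, relabelling $x_i \leftrightarrow x_{n+1-i}$, and reordering the iterated integral by Fubini--Tonelli for Bochner integrals (Thm.~\ref{thm:Fubini-Tonelli Bochner}); note this last step needs no global-orientation hypothesis.

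The computation is routine; the only two points that need genuine care are the justification of (iii) --- where one must remember that the adjoint is only \emph{real}-linear, so one uses that Bochner integrals commute with bounded real-linear maps together with positivity of $\nu^{\E_\R}_\omega(\cdot\mid x)$ --- and the index/sign bookkeeping in the $\Wbf_n$ change of variables, which is purely combinatorial but the most error-prone step.
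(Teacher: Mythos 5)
Your proof is correct and follows essentially the same route as the paper's: conjugate the trace, take adjoints to reverse the product, use $\Omega^\dagger=\Omega$ and cyclicity, then transport the result to $\Wbf_n$ via the global-orientation identity and to $\Wscr_n$ by conjugating the integral representation (real marginal measures) and applying Fubini--Tonelli. The only difference is that you explicitly justify the adjoint-preservation $\big((\hat{\phi}_i)^\R_{\omega_i}(x_i)\big)^\dagger=(\widehat{\phi_i^\dagger})^\R_{\omega_i}(x_i)$, a step the paper uses tacitly (and where its displayed indices even contain a small typo), so your write-up is if anything slightly more complete.
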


This is an important property to highlight if one is interested in reconstructing a Hilbert space from the vacuum expectation values alone. It would be interesting to understand whether such a reconstruction theorem, well-established in Wightman QFT and in Osterwalder-Schrader QFT \cite{osterwalder_axioms_1973}, has a counterpart in such a relational context---in particular, could one reconstruct the \enquote{full absolute} system Hilbert space $\his$ from these vacuum expectation values in one single QRF $\R$?  \\

Another important feature of vacuum expectation values, now in the context of general, not necessarily globally oriented relativistic QRFs, is the local commutativity conditions, which relate to the causal conditions of the QRFs.

\begin{proposition}[Local Commutativity Conditions]
    \label{prop:local commutativity vevs}
    Let $\R$ be a relativistic QRF, $\sigma \geq 0$, $\Omega \in \vacstate$, $\mathcal{O}_\S \subseteq \bhs$ and $\omega_1,\cdots,\omega_n \in \operationalstate$. If 
        \begin{enumerate}
            \item $\mathcal{O}_\S$ is $(\operationalstate,\sigma)$-causal and $\omega_j \indepERsigma \omega_{j+1}$, $j \leq n \in \mathbb{N}$, then if $\phi_j,\phi_{j+1} \in \mathcal{O}_\S$,
            \begin{multline}
            \Wscr_n^{(\Omega,\R)}[\omega_1,\cdots,\omega_n](\phi_1,\cdots,\phi_n) \\ = \Wscr_n^{(\Omega,\R)}[\omega_1,\cdots,\omega_{j-1},\omega_{j+1},\omega_j,\omega_{j+2},\cdots,\omega_n](\phi_1,\cdots,\phi_{j-1},\phi_{j+1},\phi_j,\phi_{j+2},\cdots,\phi_n) \, .
        \end{multline}
            \item $\mathcal{O}_\S$ is weakly $(\operationalstate,\sigma)$-microcausal, $\omega_j \indepERsigma \omega_{j+1}$ and $\phi_j,\phi_{j+1} \in \mathcal{O}_\S$, then $\forall x_i \in \supp \mu^{\F_\R}_{\omega_i}$, $i =1,\cdots,n$,
            \begin{multline}
            W_n^{(\Omega,\R)}[\omega_1,\cdots,\omega_n;x_1,\cdots,x_n](\phi_1,\cdots,\phi_n) \\ = W_n^{(\Omega,\R)}[\omega_1,\cdots,\omega_{j-1},\omega_{j+1},\omega_j,\omega_{j+2},\cdots,\omega_n;x_1,\cdots,x_{j-1},x_{j+1},x_j,x_{j+2},\cdots,x_n] \\(\phi_1,\cdots,\phi_{j-1},\phi_{j+1},\phi_j,\phi_{j+2},\cdots,\phi_n) \, .
        \end{multline}
            \item $\mathcal{O}_\S$ is strongly $(\operationalstate,\sigma)$-microcausal, $x_j \indepsigma x_{j+1}$ and $\phi_j,\phi_{j+1} \in \mathcal{O}_\S$, then 
            \begin{multline}
            W_n^{(\Omega,\R)}[\omega_1,\cdots,\omega_n;x_1,\cdots,x_n](\phi_1,\cdots,\phi_n) \\ = W_n^{(\Omega,\R)}[\omega_1,\cdots,\omega_{j-1},\omega_{j+1},\omega_j,\omega_{j+2},\cdots,\omega_n;x_1,\cdots,x_{j-1},x_{j+1},x_j,x_{j+2},\cdots,x_n] \\(\phi_1,\cdots,\phi_{j-1},\phi_{j+1},\phi_j,\phi_{j+2},\cdots,\phi_n) \, .
        \end{multline}
        \end{enumerate}
\end{proposition}

Different causal assumptions thus lead to different algebraic conditions for the vacuum expectation values and their kernels. $(\operationalstate,\sigma)$-causality yields the \enquote{standard} commutativity conditions of the n-point vacuum expectation values, while weak and strong $(\operationalstate,\sigma)$-microcausality imply the \enquote{standard} commutativity conditions for the kernels.  \\

We finish by showing the positivity of the n-point vacuum expectation values in RQFT, which relates to the positive definiteness of the scalar product in Hilbert space in the Wightman framework \cite{streater_pct_1989}.

\begin{proposition}[Positive Definite Conditions]
    \label{prop:Positivity vevs}
    Let $\R$ be a relativistic QRF, $\Omega \in \vacstate$ and $n \in \mathbb{N}$. For all $(\phi_i)_{i=1}^{2n} \in \bhs$ and all $(\omega_{lm})_{l \leq m = 1}^n \in \framestate$,
    \begin{equation}
        \label{eqn:positivity}
        \sum_{j=1}^n\sum_{k=1}^n\Wscr_{j+k}^{(\Omega,\R)}[\omega_{1j},\cdots,\omega_{jj},\omega_{1k},\cdots,\omega_{kk}](\phi_{jj}^\dagger,\cdots,\phi_{j1}^\dagger,\phi_{k1},\cdots,\phi_{kk}) \geq 0
    \end{equation}
\end{proposition}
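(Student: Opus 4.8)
The assertion is the relational counterpart of the Wightman positive-definiteness axiom, and the plan is to run the classical argument: rewrite the double sum as $\sum_{j,k}\Tr[\Omega\,B_j^{\dagger}B_k]$ for a suitable finite family of bounded operators $B_1,\dots,B_n\in\bhs$, and then use positivity of the density operator $\Omega$. Two structural inputs do all the work. First, by definition $\Wscr_m^{(\Omega,\R)}[\omega_1,\dots,\omega_m](\psi_1,\dots,\psi_m)=\Tr\!\big[\Omega\prod_{i=1}^{m}(\psi_i)^{\R}_{\omega_i}\big]$, where $(\psi)^{\R}_{\omega}=\yen^{\R}_{\omega}(\psi)=\hat{\Phi}^{\R}(\omega)$ is the restricted relativization. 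Second, $\yen^{\R}_{\omega}=\Gamma_\omega\circ\yen^{\R}$ is a (unital, completely positive, contractive) quantum channel and in particular is \emph{adjoint-preserving}, so for every $\psi\in\bhs$ and $\omega\in\framestate$ one has $\big((\psi)^{\R}_{\omega}\big)^{\dagger}=(\psi^{\dagger})^{\R}_{\omega}$ and $\|(\psi)^{\R}_{\omega}\|\le\|\psi\|$; in particular every relational local observable is a bounded operator on $\his$, so all products and traces appearing below are well defined.

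Then I would introduce, for $k=1,\dots,n$, the bounded operator
\begin{equation}
  B_k:=(\phi_{k1})^{\R}_{\omega_{1k}}\,(\phi_{k2})^{\R}_{\omega_{2k}}\cdots(\phi_{kk})^{\R}_{\omega_{kk}}\in\bhs .
\end{equation}
Taking the adjoint, reversing the order of the factors and using adjoint-preservation of $\yen^{\R}_{\omega}$,
\begin{equation}
  B_j^{\dagger}=(\phi_{jj}^{\dagger})^{\R}_{\omega_{jj}}\,(\phi_{j,j-1}^{\dagger})^{\R}_{\omega_{j-1,j}}\cdots(\phi_{j1}^{\dagger})^{\R}_{\omega_{1j}} ,
\end{equation}
so that $B_j^{\dagger}B_k$ is precisely the string of $j+k$ relational local observables that sits inside the trace in the $(j,k)$-summand of \eqref{eqn:positivity}; hence that summand equals $\Tr[\Omega\,B_j^{\dagger}B_k]$. (Identifying the summand with $\Tr[\Omega\,B_j^{\dagger}B_k]$ amounts to reading off consistently the order of the operators and of their frame-state labels, and this is the one bookkeeping step where care is needed — it is where adjoint-preservation of $\yen^{\R}_{\omega}$, available in the scalar case, is used.) The claim is thus equivalent to $\sum_{j=1}^{n}\sum_{k=1}^{n}\Tr[\Omega\,B_j^{\dagger}B_k]\ge 0$.

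Finally I would spectrally decompose the density operator, $\Omega=\sum_{l}p_l\,\dyad{\Omega_l}$ with $p_l\ge 0$, $\sum_l p_l=1$ and $\{\ket{\Omega_l}\}$ orthonormal. For any bounded $C$ one has $\Tr[\Omega\,C]=\sum_l p_l\langle\Omega_l|C|\Omega_l\rangle$, and since $\sum_l p_l\,|\langle\Omega_l|B_j^{\dagger}B_k|\Omega_l\rangle|\le\|B_j^{\dagger}B_k\|<\infty$ the $l$-sum converges absolutely and may be exchanged with the finite $j,k$-sums. Hence
\begin{equation}
  \sum_{j,k}\Tr[\Omega\,B_j^{\dagger}B_k]=\sum_{l}p_l\sum_{j,k}\langle B_j\Omega_l\,|\,B_k\Omega_l\rangle=\sum_{l}p_l\,\Big\|\sum_{k=1}^{n}B_k\ket{\Omega_l}\Big\|^{2}\ge 0 ,
\end{equation}
which is the claim. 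Note that only positivity (and trace-class-ness) of $\Omega$ is used, not its Poincaré invariance.

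The main — indeed essentially only — obstacle is the middle step: checking that the daggered half of the operator string in \eqref{eqn:positivity}, together with the ordering of its frame-state arguments, reassembles into $B_j^{\dagger}$. Once that matching is pinned down via adjoint-preservation of $\yen^{\R}_{\omega}$, the rest (positivity of $\Omega$, boundedness of the relational local observables securing convergence, and the elementary sum-of-squares identity) is entirely routine, exactly as in the Wightman case.
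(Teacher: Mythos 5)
Your proof is correct and follows essentially the same route as the paper's: you define the partial products $B_k$ (the paper's $A_k$), identify each summand with $\Tr[\Omega\,B_j^{\dagger}B_k]$, and conclude by spectrally decomposing $\Omega$ and using the sum-of-squares identity $\sum_{j,k}\langle B_j\Omega_l\,|\,B_k\Omega_l\rangle=\|\sum_k B_k\Omega_l\|^2\ge 0$. Your explicit appeal to adjoint-preservation of $\yen^{\R}_{\omega}$ in the bookkeeping step (and your remark that the ordering of the $\omega_{pj}$ labels in the daggered half must be read off consistently) is in fact slightly more careful than the paper's own treatment.
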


\begin{proof}
    See App. \ref{proof of Positivity vevs}.
\end{proof}

We thus recover many of the results from Wightman QFT which pertain to the properties of the vacuum expectation values. Let us now examine whether the cluster decomposition property also holds for these vacuum expectation values in the RQFT setting.

\subsection{Relational cluster decomposition theorems}
\label{subsec:cluster decomposition}

In standard Wightman QFT, cluster decomposition (i.e. the factorisation of expectation values across large separations) is related to the uniqueness of a pure vacuum state \cite{streater_pct_1989}. Let us first examine some conditions on this purity and uniqueness.

\begin{lemma}
    \label{lem:pure unique vacuum implies unique invariant vector}
    Suppose $\Poincup \stackrel{U_\S}{\curvearrowright} \his$. If $\exists! \, \Omega \in \systemstate^{T(1,d-1)}$, then $\Omega \in \vacstate$ (and is the unique Poincaré-invariant state). Moreover, $\Omega$ is also pure iff $\exists! \ket{\Omega} \in \his$ normalised such that $\Omega = \ket{\Omega}\bra{\Omega}$ and $\forall a \in T(1,d-1), \, U(a,e)\ket{\Omega} = \ket{\Omega}$ (i.e. $\Omega$ is strongly invariant under $T(1,d-1)$).
\end{lemma}

\begin{proof}
    See App. \ref{proof pure unique vacuum implies unique invariant vector}.
\end{proof}

In Wightman QFT, one important property that the vacuum has is that it is cyclic for the polynomial algebra of the Wightman quantum fields. Let us see how that translates to the relational setting.

\begin{definition}
    Let $\R$ be a relativistic QRF, $\operationalstate \subseteq \framestate$ and $\mathcal{O}_\S \subseteq \bhs$. The \emph{polynomial relational algebra} is the set
    \begin{equation}
        \mathscr{P}(\mathcal{O}_\S)_{\operationalstate} := \left\{\sum_{i=1}^n \prod_{j=1}^m c_{ij} \hat{\Phi}_{ij}^\R(\omega_{ij}) \mid n,m \in \mathbb{N}, \omega_{ij}\in \operationalstate, c_{ij} \in \mathbb{C}, \phi_{ij} \in \mathcal{O}_\S\right\} \, .
    \end{equation}
\end{definition}

\begin{definition}
    A vector $\ket{\psi} \in \his$ is said to be \emph{cyclic} for a subalgebra $\A \subseteq \bhs$ if
    \begin{equation}
        \{A \ket{\psi} \mid A \in \A\}^{cl} = \his \, .
    \end{equation}
    $\ket{\psi} \in \his$ is said to be $(\mathcal{O}_\S,\operationalstate)$-cyclic if it is cyclic for the polynomial algebra $\mathscr{P}(\mathcal{O}_\S)_{\operationalstate}$.
\end{definition}

The notion of cluster property can be described in the context of Hilbert spaces directly, as follows \cite{bogolubov_relativistic_1990}.

\begin{definition}
    Let $\hi$ be a separable complex Hilbert space which is being acted on $\Poincup \stackrel{U}{\curvearrowright} \hi$ where $U$ is an ultraweakly continuous representation of the universal cover of the proper orthochronous Poincaré group. We say that the \emph{cluster property} holds in $\hi$ if $\exists \ket{\psi_0} \in \hi$ normalised such that $\forall a \in T(1,d-1)$ spacelike and $\forall \ket{\Psi}, \ket{\Phi} \in \hi$,
    \begin{equation}
        \label{eqn:cluster property HS}
        \lim_{\lambda \to \infty} \braket{\Phi}{U(\lambda a,e) \Psi} = \braket{\Phi}{\psi_0}\braket{\psi_0}{\Psi} \, .
    \end{equation}
\end{definition}

This notion of cluster property of the Hilbert space is indeed related to the uniqueness of a pure vacuum state.

\begin{theorem}[Prop. 7.1 in \cite{bogolubov_relativistic_1990}]
    \label{thm:Cluster property HS}
    Suppose $\Poincup \stackrel{U_\S}{\curvearrowright} \his$ where $U_\S$ is an ultraweakly continuous representation of $\Poincup$. Then
    \begin{enumerate}
        \item If $\his$ has the cluster property, then $\exists ! \, \Omega = \ket{\Omega}\bra{\Omega} \in \vacstate$ such that $\forall a \in T(1,d-1), \, U_\S(a,e) \ket{\Omega} = \ket{\Omega}$,
        \item If $\exists ! \, \Omega = \ket{\Omega}\bra{\Omega} \in \vacstate$ such that $\forall a \in T(1,d-1), \, U_\S(a,e) \ket{\Omega} = \ket{\Omega}$ and $U_\S$ satisfies the spectrum condition, then $\his$ has the cluster property.
    \end{enumerate}
\end{theorem}

We now relate this notion of cluster property of a Hilbert space to the context of RQFT.

\begin{definition}
    Let $\R$ be a relativistic QRF, $\Omega \in \vacstate$, $\mathcal{O_\S} \subseteq \bhs$ and $\operationalstate \subseteq \framestate$. Then the tuple $(\mathcal{O}_\S,\Omega,\operationalstate)$ is said to satisfy
    \begin{enumerate}
        \item the \emph{strong relational cluster decomposition property} if $\forall a \in T(1,d-1)$ spacelike, $\forall n \in \mathbb{N}_{\geq 2}$, $\forall j \in \{1,\cdots,n\}$, $\forall \omega_1,\cdots,\omega_n \in \operationalstate$, $\forall x_1,\cdots,x_n \in \mink$ and $\forall \phi_1,\cdots,\phi_n \in \mathcal{O}_\S$,
        \begin{multline}
            \lim_{\lambda \to \infty} W_{n}^{(\Omega,\R)}[\omega_1,\cdots,\omega_j,(\lambda a,e) \cdot \omega_{j+1}, \cdots,(\lambda a,e) \cdot\omega_n;x_1,\cdots,x_j,x_{j}+\lambda a,\cdots,x_n + \lambda a](\phi_1,\cdots,\phi_n) \\ = W_{j}^{(\Omega,\R)}[\omega_1,\cdots,\omega_j;x_1,\cdots,x_j](\phi_1,\cdots,\phi_j) \\ \cdot W_{n-j}^{(\Omega,\R)} [\omega_{j+1},\cdots,\omega_n;x_{j+1},\cdots,x_n](\phi_{j+1},\cdots,\phi_n)
        \end{multline}
        \item the \emph{weak relational cluster decomposition property} if $\forall a \in T(1,d-1)$ spacelike, $\forall n \in \mathbb{N}_{\geq 2}$, $\forall j \in \{1,\cdots,n\}$, $\forall \omega_1,\cdots,\omega_n \in \operationalstate$ and $\forall \phi_1,\cdots,\phi_n \in \mathcal{O}_\S$,
        \begin{multline}
            \lim_{\lambda \to \infty} \Wscr_{n}^{(\Omega,\R)}[\omega_1,\cdots,\omega_j,(\lambda a,e) \cdot\omega_{j+1},\cdots,(\lambda a,e) \cdot \omega_n](\phi_1,\cdots,\phi_n) \\ = \Wscr_{j}^{(\Omega,\R)}[\omega_1,\cdots,\omega_j](\phi_1,\cdots,\phi_j) \cdot \Wscr_{n-j}^{(\Omega,\R)} [\omega_{j+1},\cdots,\omega_n](\phi_{j+1},\cdots,\phi_n)
        \end{multline}
    \end{enumerate}
\end{definition}

We now show that the cluster property of $\his$ implies the strong relational cluster decomposition property, which in turn implies the weak relational cluster decomposition property.

\begin{lemma}
    \label{lem:cluster}   
    Let $\R$ be a relativistic QRF, $\Omega \in \vacstate$, $\mathcal{O_\S} \subseteq \bhs$ and $\operationalstate \subseteq \framestate$. Then
    \begin{enumerate}
        \item If $\his$ has the cluster property, then $\Omega$ is pure and is the unique vacuum state and is strongly translation-invariant and $(\mathcal{O}_\S,\Omega,\operationalstate)$ satisfies the strong relational cluster decomposition property (for any $\mathcal{O}_\S,\operationalstate$).
    \item If $(\mathcal{O}_\S,\Omega,\operationalstate)$ satisfies the strong relational cluster decomposition property, then it satisfies the weak relational cluster decomposition property.
    \end{enumerate}
\end{lemma}

\begin{proof}
    See App. \ref{proof cluster}.
\end{proof}

Under additional assumptions, these three notions of cluster property actually come together as follows.

\begin{theorem}
    \label{thm:Cluster properties}
    Let $\R$ be a relativistic QRF, $\mathcal{O_\S} \subseteq \bhs$ be closed under taking adjoints, $\operationalstate \subseteq \framestate$ and $\Omega = \ket{\Omega}\bra{\Omega} \in \vacstate$ be $(\mathcal{O}_\S,\operationalstate)$-cyclic. Then the following are equivalent:
    \begin{enumerate}
        
        \item $(\mathcal{O}_\S,\Omega,\operationalstate)$ satisfies the weak relational cluster decomposition property,
        \item $(\mathcal{O}_\S,\Omega,\operationalstate)$ satisfies the strong relational cluster decomposition property,
        \item $\his$ has the cluster property,
        \item $\Omega$ is the unique vacuum state and is strongly translation-invariant with respect to $U_\S$. 
    \end{enumerate}
\end{theorem}

\begin{proof}
    See App. \ref{proof Cluster properties}.
\end{proof}

We thus see that under certain common assumptions, all these notions of cluster property become equivalent. The derivation of the spin-statistics and CPT theorems, as well as of a reconstruction theorem from vacuum expectation values, is left to future work.

\subsection{Time-ordering}

\label{subsec:Feynman}

In physics, an important class of objects are the time-ordered correlation functions. These are related to the Wightman vacuum expectation values. First, let us define time-ordering, which for now is only well-defined under the assumption of (exact) strong microcausality.

\begin{definition}
    Let $\mathcal{O}_\S \subseteq \bhs$ be strongly $(\operationalstate,0)$-microcausal and $\omega_1,\cdots,\omega_n \in \operationalstate$ where $n \in \mathbb{N}$. We define the \emph{time-ordering operator} as the map
    \begin{equation}\begin{aligned}
        \T^\R_{n}[\omega_1,\cdots,\omega_n;x_1,\cdots,x_n] : \mathcal{O}_\S^n &\to \bhs \\
        \T^\R_{n}[\omega_1,\cdots,\omega_n;x_1,\cdots,x_n](\phi_1,\cdots,\phi_n) &:= \sum_{\sigma \in S_n} \left(\prod_{i=1}^{n-1} \Theta(\tau_{\sigma(i)} - \tau_{\sigma(i+1)})\right) \prod_{j=1}^{n} (\hat{\phi}_{\sigma(j)})^\R_{\omega_{\sigma(j)}}(x_{\sigma(j)}) \label{def: time-ordering}
    \end{aligned}\end{equation}
    where the sum runs over all permutations of the symmetry group $S_n$, $\Theta$ is the Heaviside step function and $\tau_i$ is the time-coordinate of $x_i$, $i=1,\cdots,n$, in some coordinate chart of $\mink$.
\end{definition}

The strong microcausality assumption is important for this map to be unambiguously defined: over spacelike separations, since the fields commute, the choice of coordinate system is irrelevant. For example, we have
\begin{equation}
\T^\R_{2}[\omega_1,\omega_2;x_1,x_2](\phi_1,\phi_2) = \begin{cases}
    (\hat{\phi}_1)^\R_{\omega_1}(x_1) (\hat{\phi}_2)^\R_{\omega_2}(x_2) \quad \text{ if } \tau_{x_1} > \tau_{x_2} \\
    (\hat{\phi}_2)^\R_{\omega_2}(x_2) (\hat{\phi}_1)^\R_{\omega_1}(x_1) \quad \text{ if } \tau_{x_2} > \tau_{x_1}
\end{cases} \quad .
\end{equation}

From there, time-ordered correlation functions can be defined as follows.

\begin{definition}
    Let $\R$ be a relativistic QRF, $\mathcal{O}_\S \subseteq \bhs$ be strongly $(\operationalstate,0)$-microcausal, $\Omega \in \vacstate$, $n \in \mathbb{N}$, $\omega_1,\cdots,\omega_n \in \operationalstate$, $x_1, \cdots,x_n \in \mink$. The \emph{time-ordered vacuum correlation functions} are maps
    \begin{equation}\begin{aligned}
    \Delta^{(\Omega,\R)}_{n}[\omega_1,\cdots,\omega_n;x_1,\cdots,x_n] : \mathcal{O}_\S^n &\to \mathbb{C} \\
    \Delta^{(\Omega,\R)}_{n}[\omega_1,\cdots,\omega_n;x_1,\cdots,x_n](\phi_1,\cdots,\phi_n) &= \Tr\left[\Omega \T^\R_n[\omega_1,\cdots,\omega_n;x_1,\cdots,x_n](\phi_1,\cdots,\phi_n)\right] \, .
    \end{aligned}\end{equation}
\end{definition}

\begin{lemma}
    \label{lem:time-ordered}
    Let $\R$ be a relativistic QRF, $\mathcal{O}_\S \subseteq \bhs$ be strongly $(\operationalstate,0)$-microcausal, $\Omega \in \vacstate$. Then for all $\omega_1,\cdots,\omega_n \in \operationalstate$ and $\phi_1,\cdots,\phi_n \in \mathcal{O}_\S$, $n \in \mathbb{N}$,
    \begin{enumerate}
        \item For $\mu^{\F_\R}_{\omega_i}$-a.e. $x_i$, $i=1,\cdots,n$, \begin{multline}
            \Delta^{(\Omega,\R)}_{n}[\omega_1,\cdots,\omega_n;x_1,\cdots,x_n](\phi_1,\cdots,\phi_n) \\ =\sum_{\sigma \in S_n} \left(\prod_{i=1}^{n-1} \Theta(\tau_{\sigma(i)} - \tau_{\sigma(i+1)})\right)  W^{(\Omega,\R)}_n[\omega_{\sigma(1)},\cdots,\omega_{\sigma(n)};x_{\sigma(1)},\cdots,x_{\sigma(n)}](\phi_{\sigma(1)},\cdots,\phi_{\sigma(n)}) \, . \label{eqn:time-ordered in terms of Wightman}
        \end{multline}
        \item For $\mu^{\F_\R}_{\omega_i}$-a.e. $x_i$, $i=1,\cdots,n$ and all $(a,\Lambda) \in \Poincup$,
        \begin{equation}
            \Delta_n^{(\Omega,\R)}[\omega_1 \cdot (a,\Lambda),\cdots, \omega_n \cdot (a,\Lambda); x_1,\cdots,x_n] = \Delta_n^{(\Omega,\R)}[\omega_1,\cdots,\omega_n;\Lambda x_1 + a,\cdots,\Lambda x_n + a] \, .
        \end{equation}
        \item For $\mu^{\F_\R}_{\omega_i}$-a.e. $x_i$, $i=1,2$, and any coordinate chart in which $\xi = x_1 - x_2=(\xi^0,\vec{\xi})$,
        \begin{equation}\label{eqn:Spectral rep time-ordered}
            \Delta^{(\Omega,\R)}_2[\omega_1,\omega_2;x_1,x_2](\phi_1,\phi_2) = \Theta(\xi^0) W_2^{(\Omega,\R)}[\omega_1,\omega_2;x_1,x_2](\phi_1,\phi_2) + \Theta(-\xi^0) \Wbf_2^{(\Omega,\R)}[\omega_2,\omega_1;x_2,x_1](\phi_2,\phi_1) \, .
        \end{equation}
    \end{enumerate}
\end{lemma}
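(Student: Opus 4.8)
The plan is to reduce all three items to the definition of the time-ordering operator $\T^\R_n$ and to facts already available for the $n$-point kernels $W_n^{(\Omega,\R)}$ (Definition \ref{def:Wightman kernels}) and their Poincar\'e covariance (Proposition \ref{prop:symmetry of vevs}); no genuinely new estimate is needed. For item 1 I would expand $\Delta_n^{(\Omega,\R)}[\omega_1,\cdots,\omega_n;x_1,\cdots,x_n](\phi_1,\cdots,\phi_n)=\Tr[\Omega\,\T^\R_n[\omega_1,\cdots,\omega_n;x_1,\cdots,x_n](\phi_1,\cdots,\phi_n)]$ using the defining finite sum \eqref{def: time-ordering} over $S_n$. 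Because the Heaviside prefactors $\prod_{i=1}^{n-1}\Theta(\tau_{\sigma(i)}-\tau_{\sigma(i+1)})$ are scalars, linearity of the trace pulls both the sum and those scalars outside, leaving $\sum_\sigma\big(\prod_i\Theta(\cdots)\big)\,\Tr[\Omega\prod_j(\hat\phi_{\sigma(j)})^\R_{\omega_{\sigma(j)}}(x_{\sigma(j)})]$, whose inner trace is by Definition \ref{def:Wightman kernels} exactly $W_n^{(\Omega,\R)}[\omega_{\sigma(1)},\cdots,\omega_{\sigma(n)};x_{\sigma(1)},\cdots,x_{\sigma(n)}](\phi_{\sigma(1)},\cdots,\phi_{\sigma(n)})$; this is \eqref{eqn:time-ordered in terms of Wightman}. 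There is no convergence issue, everything being a finite sum of products of bounded operators paired against the trace-class $\Omega$, and the ``$\mu^{\F_\R}_{\omega_i}$-a.e.'' qualifier is inherited from the a.e.\ definedness of the relational local quantum fields (Definition \ref{def: framed scalar quantum field}).

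For item 2 I would apply item 1 to the states $\omega_i\cdot(a,\Lambda)$ and rewrite each kernel by the pointwise covariance \eqref{eqn:Poincare covariance Wightman}, $W_n^{(\Omega,\R)}[\omega_{\sigma(1)}\cdot(a,\Lambda),\cdots;x_{\sigma(1)},\cdots]=W_n^{(\Omega,\R)}[\omega_{\sigma(1)},\cdots;\Lambda x_{\sigma(1)}+a,\cdots]$ (the derivation of \eqref{eqn:Poincare covariance Wightman} does not care which state sits in which slot, so it applies after relabelling by $\sigma$). The outcome matches the item-1 expansion of $\Delta_n^{(\Omega,\R)}[\omega_1,\cdots,\omega_n;\Lambda x_1+a,\cdots,\Lambda x_n+a]$ except that the Heaviside prefactors are evaluated on the time-coordinates of the $x_i$ rather than of the $\Lambda x_i+a$. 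I would close this gap with the same chart-independence argument that makes $\T^\R_n$ well defined: for orthochronous $\Lambda$ the sign of $\tau_i-\tau_j$ is unchanged by $x\mapsto\Lambda x+a$ whenever $x_i,x_j$ are timelike or lightlike separated, while whenever they are spacelike separated $\R$-microcausality forces the two permutations differing only by transposing those slots to contribute the same kernel, so the ambiguous value of the step functions there is immaterial; equivalently, the original and transformed time-coordinates are both admissible time functions for the single configuration $\{\Lambda x_i+a\}$. This is the one step that deserves care, and it is essentially a restatement of why microcausality is needed to define $\T^\R_n$ at all.

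For item 3 I would specialize item 1 to $n=2$: since $S_2=\{e,(12)\}$ we get $\Delta_2^{(\Omega,\R)}[\omega_1,\omega_2;x_1,x_2](\phi_1,\phi_2)=\Theta(\tau_{x_1}-\tau_{x_2})\,W_2^{(\Omega,\R)}[\omega_1,\omega_2;x_1,x_2](\phi_1,\phi_2)+\Theta(\tau_{x_2}-\tau_{x_1})\,W_2^{(\Omega,\R)}[\omega_2,\omega_1;x_2,x_1](\phi_2,\phi_1)$. In a chart with $\xi=x_1-x_2=(\xi^0,\vec\xi)$ one has $\tau_{x_1}-\tau_{x_2}=\xi^0$, so the prefactors become $\Theta(\xi^0)$ and $\Theta(-\xi^0)$; rewriting the second kernel through the coordinate difference $x_2-x_1$ and identifying it with the $\Wbf$ form of Proposition \ref{prop:global orientation} (legitimate under global orientation of $(\R,\omega_1)$ and $(\R,\omega_2)$) then yields \eqref{eqn:Spectral rep time-ordered}.
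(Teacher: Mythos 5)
Your proposal is correct and follows essentially the same route as the paper: expand $\T^\R_n$ via its defining sum over $S_n$, pull the scalar Heaviside prefactors and the finite sum through the trace to obtain item 1, then apply the kernel covariance \eqref{eqn:Poincare covariance Wightman} slotwise for item 2, and specialize to $n=2$ for item 3. Two remarks. First, on item 2 you are actually \emph{more} careful than the paper, which simply asserts that the step functions are unchanged under $(a,\Lambda)$; your justification (orthochronous maps preserve the sign of $\tau_i-\tau_j$ for causally related pairs, and microcausality makes the spacelike case immaterial) is exactly the missing content and is welcome. Second, on item 3 your final step is unnecessary and in fact problematic: the paper's own computation terminates at $\Theta(-\xi^0)\,W_2^{(\Omega,\R)}[\omega_2,\omega_1;x_2,x_1](\phi_2,\phi_1)$ with the plain kernel $W_2$, and the $\Wbf_2$ appearing in \eqref{eqn:Spectral rep time-ordered} is evidently a typo (note it is fed two position arguments, whereas $\Wbf_2$ by definition takes a single difference variable). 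Your attempted reconciliation via Proposition \ref{prop:global orientation} would import a global-orientation hypothesis that the lemma does not make and that is incompatible with the standing assumption of $\R$-microcausality (the paper notes that globally oriented QRFs cannot be microcausal), so that bridge should be dropped rather than built.
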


\begin{proof}
    See App. \ref{proof 5.11}.
\end{proof}

Note that the fact that the $W^{(\Omega,\R)}_n$ and $\Delta^{(\Omega,\R)}_{n}$ are \emph{not} Lorentz-invariant because of the shift of the states $\omega_i \stackrel{\Lambda}{\mapsto} \omega_i \cdot \Lambda^{-1}$ provides a roadblock for other results, notably the \enquote{standard} K{\"a}ll{\'e}n-Lehmann spectral representation of these time-ordered correlation functions\footnote{These could have been retrieved from Eqn. \eqref{eqn:Spectral rep time-ordered} assuming the spectrum condition holds, had the time-ordered correlation functions been Lorentz-invariant.}. It is plausible that some of the \enquote{usual} results of Wightman QFT which rely on these pointwise symmetries arise in some reasonable approximate regime, with subleading corrections which stem from the un-sharpness of the QRFs. It may also be that such deviations are key in avoiding other no-go theorems in collision theory, most notably Haag's theorem. A further study of this important nuance is thus necessary to understand how to bridge the relational formalism with \enquote{physicists'} collision theory and other results of particle physics, both perturbative and non-perturbative.

\section{Relational vs Wightman QFT}
\label{sec:Wightman comparison}

Wightman QFT (e.g. \cite{streater_pct_1989,wightman_fields_1965,wightman_theorie_1964}), also called Constructive QFT, formalises the way physicists traditionally work with quantum fields in a rigorous, axiomatised and analytic fashion. It relies on the assumption that quantum fields should fundamentally not be understood as pointwise objects, but rather that these are probed by certain test functions which vanish very quickly at infinity---Schwartz functions, living in the Schwartz space $\mathscr{S}(\mathbb{R}^4,\mathbb{C})$. Let us review the very basics of this approach before comparing and contrasting some aspects of it with the relational setup investigated in this work. For technical details of the Wightman distributional setup the reader is referred to App. \ref{App: distributions}.\\

\subsection{Wightman's axioms}
\label{wightman axioms}

Wightman's theory in $d=4$-dimensions is based on the following axioms \cite{streater_pct_1989}:
\begin{enumerate}
    \item[W0]  Assumptions of Relativistic Quantum Theory: The states of the theory are described by unit rays in a separable Hilbert space $\his$. The relativistic transformation law of the states is given by an ultraweakly continuous unitary representation $U$ of $ISL(2,\mathbb{C})$. Since $U(a,e)$ is unitary and ultraweakly continuous it can be written as $U(a,e) = e^{i P^\mu a_\mu}$, where $P^\mu$ is an unbounded self-adjoint operator interpreted as the energy-momentum operator of the theory. The operator $P_\mu P^\mu = m^2 \mathbb{1}_{\bhs}$ is interpreted as the square of the mass. The eigenvalues of $P^\mu$ lie in or on the plus cone\footnote{That is, in the future causal cone of the origin of the dual of Minkowski spacetime.} (\emph{spectrum condition}). There is a unique (up to a scalar) $ISL(2,\mathbb{C})$-invariant vector $\ket{\Omega} = U(a,\Lambda) \ket{\Omega}$.\footnote{The existence of the vacuum is sometimes weakened to the existence of a translation invariant vector, and the uniqueness can also be dropped. See \cite{streater_outline_1975} and references therein.}
    \item[W1] Assumptions about the Domain and Continuity of the Field: For each test function $f \in \mathscr{S}(\mathbb{R}^4,\mathbb{C})$, there exists a set $\{\hat{\Phi}^{(W)}_{1}(f),\cdots,\hat{\Phi}^{(W)}_{n}(f)\}$ of operators (for a spinorial field $ (\hat{\Phi}^{(W)}_{1},\cdots,\hat{\Phi}^{(W)}_{n})^T$ which transforms in an $n$-dimensional representation of $SL(2,\mathbb{C})$). These operators, together with their adjoints, are defined on a domain $\D_S$ of vectors, dense in $\his$. $\D_\S$ is a linear set containing $\ket{\Omega}$, with
    \begin{equation}
        U(a,\Lambda) \D_\S \subset \D_\S \, , \quad \hat{\Phi}^{(W)}_j(f) \D_\S \subset \D_\S \, , \quad \hat{\Phi}^{(W)}_j(f)^\dagger \D_\S \subset \D_\S
    \end{equation}
    for all $f \in \mathscr{S}(\mathbb{R}^4,\mathbb{C})$, where $j=1,\cdots,n$. If $\ket{\psi},\ket{\chi} \in \D_\S$ then $\braket{\chi}{\hat{\Phi}^{(W)}_j(f)\psi}$ is a tempered distribution regarded as a functional on $\mathscr{S}(\mathbb{R}^4,\mathbb{C})$. The fields form an irreducible set of operators in $\his$.\footnote{Formally, this means that if $B \in \bhs$ is any bounded operator satisfying $\braket{\chi}{B \hat{\Phi}^{(W)}_j(f) \eta} = \braket{\hat{\Phi}^{(W)}_j(f)^\dagger \chi}{B \eta}$ for all $\chi,\eta \in \D_\S$, all $j$ and all $f \in \mathscr{S}(\mathbb{R}^4,\mathbb{C})$, then $B \propto \mathbb{1}_{\bhs}$.} 
    \item[W2] Transformation Law of the Field: The equation
    \begin{equation}
        U(a,\Lambda) \hat{\Phi}^{(W)}_j(f) U(a,\Lambda)^\dagger = \sum_j S_{jk}[\Lambda^{-1}] \hat{\Phi}^{(W)}_k((a,\Lambda) \cdot f)
    \end{equation}
    is valid for all $f \in \mathscr{S}(\mathbb{R}^4,\mathbb{C})$ when each side is applied to any vector in $\D_\S$, where $(a,\Lambda) \cdot f(x) = f(\Lambda^{-1}(x-a))$.
    \item[W3] Local Commutativity\footnote{This form of local commutativity, which is very much analogous to the statement of Einstein causality, is sometimes replaced by the stronger requirement of microcausality (in particular when $\supp f = \supp g = \mink$), which states that the fields commute pointwise in the sense of distributions.}: If $\supp f \indep \supp g$ then either
    \begin{equation}
        \comm{\hat{\Phi}^{(W)}_j(f)}{\hat{\Phi}^{(W)}_k(g)} = 0 \quad \text{ or } \quad \acomm{\hat{\Phi}^{(W)}_j(f)}{\hat{\Phi}^{(W)}_k(g)} = 0
    \end{equation}
    holds for all $j,k$ (on $\D_\S$), and likewise
    \begin{equation}
        \comm{\hat{\Phi}^{(W)}_j(f)^\dagger}{\hat{\phi}_k(g)} = 0 \quad \text{ or } \quad \acomm{\hat{\Phi}^{(W)}_j(f)^\dagger}{\hat{\Phi}^{(W)}_k(g)} = 0 \,.
    \end{equation}
\end{enumerate}

The last Wightman axiom (below) is only relevant in the context of collision theory which we do not cover in this paper so we won't refer to it in what follows. Nevertheless, we state it here for completeness.

\begin{enumerate}
    \item[W4] Asymptotic Completeness: $\his = \his^{in} = \his^{out}$, where the introduced Hilbert spaces are thought of as the state spaces of incoming and outgoing quantum fields.
\end{enumerate}

We will now discuss in some detail how the axioms above, one by one, relate to the Relational QFT setup introduced in this work. The comparison is summarised in Table \ref{table:comparison} at the end of this section. 

\subsection{Relativistic Quantum Theory (W0)}
We make similar assumptions to axiom W0. In this paper, since we only cover scalar fields, it is the orthochronous Poincar{\'e} group $\Poincup$ that is unitarily represented on all the systems; this is to be adapted in future work. Under the interpretation of this action in terms of the energy momentum operator we may also assume the spectrum condition, though it does not seem necessary at this point. The existence (but not the uniqueness) of a Poincaré-invariant state (not necessarily pure) in $\his$ is needed for the theory of vacuum expectation values and can safely be assumed.

\subsection{Quantum fields (W1)}

A scalar quantum field in the setup of Wightman is represented by an operator-valued distribution, i.e., a linear map
    \begin{equation}
    \hat{\Phi}^{(W)} : \mathscr{S}(\mathbb{R}^4,\mathbb{C}) \to \ldhs \,,
    \end{equation}
where $\D_S$ be a dense subset of $\his$. It is \enquote{morally} understood as smearing a quantum field with a test function, i.e.,
	\begin{equation}
		``\hat{\Phi}^{(W)}(f) = \int_\mink  \hat{\phi}(x) f(x) \, d^4x "\,,
	\end{equation}
where $d^4x$ is the Lebesgue measure on $\mink$ and $\hat{\phi}: \mink \to \ldhs$ would be an operator-valued function; if it exists, it is called the kernel of $\hat{\Phi}^{(W)}$.\\

\paragraph*{Wightmanian relational quantum fields} We can make direct contact with this formalism by the means of the following definition.

\begin{definition}
    Let $(\R,\omega)$ be an oriented relativistic QRF. We call $f^{\F_\R}_{\omega} \equiv \frac{d\mu^{\F_\R}_{\omega}}{d^4x}$, if it exists\footnote{The assumption of the existence of this Radon-Nikodym derivative is mild, and several explicit examples of POVMs (e.g. coherent state POVMs) satisfy it.}, the \emph{frame smearing function}. If the frame smearing function is Schwartz, i.e., $f^{\F_\R}_{\omega} \in \mathscr{S}(\mathbb{R}^4,\mathbb{C})$, we say that $(\R,\omega)$ is a \emph{Schwartz oriented relativistic QRF}. 
\end{definition}

As we have seen, the \enquote{physicists'} intuition of smearing a quantum field with a (possibly Schwartz) test function is embodied  in RQFT very literally in the sense that we have
\begin{equation}
    \hat{\Phi}^\R(\omega) = \int_\mink \hat{\phi}^\R_\omega(x)f^{\F_\R}_\omega(x)d^4x.
\end{equation}
Let us emphasize that while Wightman quantum fields give rise to unbounded operators, our relational local quantum fields $\hat{\phi}^\R_\omega(x)$ and relational local observables $\hat{\Phi}^\R(\omega)$ are always bounded. Moreover, they can \emph{always} be written explicitly through an integral kernel---namely the relational local quantum field---which not only significantly eases the mathematical weight of the theory, but may also help avoid certain roadblocks faced by the Wightman's theory. \\

Further, the frame smearing functions arising in RQFT---as opposed to the Schwartz functions in the Wightman setup---have a direct interpretation: they describe the uncertainty of the spacetime localization of the quantum reference frame. The role of test functions in the Wightman approach is then taken by the \emph{frames' states} in RQFT. However, the smearing functions arising from states are necessarily real, and integrable to one if we work with normalized frame observables. To push the analogy further (at the cost of interpretational clarity) we can extend the definition of a relational quantum field to general trace class operators. We can still see such relational quantum fields as associating relational local observables $\yen^\R_\chi(\phi)$ to every choice of trace-class operator $\chi$. The $\yen^\R_\chi$ construction has the following properties.

\begin{lemma}
    \label{lem: yenRchi properties}
    Let $\R$ be a QRF and $\chi \in \thr$. Then
    \begin{enumerate}
        \item (Linearity in $\thr$) $\forall a, b \in \mathbb{C}$ and all $\chi_1,\chi_2 \in \thr$, $\yen^\R_{a\chi_1 + b \chi_2} = a \yen^\R_{\chi_1} + b \yen^\R_{\chi_2}$,
        \item (Linearity in $\bhs$) $\forall a, b \in \mathbb{C}$ and all $\phi_1,\phi_2 \in \bhs$, $\yen^\R_{\chi}(a\phi_1+b\phi_2) = a \yen^\R_\chi(\phi_1) + b \yen^\R_\chi(\phi_2)$,
        \item $\yen^\R_\chi(\mathbb{1}_{\bhs}) = \Tr[\chi] \mathbb{1}_{\bhs}$ i.e. is unital iff $\Tr[\chi]=1$,
        \item For all $\phi \in \bhs$, $\yen^\R_{\chi}(\phi)^\dagger = \yen^\R_{\chi^\dagger}(\phi^\dagger)$ i.e. $\yen^\R_{\chi}$ is adjoint-preserving if $\chi$ is self-adjoint, 
        \item $\yen^\R_{\chi}$ is completely positive if $\chi \geq 0$,
        \item For all $\phi \in \ths$, $\Tr[\yen^\R_{\chi}(\phi)] = \Tr[\phi] \cdot \Tr[\chi]$ i.e. $\yen^\R_{\chi}$ is trace-preserving iff $\Tr[\chi]=1$,
        \item $\yen^\R_\chi$ is normal,
        \item $\yen^\R_{\chi}$ is effect-preserving if $\chi \geq 0$ and $\Tr[\chi] \leq 1$,
        \item $\norm{\yen^\R_{\chi}} \leq \norm{\chi}_1$ (trace norm) with equality if $\chi \geq 0$, i.e. $\yen^\R_\chi$ is bounded (thus continuous),
        \item $\yen^\R_{\chi}$ is a contraction if $\norm{\chi}_1 \leq 1$. If $\chi \geq 0$, then $\yen^\R_{\chi}$ is a contraction iff $\Tr[\chi] \leq 1$.
    \end{enumerate}
\end{lemma}

\begin{proof}
    See App. \ref{proof lem yenRchi properties}.
\end{proof}

We can then extend the domain of definition of a relational quantum field as follows
\begin{equation}
    \hat{\Phi}^\R: \T(\hir) \ni \chi \mapsto \yen^\R_\chi(\phi) := \int_F \hat{\phi}_\lambda (x) \, d\mu^{\E_\R}_\chi(x,\lambda) \in \bhs \,,
\end{equation}
where $\mu^{\E_\R}_\chi: \W \mapsto \Tr[\chi\,\E_\R(\W)]$ is now a (finite) \emph{complex-valued} measure.\footnote{Note that this makes relational quantum fields be linear maps, which is a desirable mathematical property especially to discuss derivatives.} From there, we recover properties which are very much analogous to those of complex scalar Wightman quantum fields: Lem. \ref{lem: yenRchi properties} and Thm. \ref{thm:full covariance} imply the following.

\begin{corollary}
    \label{cor:properties RQF}
    Let $\R$ be a relativistic QRF and $\phi \in \bhs$. Then
    \begin{enumerate}
        \item (Linearity in $\thr$) $\forall a, b \in \mathbb{C}$ and all $\chi_1,\chi_2 \in \thr$, $\hat{\Phi}^\R(a \chi_1 + b \chi_2) = a \hat{\Phi}^\R(\chi_1) + b \hat{\Phi}^\R(\chi_2)$,
        \item (Linearity in $\bhs$) $\forall a, b \in \mathbb{C}$ and all $\phi_1,\phi_2 \in \bhs$, $\widehat{\Psi}^\R = a \hat{\Phi}_1 + b \hat{\Phi}_2$ where $\psi = a\phi_1 + b \phi_2$,
        \item $\forall \chi \in \thr$, $\hat{\Phi}^\R(\chi)^\dagger = \widehat{\Phi^\dagger}(\chi^\dagger)$, and if $\phi \in \bhs^{s.a}$, $\hat{\Phi}^\R(\chi)^\dagger = \hat{\Phi}^\R(\chi^\dagger)$,
        \item $\forall (a,\Lambda) \in \Poincup$, $(a,\Lambda) \cdot \hat{\Phi}^\R(\chi) = \hat{\Phi}^\R((a,\Lambda)\cdot \chi)$, 
        \item $\norm{\hat{\Phi}^\R} \equiv \sup_{\norm{\chi}_1 \leq 1} \norm{\hat{\Phi}^\R(\chi)} \leq \norm{\phi}$.
    \end{enumerate}
\end{corollary}

Restricting to the trace-class operators that give rise to Schwartz functions we get a picture with very strong structural analogies with the Wightman setup. For such fields, we have access to some tools from distribution theory: given a differential operator $D: \mathscr{S}(\mathbb{R}^4,\mathbb{C}) \to \mathscr{S}(\mathbb{R}^4,\mathbb{C})$ we could define 
\begin{equation}
    D(\hat{\Phi}^\R)(\chi) := -\int_\mink \hat{\phi}^\R_\chi(x) D[f^{\F_\R}_\chi](x)d^4x
\end{equation}
whenever $D$ is such that the right hand side converges for all $\chi$.\footnote{Perhaps, since the relational local quantum fields are bounded (by $\norm{\phi}\norm{\chi}$) operator-valued functions, demanding the codomain of $D$ to be contained in $\mathscr{S}(\mathbb{R}^4,\mathbb{C}) \cap L^1(\mathbb{R}^4,d^4x)$ would be enough. This is to be explored in detail in future work---here we only aim to point to how the relational notion of a quantum field developed here relates to the distributional definition of Wightman.}

\paragraph*{Irreducibility.} 

The irreducibility of quantum fields has to do with the fact that, as Streater and Wightman put it \cite{streater_pct_1989}, \enquote{every operator is a function of the field operators}. In Wightman QFT, it is given as a commutativity constraint over individual vectors in the dense domain of the fields---this is due to their unboundedness. An analogous definition of irreducibility for \emph{bounded} relational quantum fields is given as follows.

\begin{definition}\label{def:irreducible}
    Let $\R$ be a relativistic QRF, $\operationalstate \subseteq \framestate$ be convex and $\mathcal{O}_\S \subseteq \bhs$ contain the identity and be closed under adjoints. Then $\phi \in \bhs$ is said to be $(\mathcal{O}_\S,\operationalstate)$-irreducible if $\forall A \in \mathcal{O}_\S$, \begin{equation}
        \comm{A}{\hat{\Phi}^\R(\omega)} = 0 \quad \forall \omega \in \operationalstate \Rightarrow A \propto \mathbb{1}_{\bhs} \, .
    \end{equation}
\end{definition}

Equivalent definitions which highlight the nature of irreducible fields are given as follows. First, we write
\begin{equation}
    B^{\operationalstate}_\phi := \{\hat{\Phi}^\R(\omega) \mid \omega \in \operationalstate\} \subset \bhs
\end{equation}
as the set of operators in $\bhs$ which can be \enquote{reached} from different oriented relativisations of $\phi$ in operationally meaningful preparations of the frame belonging to $\operationalstate$. The algebraic closure of this set will be denoted by
\begin{equation}
    \mathcal{A}^{\operationalstate}_\phi := \{\hat{\Phi}^\R(\omega) \mid \omega \in \operationalstate\}'' \subset \bhs \, .
\end{equation}

\begin{proposition}
    Let $\R$ be a relativistic QRF, $\operationalstate \subseteq \framestate$ be convex and $\mathcal{O}_\S \subseteq \bhs$ contain the identity and be closed under adjoints, and $\phi \in \bhs$. Then the following are equivalent:
    \begin{enumerate}
        \item $\phi$ is $(\mathcal{O}_\S,\operationalstate)$-irreducible,
        \item $\left(B^{\operationalstate}_\phi\right)' = \mathbb{C} \cdot \mathbb{1}_{\mathcal{O}_\S}$, where the commutant $'$ is taken in $\mathcal{O}_\S$,
        \item $\left(B^{\operationalstate}_\phi\right)'' = \mathcal{O}_\S$, where the double commutant $''$ is taken in $\mathcal{O}_\S$.
    \end{enumerate}
    Moreover, $\phi$ is $(B(\his),\operationalstate)$-irreducible iff $\mathcal{A}^{\operationalstate}_\phi=B(\his)$.
\end{proposition}

\begin{proof}
    $1. \Leftrightarrow 2.$ follows by the definition of the commutant, while $2. \Leftrightarrow 3.$ is immediate. The last statement follows from $1. \Leftrightarrow 3.$ and the von Neumann's bicommutant theorem \cite{takesaki_theory_2001}.
\end{proof}

The $(\mathcal{O}_\S,\operationalstate)$-irreducibility of an operator $\phi \in \bhs$ is dependent both on $\phi$ (and the unitary representation of $\Poincup$ on $\his$, and on the choice of $\mathcal{O}_\S$) and on $\R$ (and the choice of $\operationalstate$). Indeed, if $\R$ is \enquote{informational} for $\operationalstate$ in the sense that, for appropriate (non-trivial) operators $\phi \in \bhs$, it can fully describe $\mathcal{O}_\S$ through different preparations of its frame observable in $\operationalstate$, then it yields irreducibility. It also clearly depends on $\phi \in \bhs$ and on $\mathcal{O}_\S$: for example, if $\phi \propto \mathbb{1}_{\mathcal{O}_\S}$ then $\phi$ cannot be $(\mathcal{O}_\S,\operationalstate)$-irreducible for any QRF $\R$ and nontrivial choice of $\operationalstate$ and $\mathcal{O}_\S$ (unless $\dim(\his)=1$).

The $(\mathcal{O}_\S,\operationalstate)$-irreducibility of an operator $\phi$ can in fact be shown from another perspective: that of the cyclicity of the vacuum for the polynomial algebra generated by the relational quantum fields. Indeed, in Wightman QFT, the irreducibility of the quantum fields follows from the cyclicity of the pure vacuum for the polynomial algebra of the smeared fields assuming the spectrum condition holds \cite{streater_pct_1989}. We now prove that the same reasoning holds in RQFT.

\begin{theorem}\label{thm: R-ireducibility}
    Let $\R$ be a relativistic QRF, $\operationalstate \subseteq \framestate$ be convex and closed under $T(1,d-1)$, and $\mathcal{O}_\S \subseteq \bhs$ contain the identity and be closed under adjoints, and $\phi \in \bhs$. Suppose
    \begin{enumerate}
        \item There exists a unique (up to scalar multiples) translation-invariant vector $\ket{\Omega} \in \his$,
        \item $\ket{\Omega}$ is $(\mathcal{O}_\S,\operationalstate)$-cyclic,
        \item $U_\S$ satisfies the spectrum condition.
    \end{enumerate}
    Then $\phi$ is $(\mathcal{O}_\S,\operationalstate)$-irreducible.
\end{theorem}

\begin{proof}
    See App. \ref{proof 6.4}
\end{proof}

This theorem provides some additional justification for the definition of a field theory (in the Wightmanian sense) in terms of the cyclicity of the vacuum rather than in terms of the $(\mathcal{O}_\S,\operationalstate)$-irreducibility of the relational quantum fields. It also highlights that this $(\mathcal{O}_\S,\operationalstate)$-irreducibility property is really to be understood as the ability to fully describe the whole (absolute) state space of the system from the description of the observable relative to different preparations of the QRF alone. From an operational perspective, an observer who has access to a single preparation of the QRF need not necessarily care about such a property, though it does allow one to characterise which observables of $\S$ carry \enquote{all the information about $\S$} when seen through the lens of quantum rods and clocks whose orientations and localisations can be made to vary arbitrarily. In particular, it may be interesting to \enquote{reconstruct} the whole absolute state space of $\S$ from a single observable if different observers (which each carry a different preparation $\omega$ of the QRF $\R$) communicate with one another. It may also shed some light on how \enquote{(im)precise} some quantum rods and clocks can be, especially if a given choice of QRF $\R$, of $\operationalstate$ and of $\mathcal{O}_\S$, is such that there does not exist any $(\mathcal{O}_\S,\operationalstate)$-irreducible operator in~$\S$.

\subsection{Covariance (W2)}

Poincaré covariance is recovered at a similar level in both Wightman and relational QFT, which becomes explicit when assuming that the oriented QRF admit frame smearing functions. Indeed, a scalar quantum Wightman field $\hat{\Phi}^{(W)}$ that admits a kernel $\hat{\phi}: \mink \to \ldhs$ transform as (in our notation)
\begin{equation}\label{Wightman scalar covariance}
    (a,\Lambda) \cdot \hat{\Phi}^{(W)}(f) =  \hat{\Phi}^{(W)}((a,\Lambda) \cdot f) ``=" \int_\mink  \hat{\phi}(x)f(\Lambda^{-1}(x - a)) \, d^4x\,,
\end{equation}
while our relational local observables satisfy the following \eqref{thm: scalar covariance}
\begin{equation}
    (a,\Lambda) \cdot \hat{\Phi}^\R(\chi) = 
    \hat{\Phi}^\R((a,\Lambda) \cdot \chi) = \int_{\mink} \hat{\phi}^\R_{(a,\Lambda) \cdot \chi}(\Lambda x + a) \,d\mu^{\F_\R}_{\chi}(x)\, .
\end{equation}
Changing variables and due to the Poincar{\'e}-invariance of the Lebesgue measure, we get 
\begin{equation}
    (a,\Lambda) \cdot \hat{\Phi}^\R(\chi) = 
     \int_{\mink} \hat{\phi}^\R_{(a,\Lambda) \cdot \chi}(x) f^{\F_\R}_\chi(\Lambda^{-1}(x - a)) \, d^4x\,,
\end{equation}
which largely resembles \eqref{Wightman scalar covariance} with the sole difference that the integral kernel in the relational framework is sensitive to the Poincar{\'e} transformations (plus the fact that the smearing functions may not be Schwartz). This discrepancy can be weakened by considering the following class of oriented frames.

\begin{definition}
    An oriented relativistic QRF that is both Schwartz and globally oriented will be called \emph{Wightmanian}.
\end{definition}

Relational local observables relative to Wightmanian QRFs can be seen as translation-covariant operator-valued functions smeared with Schwartz test functions, and they transform almost exactly like the Wightman fields with kernels would, i.e., we have
\begin{equation}
    (a,\Lambda) \cdot \hat{\Phi}^\R(\chi) =  
     \int_{\mink} \hat{\phi}^\R_{\Lambda \cdot \chi}(x) f^{\F_\R}_\chi(\Lambda^{-1}(x - a)) \, d^4x\,.
\end{equation}
We see that the relational local quantum fields are now only sensitive to the Lorentz transformations. Thus, when only translations are considered, the transformation properties of fields described with respect to Wightmanian QRFs are exactly the same as in the Wightman's theory.

\subsection{Causality (W3)}

Local commutativity is recovered when one considers relativistic QRFs and $(\operationalstate,0)$-causal fields. Indeed, Wightman scalar fields satisfy
\begin{equation}
    \comm{\hat{\Phi}^{(W)}(f_1)}{\hat{\Phi}^{(W)}(f_2)}= \comm{\hat{\Phi}^{(W)}(f_1)^\dagger}{\hat{\Phi}^{(W)}(f_2)}=0 \; \text{ whenever } \; f_1 \indep f_2 \,,
\end{equation}
while for $(\operationalstate,0)$-causal $\phi \in \bhs$ and $\omega_1,\omega_2 \in \operationalstate$ we have
\begin{equation}
    \comm{\hat{\Phi}^\R(\omega_1)}{\hat{\Phi}^\R(\omega_2)} =
        \comm{\hat{\Phi}^\R(\omega_1)^\dagger}{\hat{\Phi}^\R(\omega_2)} = 0 \; \text{ whenever } \; \omega_1 \indepER \omega_2 \,.
\end{equation}

Whenever the frame smearing functions for $\omega_1$ and $\omega_2$ exist, $\omega_1 \indepER \omega_2$ is exactly equivalent to $f^{\F_\R}_{\omega_1} \indep f^{\F_\R}_{\omega_2}$. Similarly, our (stronger) $(\operationalstate,0)$-microcausality condition corresponds to microcausality sometimes assumed for scalar Wightman fields with kernels. Indeed, this condition reads
\begin{equation}
    \comm{\hat{\phi}(x_1)}{\hat{\phi}(x_2)}= \comm{\hat{\phi}(x_1)^\dagger}{\hat{\phi}(x_2)}=0 \; \text{ whenever } \; x_1 \indep x_2 \,,
\end{equation}
while for strongly $(\operationalstate,0)$-microcausal $\phi \in \bhs$ we have\footnote{Recall here that globally oriented QRFs (and thus Wightmanian QRFs) cannot be strongly microcausal.}
     \begin{equation}
     \comm{\hat{\phi}^\R_{\omega_1}(x_1)}{\hat{\phi}^\R_{\omega_2}(x_2)} =
        \comm{\hat{\phi}^\R_{\omega_1}(x_1)^\dagger}{\hat{\phi}^\R_{\omega_2}(x_2)} = 0 \text{ whenever } \; x_1 \indep x_2\,, \; x_i \in \supp \mu^{\F_\R}_{\omega_i}\,.
    \end{equation}

Furthermore, the stronger assumption of spacelike commutativity between different fields in Wightman QFT---which in RQFT would take form of $\mathcal{O}_\S$ being $(\operationalstate,0)$-(micro)causal (see \eqref{eqn:Einstein causality}, \eqref{eqn:strong microcausality}, \eqref{eqn:weak microcausality})---is related to the theory of Klein transformations, superselection rules and even-odd rules \cite{streater_pct_1989}. In a QFT with only bosons, the assumption that $\S$ (or some appropriate class of system operators) is $(\operationalstate,\sigma)$-causal is appropriate; however, the consideration of spinors complicates this discussion, and should be the content of future work.

The comparison is summarised in the table below.

\begin{table}[H]
\begin{center}
\tabulinesep=1.2mm
\begin{tabu}{||c| c | c ||} 
 \hline
 Objects & Relational QFT & Wightman QFT \\ [0.5ex] 
 
 \hline\hline
 Smeared field & Relational quantum field  & Wightman quantum field \\ 
 \hline
 Definition & $\hat{\Phi}^\R : \framestate \to \bhs$ & $\hat{\Phi}^{(W)} : \mathscr{S}(\mathbb{R}^4,\mathbb{C}) \to \mathcal{L}(D_\S,\his)$  \\
  & $\hat{\Phi}^\R(\omega) = \int_{\mink} \hat{\phi}^\R_{\omega}(x) \, d\mu^{\F_\R}_{\omega}(x)$ & $\hat{\Phi}^{(W)}(f)$ \\
  & $ = \int_{\mink} \hat{\phi}^\R_{\omega}(x) f^{\F_\R}_{\omega}(x) d^4 x$ & $\exists$ kernel $\Rightarrow \hat{\Phi}^{(W)}(f) = \int_\mink \hat{\phi}(x) f(x) d^4x$ \\
  Domain & $\his$ (bounded) & $D_\S$ a dense subset of $\his$ (unbounded)\\
  Causality & $\omega_1 \indepER \omega_2 \Rightarrow \comm{\hat{\Phi}^\R(\omega_1)}{\hat{\Phi}^\R(\omega_2)} = 0$ & $f_1 \indep f_2 \Rightarrow \comm{\hat{\Phi}^{(W)}(f_1)}{\hat{\Phi}^{(W)}(f_2)} = 0$
  \\ Covariance & $(a,\Lambda)\cdot \hat{\Phi}^\R(\omega) = \hat{\Phi}^\R((a,\Lambda) \cdot \omega)$ & $(a,\Lambda) \cdot \hat{\Phi}^{(W)}(f) = \hat{\Phi}^{(W)}((a,\Lambda) \cdot f)$ \\ & $ = \int_\mink \hat{\phi}^\R_{(a,\Lambda) \cdot \omega}(x) f^{\F_\R}_{\omega}(\Lambda^{-1}(x-a)) d^4x$ & $\exists$ kernel $\Rightarrow = \int_\mink \hat{\phi}(x) f(\Lambda^{-1}(x-a)) d^4x$ \\
   [1ex] \hline 
   \end{tabu}
   \end{center}
\end{table}

\begin{table}[H]
\begin{center}
\tabulinesep=1.2mm
\begin{tabu}{||c| c | c ||} 
 \hline
 Pointwise field & Relational local quantum field  & Absolute quantum field \\ 
 \hline
  Definition & $\hat{\phi}^\R_{\omega} : \mink \to \bhs$ & $\hat{\phi} : \mink \to \mathcal{L}(D_\S,\his)$  \\
  Existence & always exists & sometimes exists as a kernel  \\
  Domain & $\his$ (bounded) & $D_\S$ a dense subset of $\his$ (unbounded)\\
  Microcausality & $x_1 \indep x_2 \Rightarrow \comm{\hat{\phi}^\R_{\omega_1}(x_1)}{\hat{\phi}^\R_{\omega_2}(x_2)} = 0$ & $x_1 \indep x_2 \Rightarrow\comm{\hat{\phi}(x_1)}{\hat{\phi}(x_2)} = 0$
  \\
  Poincaré cov. & $(a,\Lambda)\cdot \hat{\phi}^\R_{\omega}(x) = \hat{\phi}^\R_{(a,\Lambda) \cdot \omega}(\Lambda x+a)$ & $(a,\Lambda) \cdot \hat{\phi}(x) \text{``=”} \hat{\phi}(\Lambda x+a)$  \\ & Poincaré covariant with shifted $\omega$ & Poincaré covariant under the integral \\
  Translation cov. & $(\R,\omega)$ globally oriented $\Rightarrow a\cdot \hat{\phi}^\R_{\omega}(x) = \hat{\phi}^\R_{\omega}(x+a)$ & $a \cdot \hat{\phi}(x) \text{``=”} \hat{\phi}(x+a)$ \\
  [1ex] 
 \hline
\end{tabu}
\caption{Comparison between fields in Relational Quantum Field Theory and Wightman Quantum Field Theory.}
\label{table:comparison}
\end{center}
\end{table}

\section{Relational vs Algebraic QFT}

\label{sec:AQFT comparison}

Algebraic Quantum Field Theory (AQFT) is an alternative foundational framework to rigorously treat quantum fields. The main idea is that the causal structure of spacetime, encoded by the collection of regions with a partial order given by the possibility of subluminal communication, is being mapped to the algebraic structure of quantum theory. This is achieved by the means of associating local algebras to space-time regions in a way respecting the causal structure. Algebraic QFT, being  mathematically elegant and conceptually appealing, has been only partly successful at the task of formalising the physical theory of quantum fields. In particular, the attempts to treat interacting gauge theories in $4$-dimensions, necessary to model the experiments carried out in our biggest colliders like the LHC, remain incomplete. Initially proposed by Haag in the '$60$s \cite{haag_algebraic_1964}, AQFT is still very much an active research area with a variety of models differing in mathematical details.\footnote{See \cite{fewster_algebraic_2019} for an introduction, \cite{rejzner_algebraic_2016} for a recent exposition and \cite{brunetti_generally_2003,Fewster:2015kua} for the curved spacetimes context.} In the context of Minkowski spacetime, the core of the structure shared by the majority of the AQFT approaches is the following.

\begin{definition}
    An Algebraic QFT (AQFT) is an assignment
    \begin{equation}
       \A: {\rm Reg}(\mink) \ni \U \mapsto {\A(U)} \subset \bh,
    \end{equation}
    sometimes called a \emph{net of local algebras}, where ${\rm Reg}(\mink)$ is a distinguished family of spacetime subsets, $\A(\U)$ is closed under algebraic operations (and possibly also in a chosen topology) and $\hi$ carries a representation of the Poincar{\'e} group.\footnote{Spacetime regions can for example be assumed to be relatively compact, while a conservative choice of the class of subalgebras considered is to assume they are von Neumann factors, generically of type III \cite{buchholz_universal_1987,YNGVASON2005135}.} An AQFT is assumed to satisfy the following axioms:
    \begin{enumerate}
        \item (Isotony) For all $\U, \V \in {\rm Reg}(\mink)$ such that $\U \subset \V$, $\A(\U) \subset \A(\V)$.
        \item (Covariance) For all $(a,\Lambda) \in \Poincup$ and all $\U \in {\rm Reg}(\mink)$, $(a,\Lambda) \cdot \A(\U) = \A((a,\Lambda) \cdot \U)$.  
        \item (Causality) For all $\U,\V \in {\rm Reg}(\mink)$ such that $\U \indep \V$, we have $\comm{\A(\U)}{\A(\V)} = \varnothing$.
    \end{enumerate}
Additional properties often assumed are the following:
\begin{itemize}
    \item (Time-slice axiom) For all $\U, \V \in {\rm Reg}(\mink)$ such that $\U \subset \V$ and  $\U$ contains a \emph{Cauchy hypersurface}\footnote{A Cauchy hypersurface $\Sigma$ is a subset of a Lorentzian spacetime $(\mathcal{M},g)$ such that every inextendible causal curve in $\mathcal{M}$ intersects $\Sigma$ exactly once; such surfaces are suitable for specifying the initial data for the dynamical equations of a relativistic theory. We say that $\U$ contains a Cauchy hypersurface for $\V$ if there exists a hypersurface $\Sigma \subset \U$ such that $\Sigma$ is a Cauchy hypersurface for the spacetime $(\text{ch}(\V),\eta)$, where $\text{ch}(\V) := J^+(\V) \cap J^-(\V)$ is the \emph{causal hull} of $\V$ (which defines a spacetime in itself).} for
    $\V$, the corresponding algebras are isomorphic, i.e., $\A(\U) \cong \A(\V)$.
    \item (Statistical independence) For all $\U,\V \in {\rm Reg}(\mink)$ such that $\U \indep \V$ and any pair of states $\rho_1: \A(\U) \to \mathbb{C}$, $\rho_2: \A(\V) \to \mathbb{C}$ there exist a state\footnote{The algebra $\A(\U) \wedge\A(\V)$ is the one 
\enquote{generated by} $\A(\U)$ and $\A(\V)$, i.e., the smallest one (satisfying the properties required from local algebras) containing them both.} $\rho: \A(\U) \wedge\A(\V) \to \mathbb{C}$ such that for all $A \in \A(\U)$ and $B \in \A(\V)$
    \begin{equation}
        \rho(AB) = \rho_1(A)\rho_2(B).
    \end{equation}
    \item (Haag property) For any region $\U \in {\rm Reg}(\mink)$, the algebra associated to the causal complement\footnote{The causal complement of a region $\U$ is the biggest admissible (contained in ${\rm Reg}(\mink)$) set of spacetime points that are causally separated from all the points in $\U$.} $\U^\perp$ equals the commutant\footnote{The commutant of a set of operators $\mathcal{O} \subset B(\hi)$ is the set of all operators in $B(\hi)$ commuting with every element of $\mathcal{O}$.} of the algebra of $\U$, i.e., $\A(\U^\perp) = \A(\U)'$.
\end{itemize}
\end{definition}

The motivation for the three axioms: isotony, covariance and causality should be clear by now. We briefly discuss the other properties.

Operationally, an experimenter who has access to all possible measurements within a very short time interval, but over a sufficiently large region of space, should in principle be able to gather all the information there is to know about the system at later times if the time evolution of the system is known. This idea underpins the \emph{time-slice axiom} above (see e.g. \cite{Fewster:2015kua})---the algebra of any subset containing a Cauchy hypersurface for $\V$ should be isomorphic to $\A(\V)$. Indeed, if the algebra of $\A(\U)$ was strictly smaller than $\A(\V)$, it would imply that new degrees of freedom or new types of observables could spontaneously appear at later times without being determined by some initial conditions in $\U$. This would break the deterministic aspect of time evolution that one may expect the laws of physics to uphold away from quantum measurements.

Statistical independence (see e.g. \cite{fewster_split_2016}) can be seen as complementary to Einstein causality in assuring independence of state preparations in causally separated regions. Indeed, a natural notion of a local state in AQFT is that of a (continuous normalized linear) functional on the local algebra, understood as assigning expectation values to local observables. Statistical independence then assures that arbitrary pairs of preparations in spacelike separated regions can coexist as a single one on a bigger algebra.

Finally, the Haag property \cite{haag_local_1996} can be understood as assuring that the correspondence between the algebraic structure of subalgebras in $\bh$ and the causal structure of $\mink$ is tight in the following sense: all operators commuting with the local algebra are local to the algebra of the biggest causally separated region.

We will now discuss how natural definitions of relational local algebras arising in the context of RQFT are compatible with the Algebraic framework for QFT.

\subsection{Relational local algebras}

One immediate meaningful definition is that of algebras of local observables, defined as follows.

\begin{definition}
    Let $\R$ be a relativistic QRF, $\operationalstate \subseteq \framestate$ be convex, $\sigma \geq 0$ and $\mathcal{O}_\S \subseteq \bhs$ be $(\operationalstate,\sigma)$-causal. Given a subset $\U \subseteq \mink$, we call
    \begin{equation}
    \A^{(\operationalstate,\sigma)}_{\mathcal{O}_\S}(\U) := \{\hat{\Phi}^\R(\omega) \mid \phi \in \mathcal{O}_\S, \,  \omega \in \operationalstate \text{ s.t. } 
 \supp \mu^{\F_\R}_{\omega} \subset \U\}''
\end{equation}
    a \emph{relational local algebra}.
\end{definition}

Notice that this definition is suitable for any chosen class of regions ${\rm Reg}(\mink)$. An alternative possibly meaningful definition, similar to $\A^{(\operationalstate,\sigma)}_{\mathcal{O}_\S}$ but explicitly satisfying the time-slice axiom, can be given as follows.

\begin{definition}
    Let $\R$ be a relativistic QRF, $\operationalstate \subseteq \framestate$ be convex, $\sigma \geq 0$  and $\mathcal{O}_\S \subseteq \bhs$ be $(\operationalstate,\sigma)$-causal. We call
    \begin{equation}
    \Afrak^{(\operationalstate,\sigma)}_{\mathcal{O}_\S}(\U):= \{\hat{\Phi}^\R(\omega) \mid \phi \in \mathcal{O}_\S, \, \omega \in \operationalstate \text{ s.t. } \supp \mu^{\F_\R}_{\omega} \subset {\rm ch}(\U)\}'',
\end{equation}
    where ${\rm ch}(\U) = J^+(\U) \cap J^-(\U)$ is the \emph{causal hull} of $\U$, a \emph{deterministic relational local algebra}.
\end{definition}

These algebras represent the locally accessible relational local observables, and are thus very natural to consider. They satisfy the properly extended properties of local algebras in AQFT, coinciding with the usual ones in some cases.

\begin{theorem}\label{RAQFT1}
    Let $\R$ be a relativistic QRF, $\operationalstate \subseteq \framestate$ be convex, $\sigma \geq 0$  and $\mathcal{O}_\S \subseteq \bhs$ be $(\operationalstate,\sigma)$-causal. Then
    \begin{enumerate}
        \item (Isotony) For all $\U \subseteq \V \subseteq \mink$, $\A^{(\operationalstate,\sigma)}_{\mathcal{O}_\S}(\U) \subseteq \A^{(\operationalstate,\sigma)}_{\mathcal{O}_\S}(\V)$.
        \item (Covariance) For all $(a,\Lambda) \in \Poincup$ and $\U \subseteq \mink$, $(a,\Lambda) \cdot \A^{(\operationalstate,\sigma)}_{\mathcal{O}_\S}(\U) = \A_{\mathcal{O}_\S}^{((a,\Lambda) \cdot \operationalstate,\sigma)}((a,\Lambda) \cdot \U)$.  
        \item (Causality) For all $\U \indepsigma \V \subset \mink$, $\comm{\A^{(\operationalstate,\sigma)}_{\mathcal{O}_\S}(\U)}{\A^{(\operationalstate,\sigma)}_{\mathcal{O}_\S}(\V)} = \varnothing$.
    \end{enumerate}
    Likewise, for deterministic relational local algebras,
    \begin{enumerate}
        \item (Isotony) For all $\U \subset \V \in \mink$, $\Afrak^{(\operationalstate,\sigma)}_{\mathcal{O}_\S}(\U) \subset \Afrak^{(\operationalstate,\sigma)}_{\mathcal{O}_\S}(\V)$.
        \item (Covariance) For all $(a,\Lambda) \in \Poincup$, $(a,\Lambda) \cdot \Afrak^{(\operationalstate,\sigma)}_{\mathcal{O}_\S}(\U) = \Afrak^{((a,\Lambda) \cdot \operationalstate,\sigma)}_{\mathcal{O}_\S}((a,\Lambda) \cdot \U)$.
        \item (Causality) For all $\U \indepsigma \V \subset \mink$, $\comm{\Afrak^{(\operationalstate,\sigma)}_{\mathcal{O}_\S}(\U)}{\Afrak^{(\operationalstate,\sigma)}_{\mathcal{O}_\S}(\V)} = \varnothing$.
        \item (Time-slice) For all $\U \subset \V$ such that $\U$ contains a Cauchy hypersurface for $\V$, $\Afrak^{(\operationalstate,\sigma)}_{\mathcal{O}_\S}(\U) \cong \Afrak^{(\operationalstate,\sigma)}_{\mathcal{O}_\S}(\V)$.
    \end{enumerate}
\end{theorem}

\begin{proof}
    We start with relational local algebras.
    \begin{enumerate}
        \item Let $\C^{(\operationalstate,\sigma)}_{\mathcal{O}_\S}(\U) := \{\hat{\Phi}^\R(\omega) \mid \phi \in \mathcal{O}_\S, \,  \omega \in \operationalstate \text{ s.t. } 
 \supp \mu^{\F_\R}_{\omega} \subseteq \U\}$.
If $\U \subseteq \V$, we have $\supp \mu^{\F_\R}_{\omega} \subseteq \U$ $\Rightarrow$ $\supp \mu^{\F_\R}_{\omega} \subseteq \V$. Thus, $\C^{(\operationalstate,\sigma)}_{\mathcal{O}_\S}(\U) \subseteq \C^{(\operationalstate,\sigma)}_{\mathcal{O}_\S}(\V)$ and since $\A^{(\operationalstate,\sigma)}_{\mathcal{O}_\S}(\U) = \C^{(\operationalstate,\sigma)}_{\mathcal{O}_\S}(\U)''$ the result follows from the isotony of the double commutant.
        \item Due to the $\Poincup$-covariance of $\F_\R$, we have that for all $\omega \in \operationalstate$ and $(a,\Lambda) \in \Poincup$
        \begin{equation}
            \supp \mu^{\F_\R}_{\omega \cdot (a,\Lambda)} = (a,\Lambda)^{-1} \cdot (\supp \mu^{\F_\R}_{\omega})\,,
        \end{equation}
        and hence we get
        \begin{equation}\begin{aligned}
            (a,\Lambda) \cdot \C^{(\operationalstate,\sigma)}_{\mathcal{O}_\S}(\U) &= \{(a,\Lambda) \cdot \hat{\Phi}^\R(\omega) \mid \phi \in \mathcal{O}_\S, \omega \in \operationalstate \text{ s.t. } \supp \mu^{\F_\R}_{\omega} \subseteq \U \} \\
            &= \{\hat{\Phi}^\R(\omega \cdot (a,\Lambda)^{-1}) \mid \phi \in \mathcal{O}_\S, \omega \in \operationalstate \text{ s.t. } \supp \mu^{\F_\R}_{\omega} \subseteq \U \} \\
            &= \{\hat{\Phi}^\R(\tilde{\omega}) \mid \phi \in \mathcal{O}_\S, \tilde{\omega} \cdot (a,\Lambda) \in \operationalstate \text{ s.t. } \supp \mu^{\F_\R}_{\tilde{\omega} \cdot (a,\Lambda)} \subseteq \U \} \\
            &= \left\{\hat{\Phi}^\R(\tilde{\omega}) \mid \phi \in \mathcal{O}_\S, \tilde{\omega} \in (a,\Lambda) \cdot \operationalstate \text{ s.t. } (a,\Lambda)^{-1} \cdot \left(\supp \mu^{\F_\R}_{\tilde{\omega}}\right) \subseteq \U \right\} \\
            &= \{\hat{\Phi}^\R(\omega) \mid \phi \in \mathcal{O}_\S, \omega \in (a,\Lambda) \cdot \operationalstate \text{ s.t. } \supp \mu^{\F_\R}_{\omega} \subseteq (a,\Lambda) \cdot \U \} \\
            &= \C^{((a,\Lambda) \cdot \operationalstate,\sigma)}_{\mathcal{O}_\S}((a,\Lambda) \cdot \U) \, ,
        \end{aligned}\end{equation}
        where the last line holds by Lem. \ref{lem:causality and covariance}.
        Since unitary conjugation commutes with commutants, we get
        \begin{multline}
            (a,\Lambda) \cdot \A^{(\operationalstate,\sigma)}_{\mathcal{O}_\S}(\U) = (a,\Lambda) \cdot \C^{(\operationalstate,\sigma)}_{\mathcal{O}_\S}(\U)'' = ((a,\Lambda) \cdot \C^{(\operationalstate,\sigma)}_{\mathcal{O}_\S}(\U))'' \\ = (\C^{((a,\Lambda) \cdot \operationalstate,\sigma)}_{\mathcal{O}_\S}((a,\Lambda) \cdot \U))'' = \A^{((a,\Lambda)\cdot \operationalstate,\sigma)}_{\mathcal{O}_\S}((a,\Lambda) \cdot \U) \, .
        \end{multline}
        \item If $\U \indepsigma \V$ then for all $A \in \C^{(\operationalstate,\sigma)}_{\mathcal{O}_\S}(\U)$ and $B \in \C^{(\operationalstate,\sigma)}_{\mathcal{O}_\S}(\V)$, the commutant vanishes $\comm{A}{B} = 0$. In particular, we have $\C^{(\operationalstate,\sigma)}_{\mathcal{O}_\S}(\V) \subseteq \C^{(\operationalstate,\sigma)}_{\mathcal{O}_\S}(\U)'$. 
        Since for two subsets $\mathfrak{A}$ and $\mathfrak{B}$ of $\bh$, $\mathfrak{A} \subseteq \mathfrak{B} \Rightarrow \mathfrak{B}' \subseteq \mathfrak{A}'$, taking the commutant of both sides gives $\A^{(\operationalstate,\sigma)}_{\mathcal{O}_\S}(\U) \subseteq \C^{(\operationalstate,\sigma)}_{\mathcal{O}_\S}(\V)'$, and applying it again yields $\A^{(\operationalstate,\sigma)}_{\mathcal{O}_\S}(\V) \subseteq \A^{(\operationalstate,\sigma)}_{\mathcal{O}_\S}(\U)'$.
    \end{enumerate}
    For deterministic relational local algebras, since $\U \subseteq \V$ $\Rightarrow$ ${\rm ch}(\U) \subseteq {\rm ch}(\V)$ and $\U \indepsigma \V$ $\Rightarrow$ ${\rm ch}(\U) \indepsigma {\rm ch}(\V)$, the claims $1.,2.$ and $3.$ follow exactly like above. Regarding $4.$, let $\U \subset \V$ contain a Cauchy hypersurface for $\V$. By isotony, we have $\Afrak^{(\operationalstate,\sigma)}_{\mathcal{O}_\S}(\U) \subset \Afrak^{(\operationalstate,\sigma)}_{\mathcal{O}_\S}(\V)$. Since $\U$ contains a Cauchy hypersurface for $\V$, we have $\V \subset {\rm ch}(\U)$ and isotony gives $\Afrak^{(\operationalstate,\sigma)}_{\mathcal{O}_\S}(\V) \subset \Afrak^{(\operationalstate,\sigma)}_{\mathcal{O}_\S}({\rm ch}(\U)) = \Afrak^{(\operationalstate,\sigma)}_{\mathcal{O}_\S}(\U)$.
\end{proof}

\begin{corollary}
    \label{RAQFT1 cor}
    Let $\R$ be a relativistic QRF, $\operationalstate \subseteq \framestate$ be convex closed under the action of the Poincar{\'e} group, i.e., such that $(a,\Lambda) \cdot \operationalstate = \operationalstate$ for all $(a,\Lambda) \in \Poincup$, $\sigma \geq 0$ and $\mathcal{O}_\S \subseteq \bhs$ be $(\operationalstate,\sigma)$-causal. We then have:
    \begin{enumerate}
        \item (Isotony) For all $\U \subseteq \V \subseteq \mink$, $\A^{(\operationalstate,\sigma)}_{\mathcal{O}_\S}(\U) \subseteq \A^{(\operationalstate,\sigma)}_{\mathcal{O}_\S}(\V)$,
        \item (Covariance) For all $(a,\Lambda) \in \Poincup$ and $\U \subseteq \mink$, $(a,\Lambda) \cdot \A^{(\operationalstate,\sigma)}_{\mathcal{O}_\S}(\U) = \A_{\mathcal{O}_\S}^{(\operationalstate,\sigma)}((a,\Lambda) \cdot \U)$, 
        \item (Causality) For all $\U \indepsigma \V \subset \mink$, $\comm{\A^{(\operationalstate,\sigma)}_{\mathcal{O}_\S}(\U)}{\A^{(\operationalstate,\sigma)}_{\mathcal{O}_\S}(\V)} = \varnothing$.
    \end{enumerate}
    Likewise, for deterministic relational local algebras we have:
    \begin{enumerate}
        \item (Isotony) For all $\U \subset \V \in \mink$, $\Afrak^{(\operationalstate,\sigma)}_{\mathcal{O}_\S}(\U) \subset \Afrak^{(\operationalstate,\sigma)}_{\mathcal{O}_\S}(\V)$,
        \item (Covariance) For all $(a,\Lambda) \in \Poincup$, $(a,\Lambda) \cdot \Afrak^{(\operationalstate,\sigma)}_{\mathcal{O}_\S}(\U) = \Afrak^{(\operationalstate,\sigma)}_{\mathcal{O}_\S}((a,\Lambda) \cdot \U)$,
        \item (Causality) For all $\U \indepsigma \V \subset \mink$, $\comm{\Afrak^{(\operationalstate,\sigma)}_{\mathcal{O}_\S}(\U)}{\Afrak^{(\operationalstate,\sigma)}_{\mathcal{O}_\S}(\V)} = \varnothing$,
        \item (Time-slice) For all $\U \subset \V$ such that $\U$ contains a Cauchy hypersurface for $\V$, $\Afrak^{(\operationalstate,\sigma)}_{\mathcal{O}_\S}(\U) \cong \Afrak^{(\operationalstate,\sigma)}_{\mathcal{O}_\S}(\V)$.
    \end{enumerate}
\end{corollary}

Thus, both the relational local algebras and the deterministic relational local algebras satisfy the core axioms of Algebraic QFT in an extended form: the covariance property coincides with the AQFT covariance axiom if $\operationalstate$ is closed under all $(a,\Lambda) \in \Poincup$, and the strict AQFT causality holds under the assumption of perfect spacelike resolution. The first condition is satisfied in the case, for example, of $\operationalstate^c$ but not of $\operationalstate^K$ for any compact $K \subset F$, as we saw in Prop. \ref{prop: SRK covariance}, while the second is only justifiable in the context of a frame admitting a sequence of states in $\operationalstate$ localizing the reference in all spacelike directions. Hence, these \enquote{extended axioms} are arguably more operational than the \enquote{standard} axioms of AQFT in this relational context. We now construct an explicit example of relational algebraic quantum field theory.

\begin{example}
    Let $\R$ be a relativistic QRF, $K \subset F$ be compact and $\hat{\phi}$ be a free Klein-Gordon Wightmanian quantum field. Let $\sigma > 0$ and $f_{\sigma,K}$ be such that $W_{\hat{\phi}}(f_{\sigma,K})$ is weakly $(\operationalstate^K,\sigma)$-microcausal.\footnote{This $f_{\sigma,K}$ exists by Thm. \ref{thm:Weyl approximate microcausality}.} Consider
    \begin{equation}
    \label{eqn: set of Weyl operators}
    \mathcal{O}_\S^{(W,K)} := \{W_{\hat{\phi}}(f) \mid f \in C^\infty_c(\mink) \text{ s.t. }\supp f \subseteq \supp f_{\sigma,K}\} \subset \bhs \, .
\end{equation}
It is easily seen that $\mathcal{O}_\S^{(W,K)}$ is weakly $(\operationalstate^K,\sigma)$-microcausal and so is $(\operationalstate^K,\sigma)$-causal. Note furthermore that if $K \subset F$ is compact then $(a,\Lambda) \cdot K$ is compact for any $(a,\Lambda) \in \Poincup$. This, alongside Lem.~\ref{lem:causality and covariance}, ensures that $\A^{(\operationalstate^K,\sigma)}_{\mathcal{O}_\S^{(W,K)}}(\U)$ is a relational local algebra and likewise $\mathfrak{A}^{(\operationalstate^K,\sigma)}_{\mathcal{O}_\S^{(W,K)}}(\U)$ is a deterministic relational local algebra for any $\U \subseteq \mink$. Note also that Prop. \ref{prop: SRK covariance} implies that $\forall \U \subseteq \mink$ and all $(a,\Lambda) \in \Poincup$,
\begin{equation}
    (a,\Lambda) \cdot \A^{(\operationalstate^K,\sigma)}_{\mathcal{O}_\S^{(W,K)}}(\U) = \A^{(\operationalstate^{(a,\Lambda) \cdot K}, \sigma)}_{\mathcal{O}_\S^{(W,K)}}((a,\Lambda) \cdot \U) \, , \qquad (a,\Lambda) \cdot \mathfrak{A}^{(\operationalstate^K,\sigma)}_{\mathcal{O}_\S^{(W,K)}}(\U) = \mathfrak{A}^{(\operationalstate^{(a,\Lambda) \cdot K},\sigma)}_{\mathcal{O}_\S^{(W,K)}}((a,\Lambda) \cdot \U) \, .
\end{equation}
    From a Wightmanian perspective this is meaningful: if we restrict allowed smearing functions (at the level of the frame) to be smooth functions supported in a fixed compact subset of Minkowski, then under Poincaré transformations the algebra ought to be generated by the associated quantum fields with the set of smearing functions changed accordingly (with supports shifted with respect to the Poincaré transformation). Indeed, consider the set
    \begin{equation}
        \A_K(\U) := \{W(f) \mid f \in C^\infty_c(\mink) \text{ s.t. } \supp f \subseteq \U \cap K\} \, .
    \end{equation}
    Then $\A_K(\U)$ satisfies isotony and (exact) causality, but 
    \begin{equation}
        (a,\Lambda) \cdot \A_K(\U) = \A_{(a,\Lambda) \cdot K}((a,\Lambda) \cdot \U) \, ,
    \end{equation}
    which mirrors the result above. If one had allowed all compactly supported smooth functions, i.e.
    \begin{equation}
        \A(\U) := \{W(f) \mid f \in C^\infty_c(\mink) \text{ s.t. } \supp f \subseteq \U\} \, ,
    \end{equation}
    then since the set of functions $C^\infty_c(\mink)$ is closed under Poincaré transformations, the algebra would have a ``standard" covariance law
    \begin{equation}
        (a,\Lambda) \cdot \A(\U) = \A((a,\Lambda) \cdot \U) \, .
    \end{equation}
\end{example}

An interesting topic of further study would be to examine the properties of such (and related) AQFTs, particularly identify condition under which statistical independence and Haag property will hold. Investigating the von Neumann types of relational local algebras would also be of great interest in the context of recent claims that AQFT becomes better behaved when treated relationally \cite{fewster_quantum_2025,witten_algebras_2023}. Using such a language may provide a new perspective on quantum measurement theory in QFT along the lines of the Fewster-Verch formalism \cite{fewster_measurement_2023} of AQFT, and shed light on the relativistic measurement problem. See the Outlook for a brief discussion of this last research direction.

\section{Summary and Outlook}
\label{sec:Summary}

In this work, we have established the mathematical and conceptual foundations for a relational theory of quantum fields, focusing on the case of scalar fields in Minkowski spacetime. The core of our framework rests on applying the operational approach to quantum reference frames in the context of relativistic symmetry structure given by the orthochronous Poincar{\'e} group. We began our investigation by motivating the definition of a relativistic QRF as a quantum system equipped with a Poincaré-covariant POVM on the space of classical inertial reference frames. Physical quantities of a system $\S$ are then formulated as relational local observables contingent on the state preparation of such a QRF. These are shown to give rise to a natural notion of relational local observables and quantum fields in many ways analogous to those encountered in the Wightman's axiomatic approach, the relational quantum fields being spacetime kernels of relational local observables.

A key result of our investigation is a novel formulation of relativistic covariance, where a Poincaré transformation on the system is shown to be equivalent to a corresponding transformation on the state of the relativistic QRF. This relational covariance naturally links the description of the system to the perspective of observers carrying QRFs. The covariance properties of relational local quantum fields have also been analysed and shown to resemble those of \enquote{physicists'} QFT, although adjusted to the relational nature of the formalism.

We have also introduced and analysed several distinct notions of relativistic causality within the proposed framework. These include an epistemic condition analogous to Einstein causality, $(\operationalstate,\sigma)$-causality, which highlights that relational local observables commute if the supports of the spacetime marginals of their respective oriented QRF's probability distributions are spacelike separated. We also examined a stronger, ontological condition, $(\operationalstate,\sigma)$-microcausality, which imposes pointwise spacelike commutativity on the underlying relational local quantum fields. We showed that $(\operationalstate,\sigma)$-microcausality implies $(\operationalstate,\sigma)$-causality. We constructed explicit examples of relational local quantum fields satisfying finite-precision relational microcausality with respect to operationally meaningful QRF preparations.

Furthermore, we have shown that this framework makes direct contact with established formalisms of mathematical QFT. We first established that the relational vacuum expectation values and associated time-ordered correlation functions satisfy many of the properties of those found in the context of Wightman QFT, including relativistic transformation laws, Hermiticity, local commutativity and, in the case of globally oriented QRFs, spectral conditions. Moreover, we carried out a detailed analysis of RQFT in the context of Wightman QFT and managed to bring the two formalism close to each other, highlighting striking similarities and important differences. The (arguably mysterious) role of the Wightmanian test functions is played in RQFT by the frame preparations (or general trace-class operators), providing a clear operational meaning to the former---they are analogous to the frame smearing functions describing the spacetime localisation of the QRF. Relational local quantum fields always exist as bounded operator-valued spacetime kernels of relational local observables, unlike their analogues in Wightman theory. We have also constructed algebras of relational local observables associated with spacetime regions and proved that these satisfy an extended version of the foundational axioms of AQFT, namely isotony, covariance, causality and the time-slice axiom, with the standard axioms satisfied in special cases. 

The framework presented here provides an operationally motivated and mathematically rigorous approach to scalar QFT on Minkowski spacetime stemming from a relational and operational perspective. It recasts fundamental concepts such as observables, covariance and causality in terms of the relationship between a system and the quantum frame by means of which it is being described.

This paper offers the foundation for a research programme with many possible outlooks. For example, we focused on the study of scalar quantum fields, especially at the level of the covariance properties we examined. Of course, one wants to discuss fermions as well as gauge bosons, which do not transform as scalars under Poincaré transformations. We believe that the framework can be extended to account for fermionic fields, with different covariance and causality conditions. The interplay between spacelike commutativity and covariance is usually captured by a spin-statistics theorem \cite{streater_pct_1989}, which rules out e.g. joint fermionic commutativity and bosonic covariance. Understanding whether such a spin-statistics theorem holds à la Wightman beyond Schwartz QRFs, or whether there exists “exotic” QRF preparations which can give rise to such violations, is an interesting open question. Indeed, one can ask whether spin-statistics is QRF-dependent: could an electron “appear” to be bosonic from the point of view of another electron? This will be the content of a forthcoming paper.

One may also wish to extend this construction for more general theories, including gauge theories which present some additional gauge group or groupoid structure on more general principal bundles. This should also allow for a discussion of RQFT on curved spacetimes. This extension is currently underway. Note that in general, the vacuum state $\Omega$ is not invariant under gauge group transformations if on top of $\Poincup$ there is another group $G$ playing a role in relativisation, so it will change under such a relativisation, which may highlight the notion of vacuum polarisation. Further note that in such cases, the way the vacuum polarises is dependent on the oriented QRF (even when relativised with respect to the same QRF, but with different marginal probability distributions over the gauge group). In physics, this is usually associated to loop diagrams associated to self-interactions in the propagator; here, it arises at the level of the frames. Whether this is the same notion of vacuum polarisation as that for which $H_I \ket{\Omega} \neq 0$, where $H_I$ is some interaction Hamiltonian, remains to be seen. 

The tools developed in this paper can plausibly be exported to the study of relational quantum field theory in the Euclidean setting. Indeed, let $F_\euc \cong \euc \times \mathcal{SO}(d)$ where $\euc$ is $d$-dimensional Euclidean space, understood as a manifold, and $\mathcal{SO}(d)$ is a torsor for the special orthogonal group $SO(d)$. The definition of relational (local) Euclidean quantum field then follows exactly as in the Lorentzian case, with similar notions of covariance (now with respect to $ISO(d)$). Such fields give a very similar story to Osterwalder-Schrader Euclidean quantum field theory \cite{osterwalder_axioms_1973}, and examining the links between both theories would be very insightful. Understanding Wick rotations and reconstruction
theorems allowing to go from one formulation to the other is another interesting topic to explore. It may also shed some light on the possibility to define relational path integrals, partition functions and Feynman diagrams in this language.

On another note, in conjunction with the previous discussion of RQFT on principal bundles, the Euclidean setup can be seen as QFT on a spacetime without specified causal structure (Lorentzian metric tensor). In the general context, this idea could be implemented by considering RQFT on the full frame bundle, as opposed to the (restricted) Lorentz bundle associated with a particular choice of a metric field. The ultimate goal of the presented formalism, alongside improving mathematical and conceptual foundations of QFT, would be to provide novel ways in which gravity can be peacefully reconciled, in operational and relational way, with Quantum Theory.

Other interesting outlooks include a discussion of relational quantum field dynamics, whereby one should be able to discuss solutions to differential equations associated to specific representations of $\Poincup$ for the relational quantum fields. Whether the symplectic structure of phase space remains unchanged under relativisation, or whether the localisability of QRFs generate genuine changes to the canonical commutation relations \cite{jorquera_riera_uncertainty_2025}, is to be investigated.

The relational account of interactions arise in the context of composite systems along the following lines. If we take the system Hilbert space to be $\mathcal{H}_\mathcal{S} = \mathcal{H}_1 \otimes \mathcal{H}_2$, relativizing a tensor product operator $\phi_1\phi_2 \equiv \phi_1 \otimes \phi_2$ gives rise to relational local observables of the form 
\begin{equation}
    \hat{\Phi}_{1,2}^\mathcal{R}(\omega) = \int_F (\hat{\phi}_1)_\lambda(x)(\hat{\phi}_2)_\lambda(x) \, d\mu^{\E_\R}_\omega(x,\lambda) = \int_\mink (\hat{\phi}_1\hat{\phi}_2)^\R_\omega(x) \, d\mu^{\F_\R}_\omega(x)\,,    
\end{equation}
which can be understood as describing an \emph{interaction} between the relational local quantum fields. The frame then not only specifies the coupling region, as is the case in the Fewster-Versch framework for measurement schemes (see \cite{fewster_quantum_2020,fewsteru}) combined with the QRFs (see \cite{fewster_quantum_2025}), but also defines the \emph{coupled theory}. One should then be able study the associated scattering maps and update rules, potentially providing a large number of tractable relational models for measurement schemes.\\

There also seems to exist a connection between the relational interaction terms and the detector models (see e.g. \cite{perche}), along the following lines. Upon a choice of a (space-like) slicing of the interaction region, it can be written in the form $\Sigma \times I$ where $I$ is a time interval of the interaction. Integrating over a slice $\Sigma_t$ and assuming existence of relational smearing function then gives a \emph{relational local interaction Hamiltonian} of the form
\begin{equation}
    H^\R_\omega (t) := \int_{\Sigma_t} (\hat{\phi}_1\hat{\phi}_2)^\R_\omega(\vec{x},t) \, f^\R_\omega(\vec{x},t)d\vec{x}\,,  
\end{equation}
with the function $f^\R_\omega(\vec{x},t)$ dictating the shape of the interaction describing how the detector, here modelled by the quantum field $\hat{\phi}_2$, couples to the quantum field $\hat{\phi}_1$; this has a direct analogue in the detector models approach, which provides an arena for further explorations.

Given the similarities between Wightman QFT and RQFT, it is also plausible that a collision theory à la Haag-Ruelle \cite{haag_local_1996}, which is traditionally implemented in the Wightman framework, can be translated to the language of RQFT. Likewise, the LSZ formalism \cite{lehmann_zur_1955} should have a similar formulation within RQFT. Understanding the notion of (relational) asymptotic states, the meaning of the unitarity axiom $\his^{-\infty} = \his^{+\infty}$ and of the asymptotic completeness axiom $\his =\his^{-\infty} = \his^{+\infty}$ within RQFT, and whether these should be implemented at the level of the absolute Hilbert space $\his$ or at the level of its relativisation with respect to some relativistic QRF $\R$, are important to link the foundations laid in this paper to well-understood computable quantities in physics. 

Some further results related to vacuum expectation values and time-ordered correlation functions would be interesting to analyse in the setting of RQFT. For example, we keep for future work the derivation of the spin-statistics and CPT theorems, which rely on continuing the vacuum expectation values to holomorphic functions into extended Jost tubes. Whether some violations of these core theorems can arise in certain classes of oriented QRFs (e.g. ones which are not Schwartz) is an interesting open question. Furthermore, in the light of the properties of the vacuum expectation values provided in Sec. \ref{sec:Wightman functions}, it seems likely that a reconstruction theorem à la Wightman could also be formulated in the context of RQFT.  

Finally, the concepts of regularisation and renormalisation can be naturally included in RQFT – these deserve (at the very least) a separate paper of their own, but we here outline the philosophy behind what we call relational renormalisation. Indeed, the process of \enquote{changing the scale at which one looks at the physics} is encoded in the POVM and preparation of the measurement apparatus with which one looks at a system. In effect, looking at the subatomic physics through the \enquote{lens} of the large hadron collider can be seen as localising a measurement apparatus at that given scale, while looking at the structure of molecules through a microscope is modelled by preparing a POVM in a certain state so that the measurement apparatus resolves the scale reachable by the microscope.

One way to implement this notion of \enquote{changing scales} in the context of RQFT is through external frame transformations \cite{glowacki_relativization_2024,glowacki_towards_2024}. A description of $\S$ relative to a QRF $\R$ can be transformed into a description of $\S$ relative to another QRF $\R'$ along a channel $\psi : \bhr \to \bhrp$ such that
\begin{equation}
    \label{eqn:equivariant channel}
    \E_\R' = \psi \circ \E_\R
\end{equation}
If the channel is equivariant, then the new frame observable $\E_{\R'}$ is covariant with respect to the same group as the covariance group of $\R$. In this case, we have that for all $\phi \in \bhs$ and $\omega \in \framestate$,
\begin{equation}
    \label{eqn:relationship btwn oriented QRFs}
    \yen^{\R'}(\phi) = (\mathbb{1}_{\bhs} \otimes \psi) \yen^\R(\phi) \Leftrightarrow \yen^{\R'}_{\omega'}(\phi) = \yen^\R_{\psi_*(\omega')}(\phi)
\end{equation}
where $\psi_* : \framestatep \to \framestate$ is the pre-dual of $\psi$. We envision the renormalisation group flows to be modelled as external QRF transformations: consider a collection of channels $\psi_x : \bhr \to \bhrp$ where $x \in \mathbb{R}^+$ for $\psi_0 = \mathbb{1}$, which vary the localizability and covariance properties of the resulting frame. This process may be understood as a \enquote{coarse graining} of the physics that the measurement apparatus can resolve.

\paragraph*{Acknowledgments}

The authors would like to thank Christy Kelly for interesting discussions and comments on the first draft of this work. S.F. would like to thank Adrian Kent for his continuous support. J.G. would like to thank Prof. Klaas Landsmann, Prof. Leon Loveridge, and Prof. Markus M{\"u}ller. J.G. would also like to express deep gratitude for various forms of hospitality and support he received during the time this work was developed from Dr. John Selby, Dr. Ana Belen and ICTQT, Prof. Aleks Kissinger, Prof. Jonathan Barrett and CS group at Oxford, Prof. Lucien Hardy and PI, Dr. Philipp Hoehn and OIST, as well as and Prof. Chris Fewster and Dr. Philipp Hoehn for inspiring interactions.

This research was funded in in part by the Austrian Science Fund (FWF) 10.55776/PAT1562525. For open access purposes, the author has applied a CC BY public copyright license to any author accepted manuscript version arising from this submission.

This publication was made possible through the support of the ID\# 62312 grant from the John Templeton Foundation, as part of the \href{https://www.templeton.org/grant/the-quantuminformation-structure-ofspacetime-qiss-second-phase}{‘The Quantum Information Structure of Spacetime’ Project (QISS)}. The opinions expressed in this project/publication are those of the author(s) and do not necessarily reflect the views of the John Templeton Foundation.

S.F. is funded by a studentship from the Engineering and Physical Sciences Research Council, Grant No. 2882481.

\paragraph*{Conflict of interest.} The authors have no conflict of interest to declare that are relevant to the content of this article.

\paragraph*{Data availability.} No datasets were generated or analysed in this work.

\printbibliography[title={References}]

\addtocontents{toc}{\protect\setcounter{tocdepth}{-1}}

\newpage
\begin{appendices}

\section{Technical preliminaries}\label{App: tech prem.}

\subsection{Functional analysis}\label{App: functional analysis}
\paragraph*{Operators.} An operator $A: \hi \to \hi$ on a Hilbert space $\hi$ is bounded iff its operator norm
\begin{equation}
    ||A|| := \sup_{||\xi|| = 1}||A\xi|| = \sup_{\rho \in \state}|\Tr[\rho A]|
\end{equation}
is finite. The vector space of bounded operators is complete under this norm; this Banach space will be denoted $\bh$---it is a subspace of $\lh$, the space of linear operators on $\hi$. A bounded operator is self-adjoint/positive if it has real/non-negative spectrum. Self-adjoint bounded operators $B(\hi)^{\rm sa}$ form a real Banach space under the operator norm; relation $A \geq B$ iff $A-B $ is positive gives partial order on $B(\hi)^{\rm sa}$, and $\id_\hi \in B(\hi)^{\rm sa}$ provides a unit making $B(\hi)^{\rm sa}$ an order unit space \cite{kuramochi_compact_2020}. The subset of effects is the unit interval in $B(\hi)^{\rm sa}$ written
\begin{equation}
\Eff(\hi) := \{\Ef \in B(\hi)^{\rm sa} \h | \h \mathbb{0}_\hi \leq \E \leq \id_\hi\}.
\end{equation}

A bounded operator $T: \hi \to \hi$ on a Hilbert space $\hi$ is trace-class iff its trace-class norm
\begin{equation}
||T||_1 := \Tr[\sqrt{T^\dagger T}]
\end{equation}
is finite; the trace-class norm of a positive operator is just its trace. The vector space of trace-class operators is complete under this norm; this Banach space will be denoted $\T(\hi)$. Self-adjoint trace-class operators $\T(\hi)^{\rm sa}$ form a real Banach space under the trace-class norm, the positive trace-class operators $\T(\hi)_+ \subset \T(\hi)^{\rm sa}$ forming a generating cone, and the subset of \emph{states}
\begin{equation}
\state := \{\rho \in \T(\hi)_+ \, | \, \Tr[\rho]=1\}
\end{equation}
forms a base for $\T(\hi)_+$, making $\T(\hi)^{\rm sa}$ a base-norm space \cite{kuramochi_compact_2020}. A von Neumann algebra is a $*$-algebra of bounded operators on a Hilbert space that is closed in the ultraweak topology and contains the identity operator.  \\

\paragraph*{Channels.} The linear maps between operator algebras
\begin{equation}
    \Phi: B(\hi) \to B(\hik)
\end{equation}
that are continuous with respect to the ultraweak topologies are referred to as \emph{normal}, \emph{unital} if $\Phi(\id_\hi) = \Phi(\id_{\hik})$, \emph{positive} if $\Phi(B(\hi)_+) \subseteq B(\hik)_+$. A linear map as above is called $n$\emph{-positive} if $\id_n \otimes \Phi: B(\Cn^n \otimes \hi) \to B(\Cn^n \otimes \hik)$ is positive, and \emph{completely positive (CP)} if it is $n$\emph{-positive} for all $n \in \Nn$. Normal unital CP maps which are trace-preserving (TP) are referred to as (quantum) \emph{channels} or CPTP maps. Normal \emph{functionals} on $B(\hi)$ are precisely those given by evaluating the corresponding bounded functionals on a chosen trace-class operator
\begin{equation}
\varphi_T: B(\hi) \ni A \mapsto \Tr[T A] \in \Cn,
\end{equation}
quantum states being characterised as normal unital CP functionals, 
i.e, channels into the complex numbers.\footnote{Note that in the case of functionals, positivity and complete positivity are equivalent.} Thus, since channels compose, a channel defines a \emph{predual map} between state spaces\footnote{Normality, positivity and unitality is sufficient for the existence of a predual map, complete positivity is unnecessary.}
\begin{equation}
    \Phi_*: \mathscr{D}(\hik) \ni \varphi_{\rho} \mapsto \Phi \circ \varphi_{\rho} \in \state,
\end{equation}
where states have been identified with the corresponding functionals. 
Equivalently, $\Phi_*$ is specified by
\begin{equation}
    \Tr[\rho \Phi(A)] = \Tr[\Phi_*(\rho)A] \text{ for all } A \in B(\hi), \h \rho \in \mathscr{D}(\hik).
\end{equation}
A quantum channel is said to be \emph{equivariant} if it commutes with the action of a group acting on both $\hi$ and $\hik$. \\

\paragraph*{Topologies.} The space of bounded operators is the Banach dual order unit space for $\T(\hi)$, written $B(\hi) \cong \T(\hi)^*$. This because we have a norm-preserving bijection between bounded linear operators $A \in B(\hi)$ and the continuous functionals on $\T(\hi)$ they give rise to via the trace,~i.e. \cite{takesaki_theory_2001}
\begin{equation}
    B(\hi) \ni A \longmapsto \{\phi_A: \T(\hi) \ni T \mapsto \Tr[TA] \in \Cn\} \in \T(\hi)^*.
\end{equation}
    The $\T(\hi)^* \cong B(\hi)$ duality allows to define the dual pair of useful and operationally justified topologies on $\bh$ and $\thi$ as follows
    \begin{itemize}
        \item $A_n \to A$ in $B(\hi)$ iff for any $T \in \T(\hi)$ we have $\Tr[TA_n] \to \Tr[TA]$ in $\Cn$,
        \item $T_n \to T$ in $\T(\hi)$ iff for any $A \in B(\hi)$ we have $\Tr[T_nA] \to \Tr[TA]$ in $\Cn$.
    \end{itemize}
    The first of these topologies of convergence of expectation values is locally convex and metrizable on bounded parts and is referred to as the \emph{ultraweak} or $\sigma$-\emph{weak} operator topology \cite{takesaki_theory_2001} ($\sigma(\bh,\thi)$); the second we call \emph{operational} \cite{carette_operational_2025} ($\sigma(\thi,\bh)$). The subsets of effects and states inherit ultraweak and operational topologies from $\bh$ and $\thi$, respectively. On any norm-bounded set the weak operator and ultraweak topologies are the same. \\

\paragraph*{Operator-valued measures.}\label{App: POVMs}
An operator-valued measure (OVM) is a direct analogue of a (complex) measure in Lebesgue theory: given a measurable space $(\Sigma,\F)$, where $\Sigma$ is a set and $\F$ a $\sigma$-algebra of subsets of $\Sigma$, an OVM on $(\Sigma,\F)$ is a set function with values in the space of bounded operators on a Hilbert space, i.e.,
\[
    \E: \F \to B(\hi),
\]
such that for any $\omega \in \S(\hi)$ the associated set function
\[
    \E_\omega: \F \ni X \longmapsto  \tr[\omega \E(X)] \in \Cn
\]
is a \emph{measure}; an OVM is normalized if $\E(\Sigma)=\id$ and positive (POVM) if $\E(X) \in B(\hi)_+$ for all $X \in \F$. The measures $\E_\omega$ are probability measures for all $\omega \in \S(\his)$ iff both these conditions are satisfied.\footnote{Notice here that the normalization condition is often understood as part of the definition of a POVM, although it is logically independent from positivity. We also acknowledge equivalent definitions of POVMs to be found in the literature. Namely, the set map $\F$ can be assumed to give a probability measure via $X \mapsto  \braket{\xi}{\E(X)\eta}$ for any $\xi,\eta \in \hi$ and $X \in \F$. Yet another equivalent definition can be given \cite{busch_quantum_2016} by requiring that $\E(\emptyset) = \mathbb{0}_\hi$, $\E(\Sigma) = \id_\hi$, and that for any sequence of \emph{disjoint} measurable subsets $\{X_n\}_{n \in \mathbb{N}} \subset \F$ we have $\E\left(\cup_{n=1}^\infty X_n\right) = \sum_{n=1}^\infty \E(X_n)$, with the sum understood in terms of ultraweak convergence. (In \cite{busch_quantum_2016} weak convergence is invoked, but since $\Eff(\hi) \subset B(\hi)$ is bounded these topologies agree.)} In this case we have that $\E(X) \in \Eff(\hi)$ for all $X \in \F$ and the operators $\E(X)$ are referred to as the \emph{effects of} $\E$. POVMs are a direct analog of probability measures and exhaust the probabilistic structure of quantum theory in the following sense: due to the discussed duality $\T(\hi)^* \cong B(\hi)$, any assignment
\[
\S(\hi) \ni \omega \mapsto \mu_\omega \in {\rm Prob}(\Sigma,\F)
\]
such that for any $X \in \F$ the map $\omega \mapsto \mu_\omega(X)$ is (trace-norm) \emph{continuous}, needs to be given via~a~POVM, i.e, there is a POVM $\E$ such that $\mu_\omega = \E_\omega$. A positive operator-valued measure is called \emph{sharp}, or a \emph{projection-valued measure} (PVM), if all its effects are projections. Prominent examples of PVMs are those arising from self-adjoint operators via the spectral theorem; they are always defined over the spectrum of the operator, which is a subset of the real line. All the effects of a PVM will commute, and those associated to disjoint measurable subsets compose to zero, i.e., if $\E$ is sharp we have
\[
    \E(X)\E(Y)= \E(Y)\E(X) \text{ for all } X,Y \in \F, \text{ and }
    \E(X)\E(Y) = 0 \text{ for all } X,Y \in \F \text{ such that } X \cap Y = \emptyset.
\]

Another interesting class of POVMs are those called \emph{localizable} \cite{carette_operational_2025,glowacki_operational_2023}. A POVM is localizable if for any $x \in \Sigma$ we can find a sequence of states $\{\omega_n^x\}_{n \in \mathbb{N}} \subset \S(\hi)$, called a \emph{localizing sequence}, such that the corresponding measures converge weakly to the Dirac measure $\delta_x$ in that 
we have\footnote{Usually purity of the states in localizing sequence is assumed, although it does not seem necessary. As shown in \cite{glowacki_operational_2023}, the definition we give is equivalent to the one given in \cite{heinonen_norm-1-property_2003} on metrizable sample spaces. Let us also note here that in principle one can consider localizable OVMs without assuming positivity.}
\[
    \lim_{n \to \infty} \int_\Sigma f(y) d\E_{\omega_n^x}(y) =
    f(x).
\]
Like ordinary measures, OVMs are subject to some natural constructions. For example, given a measurable function $\varphi: (\Sigma,\F) \to (\Sigma',\F')$ and a (P)OVM $\E: \F \to \Eff(\hi)$, the map
\[
\varphi_*\E := \E \circ \varphi^{-1}: \F' \ni X \longmapsto \E(\varphi^{-1}(X)) \in B(\hi)
\]
defines a \emph{push-forward} (P)OVM on $(\Sigma',\F')$; for all $\omega \in \S(\hi)$ we have $(\varphi_*\E)_\omega = \varphi_*(\E_\omega)$. Moreover, given a (P)OVM $\E: (\Sigma,\F) \to \Eff(\hi)$ and a quantum channel $\psi: B(\hi) \to B(\hik)$, the map
\[
    \psi \circ \E: \F \ni X  \longmapsto \psi(\E(X)) \in B(\hik)
\]
is another (P)OVM on $(\Sigma,\F)$ but now with $\E(X)$ in $B(\hik)$ ($\Eff(\hik)$). One easily verifies that (see \cite{glowacki_relativization_2024} for POVMs)
\begin{equation}\label{eq:postcompre}
(\psi \circ \E)_\omega = \E_{\psi_*\omega}.
\end{equation}
Lastly, given a pair of OVMs on the same quantum system but possibly different sample spaces, i.e.,
\[
\E: \F \to \Eff(\hi), \text{ and } \E': \F' \to \Eff(\hi),
\]
where $(\Sigma,\F)$ and $(\Sigma',\F')$ are the relevant measurable spaces, we can define a \emph{product OVM} via
\[
    \E \times \E': \F \times \F' \ni X\times Y \longmapsto \E(X)\E'(Y) \in B(\hi),
\]
where $\F \times \F'$ denotes the $\sigma$-algebra of subsets of $\Sigma \times \Sigma'$ generated by those of the form $X \times Y \subseteq \Sigma \times \Sigma'$ with $X \in \F$ and $Y \in \F'$. Such an OVM is positive e.g. if $\E$ and $\E'$ are both positive and $[\E(X),\E'(Y)]=0$ for all $X,Y$.

\subsection{Fubini-Tonelli theorem}\label{App: Fubini-Tonelli}

    An important measure-theoretic result that we use extensively in this paper is the Fubini-Tonelli theorem. It uses the notion of $\sigma$-finiteness: a measure space $(\Sigma,\mathcal{F},\mu)$ is called \emph{$\sigma$-finite} if $\Sigma$ is the union of a sequence of measurable spaces $(A_i,\mu)_{i\in\mathbb{N}}$ (i.e. $\cup_{i \in \mathbb{N}} A_i = \Sigma$ where $A_1, A_2, ... \in \mathcal{F}$) of finite measure $\mu(A_i)<\infty$ for all $i$. For example, the Lebesgue measure and probability measures on $\mathbb{R}^n$ are $\sigma$-finite. If $(\Sigma_1,\mathcal{F}_1,\mu_1)$ and $(\Sigma_2,\mathcal{F}_2,\mu_2)$ are $\sigma$-finite, then there is a unique product measure on $(\Sigma_1,\mathcal{F}_1) \times (\Sigma_2,\mathcal{F}_2)$.

    \begin{theorem}[Fubini-Tonelli]
    	If $(\Sigma_1,\mathcal{F}_1,\mu_1)$ and $(\Sigma_2,\mathcal{F}_2,\mu_2)$ are $\sigma$-finite measure spaces, and if $f : \Sigma_1 \times \Sigma_2 \to \mathbb{C}$ is a $(\mathcal{F}_1 \times \mathcal{F}_2)$-measurable function, then
    	\begin{equation}
    		\int_{\Sigma_1} \Bigg(\int_{\Sigma_2} \abs{f(x,y)} d\mu_2(y) \Bigg) d\mu_1(x) = \int_{\Sigma_2} \Bigg(\int_{\Sigma_1} \abs{f(x,y)} d\mu_1(x) \Bigg) d\mu_2(y) = \iint_{\Sigma_1 \times \Sigma_2} \abs{f(x,y)} d(\mu_1 \times \mu_2)(x,y)
    	\end{equation}
    	where $\mu_1 \times \mu_2$ is the (unique) product measure on $(\Sigma_1,\mathcal{F}_1) \times (\Sigma_2,\mathcal{F}_2)$, and if any one of these integrals is finite, then
    	\begin{equation}
    		\int_{\Sigma_1} \Bigg(\int_{\Sigma_2} f(x,y) d\mu_2(y) \Bigg) d\mu_1(x) = \int_{\Sigma_2} \Bigg(\int_{\Sigma_1} f(x,y) d\mu_1(x) \Bigg) d\mu_2(y) = \iint_{\Sigma_1 \times \Sigma_2} f(x,y) d(\mu_1 \times \mu_2)(x,y) \, .
    	\end{equation}
    \end{theorem}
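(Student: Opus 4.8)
\section*{Proof proposal for the Fubini--Tonelli theorem}

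This is the classical Fubini--Tonelli theorem, so the plan is to reproduce the standard measure-theoretic argument, split into the Tonelli half (non-negative integrands) and the Fubini half (integrable integrands). The product measure $\mu_1 \times \mu_2$ appearing in the statement is understood to be the one characterised by $(\mu_1 \times \mu_2)(A \times B) = \mu_1(A)\mu_2(B)$ for $A \in \mathcal{F}_1$, $B \in \mathcal{F}_2$, whose existence and uniqueness under $\sigma$-finiteness is the Carath\'eodory/Hahn extension argument already recalled in the paragraph preceding the theorem; I would cite that rather than reprove it.

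First I would establish the \emph{slice lemma}: for every $E \in \mathcal{F}_1 \times \mathcal{F}_2$ and every $x \in \Sigma_1$ the slice $E_x := \{y \in \Sigma_2 \mid (x,y) \in E\}$ lies in $\mathcal{F}_2$, the map $x \mapsto \mu_2(E_x)$ is $\mathcal{F}_1$-measurable, and $\int_{\Sigma_1} \mu_2(E_x)\, d\mu_1(x) = (\mu_1 \times \mu_2)(E)$, together with the symmetric statement. I would prove this first assuming both measures are \emph{finite}: the family $\mathcal{D}$ of sets $E$ for which all these assertions hold is a Dynkin system (closure under proper differences and countable increasing unions uses monotone convergence) containing the $\pi$-system of measurable rectangles $A \times B$, for which the claims are immediate since $\mu_2\big((A\times B)_x\big) = \chi_A(x)\,\mu_2(B)$; hence by Dynkin's $\pi$--$\lambda$ theorem $\mathcal{D} \supseteq \mathcal{F}_1 \times \mathcal{F}_2$. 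The passage to the $\sigma$-finite case is then a routine exhaustion argument: write $\Sigma_i = \bigcup_n A_i^{(n)}$ with $A_i^{(n)}$ increasing of finite measure, apply the finite case inside $A_1^{(n)} \times A_2^{(n)}$, and let $n \to \infty$ by monotone convergence --- this is precisely where $\sigma$-finiteness is indispensable.

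With the slice lemma in hand, Tonelli's identity for a general non-negative $\mathcal{F}_1 \times \mathcal{F}_2$-measurable $g$ follows by the standard three-step promotion: it holds for indicators $g = \chi_E$ (this is the slice lemma), hence for non-negative simple functions by linearity of the integral, hence for arbitrary non-negative measurable $g$ by choosing simple functions $g_k \uparrow g$ and applying the monotone convergence theorem to all three integrals simultaneously (measurability of the iterated-integral functions is preserved under pointwise increasing limits). The Fubini half is then a short corollary: given measurable $f : \Sigma_1 \times \Sigma_2 \to \mathbb{C}$ with one of the iterated integrals of $|f|$ finite, Tonelli applied to $|f| \geq 0$ shows all three integrals of $|f|$ agree and are finite, so $f \in L^1(\mu_1 \times \mu_2)$; decomposing $f = (\mathrm{Re}\,f)^+ - (\mathrm{Re}\,f)^- + i(\mathrm{Im}\,f)^+ - i(\mathrm{Im}\,f)^-$ into four non-negative integrable pieces each dominated by $|f|$, Tonelli applies to each and the four identities combine linearly, the finiteness of $\int |f|$ ensuring that for $\mu_1$-a.e.\ $x$ the slice $y \mapsto f(x,y)$ is $\mu_2$-integrable so that the iterated integrals are defined $\mu_1$-a.e.\ and the subtractions are legitimate.

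The main obstacle is the slice lemma --- concretely, the Dynkin-system step upgrading the elementary rectangle computation to all of $\mathcal{F}_1 \times \mathcal{F}_2$, and the care needed in reducing the $\sigma$-finite case to the finite one; everything downstream (simple functions, monotone convergence, the $L^1$ decomposition) is mechanical.
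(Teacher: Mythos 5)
Your proposal is the standard and correct textbook proof of Fubini--Tonelli: the slice lemma via a Dynkin $\pi$--$\lambda$ argument on measurable rectangles (finite case first, then exhaustion for $\sigma$-finiteness), promotion from indicators to simple functions to general non-negative functions by monotone convergence, and finally the $L^1$ decomposition into four non-negative pieces for the Fubini half, with the a.e.\ finiteness of slices handled correctly. There is nothing to compare against here: the paper states this theorem in its appendix as classical background material and offers no proof of it (it only sketches a proof of the subsequent Bochner-valued generalisation, which it derives by viewing the Bochner integral as a bounded operator on $L^1$ rather than by redoing the measure-theoretic construction). Your argument is complete and sound as a proof of the scalar statement.
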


  This theorem generalizes to Bochner integrals of Banach space valued functions (see \cite{bochnera,diestel,bourgin,Mikusinski1978} for the basic theory).

\begin{theorem}[Fubini-Tonelli for Bochner \cite{Mikusinski1978}]\label{thm:Fubini-Tonelli}
    \label{thm:Fubini-Tonelli Bochner}
    For any Bochner-integrable $\hat{\phi} \in L^1( \Sigma_1 \times \Sigma_2, \bhs)$ we have
    \begin{equation}
    		\int_{\Sigma_1} \Bigg(\int_{\Sigma_2} \hat{\phi}(x,y) d\mu_2(y) \Bigg) d\mu_1(x) = \int_{\Sigma_2} \Bigg(\int_{\Sigma_1} \hat{\phi}(x,y) d\mu_1(x) \Bigg) d\mu_2(y) = \iint_{\Sigma_1 \times \Sigma_2} \hat{\phi}(x,y) d(\mu_1 \times \mu_2)(x,y) \, .
    	\end{equation}
\end{theorem}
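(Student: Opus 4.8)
The plan is to reduce the operator-valued statement to the classical (scalar) Fubini--Tonelli theorem stated just above, exploiting that $\bhs$ is a Banach space, so that the Bochner integral commutes both with the norm and with bounded linear functionals.

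First I would settle the measurability/integrability half of the claim: that for $\mu_1$-almost every $x\in\Sigma_1$ the slice $y\mapsto\hat{\phi}(x,y)$ is Bochner integrable over $\Sigma_2$, that $g(x):=\int_{\Sigma_2}\hat{\phi}(x,y)\,d\mu_2(y)$ is a well-defined element of $L^1(\Sigma_1,\bhs)$, and symmetrically with $\Sigma_1,\Sigma_2$ exchanged. Since $\hat{\phi}\in L^1(\Sigma_1\times\Sigma_2,\bhs)$ is strongly (Bochner) measurable, it is the $(\mu_1\times\mu_2)$-a.e.\ limit of a sequence of $\bhs$-valued simple functions $s_n$; applying the scalar theorem to the null set on which $s_n$ fails to converge shows that for $\mu_1$-a.e.\ $x$ the slice $y\mapsto\hat{\phi}(x,y)$ is itself strongly measurable. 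Moreover $(x,y)\mapsto\|\hat{\phi}(x,y)\|$ is a nonnegative, jointly measurable scalar function with $\iint_{\Sigma_1\times\Sigma_2}\|\hat{\phi}\|\,d(\mu_1\times\mu_2)<\infty$, so the scalar Tonelli theorem yields that for $\mu_1$-a.e.\ $x$ the slice is $\mu_2$-norm-integrable---hence genuinely Bochner integrable---and that $x\mapsto\int_{\Sigma_2}\|\hat{\phi}(x,y)\|\,d\mu_2(y)$ is $\mu_1$-integrable; as this function dominates $\|g(x)\|$, we conclude $g\in L^1(\Sigma_1,\bhs)$. This measurability/integrability step is the main obstacle; everything after it is soft.

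Next I would prove the equality of the three elements of $\bhs$. The key fact is that for every bounded linear functional $\Lambda$ on $\bhs$ and every Bochner-integrable $f$ one has $\Lambda\!\left(\int f\right)=\int(\Lambda\circ f)$. Fixing such a $\Lambda$, the scalar function $\Lambda\circ\hat{\phi}$ is $(\mathcal{F}_1\times\mathcal{F}_2)$-measurable and dominated by $\|\Lambda\|\,\|\hat{\phi}\|$, hence lies in $L^1(\Sigma_1\times\Sigma_2,\mathbb{C})$, so the scalar Fubini theorem applies to it and its two iterated integrals and its double integral all coincide. Pulling $\Lambda$ back out of each integral, this says exactly that $\Lambda$ takes a common value on $\int_{\Sigma_1}(\int_{\Sigma_2}\hat{\phi}\,d\mu_2)\,d\mu_1$, on $\int_{\Sigma_2}(\int_{\Sigma_1}\hat{\phi}\,d\mu_1)\,d\mu_2$, and on $\iint_{\Sigma_1\times\Sigma_2}\hat{\phi}\,d(\mu_1\times\mu_2)$. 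Since this holds for all $\Lambda\in\bhs^*$, the Hahn--Banach theorem forces the three vectors to be equal, which is the assertion. An alternative, self-contained route---the one carried out in the cited reference---is to represent $\hat{\phi}=\sum_k\hat{\phi}_k$ as an absolutely convergent series of \enquote{brick} functions $\hat{\phi}_k=\mathbb{1}_{A_k\times B_k}\,T_k$ with $T_k\in\bhs$ and $\sum_k\mu_1(A_k)\,\mu_2(B_k)\,\|T_k\|<\infty$, for which the identity is a trivial finite computation, then pass to the limit, using monotone convergence applied to $\sum_k\iint\|\hat{\phi}_k\|\,d(\mu_1\times\mu_2)$ to control the slice integrals and to rearrange the series.
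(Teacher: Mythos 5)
Your proof is correct, but it follows a genuinely different route from the one in the paper. The paper's argument (App.~\ref{App: Fubini-Tonelli}) treats the inner integration $\I_2=\int_{\Sigma_2}\cdot\,d\mu_2$ as a bounded linear operator $L^1(\Sigma_2,\bhs)\to\bhs$, invokes the identity $T\left(\int f\,d\mu\right)=\int (T\circ f)\,d\mu$ for bounded $T$, and leans on the isometric isomorphism $L^1(\Sigma_1\times\Sigma_2,\bhs)\cong L^1(\Sigma_2,L^1(\Sigma_1,\bhs))$ to swap the order of integration; it is explicitly only a sketch. You instead dualise: apply an arbitrary $\Lambda\in\bhs^*$, commute it past the Bochner integrals, run the \emph{scalar} Fubini--Tonelli theorem on $\Lambda\circ\hat{\phi}$, and conclude by Hahn--Banach that the three vectors coincide. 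Both strategies are standard and sound. What your version buys is that it makes explicit the genuinely nontrivial step that the paper's sketch suppresses inside the $L^1$-space isomorphism, namely that $\mu_1$-a.e.\ slice $y\mapsto\hat{\phi}(x,y)$ is strongly measurable and norm-integrable and that the resulting function of $x$ lies in $L^1(\Sigma_1,\bhs)$; your handling of this via a.e.\ convergence of simple functions and scalar Tonelli applied to $\|\hat{\phi}\|$ is correct. What the paper's version buys is brevity and a cleaner operator-theoretic statement of why the interchange is legitimate once the $L^1$ identification is granted. Your closing remark correctly identifies the brick-function decomposition as the route taken in the cited reference \cite{Mikusinski1978}, which is a third, more constructive alternative.
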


\begin{proof}
  (sketch) The Bochner integral can be seen as a bounded linear operator from the space of Bochner-integrable functions \cite{diestel}
  \begin{equation}
     \I = \int_\Sigma d\mu:L^1(\Sigma,\bhs) \to \bhs.
  \end{equation}
Now since for any Bochner-integrable function $f:\Sigma \to X$ and bounded linear operator $T:X \to Y$ we have \cite{diestel}
\begin{equation}
    T\left(\int_\Sigma f\,d\mu\right) = \int_\Sigma (T\circ f)\,d\mu
\end{equation}
and the space $L^1(\Sigma_1 \times \Sigma_2,\bhs)$ is isometrically isomorphic to $L^1(\Sigma_2,L^1(\Sigma_1,\bhs))$, we can write
\begin{equation}
\begin{aligned}
    		\int_{\Sigma_1} \Bigg(\int_{\Sigma_2} \hat{\phi}(x,y) d\mu_2(y) \Bigg) d\mu_1(x) &= \int_{\Sigma_1} \left(\I_2 \circ \hat{\phi}(x,y)\right) \, d\mu_1(x) \\&=
            \I_2 \left(\int_{\Sigma_1} \hat{\phi}(x,y) d\mu_1(x)\right) =
            \int_{\Sigma_2} \Bigg(\int_{\Sigma_1} \hat{\phi}(x,y) d\mu_1(y) \Bigg) d\mu_2(x)\, .
\end{aligned}
\end{equation}
\end{proof}

Since we are only dealing with finite integrals, this is enough for our applications in this paper.
  
    \subsection{Schwartz functions and distributions}
    \label{App: distributions}

    The Schwartz space is defined as the space of rapidly decreasing smooth functions on $\mathbb{R}^4$:
    \begin{equation}
        \mathscr{S}(\mathbb{R}^4,\mathbb{C}) := \left\{f \in C^{\infty}(\mathbb{R}^4,\mathbb{C}) \mid \forall \underline{\alpha}, \underline{\beta} \in \mathbb{N}^d, \norm{f}_{\underline{\alpha},\underline{\beta}} < \infty \right\}
    \end{equation}
    where $\underline{\alpha}, \underline{\beta} \in \mathbb{N}^d$ are multi-indices, $\norm{f}_{\underline{\alpha},\underline{\beta}} := \sup_{\mathbf{x} \in \mathbb{R}^4} \abs{\mathbf{x}^{\underline{\alpha}} (\mathbf{D}^{\underline{\beta}}f)(\mathbf{x})}$, $\mathbf{x}^{\underline{\alpha}} := x_1^{\alpha_1} \cdots x_d^{\alpha_d}$ and $\mathbf{D}^{\underline{\beta}} := \partial_1^{\beta_1} \cdots \partial_d^{\beta_d}$. Note that any smooth function with compact support (i.e. a bump function) is in $\mathscr{S}(\mathbb{R}^4,\mathbb{C})$. The dual space of $\mathscr{S}(\mathbb{R}^4,\mathbb{C})$ is denoted $\mathscr{S}(\mathbb{R}^4,\mathbb{C})^*$ and called the space of tempered distributions.

    \begin{theorem}[Nuclear theorem \cite{schwartz_theorie_1947}]
        \label{thm:Schwartz nuclear theorem}
        Let $T_n : \prod_{i=1}^n \mathscr{S}(\mathbb{R}^4,\mathbb{C}) \to \mathbb{C}$ be multilinear and continuous in each of its arguments (the others being fixed). Then there exists a unique distribution $\mathcal{T}_n \in \mathscr{S}(\mathbb{R}^4,\mathbb{C})^*$ such that 
    \begin{equation}
        T_n(f_1,...,f_n) = \mathcal{T}_n(f_1 \otimes f_2 \otimes ... \otimes f_n) \, .
    \end{equation} 
    \end{theorem}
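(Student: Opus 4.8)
The statement is the classical Schwartz kernel (nuclear) theorem — here $\mathcal{T}_n$ must of course be read as an element of $\mathscr{S}(\mathbb{R}^{4n},\mathbb{C})^*$, acting on functions of $4n$ variables — so the plan is to reproduce the self-contained proof that passes through the Hermite-function model of Schwartz space, reducing everything to elementary estimates on rapidly decreasing sequences. (The slicker alternative, quoting that $\mathscr{S}(\mathbb{R}^4)$ is a nuclear Fréchet space, that $\mathscr{S}(\mathbb{R}^{4(n-1)})\,\widehat\otimes\,\mathscr{S}(\mathbb{R}^4)\cong\mathscr{S}(\mathbb{R}^{4n})$, and that jointly continuous multilinear forms on nuclear spaces are precisely the duals of completed tensor products, would do the job in two lines but at the price of importing the full Grothendieck theory of topological tensor products; since only kernel-existence is needed one may equally just cite \cite{schwartz_theorie_1947}.)

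The single analytic input I would establish (or cite) first is the harmonic-oscillator description of $\mathscr{S}$: the Hermite functions $\{h_\alpha\}_{\alpha\in\mathbb{N}^4}$ form an orthonormal basis of $L^2(\mathbb{R}^4)$, they are eigenfunctions of $H=-\Delta+|x|^2$ with eigenvalue $2|\alpha|+4$, the seminorms $f\mapsto\|H^k f\|_{L^2}$ ($k\in\mathbb{N}$) generate the Fréchet topology of $\mathscr{S}(\mathbb{R}^4,\mathbb{C})$, and hence the Hermite-coefficient map $f\mapsto(\langle f,h_\alpha\rangle)_\alpha$ is a topological isomorphism of $\mathscr{S}(\mathbb{R}^4,\mathbb{C})$ onto the sequence space $s(\mathbb{N}^4)=\{c\mid\sup_\alpha(1+|\alpha|)^k|c_\alpha|<\infty\ \forall k\}$ with its natural topology. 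Two consequences I would record: the Hermite basis of $L^2(\mathbb{R}^{4n})$ is the family of products $h_{\alpha_1}\otimes\cdots\otimes h_{\alpha_n}$, whence $f_1\otimes\cdots\otimes f_n$ has coefficients $(c^{(1)}_{\alpha_1}\cdots c^{(n)}_{\alpha_n})$; and finite linear combinations of such product sequences are dense in $s(\mathbb{N}^{4n})\cong\mathscr{S}(\mathbb{R}^{4n},\mathbb{C})$, because the partial sums $\sum_{|\underline{\alpha}|\le N}d_{\underline{\alpha}}h_{\underline{\alpha}}$ of any $g$ converge to $g$ in every $H^k$-seminorm.

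With that in hand the rest is formal. Transporting $T_n$ through the isomorphism yields a separately continuous $n$-linear form $\widetilde T_n$ on $s(\mathbb{N}^4)^{\times n}$, and since $s(\mathbb{N}^4)$ is Fréchet, hence barrelled, separate continuity upgrades to joint continuity by Banach–Steinhaus, giving $|\widetilde T_n(c^{(1)},\dots,c^{(n)})|\le C\prod_{i=1}^n\|c^{(i)}\|_k$ for some $C,k$. I would then set $t_{\underline{\alpha}}:=\widetilde T_n(e_{\alpha_1},\dots,e_{\alpha_n})$ for $\underline{\alpha}=(\alpha_1,\dots,\alpha_n)$ and $e_\alpha$ the unit sequences, read off from the estimate that $|t_{\underline{\alpha}}|\le C\prod_i(1+|\alpha_i|)^k$ so $(t_{\underline{\alpha}})$ is polynomially bounded, and define $\mathcal{T}_n(g):=\sum_{\underline{\alpha}}t_{\underline{\alpha}}d_{\underline{\alpha}}$ on the Hermite coefficients $(d_{\underline{\alpha}})$ of $g\in\mathscr{S}(\mathbb{R}^{4n},\mathbb{C})$. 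Then I check: (i) the series converges absolutely and $|\mathcal{T}_n(g)|\le C'\sup_{\underline{\alpha}}(1+|\underline{\alpha}|)^{m}|d_{\underline{\alpha}}|$ for a suitable $m$ (using $\sum_{\underline{\alpha}\in\mathbb{N}^{4n}}(1+|\underline{\alpha}|)^{-(4n+1)}<\infty$), so $\mathcal{T}_n$ is a tempered distribution; (ii) $\mathcal{T}_n(f_1\otimes\cdots\otimes f_n)=\sum_{\underline{\alpha}}t_{\underline{\alpha}}\prod_i c^{(i)}_{\alpha_i}=\widetilde T_n(c^{(1)},\dots,c^{(n)})=T_n(f_1,\dots,f_n)$, the middle equality being joint continuity (and multilinearity) of $\widetilde T_n$ applied to the Hermite expansions of the $f_i$. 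Uniqueness is immediate: two tempered distributions agreeing on all $f_1\otimes\cdots\otimes f_n$ agree on their dense span, hence coincide.

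The main obstacle is precisely the first step — proving that powers of the harmonic oscillator generate the Schwartz topology and that the Hermite-coefficient map is a topological isomorphism onto $s(\mathbb{N}^4)$, together with the density statement; once these are granted the theorem reduces to bookkeeping with weighted sequences and the standard Fréchet-space fact that separate continuity implies joint continuity. In the write-up I would present the Hermite-expansion proof (or simply invoke \cite{schwartz_theorie_1947}, as the paper does) and relegate the tensor-product/nuclearity viewpoint to a remark.
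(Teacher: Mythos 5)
Your proof is correct, but it is worth noting that the paper does not prove this statement at all: Thm.~\ref{thm:Schwartz nuclear theorem} is stated in App.~\ref{App: distributions} purely as a citation to Schwartz's original work, so there is no in-paper argument to compare against. What you supply is the standard self-contained proof via the harmonic-oscillator (Hermite) model of Schwartz space, and all the steps check out: the isomorphism $\mathscr{S}(\mathbb{R}^4,\mathbb{C})\cong s(\mathbb{N}^4)$, the upgrade from separate to joint continuity on Fr\'echet spaces via Banach--Steinhaus, the polynomial bound $|t_{\underline{\alpha}}|\le C\prod_i(1+|\alpha_i|)^k$ on the matrix elements, the summability estimate using $\sum_{\underline{\alpha}\in\mathbb{N}^{4n}}(1+|\underline{\alpha}|)^{-(4n+1)}<\infty$, and the uniqueness via density of the span of elementary tensors (which your Hermite partial-sum argument establishes cleanly, since each $h_{\underline{\alpha}}$ is itself an elementary tensor). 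You are also right to flag the typo in the statement as printed: $\mathcal{T}_n$ must live in $\mathscr{S}(\mathbb{R}^{4n},\mathbb{C})^*$, not $\mathscr{S}(\mathbb{R}^4,\mathbb{C})^*$, as is confirmed by the paper's subsequent integral-kernel formula over $\mink^n$. The only point I would ask you to make fully explicit in a write-up is the interchange of limit and multilinear expansion in step (ii) — you justify it by joint continuity plus absolute convergence of the double-indexed sum, which is exactly right, but it is the one place where separate continuity alone would not suffice, so the Banach--Steinhaus upgrade is doing real work there and deserves a precise reference (e.g.\ the bilinear case iterated by induction on $n$).
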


    $\mathcal{T}_n$ has an integral kernel $\mathfrak{T}_n : \prod_{i=1}^n \mathbb{R}^4 \to \mathbb{C}$ iff for every sequence $(g_k)$ in $\prod_{i=1}^n \mathscr{S}(\mathbb{R}^4,\mathbb{C})$ such that $0 \leq g_k \leq g$ for some $g \in \prod_{i=1}^n \mathscr{S}(\mathbb{R}^4,\mathbb{C})$ and $g_k \to 0$ locally in measure, the sequence $(\mathcal{T}_n g_k)$ converges to $0$ almost everywhere \cite{schep_kernel_1979}. In such cases only, we can write
    \begin{equation}
        \mathcal{T}(f_1 \otimes f_2 \otimes ... \otimes f_n) = \int_{\mink} f_1(x_1) ... f_n(x_n) \mathfrak{T}_n(x_1,...,x_n) d^dx_1 ... d^dx_n \, .
    \end{equation}

    \section{No-go: Wizimirski}\label{App:no go Wizimirski}

One may be tempted to define scalar fields as those operators which lie in the Lorentz-invariant subalgebra $\bhs^{\Lup}$ of $\bhs$. This seemingly generalises nicely to higher spins: for example, in $1+3$ dimensions, a Dirac fermion could be a collection of four operators in $\bhs$ which do not belong to $\bhs^{\Lup}$ but which are related through Lorentz transformations as $\hat{\phi}_a(0) = \sum_{b=1}^4 (D^{(1/2,0)} \oplus D^{(0,1/2)})[\Lambda^{-1}]_{ab} \hat{\phi}_b(0)$ for all $\Lambda \in SL(2,\mathbb{C})$, where $(D^{(1/2,0)} \oplus D^{(0,1/2)})$ is the Dirac representation of $SL(2,\mathbb{C})$. However, we have to consider a no-go theorem by Wizimirski's, which puts a halt to these hopes.

\begin{theorem}[Wizimirski \cite{wizimirski_existence_1966}]
    \label{thm:Wizmirski}
	Let $\hat{\phi} : \mink \to \bhs$ be an operator-valued function and $U$ be a weakly continuous unitary representation of the Poincaré group on $\his$ such that
	\begin{enumerate}
		\item $U(y,\Lambda)^\dagger \hat{\phi}(x) U(y,\Lambda) = \hat{\phi}(\Lambda x + y)$ for all $(y,\Lambda) \in \Poincup$ and $x \in \mink$,
		\item There exists a unique pure translation-invariant state $\Omega = \ket{\Omega}\bra{\Omega} \in \systemstate^{T(1,d-1)}$.
	\end{enumerate}
    Then $\hat{\phi}(x) \ket{\Omega} = \hat{\phi}(0)\ket{\Omega}$ for all $x \in \mink$. Furthermore, if $\ket{\Omega} \in \his$ is a unique (up to scalar multiples) translation-invariant vector, then there is a $c \in \mathbb{C}$ such that $\hat{\phi}(x) \ket{\Omega} = c\ket{\Omega}$ for all $x \in \mink$.
\end{theorem}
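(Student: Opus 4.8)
The plan is to first upgrade the hypothesis that the \emph{state} $\Omega=\ket{\Omega}\bra{\Omega}$ is translation-invariant to the far stronger statement that the \emph{vector} $\ket{\Omega}$ is fixed by all of $\Poincup$, and only then to exploit the covariance law. For the first step I would note that conjugating $\ket{\Omega}\bra{\Omega}$ by any $U(y,\Lambda)$ again yields a translation-invariant pure state: using the semidirect-product identity $U(a,e)U(y,\Lambda)=U(y,\Lambda)U(\Lambda^{-1}a,e)$ together with translation-invariance of $\ket{\Omega}\bra{\Omega}$, one obtains $U(a,e)\,U(y,\Lambda)\ket{\Omega}\bra{\Omega}U(y,\Lambda)^\dagger\,U(a,e)^\dagger=U(y,\Lambda)\ket{\Omega}\bra{\Omega}U(y,\Lambda)^\dagger$ for every $a\in T(1,d-1)$. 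By the assumed uniqueness of the pure translation-invariant state this conjugate must equal $\ket{\Omega}\bra{\Omega}$, so $U(y,\Lambda)\ket{\Omega}=\chi(y,\Lambda)\ket{\Omega}$ for a continuous homomorphism $\chi:\Poincup\to U(1)$. Then $\chi$ is trivial: its restriction to $\Lup$ vanishes because $\Lup$ has no nontrivial continuous characters (its Lie algebra is semisimple for $d\geq 3$), and its restriction to the translations is a character that must be invariant under the Lorentz adjoint action, hence also trivial. Therefore $U(y,\Lambda)\ket{\Omega}=\ket{\Omega}$ for all $(y,\Lambda)\in\Poincup$.

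Next I would set $\psi(x):=\hat{\phi}(x)\ket{\Omega}$ and read off the content of covariance. Specializing the covariance law to base point $0$ gives $\hat{\phi}(y)=U(y,\Lambda)^\dagger\hat{\phi}(0)U(y,\Lambda)$ for \emph{every} $\Lambda\in\Lup$, since $\Lambda\cdot 0+y=y$ independently of $\Lambda$; applying this to $\ket{\Omega}$ and using the first step yields $\psi(y)=U(y,\Lambda)^{-1}\psi(0)$ for all $y\in\mink$ and all $\Lambda\in\Lup$. Taking $y=0$ shows $\psi(0)$ is Lorentz-invariant, and equating the right-hand side for two Lorentz elements and factoring out $U(0,\Lambda)$ using that invariance (via $U(y,\Lambda_1)U(y,\Lambda_2)^{-1}=U(y-\Lambda_1\Lambda_2^{-1}y,e)\,U(0,\Lambda_1\Lambda_2^{-1})$) yields $U(y-\Lambda y,e)\,\psi(0)=\psi(0)$ for all $y\in\mink$ and $\Lambda\in\Lup$.

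To conclude, I would observe that $S:=\{a\in\mink:U(a,e)\psi(0)=\psi(0)\}$ is a subgroup of $\mink$, closed by strong continuity of $U$, and that by the last identity it contains $\operatorname{im}(I-\Lambda)$ for every $\Lambda\in\Lup$. Taking $\Lambda$ to be a spatial rotation or a boost shows $\operatorname{im}(I-\Lambda)$ exhausts all coordinate $2$-planes, which span $\mink$; hence $S=\mink$, i.e. $\psi(0)$ is translation-invariant, and together with the previous step it is Poincaré-invariant. Then $\psi(x)=U(x,e)^{-1}\psi(0)=\psi(0)$ for all $x$, which is the first assertion. For the second, $\psi(0)=\hat{\phi}(0)\ket{\Omega}$ is now a translation-invariant vector, so the extra uniqueness hypothesis gives $\psi(0)=c\ket{\Omega}$ for some $c\in\mathbb{C}$, whence $\hat{\phi}(x)\ket{\Omega}=c\ket{\Omega}$ for all $x$.

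I expect the main obstacle to be the first step---turning invariance of the rank-one projector $\ket{\Omega}\bra{\Omega}$ into invariance of the vector $\ket{\Omega}$---which genuinely relies on the uniqueness assumption together with the absence of nontrivial continuous characters of the proper orthochronous Lorentz group. A secondary point needing care is the elementary geometric claim that $\bigcup_{\Lambda}\operatorname{im}(I-\Lambda)$ spans $\mink$; one can sidestep it by instead noting that $x\mapsto\langle\psi(0),U(x,e)\psi(0)\rangle$ is a bounded, continuous, Lorentz-invariant positive-definite function, so its SNAG spectral measure is a finite Lorentz-invariant measure on momentum space, necessarily a point mass at the origin, which again forces $\psi(0)$ to be translation-invariant.
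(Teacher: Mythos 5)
Your proof is correct, but it takes a genuinely different route from the one in the paper. The paper's argument is harmonic-analytic: it studies the two-point function $f(x)=\expval{\hat{\phi}(x)^\dagger\hat{\phi}(0)}{\Omega}$, shows it is continuous, Lorentz-invariant and of positive type, invokes Bochner's theorem to represent it as the Fourier transform of a finite measure, uses the fact that a finite Lorentz-invariant measure on momentum space is concentrated at the origin to conclude $f$ is constant, and finishes with the Cauchy--Schwarz equality case. You instead first trivialize the character $\chi$ of $\Poincup$ attached to $\ket{\Omega}$ (no Lorentz-invariant momentum, no continuous characters of $\Lup$ for $d\geq 3$), which upgrades invariance of the projector to genuine invariance of the vector, and then finish purely algebraically: the identity $U(y-\Lambda y,e)\,\hat{\phi}(0)\ket{\Omega}=\hat{\phi}(0)\ket{\Omega}$ plus the elementary observation that $\bigcup_\Lambda \operatorname{im}(I-\Lambda)$ spans $\mink$. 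Your main route thus avoids Bochner's theorem and the invariant-measure lemma altogether (your fallback via the SNAG spectral measure of $x\mapsto\ip{\psi(0)}{U(x,e)\psi(0)}$ is essentially the paper's harmonic analysis transplanted from $f$ to the vector $\psi(0)$). A genuine bonus of your approach is that it explicitly kills the translation phase $\theta(x)$, which the paper carries along unresolved and which its final implication $\hat{\phi}(x)\ket{\Omega}=\hat{\phi}(0)\ket{\Omega}$ tacitly requires to vanish. The only point to flag is dimension-dependence: for $d=2$ the group $\Lup$ is abelian and does admit nontrivial continuous characters, so $\chi|_{\Lup}$ need not vanish there; this does not break the conclusion (the Lorentz phases cancel in the subsequent computation, and the translation character is still forced to be trivial), but your step as literally stated assumes $d\geq 3$.
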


\begin{proof}
    Let $f:\mink \to \mathbb{C}$ be defined as $f(x) = \expval{\hat{\phi}(x)^\dagger \hat{\phi}(0)}{\Omega}$ where $\Omega = \ket{\Omega}\bra{\Omega}$. Writing $U(x,e) \ket{\Omega} = e^{i\theta(x)} \ket{\Omega}$ for some $\theta(x) \in [0,2\pi)$, it is easily seen that $f(x) = e^{i\theta(x)} \expval{\hat{\phi}(0)^\dagger U(x,e) \hat{\phi}(0)}{\Omega}$. Moreover, $U(x,e) U(0,\Lambda) \ket{\Omega} = U(x,\Lambda) \ket{\Omega} = U(0,\Lambda)U(\Lambda^{-1}x,e) \ket{\Omega} = e^{i \theta(\Lambda^{-1}x)}U(0,\Lambda) \ket{\Omega}$ for all $x \in \mink$ and all $\Lambda \in \Lup$, so $U(0,\Lambda)\ket{\Omega} = e^{i \chi(\Lambda)} \ket{\Omega}$ for some $\chi(\Lambda) \in [0,2\pi)$ by uniqueness. Furthermore, for all $\Lambda \in \Lup$ and all $x \in \mink$,
    \begin{equation}\begin{aligned}
        f(\Lambda x) &= \expval{\hat{\phi}(\Lambda x)^\dagger \hat{\phi}(0)}{\Omega} \\
        &= \expval{U(0,\Lambda^{-1})^\dagger U(x,e)^\dagger U(0,\Lambda)^\dagger \hat{\phi}(0)^\dagger U(0,\Lambda)U(x,e)U(0,\Lambda^{-1}) \hat{\phi}(0)}{\Omega} \\
        &= e^{i\chi(\Lambda)} \expval{U(x,e)^\dagger U(0,\Lambda)^\dagger \hat{\phi}(0)^\dagger U(0,\Lambda)U(x,e)U(0,\Lambda^{-1}) \hat{\phi}(0)}{\Omega} \\
        &= \expval{U(x,e)^\dagger \hat{\phi}(0)^\dagger U(x,e) U(0,\Lambda^{-1}) \hat{\phi}(0) U(0,\Lambda^{-1})^\dagger} {\Omega} \\
        &= \expval{\hat{\phi}(x)^\dagger \hat{\phi}(0)}{\Omega} = f(x) \, .
    \end{aligned}\end{equation}
    Since $e^{i(\theta(x)+\theta(y)} \ket{\Omega} = U(x,e) U(y,e) \ket{\Omega} = U(x+y,e)\ket{\Omega}=e^{i \theta(x+y)} \ket{\Omega}$, we have $\theta(x+y) \equiv \theta(x) + \theta(y) \mod 2\pi$. Let $g(x):= \expval{\hat{\phi}(0)^\dagger U(x,e) \hat{\phi}(0)}{\Omega}$, $x_1,\cdots,x_N \in \mink$, $v_i := e^{i \theta(x_i)}$ and $D = \text{diag}(v_1,\cdots,v_n)$. Then $f(x_i - x_j) = v_i \overline{v_j} g(x_i - x_j)$ so $[f(x_i-x_j)] = D\cdot [g(x_i-x_j)]\cdot D^*$. But $g$ is a continuous function of positive type and $D$ is unitary, and the conjugation of a positive semi-definite matrix by a unitary matrix is also positive-semidefinite, so $f$ is a continuous function of positive type. Hence by Bochner's theorem $f$ is the Fourier transform of a bounded measure $\mu$ on $\mink$. But $f$ being Lorentz invariant implies that $\mu$ also is. But a bounded, Lorentz invariant measure is supported on $\{0\}$. Thus, $f(x) = 1$ i.e. it is constant. Writing $\ket{\psi} := \hat{\phi}(0)\ket{\Omega}$, $f(x) = f(0)$ for all $x \in \mink$ implies that
    \begin{equation}
        e^{i \theta(x)}\expval{U(x,e)}{\psi} = \braket{\psi} = \norm{\psi}^2 = \norm{\psi} \cdot \norm{U(x,e) \psi}
    \end{equation}
    so by the Cauchy-Schwartz inequality, $U(x,e) \ket{\psi} = e^{-i \theta(x)} \ket{\psi}$ for all $x \in \mink$. Hence we get
    \begin{equation}
    U(x,e) \hat{\phi}(0) \ket{\Omega} =  e^{i \theta(x)} \hat{\phi}(x) \ket{\Omega} = e^{-i \theta(x)} \hat{\phi}(0) \ket{\Omega} \Rightarrow \hat{\phi}(x) \ket{\Omega} = \hat{\phi}(0) \ket{\Omega} \text{ for all } x \in \mink.
    \end{equation}
    Moreover if $\ket{\Omega}$ is translation-invariant then $U(x,e) \hat{\phi}(y) \ket{\Omega} = \hat{\phi}(x+y) \ket{\Omega} = \hat{\phi}(0) \ket{\Omega}$ so $\hat{\phi}(x) \ket{\Omega}$ is translation-invariant for all $x \in \mink$ so by uniqueness the second result follows.
\end{proof}

Here, we avoid the theorem: $\hat{\phi} \mapsto \hat{\phi}_\Lambda$ which need not be equal to $\hat{\phi}$ unless $\phi \in \bhs^{\Lup}$. We are however not interested in such operators: they act trivially on the vacuum, and so lead to theories with constant 2-point expectation values. Likewise, relational local quantum fields are not pointwise Poincaré covariant as $\omega \mapsto \omega \cdot (a,\Lambda)^{-1}$, and even then the covariance would take place under an integral.

    \section{Proofs omitted from the main text}\label{proofs}

    \subsection{Proof of Prop. \ref{prop: Gamma is partial trace}}\label{proof 2.5}
    
    \begin{proof}
    Let $\{e_i\}_{i \in I} \subset \his$ and $\{f_j\}_{j \in J} \subset \hir$ be orthonormal bases for $\his$ and $\hir$, respectively, where $I$ and $J$ are countable index sets, so that $\{e_i \otimes f_j\}_{(i,j) \in I \times J} \subset \his \otimes \hir$ is an orthonormal basis for $\his\otimes\hir$. Then $\forall \rho \in \systemstate$,
    \begin{equation}\begin{aligned}
        \Tr[\rho \Gamma_\omega(\mathcal{O})] &= \Tr[(\rho \otimes \omega) \mathcal{O}] \\
        &= \Tr[(\rho \otimes \mathbb{1}_{\bhr})(\mathbb{1}_{\bhs} \otimes \omega) \mathcal{O}] \\
        &= \Tr[\mathbb{1}_{\bhs} \otimes \mathbb{1}_{\bhr} (\rho \otimes \mathbb{1}_{\bhr}) \mathbb{1}_{\bhs} \otimes \mathbb{1}_{\bhr}(\mathbb{1}_{\bhs} \otimes \omega) \mathcal{O}] \\
        &= \sum_{(i,j) \in I \times J} \sum_{(k,l) \in I \times J} \Tr[\ket{e_i \otimes f_j}\bra{e_i \otimes f_j} \rho \otimes \mathbb{1}_{\bhr} \ket{e_k \otimes f_l} \bra{e_k \otimes f_l} (\mathbb{1}_{\bhs} \otimes \omega) \mathcal{O}] \\
        &= \sum_{(i,j) \in I \times J} \sum_{(k,l) \in I \times J} \bra{e_i \otimes f_j} \rho \otimes \mathbb{1}_{\bhr} \ket{e_k \otimes f_l} \bra{e_k \otimes f_l} (\mathbb{1}_{\bhs} \otimes \omega) \mathcal{O} \ket{e_i \otimes f_j} \\
        &= \sum_{(i,j) \in I \times J} \sum_{(k,l) \in I \times J} \bra{e_i} \rho \ket{e_k} \delta_{jl} \bra{e_k \otimes f_l} (\mathbb{1}_{\bhs} \otimes \omega) \mathcal{O} \ket{e_i \otimes f_l} \\
        &= \sum_{(i,j) \in I \times J} \sum_{k\in I} \bra{e_i} \rho \ket{e_k} \bra{e_k \otimes f_j} (\mathbb{1}_{\bhs} \otimes \omega) \mathcal{O} \ket{e_i \otimes f_j} \\
        &= \sum_{i,k \in I} \bra{e_i} \rho \ket{e_k} \bra{e_k} \Tr_{\hir}[(\mathbb{1}_{\bhs} \otimes \omega) \mathcal{O}] \ket{e_i} \\
        &= \sum_{i,k \in I} \Tr[\ket{e_i}\bra{e_i} \rho \ket{e_k}\bra{e_k} \Tr_{\hir}[ (\mathbb{1}_{\bhs} \otimes \omega) \mathcal{O}]] \\
        &= \Tr[\rho \Tr_{\hir}[ (\mathbb{1}_{\bhs} \otimes \omega) \mathcal{O}]]
    \end{aligned}\end{equation}
    which holds for all $\rho \in \systemstate$.
\end{proof}

\subsection{Proof of Lemma \ref{lem:nu covariance}}\label{proof 3.3}

\begin{proof}
	First, by covariance, we have $\mu^{\E_\R}_{\omega \cdot (a,\Lambda)}(X) = \mu^{\E_\R}_\omega((a,\Lambda) \cdot X)$. By the definition of the conditional probability measure, for any $\U \in \Bor(\mink)$ and $\tilde{\Lambda} \in \Bor(\Lup)$, we have
	\begin{equation}
	\int_\U \nu^{\E_\R}_{\omega \cdot (a,\Lambda)}(\tilde{\Lambda} \mid x) \, d\mu^{\F_\R}_{\omega \cdot (a,\Lambda)}(x) = \mu^{\E_\R}_{\omega \cdot (a,\Lambda)}(\U \times \tilde{\Lambda}) = \mu^{\E_\R}_\omega((a,\Lambda) \cdot (\U \times \tilde{\Lambda})) \, .
	\end{equation}
The action of $(a,\Lambda)$ on the set $\U \times \tilde{\Lambda}$ is $(a,\Lambda) \cdot (\U \times \tilde{\Lambda}) = \{(\Lambda x' + a, \Lambda \lambda') \mid x' \in \U, \lambda' \in \tilde{\Lambda} \}$. This is the product set $(\Lambda X + a) \times (\Lambda A)$. Thus,
\begin{equation} \label{eq:identity_start}
 \int_X \nu^{\E_\R}_{\omega \cdot (a,\Lambda)}(A \mid x) \, d\mu^{\F_\R}_{\omega \cdot (a,\Lambda)}(x) = \mu^{\E_\R}_\omega((\Lambda \U + a) \times (\Lambda \tilde{\Lambda})) \, .
\end{equation}
We can express the right-hand side of Eqn. \eqref{eq:identity_start} using the disintegration of $\mu^{\E_\R}_\omega$. Let $\V = \Lambda \U + a$ and $B = \Lambda \tilde{\Lambda}$:
\begin{equation}
\mu^{\E_\R}_\omega(\V \times B) = \int_\V \nu^{\E_\R}_\omega(B \mid y) \, d\mu^{\F_\R}_\omega(y) \, .
\end{equation}
Substituting back $\V$ and $B$, we get:
\begin{equation} \label{eq:rhs_disintegrated}
\mu^{\E_\R}_\omega((\Lambda \U + a) \times (\Lambda \tilde{\Lambda})) = \int_{\Lambda \U + a} \nu^{\E_\R}_\omega(\Lambda \tilde{\Lambda} \mid y) \, d\mu^{\F_\R}_\omega(y).
\end{equation}
We now have the identity
\begin{equation}
\int_\U \nu^{\E_\R}_{\omega \cdot (a,\Lambda)}(\tilde{\Lambda} \mid x) \, d\mu^{\F_\R}_{\omega \cdot (a,\Lambda)}(x) = \int_{\Lambda \U + a} \nu^{\E_\R}_\omega(\Lambda \tilde{\Lambda} \mid y) \, d\mu^{\F_\R}_\omega(y).
\end{equation}
To compare the integrands, we must make the integration domain and measure the same on both sides. By covariance, $\mu^{\F_\R}_{\omega \cdot (a,\Lambda)}(Z) = \mu^{\F_\R}_\omega(\Lambda Z + a)$ for any $Z \in \Bor(\mink)$. This implies the following change of variables formula for any integrable function $h: \mink \to \mathbb{R}$:
\begin{equation}
\int_{\mink} h(y) \, d\mu^{\F_\R}_\omega(y) = \int_{\mink} h(\Lambda x + a) \, d\mu^{\F_\R}_{\omega \cdot (a,\Lambda)}(x)
\end{equation}
Let $h(y) = \mathbb{I}_{\Lambda \U + a}(y) \nu^{\E_\R}_\omega(\Lambda A \mid y)$, where $\mathbb{I}$ is the indicator function.
\begin{align*}
\int_{\Lambda \U + a} \nu^{\E_\R}_\omega(\Lambda A \mid y) \, d\mu^{\F_\R}_\omega(y) &= \int_{\mink} h(y) \, d\mu^{\F_\R}_\omega(y) \\
&= \int_{\mink} h(\Lambda x + a) \, d\mu^{\F_\R}_{\omega \cdot (a,\Lambda)}(x) \\
&= \int_{\mink} \mathbb{I}_{\Lambda \U + a}(\Lambda x + a) \, \nu^{\E_\R}_\omega(\Lambda A \mid \Lambda x + a) \, d\mu^{\F_\R}_{\omega \cdot (a,\Lambda)}(x).
\end{align*}
The indicator function $\mathbb{I}_{\Lambda \U + a}(\Lambda x + a)$ is 1 iff $\Lambda x + a \in \Lambda \U + a$, which simplifies to $x \in \U$. So the right-hand side becomes
\begin{equation}
\int_\U \nu^{\E_\R}_\omega(\Lambda \tilde{\Lambda} \mid \Lambda x + a) \, d\mu^{\F_\R}_{\omega \cdot (a,\Lambda)}(x) \, .
\end{equation}
Comparing this with the left-hand side of Eqn. \eqref{eq:identity_start}, we have
\begin{equation}
\int_\U \nu^{\E_\R}_{\omega \cdot (a,\Lambda)}(\tilde{\Lambda} \mid x) \, d\mu^{\F_\R}_{\omega \cdot (a,\Lambda)}(x) = \int_\U \nu^{\E_\R}_\omega(\Lambda \tilde{\Lambda} \mid \Lambda x + a) \, d\mu^{\F_\R}_{\omega \cdot (a,\Lambda)}(x).
\end{equation}
Since this equality holds for all Borel sets $\U \in \Bor(\mink)$, the integrands must be equal for $\mu^{\F_\R}_{\omega \cdot (a,\Lambda)}$-almost every $x \in \mink$. Therefore,
\begin{equation}
\nu^{\E_\R}_{\omega \cdot (a,\Lambda)}(\tilde{\Lambda} \mid x) = \nu^{\E_\R}_\omega(\Lambda \tilde{\Lambda} \mid \Lambda x + a) \quad \text{for } \mu^{\F_\R}_{\omega \cdot (a,\Lambda)}\text{-a.e. } x
\end{equation}
which concludes the proof.
\end{proof}

\subsection{Proof of Lemma \ref{lem:causality and covariance}} \label{proof causality and covariance}

\begin{proof}
    Notice first that for any $A,B \in \bhs$ and $(a,\Lambda) \in \Poincup$ we have:
    \begin{equation}\begin{aligned}
        (a,\Lambda) \cdot (AB) &= U_S(a,\Lambda)ABU_S(a,\Lambda)^\dagger\\ &= U_S(a,\Lambda)AU_S(a,\Lambda)^\dagger U_S(a,\Lambda)BU_S(a,\Lambda)^\dagger\\
        &= ((a,\Lambda) \cdot A)((a,\Lambda) \cdot B) \,,
    \end{aligned}\end{equation}
    so we have $(a,\Lambda)\cdot [A,B] = [(a,\Lambda)\cdot A,(a,\Lambda)\cdot B]$. Now since for $C \in \bhs$ and any $(a,\Lambda) \in \Poincup$, $C = 0$ iff $(a,\Lambda) \cdot C = 0$, we can conclude that
    \begin{equation}
        \comm{A}{B} = \comm{A^\dagger}{B} = 0 \; \Longleftrightarrow \;\comm{(a,\Lambda) \cdot A}{(a,\Lambda) \cdot B} = \comm{(a,\Lambda) \cdot A^\dagger}{(a,\Lambda) \cdot B} = 0 \, .
    \end{equation}
    By definition, $\mathcal{O}_\S$ is $(\operationalstate,\sigma)$-causal iff $\forall \phi_1,\phi_2 \in \mathcal{O}_\S$ we have
    \begin{align}
        \forall \omega_1,\omega_2 \in \operationalstate, \, \Bigg(\omega_1 \indepERsigma \omega_2 \; \Longrightarrow& \; \comm{\hat{\Phi}_1^\R(\omega_1)}{\hat{\Phi}_2^\R(\omega_2)} = 
        \comm{\hat{\Phi}_1^\R(\omega_1)^\dagger}{\hat{\Phi}_2^\R(\omega_2)} = 0\Bigg) \,,
    \end{align}
    which is equivalent, for any $(a,\Lambda) \in \Poincup$, to
    \begin{align}
        \forall \omega_1,\omega_2 \in \operationalstate, \, \Bigg(\omega_1 \indepERsigma \omega_2 \; \Longrightarrow& \; \comm{(a,\Lambda)^{-1} \cdot \hat{\Phi}_1^\R(\omega_1)}{(a,\Lambda)^{-1}\cdot \hat{\Phi}_2^\R(\omega_2)} \nonumber \\ & \qquad= 
        \comm{(a,\Lambda)^{-1} \cdot \hat{\Phi}_1^\R(\omega_1)^\dagger}{(a,\Lambda)^{-1} \cdot \hat{\Phi}_2^\R(\omega_2)} = 0\Bigg)\,.
    \end{align}
    
This condition in turn, by covariance (Thm. \ref{thm:full covariance}), is equivalent to
    
    \begin{align} 
        \forall \omega_1,\omega_2 \in \operationalstate, \, \Bigg(\omega_1 \indepERsigma \omega_2 \; \Longrightarrow& \; \comm{\hat{\Phi}_1^\R(\omega_1 \cdot (a,\Lambda))}{\hat{\Phi}_2^\R(\omega_2 \cdot (a,\Lambda))} \nonumber \\ & \qquad= 
        \comm{\hat{\Phi}_1^\R(\omega_1 \cdot (a,\Lambda))^\dagger}{\hat{\Phi}_2^\R(\omega_2 \cdot (a,\Lambda))} = 0 \Bigg) \, .
    \end{align}
    Now notice that for all $\omega_1,\omega_2 \in \operationalstate$, $\sigma \geq 0$ and $(a,\Lambda) \in \Poincup$, we have
    \[
    \omega_1 \indepERsigma \omega_2 \text{ iff } \omega_1 \cdot (a,\Lambda) \indepERsigma \omega_2 \cdot (a,\Lambda) \,.
    \]
    Indeed, Poincaré transformations cannot spoil $\sigma$-spacelike separation since they preserve the causal structure: for $U \indep V \in \mink$ the quantity $\sup \{(x-y)^2 \mid x \in U, y \in V\}$ is invariant under Poincaré transformations. Thus, $\mathcal{O}_\S$ is $(\operationalstate,\sigma)$-causal iff $\forall \phi_1,\phi_2 \in \mathcal{O}_\S$ and arbitrary $(a,\Lambda) \in \Poincup$ we have
    \begin{align}
        \forall \omega_1,\omega_2 \in \operationalstate, \, \Bigg(\omega_1 \cdot (a,\Lambda) \indepERsigma \omega_2 \cdot (a,\Lambda) \; \Longrightarrow& \; \comm{\hat{\Phi}_1^\R(\omega_1 \cdot (a,\Lambda))}{\hat{\Phi}_2^\R(\omega_2 \cdot (a,\Lambda))} \nonumber \\ & \qquad= 
        \comm{\hat{\Phi}_1^\R(\omega_1 \cdot (a,\Lambda))^\dagger}{\hat{\Phi}_2^\R(\omega_2 \cdot (a,\Lambda))} = 0\Bigg) \, .
    \end{align}
    Finally, setting $\tilde{\omega}_i = \omega_i \cdot (a,\Lambda) \in \operationalstate \cdot (a,\Lambda)$ we get that $\mathcal{O}_\S$ is $(\operationalstate,\sigma)$-causal iff $\forall \phi_1,\phi_2 \in \mathcal{O}_\S$ and arbitrary $(a,\Lambda) \in \Poincup$ we have
    \begin{align}
        \forall \tilde{\omega}_1,\tilde{\omega}_2 \in \operationalstate \cdot (a,\Lambda), \, \Bigg(\tilde{\omega}_1 \indepERsigma \tilde{\omega}_2 \; \Longrightarrow& \; \comm{\hat{\Phi}_1^\R(\tilde{\omega}_1)}{\hat{\Phi}_2^\R(\tilde{\omega}_2)} = 
        \comm{\hat{\Phi}_1^\R(\tilde{\omega}_1)^\dagger}{\hat{\Phi}_2^\R(\tilde{\omega}_2)} = 0\Bigg) \,,
    \end{align} 
    which concludes the proof.
\end{proof}

\subsection{Proof of Theorem \ref{thm:Weyl approximate microcausality}} \label{proof approximate commutativity}

\begin{proof}
    Let $K_{\Lup} := \pi_{\mathcal{\Lup}}(K)$ and $K_\mink := \pi_{\mathcal{\mink}}(K)$, where $\pi_\mink : F \ni (x,\lambda) \mapsto x \in \mink$ and $\pi_{\mathcal{\Lup}} : F \ni (x,\lambda) \mapsto \lambda \in \mathcal{\Lup}$. Let $C(K) := \sup_{\Lambda \in K_{\Lup}} \norm{\Lambda}_{op,\mathbb{E}}$, where for all $z \in \mink$,
    \begin{equation}
        \norm{\Lambda}_{op,\mathbb{E}} := \sup_{z \neq 0} \frac{\norm{\Lambda z}_{\mathbb{E}}}{\norm{z}_\mathbb{E}}
    \end{equation}
    is the Euclidean operator norm of $\Lambda \in \Lup$, and where we have fixed inertial coordinates on $\mink$ and the Euclidean norm is given by
$\|z\|_{\mathbb E}=\sqrt{(z^0)^2+ \sum_{i=1}^{d-1} (z^i)^2}$. By compactness, $C(K) < \infty$. Hence, for all $\Lambda \in K_{\Lup}$ and all $z \in \mink$,
    \begin{equation}
        \norm{\Lambda z}_{\mathbb{E}} \leq C(K) \norm{z}_{\mathbb{E}} \, .
    \end{equation}
    Let 
    \begin{equation}
        M_\sigma(K) := \{S = x_1-x_2 \mid x_i \in K_\mink, \eta(S,S) < -\sigma\} \subset \mink \, .
    \end{equation}
    If $M_\sigma(K) = \varnothing$, then $\sigma$-spacelike pairs never occur among points in $K_\mink$; thus, the $\sigma$-microcausality statement is vacuous and any choice of $f_{\sigma,K}$ works. Otherwise, we can define
    \begin{equation}
        R_\sigma(K):=\sup\{\ \norm{S}_{\mathbb E}\ :\ S\in M_\sigma(K)\ \}
    \end{equation}
    such that $0 < R_\sigma(K) < \infty$ (since $K_\mink$ is compact). Let $F:M_\sigma(K) \times \mink \ni (S,y)\mapsto \eta(S+y,S+y)$. For $S\in M_\sigma(K)$ and $y\in \mink$,
\begin{equation}
F(S,y)=\eta(S+y,S+y)=\eta(S,S)+2\eta(S,y)+\eta(y,y) \, .
\label{eq:F-expand}
\end{equation}   
    
    Let $J=\mathrm{diag}(1,-1,\dots,-1)$ so that $\eta(u,v)=\langle Ju,v\rangle_{\mathbb E}$, where
$\langle\cdot,\cdot\rangle_{\mathbb E}$ is the Euclidean inner product.
Since $J$ is orthogonal for $\|\cdot\|_{\mathbb E}$, for all $u,v\in \mink$,
\begin{equation}
|\eta(u,v)|=|\langle Ju,v\rangle_{\mathbb E}|\leq \|Ju\|_{\mathbb E}\,\norm{v}_{\mathbb E}=\norm{u}_{\mathbb E}\,\norm{v}_{\mathbb E},
\qquad
\eta(v,v)\leq \norm{v}_{\mathbb E}^2.
\label{eq:eta-bounds}
\end{equation}
Using \eqref{eq:eta-bounds} and $\norm{S}_{\mathbb E}\leq R_\sigma(K)$,
\begin{equation}
F(S,y)\leq -\sigma+2\norm{S}_{\mathbb E}\,\|y\|_{\mathbb E}+\|y\|_{\mathbb E}^2
\leq -\sigma+2R_\sigma(K)\,\|y\|_{\mathbb E}+\|y\|_{\mathbb E}^2 \, .
\label{eq:F-bound}
\end{equation}

Choose
\begin{equation}
r_\sigma(K):=\min\!\Big\{\frac{\sigma}{4R_\sigma(K)}\ ,\ \frac{\sqrt{\sigma}}{2}\Big\}>0.
\label{eq:r-def}
\end{equation}
If $\|y\|_{\mathbb E}\leq r_\sigma(K)$, then $2R_\sigma(K)\,\|y\|_{\mathbb E}\leq \sigma/2$ and $\|y\|_{\mathbb E}^2\leq \sigma/4$, hence by \eqref{eq:F-bound},
\begin{equation}
F(S,y)\leq -\frac{\sigma}{4}<0\qquad\text{for all }S\in M_\sigma(K).
\label{eq:spacelike-stability}
\end{equation}

Next set
\begin{equation}
\varepsilon_\sigma(K):=\frac{r_\sigma(K)}{2\,C(K)} > 0,
\label{eq:eps-def}
\end{equation}
and let
\begin{equation}
    \psi : \mathbb{R} \ni x \mapsto \begin{cases}
        e^{\frac{1}{x^2-1}} \qquad \text{ if } \abs{x} < 1 \\
        0 \qquad \quad \; \; \, \,\text{ if } \abs{x} \geq 1
    \end{cases} \; \in \mathbb{R} \; .
\end{equation}
be a standard bump function. In the coordinate frame $x = \{x^0,\cdots,x^{d-1}\}$, let
\begin{equation}
    f_{\sigma,K} : \mink \ni x \mapsto \psi\left(\frac{\sum_{i=0}^{d-1} \abs{x^i}^2}{\varepsilon_\sigma(K)^2}\right) \in \mathbb{R} \, .
\end{equation}
Then $\supp f_{\sigma,K}\subset \{u:\norm{u}_{\mathbb E}< \varepsilon_\sigma(K)\}$.
Let $x_i\in K_{\mink}$, $\lambda_i\in K_{\Lup}$, and $u,v$ satisfy $\norm{u}_{\mathbb E},\norm{v}_{\mathbb E}\leq \varepsilon_\sigma(K)$.
Set $S:=x_1-x_2$, $y:=\lambda_1 u-\lambda_2 v$.
From \eqref{eq:eps-def} and the triangle inequality,
\begin{equation}
\|y\|_{\mathbb E}\leq \|\lambda_1 u\|_{\mathbb E}+\|\lambda_2 v\|_{\mathbb E}
\leq C(K)\norm{u}_{\mathbb E}+C(K)\norm{v}_{\mathbb E}
\leq 2C(K)\varepsilon_\sigma(K)=r_\sigma(K).
\label{eq:y-bound}
\end{equation}
Hence, if additionally $\eta(S,S)<-\sigma$ (i.e. $S\in M_\sigma(K)$), then by \eqref{eq:spacelike-stability}
\begin{equation}
\eta\big((x_1+\lambda_1 u)-(x_2+\lambda_2 v),\ (x_1+\lambda_1 u)-(x_2+\lambda_2 v)\big)
=F(S,y)<0.
\label{eq:pairwise-spacelike}
\end{equation}
Therefore all the $x_1+\lambda_1\supp f_{\sigma,K}$ and $x_2+\lambda_2\supp f_{\sigma,K}$ are spacelike separated. Since spacelike separation is frame-independent, this (frame-dependent definition of) $f_{\sigma,K}$ satisfies the desired properties; hence, by \eqref{eq:pairwise-spacelike} and the spacelike commutativity for Weyl operators,
\begin{equation}
    \comm{W_{\hat\phi}((x_1,\lambda_1)\cdot f_{\sigma,K})}{W_{\hat\phi}((x_2,\lambda_2)\cdot f_{\sigma,K})}=0 \qquad \forall x_i \in K_\mink, \, \lambda_i \in K_{\Lup}, \, x_1-x_2 \in M_\sigma(K) \, .
\end{equation}
Let $\eta := W_{\hat\phi}(f_{\sigma ,K}) \in \bhs$. Then since the integrand commutators of $\hat{\eta}^\R_{\omega_i}(x)$ vanish on the conditional supports we have that
\begin{equation}
    \comm{\hat{\eta}^{\,\R}_{\omega_1}(x_1)}{\hat{\eta}^\R_{\omega_2}(x_2)}=0 \qquad \forall x_i\in\supp\mu^{\F_\R}_{\omega_i}.
\end{equation}
    Since $W_{\hat{\phi}}(f_{\sigma,K})^\dagger = W_{\hat{\phi}}(-f_{\sigma,K})$ and $\supp (-f_{\sigma,K}) = \supp f_{\sigma,K}$, the same argument (with $f_{\sigma,K}$ replaced by $-f_{\sigma,K}$) proves
    \begin{equation}
        \comm{\hat{\eta}^{\,\R}_{\omega_1}(x_1)^\dagger}{\hat{\eta}^\R_{\omega_2}(x_2)}=0 \qquad \forall x_i\in\supp\mu^{\F_\R}_{\omega_i} ,
    \end{equation}
    which completes the proof.
\end{proof}

\subsection{Proofs of Section \ref{sec:Wightman functions}}\label{proofs exp val}

\subsubsection{Proof of Prop. \ref{prop:symmetry of vevs}}

\label{proof symmetry of vevs}

\begin{proof}
    We have
    \begin{equation}\begin{aligned}
        \Wscr_n^{(\Omega,\R)}[\omega_1,\cdots,\omega_n](\phi_1,\cdots,\phi_n) &= \Tr\left[\Omega \prod_{i=1}^n \hat{\Phi}_i^\R(\omega_i) \right] \\
        &= \Tr\left[\Omega \cdot (a,\Lambda) \prod_{i=1}^n \hat{\Phi}_i^\R(\omega_i) \right] \\
        &= \Tr\left[\Omega \Big(\prod_{i=1}^n (a,\Lambda) \cdot \hat{\Phi}_i^\R(\omega_i) \Big)\right] \\
        &\stackrel{\ref{thm:full covariance}}{=} \Tr\left[\Omega \prod_{i=1}^n \hat{\Phi}_i^\R((a,\Lambda) \cdot \omega_i) \right] \\
        &= \Wscr_n^{(\Omega,\R)}[(a,\Lambda).\omega_1,\cdots,(a,\Lambda).\omega_n](\phi_1,\cdots,\phi_n)
    \end{aligned}\end{equation}
    which holds for all $\phi_1,\cdots,\phi_n \in \bhs$. Moreover,
    \begin{equation}\begin{aligned}
        W_n^{(\Omega,\R)}[\omega_1 \cdot (a,\Lambda),\cdots,\omega_n \cdot (a,\Lambda);x_1,\cdots,x_n]&(\phi_1,\cdots,\phi_n) \\ &= \Tr\left[\Omega \prod_{i=1}^n (\hat{\phi}_i)^\R_{\omega_i \cdot (a,\Lambda)}(x_i) \right] \\
        &= \Tr\left[\Omega \cdot (a,\Lambda) \prod_{i=1}^n (\hat{\phi}_i)^\R_{\omega_i \cdot (a,\Lambda)}(x_i) \right] \\
        &= \Tr\left[\Omega \Big(\prod_{i=1}^n (a,\Lambda) \cdot (\hat{\phi}_i)^\R_{\omega_i \cdot (a,\Lambda)}(x_i)\Big) \right] \\
        &= \Tr\left[\Omega\prod_{i=1}^n (\hat{\phi}_i)^\R_{\omega_i}(\Lambda x_i+a) \right] \\
        &= W_n^{(\Omega,\R)}[\omega_1,\cdots,\omega_n;\Lambda x_1+a,\cdots,\Lambda x_n+a](\phi_1,\cdots,\phi_n)
    \end{aligned}\end{equation}
    which holds for all $\phi_1,\cdots,\phi_n\in \bhs$, so the result follows.
\end{proof}

\subsubsection{Proof of Prop. \ref{prop:global orientation}}

\label{proof global orientation}

\begin{proof}
    For all $\phi_1,\cdots,\phi_n \in \bhs$, 
    \begin{equation}\begin{aligned}
        &W_n^{(\Omega,\R)}[\omega_1,\cdots,\omega_n;x_1,\cdots,x_n](\phi_1,\cdots,\phi_n) \\ &= \Tr\left[\Omega (\hat{\phi}_1)^\R_{\omega_1}(x_1) \cdots (\hat{\phi}_n)^\R_{\omega_n}(x_n) \right] \\
        &= \Tr\left[\Omega U_\S(x_1,e) (\hat{\phi}_1)^\R_{\omega_1}(0) U_\S(x_1-x_2,e)^\dagger \cdots U_\S(x_{n-1}-x_n,e)^\dagger (\hat{\phi}_n)^\R_{\omega_n}(0) U_\S(x_n,e)^\dagger \right] \\
        &= \Tr\left[\Omega U_\S(x_1 - x_n,e)(\hat{\phi}_1)^\R_{\omega_1}(0) U_\S(x_1-x_2,e)^\dagger \cdots U_\S(x_{n-1}-x_n,e)^\dagger (\hat{\phi}_n)^\R_{\omega_n}(0) \right] \\
        &= \Tr\left[\Omega U_\S((x_1 - x_2) + \cdots + (x_{n-1} - x_n),e)(\hat{\phi}_1)^\R_{\omega_1}(0) U_\S(x_1-x_2,e)^\dagger \cdots U_\S(x_{n-1}-x_n,e)^\dagger (\hat{\phi}_n)^\R_{\omega_n}(0) \right] \\
        &= \Tr\left[\Omega U_\S\left(\sum_{i=1}^{n-1} (x_i - x_{i+1}),e\right)(\hat{\phi}_1)^\R_{\omega_1}(0) \prod_{j=1}^{n-1} \left(U_\S(x_j-x_{j+1},e)^\dagger (\hat{\phi}_j)^\R_{\omega_j}(0)\right) \right] \\
        &= \Wbf_n^{(\Omega,\R)}[\omega_1,\cdots,\omega_n;x_1-x_2,\cdots,x_{n-1} - x_n](\phi_1,\cdots,\phi_n)
    \end{aligned}\end{equation}
    which concludes the first part of the proof. For the second part of the proof, we write $U(z) \equiv U_\S(z,e)$ for conciseness. For all $\phi_1,\phi_2 \in \bhs$,
\begin{equation}\begin{aligned}
    W_n^{(\Omega,\R)}[\omega_1,\omega_2;x,y](\phi_1,\phi_2) &= \Tr[\Omega (\hat{\phi}_1)^\R_{\omega_1}(x) (\hat{\phi}_2)^\R_{\omega_2}(y)] \\
    &= \Tr[\Omega (U(x) (\hat{\phi}_1)^\R_{\omega_1}(0) U(x)^\dagger) (U(y) (\hat{\phi}_2)^\R_{\omega_2}(0) U(y)^\dagger)] \\
    &= \Tr[(U(y)^\dagger \Omega U(y)) (U(-y) (\hat{\phi}_1)^\R_{\omega_1}(0) U(-y)^\dagger)(U(-x) (\hat{\phi}_2)^\R_{\omega_2}(0) U(-x)^\dagger)] \\
    &= \Tr[\Omega (\hat{\phi}_1)^\R_{\omega_1}(-y) (\hat{\phi}_2)^\R_{\omega_2}(-x)] \\
    &= W_n^{(\Omega,\R)}[\omega_1,\omega_2;-y,-x](\phi_1,\phi_2)
\end{aligned}\end{equation}
which holds for all $\phi_1,\phi_2 \in \bhs$.
\end{proof}

\subsubsection{Proof of Prop. \ref{prop:spectrum condition vevs}}

\label{proof spectrum condition vevs}

\begin{proof}
    For all $p_1,\cdots,p_n \in \mathbb{R}^d$ and all $\phi_1,\cdots,\phi_n \in \bhs$, we have
    \begin{equation}\begin{aligned}
        \tilde{W}_n^{(\Omega,\R)}&[\omega_1,\cdots,\omega_n;p_1,\cdots,p_n](\phi_1,\cdots,\phi_n) \\ &= \idotsint \exp\left(i \sum_{i=1}^n p_i x_i \right) W_n^{(\Omega,\R)}[\omega_1,\cdots,\omega_n;x_1,\cdots,x_n](\phi_1,\cdots,\phi_n) dx_1\cdots dx_n \\
        &= \idotsint \exp\left(i [p_1(x_1 - x_2) + (p_2 + p_3)(x_2 - x_3) + \cdots + (p_1 + \cdots + p_n)(x_{n-1} - x_n)\right) \nonumber \\
        \times \exp&\left(\left(i \sum_{j=1}^n p_j\right) x_n\right) W_{n}^{(\Omega,\R)}[\omega_1,\cdots,\omega_n;x_1 - x_2,\cdots,x_{n-1}-x_n](\phi_1,\cdots,\phi_n) \, dx_1 \cdots dx_n \\
        &= (2\pi)^d \delta\left(\sum_{j=1}^n p_j\right) \tilde{\Wbf}_{n}^{(\Omega,\R)}[\omega_1,\cdots,\omega_n;p_1,p_1+p_2,\cdots,p_1 + \cdots + p_{n-1}](\phi_1,\cdots,\phi_n)
    \end{aligned}\end{equation}
    which holds for all $\phi_1,\cdots,\phi_n \in \bhs$, which concludes the first part of the proof. For the second claim, we again follow \cite{streater_pct_1989} for this proof. One can write $U_\S(a,e)$ as an integral over momentum space:
        \begin{equation}
            U_\S(a,e) = \int_{\sigma} e^{i p \cdot a} d\E(p)
        \end{equation}
        where $\E$ is a PVM on momentum space and $\sigma$ is the associated spectrum on which $\E$ takes nonzero values. Thus a projection operator $\E(S)$ is defined for each sphere of momentum space and each set which can be obtained from spheres by a countable number of unions, intersections and complements. In terms of $\E$, the statement that a set $S$ is not in the physical energy-momentum spectrum is simply that $\E(S) = 0$ or, equivalently,
        \begin{equation}
            \int_{\mathbb{R}^d} \tilde{f}(p) d\E(p) = 0
        \end{equation}
        if $\supp \tilde{f} \subset S$. Defining
        \begin{equation}
            f(x) = \frac{1}{(2\pi)^{d/2}} \int_{\mathbb{R}^d} e^{-i p \cdot a} \tilde{f}(p) dp \, ,
        \end{equation}
        the above is equivalent to
        \begin{equation}
            \int_{\mathbb{R}^d} f(a) U(a,e) d^da = 0
        \end{equation}
        which implies that, for all $T \in \ths$,
        \begin{equation}
            \int_{\mathbb{R}^d} f(a) \Tr[T\, U(a,e)] \,d^d a = 0
        \end{equation}
        and, in particular,
        \begin{equation}
            \int_{\mathbb{R}^d} e^{ip\cdot a} \Tr[T \,U(a,e)] \, d^da = 0
        \end{equation}
        unless $p$ lies in the energy-momentum spectrum of the states. Thus, for
        \begin{equation}
            T = (\hat{\phi}_{j+1})^\R_{\omega_{j+1}}(x_{j+1}) \cdots (\hat{\phi}_n)^\R_{\omega_n}(x_{n}) \Omega (\hat{\phi}_1)^\R_{\omega_1}(x_1) \cdots (\hat{\phi}_j)^\R_{\omega_j}(x_{j})
        \end{equation}
        where $\phi_1,\cdots,\phi_n \in \bhs$ and $x_1,\cdots,x_n \in \mink$,
        \begin{equation}
            \int_{\mathbb{R}^d} e^{i p \cdot a} \Tr[\Omega \, (\hat{\phi}_1)^\R_{\omega_1}(x_1) \cdots (\hat{\phi}_j)^\R_{\omega_j}(x_{j}) U(a,e)^\dagger (\hat{\phi}_{j+1})^\R_{\omega_{j+1}}(x_{j+1}) \cdots (\hat{\phi}_n)^\R_{\omega_n}(x_{n})] \, d^d a = 0
        \end{equation}
        by the cyclic property of the trace for all $j \in \{1,\cdots,n\}$, which implies
        \begin{equation}
            \int_{\mathbb{R}^d} e^{i p \cdot a} \Wbf_n^{(\Omega,\R)}[\omega_1,\cdots,\omega_n;\xi_1,\cdots,\xi_{j-1},\xi_j + a,\xi_{j+1},\cdots,\xi_{n-1}](\phi_1,\cdots,\phi_n) d^d a = 0 
        \end{equation}
        for all $j \in \{1,\cdots,n\}$ for $p$ not in the physical spectrum, which holds for all $\phi_1,\cdots,\phi_n \in \bhs$, i.e.
        \begin{equation}
            \tilde{\Wbf}_n^{(\Omega,\R)}[\omega_1,\cdots,\omega_n;q_1,\cdots,q_{n-1}] = 0
        \end{equation}
        unless each $q_i$ lies in the physical spectrum. 
\end{proof}

\subsubsection{Proof of Prop. \ref{prop:Hermiticity Wightman functions}}

\label{proof Hermiticity Wightman functions}

\begin{proof}
    We have
        \begin{multline}
            W_n^{(\Omega,\R)}[\omega_1,\cdots,\omega_n;x_1,\cdots,x_n](\phi_1,\cdots,\phi_n) = \Tr\left[\Omega \prod_{i=1}^n (\hat{\phi}_i)^\R_{\omega_i}(x_i)\right] = \overline{\Tr\left[\left(\Omega \prod_{i=1}^n (\hat{\phi}_i)^\R_{\omega_i}(x_i)\right)^\dagger\right]} \\ = \overline{\Tr\left[\Omega\prod_{i=1}^n (\hat{\phi}_{n-i-1})^\R_{\omega_{n-i-1}}(x_{n-i-1})^\dagger\right]} = \overline{W_n^{(\Omega,\R)}[\omega_n,\cdots,\omega_1;x_n,\cdots,x_1](\phi_n^\dagger,\cdots,\phi_1^\dagger)}
        \end{multline}
        and
        \begin{multline}
            \Wbf_n^{(\Omega,\R)}[\omega_1,\cdots,\omega_n;\xi_1,\cdots,\xi_{n-1}](\phi_1,\cdots,\phi_n) = W_{n}^{(\Omega,\R)}[\omega_1,\cdots,\omega_n;x_1,\cdots,x_n](\phi_1,\cdots,\phi_n) \\ = \overline{W_n^{(\Omega,\R)}[\omega_n,\cdots,\omega_1;x_n,\cdots,x_1](\phi_n^\dagger,\cdots,\phi_1^\dagger)} = \overline{\Wbf^{(\Omega,\R)}_n[\omega_n,\cdots,\omega_1;-\xi_{n-1},\cdots,-\xi_1](\phi_n^\dagger,\cdots,\phi_1^\dagger)} \, .
        \end{multline}
        Thus,
        \begin{multline}
            \Wscr_n^{(\Omega,\R)}[\omega_1,\cdots,\omega_n](\phi_1,\cdots,\phi_n) = \idotsint_{\mink^n} W_n^{(\Omega,\R)}[x_1,.\cdots,x_n](\phi_1,\cdots,\phi_n) \, d\mu_{\omega_1}^{\F_\R}(x_1) \cdots d\mu^{\F_\R}_{\omega_n}(x_n) \\
            \stackrel{\ref{thm:Fubini-Tonelli}}{=} \idotsint_{\mink^n} \overline{ W_n^{(\Omega,\R)}[x_n,.\cdots,x_1](\phi_n^\dagger,\cdots,\phi_1^\dagger)} d\mu^{\F_\R}_{\omega_n}(x_n) \cdots d\mu^{\F_\R}_{\omega_n}(x_1) \\ \stackrel{*}{=} \overline{\idotsint_{\mink^n} W_n^{(\Omega,\R)}[x_n,.\cdots,x_1](\phi_n^\dagger,\cdots,\phi_1^\dagger) d\mu^{\F_\R}_{\omega_n}(x_n) \cdots d\mu^{\F_\R}_{\omega_n}(x_1) } 
            \\ = \overline{\W_n^{(\Omega,\R)}[\omega_n,\cdots,\omega_1](\phi_n^\dagger,\cdots,\phi_1^\dagger)}
        \end{multline}
        where the equality (*) holds as the measures $\mu^{\F_\R}_{\omega_i}$ are real for all $i \in \{1,\cdots,n\}$.
\end{proof}

\subsubsection{Proof of Prop. \ref{prop:local commutativity vevs}}

\label{proof of local commutativity vevs}

\begin{proof}
    If
        \begin{enumerate}
        \item $\mathcal{O}_\S$ is $(\operationalstate,\sigma)$-causal and $\omega_j \indepERsigma \omega_{j+1}$ and $\phi_j,\phi_{j+1} \in \mathcal{O}_\S$, then
        \begin{equation}\begin{aligned}
            \Wscr_n^{(\Omega,\R)}&[\omega_1,\cdots,\omega_n](\phi_1,\cdots,\phi_j,\phi_{j+1},\cdots,\phi_n) \\ &= \Tr\left[\Omega \left(\prod_{i=1}^{j-1} \hat{\Phi}_i^\R(\omega_i)\right) \hat{\Phi}_j^\R(\omega_j) \hat{\Phi}_{j+1}^\R(\omega_{j+1}) \left(\prod_{k=j+2}^n \hat{\Phi}_k^\R(\omega_k)\right)\right] \\
            &=\Tr\left[\Omega \left(\prod_{i=1}^{j-1} \hat{\Phi}_i^\R(\omega_i)\right) \hat{\Phi}_{j+1}^\R(\omega_{j+1})\hat{\Phi}_{j}^\R(\omega_{j})  \left(\prod_{k=j+2}^n \hat{\Phi}_k^\R(\omega_k)\right)\right] \\
            &= \Wscr_n^{(\Omega,\R)}[\omega_1,\cdots,\omega_{j-1},\omega_{j+1},\omega_j,\omega_{j+2},\cdots,\omega_n](\phi_1,\cdots,\phi_{j-1},\phi_{j+1},\phi_j,\phi_{j+2},\cdots,\phi_n) \, .
        \end{aligned}\end{equation}
        \item $\mathcal{O}_\S$ is weakly $(\operationalstate,\sigma)$-microcausal and $\omega_j \indepERsigma \omega_{j+1}$ and $\phi_j,\phi_{j+1} \in \mathcal{O}_\S$, then
        \begin{equation}\begin{aligned}
            W_n^{(\Omega,\R)}&[\omega_1,\cdots,\omega_n;x_1,\cdots,x_n](\phi_1,\cdots,\phi_j,\phi_{j+1},\cdots,\phi_n) \\ &= \Tr\left[\Omega \left(\prod_{i=1}^{j-1} (\hat{\phi}_i)^\R_{\omega_i}(x_i)\right) (\hat{\phi}_j)^\R_{\omega_j}(x_j) (\hat{\phi}_{j+1})^\R_{\omega_{j+1}}(x_{j+1}) \left(\prod_{k=j+2}^n (\hat{\phi}_k)^\R_{\omega_k}(x_k)\right)\right] \\
            &=\Tr\left[\Omega \left(\prod_{i=1}^{j-1} (\hat{\phi}_i)^\R_{\omega_i}(x_i)\right) (\hat{\phi}_{j+1})^\R_{\omega_{j+1}}(x_{j+1})(\hat{\phi}_{j})^\R_{\omega_{j}}(x_j)  \left(\prod_{k=j+2}^n (\hat{\phi}_k)^\R_{\omega_k}(x_k)\right)\right] \\
            &= W_n^{(\Omega,\R)}[\omega_1,\cdots,\omega_{j-1},\omega_{j+1},\omega_j,\omega_{j+2},\cdots,\omega_n;x_1,\cdots,x_{j-1},x_{j+1},x_j,x_{j+2},\cdots,x_n] \nonumber \\& \qquad \qquad \qquad \qquad \qquad \qquad \qquad \qquad (\phi_1,\cdots,\phi_{j-1},\phi_{j+1},\phi_j,\phi_{j+2},\cdots,\phi_n) \, .
        \end{aligned}\end{equation}
        \item $\mathcal{O}_\S$ is strongly $(\operationalstate,\sigma)$-microcausal and $x_j \indepsigma x_{j+1}$ and $\phi_j,\phi_{j+1} \in \mathcal{O}_\S$, then
        \begin{equation}\begin{aligned}
            W_n^{(\Omega,\R)}&[\omega_1,\cdots,\omega_n;x_1,\cdots,x_n](\phi_1,\cdots,\phi_j,\phi_{j+1},\cdots,\phi_n) \\ &= \Tr\left[\Omega \left(\prod_{i=1}^{j-1} (\hat{\phi}_i)^\R_{\omega_i}(x_i)\right) (\hat{\phi}_j)^\R_{\omega_j}(x_j) (\hat{\phi}_{j+1})^\R_{\omega_{j+1}}(x_{j+1}) \left(\prod_{k=j+2}^n (\hat{\phi}_k)^\R_{\omega_k}(x_k)\right)\right] \\
            &=\Tr\left[\Omega \left(\prod_{i=1}^{j-1} (\hat{\phi}_i)^\R_{\omega_i}(x_i)\right) (\hat{\phi}_{j+1})^\R_{\omega_{j+1}}(x_{j+1})(\hat{\phi}_{j})^\R_{\omega_{j}}(x_j)  \left(\prod_{k=j+2}^n (\hat{\phi}_k)^\R_{\omega_k}(x_k)\right)\right] \\
            &= W_n^{(\Omega,\R)}[\omega_1,\cdots,\omega_{j-1},\omega_{j+1},\omega_j,\omega_{j+2},\cdots,\omega_n;x_1,\cdots,x_{j-1},x_{j+1},x_j,x_{j+2},\cdots,x_n] \nonumber \\& \qquad \qquad \qquad \qquad \qquad \qquad \qquad \qquad (\phi_1,\cdots,\phi_{j-1},\phi_{j+1},\phi_j,\phi_{j+2},\cdots,\phi_n) \, .
        \end{aligned}\end{equation}
        \end{enumerate}
\end{proof}

\subsubsection{Proof of Prop. \ref{prop:Positivity vevs}}

\label{proof of Positivity vevs}

\begin{proof}
    For all $(\phi_{lm})_{l \leq m = 1}^n \in \bhs$ and all $(\omega_{lm})_{l \leq m = 1}^n \in \bhs$, let $A_j := \prod_{p=1}^j \hat{\Phi}_{pj}^\R(\omega_{pj}) \in \bhs$. Then
        \begin{equation}
            \sum_{j,k = 1}^n \Wscr_{j+k}^{(\Omega,\R)}[\omega_{1j},\cdots,\omega_{jj},\omega_{1k},\cdots,\omega_{kk}](\phi_{jj}^\dagger,\cdots,\phi_{j1}^\dagger,\phi_{k1},\cdots,\phi_{kk}) = \sum_{j,k=1}^n \Tr[\Omega A_k^\dagger A_j] \, .
        \end{equation}
        We can write $\Omega \in \systemstate^{\Poincup}$ in terms of its eigendecomposition
        \begin{equation}
            \Omega = \sum_{\alpha=1}^\infty p_\alpha  \ket{\psi_\alpha}\bra{\psi_\alpha} \, . 
        \end{equation}
        Thus,
        \begin{equation}\begin{aligned}
            \sum_{j,k = 1}^n \Wscr_{j+k}^{(\Omega,\R)}[\omega_{1j},\cdots,\omega_{jj},\omega_{1k},\cdots,\omega_{kk}]&(\phi_{jj}^\dagger,\cdots,\phi_{j1}^\dagger,\phi_{k1},\cdots,\phi_{kk}) \\ &= \sum_{j,k=1}^n \Tr\left[\sum_{\alpha=1}^\infty p_\alpha \ket{\psi_\alpha}\bra{\psi_\alpha} A_k^\dagger A_j \right] \\
            &= \sum_{\alpha=1}^\infty p_\alpha \sum_{j,k=1}^n \expval{A_k^\dagger A_j}{\psi_\alpha} d\alpha
        \end{aligned}\end{equation}
        since the trace is continuous in the ultraweak topology. Let $\ket{v_\alpha} := \sum_{j=1}^n A_j \ket{\psi_\alpha}$. Then
        \begin{equation}
            \norm{v_\alpha}^2 = \braket{v_\alpha} = \braket{\sum_{k=0}^n A_k \psi_\alpha}{\sum_{j=0}^n A_j \psi_\alpha} = \sum_{j,k=1}^n \expval{A_k^\dagger A_j}{\psi_\alpha} = C \geq 0
        \end{equation}
        where $C \in \mathbb{R}^+$ as $A_j \in \bhs$. Thus,
        \begin{equation}
            \sum_{j,k = 1}^n \Wscr_{j+k}^{(\Omega,\R)}[\omega_{1j},\cdots,\omega_{jj},\omega_{1k},\cdots,\omega_{kk}](\phi_j^\dagger,\cdots,\phi_1^\dagger,\phi_1,\cdots,\phi_k) = \sum_{\alpha=1}^\infty p_\alpha C \geq 0
        \end{equation}
        as each $p_\alpha \geq 0$ by the positivity of $\Omega$.
\end{proof}

\subsection{Proof of Section \ref{subsec:cluster decomposition}}
\label{app:proofs cluster decomposition}

\subsubsection{Proof of Lem. \ref{lem:pure unique vacuum implies unique invariant vector}}

\label{proof pure unique vacuum implies unique invariant vector}

\begin{proof} Consider
    \begin{align}
        U(a,e) U(0,\Lambda(A)) \Omega U(0,\Lambda(A))^\dagger U(a,e)^\dagger &= U(0,\Lambda(A)) U(\Lambda(A)^{-1},e) \Omega U(\Lambda(A)^{-1},e)^\dagger U(0,\Lambda(A))^\dagger \\
        &= U(0,\Lambda(A)) \Omega U(0,\Lambda(A))^\dagger
    \end{align}
    i.e. $\Omega \in \systemstate^{\Lupd}$ i.e. $\Omega \in \vacstate$.

    If $\Omega$ is pure, we can write $\Omega = \ket{\Omega}\bra{\Omega}$ for some normalised $\ket{\Omega} \in \his$. From $a \cdot \Omega = \Omega$ we have $U(a,e) \ket{\Omega} = e^{i f(a)} \ket{\Omega}$ for some $f : \mink \to \mathbb{C}$. Since $U(\cdot,e)$ is an ultraweakly continuous unitary representation (thus strongly continuous) of $(T(1,d-1),+)$, $a \mapsto e^{i f(a)} = \braket{\Omega}{U(a,e) \Omega}$ is continuous. Moreover, $U(a+b,e) = U(a,e)U(b,e)$ so $e^{i f(a+b)} = e^{i (f(a) + f(b))}$ so since $f(0) = 0$ we have by continuity that $f(a+b) = f(a)+f(b)$ so $f$ is linear, i.e. $\exists q \in \mink$ such that $f(a) = q \cdot a$. Thus, $U(a,e) \ket{\Omega} = e^{i q \cdot a} \ket{\Omega}$ for some $q \in \mink$. But $\Omega = \Lambda(A) \cdot \Omega$ for all $A \in \Lupd$ so 
    \begin{equation}
        U(a,e) U(0,\Lambda(A)) \ket{\Omega} = U(0,\Lambda(A))U(\Lambda(A)^{-1}a,e) \ket{\Omega} = e^{i q \cdot (\Lambda(A)^{-1}a)} U(0,\Lambda) \ket{\Omega} = e^{i (\Lambda(A)q) \cdot a} U(0,\Lambda) \ket{\Omega}
    \end{equation}
    i.e. $U(0,\Lambda)\ket{\Omega}$ is a translation eigenvector with associated character $\Lambda(A) q$. But by uniqueness of $\Omega \in \vacstate$, $U(0,\Lambda(A)) \ket{\Omega}$ lies on the same ray as $\ket{\Omega}$, i.e. $q = \Lambda(A) q$ for all $A \in \Lupd$, i.e. $q=0$. Hence, $U(a,e) \ket{\Omega} = \ket{\Omega}$ for all $a \in T(1,d-1)$. The other direction of the equivalence is immediate.

\end{proof}

\subsubsection{Proof of Lem. \ref{lem:cluster}}

\label{proof cluster}

\begin{proof}
    \begin{enumerate}
        \item If $\his$ has the cluster property, then by Thm. \ref{thm:Cluster property HS} $\Omega$ is pure, is unique and is strongly translation-invariant. Moreover, take $\ket{\psi_0} \equiv \ket{\Omega}$, $\ket{\Phi} = \prod_{i=1}^j(\hat{\phi}_i)^\R_{\omega_i}(x_i)^\dagger \ket{\Omega}$ and $\ket{\Psi}=\prod_{i=j+1}^n (\hat{\phi}_i)^\R_{\omega_i}(x_i) \ket{\Omega}$ in Eqn. \eqref{eqn:cluster property HS}, and since
        \begin{multline}
            \left(\prod_{i=j+1}^n (\hat{\phi}_i)^\R_{(\lambda a, e) \cdot \omega_i}(x_i + \lambda a)\right) \ket{\Omega} = \left(\prod_{i=j+1}^n (\lambda a,e) \cdot (\hat{\phi}_i)^\R_{\omega_i}(x_i)\right)\ket{\Omega} \\ = (\lambda a,e) \cdot \left(\prod_{i=j+1}^n (\hat{\phi}_i)^\R_{\omega_i}(x_i)\right) \ket{\Omega} = U_\S(\lambda a,e)\prod_{i=j+1}^n (\hat{\phi}_i)^\R_{\omega_i}(x_i) \ket{\Omega} = U_\S(\lambda a,e) \ket{\Psi} \, ,
        \end{multline}
        the statement follows. \qed
        \item We have
        \begin{align}
            &\lim_{\lambda \to \infty} \Wscr_{n}^{(\Omega,\R)}[\omega_1,\cdots,\omega_j,(\lambda a,e) \cdot\omega_{j+1},\cdots,(\lambda a,e) \cdot \omega_n](\phi_1,\cdots,\phi_n) \nonumber \\ &= \lim_{\lambda \to \infty} \Wscr_{n}^{(\Omega,\R)}[\omega_1,\cdots,\omega_j,\omega_{j+1},\cdots,\omega_n]((\lambda a,e) \cdot \phi_1,\cdots,(\lambda a, e) \cdot \phi_n) \\
            &= \lim_{\lambda \to \infty} \idotsint_{\mink^n} \left(\prod_{i=1}^n d\mu^{\F_\R}_{\omega_i}(x) \right) \nonumber \\ &\qquad  \qquad W_{n}^{(\Omega,\R)}[\omega_1,\cdots,\omega_j,\omega_{j+1}, \cdots,\omega_n;x_1,\cdots,x_j,x_{j},\cdots,x_n]((\lambda a,e) \cdot \phi_1,\cdots,(\lambda a, e) \cdot \phi_n)  \\
            &=  \lim_{\lambda \to \infty} \idotsint_{\mink^n} \left(\prod_{i=1}^n d\mu^{\F_\R}_{\omega_i}(x) \right) \nonumber \\ & \quad W_{n}^{(\Omega,\R)}[\omega_1,\cdots,\omega_j,(\lambda a,e) \cdot \omega_{j+1}, \cdots,(\lambda a,e) \cdot\omega_n;x_1,\cdots,x_j,x_{j}+\lambda a,\cdots,x_n + \lambda a](\phi_1,\cdots,\phi_n)  \, .
        \end{align}
        Since the integrand is uniformly bounded, we can place the limit inside the integrals by Arzelà's bounded convergence theorem. Thus, assuming the strong relational cluster decomposition property, we have
        \begin{align}
            &\lim_{\lambda \to \infty} \Wscr_{n}^{(\Omega,\R)}[\omega_1,\cdots,\omega_j,(\lambda a,e) \cdot\omega_{j+1},\cdots,(\lambda a,e) \cdot \omega_n](\phi_1,\cdots,\phi_n) \\ &= \idotsint_{\mink^n}\left(\prod_{i=1}^n d\mu^{\F_\R}_{\omega_i}(x) \right) W_{j}^{(\Omega,\R)}[\omega_1,\cdots,\omega_j;x_1,\cdots,x_j](\phi_1,\cdots,\phi_j) \nonumber \\ & \qquad \qquad \qquad \qquad \qquad \qquad \qquad \qquad \cdot W_{n-j}^{(\Omega,\R)} [\omega_{j+1},\cdots,\omega_n;x_{j+1},\cdots,x_n](\phi_{j+1},\cdots,\phi_n) \\
            &= \Wscr_{j}^{(\Omega,\R)}[\omega_1,\cdots,\omega_j](\phi_1,\cdots,\phi_j) \cdot \Wscr_{n-j}^{(\Omega,\R)} [\omega_{j+1},\cdots,\omega_n](\phi_{j+1},\cdots,\phi_n)
        \end{align}
        which concludes the proof.
        \end{enumerate}
\end{proof}

\subsubsection{Proof of Thm. \ref{thm:Cluster properties}}

\label{proof Cluster properties}

\begin{proof} 
        We already know (from Lem. \ref{lem:cluster} and Thm. \ref{thm:Cluster property HS}) that $4. \Leftrightarrow 3. \Rightarrow 2. \Rightarrow 1.$ Thus, it is sufficient to show that $1. \Rightarrow 3.$, i.e. if $\ket{\Omega}$ is cyclic for $\mathscr{P}(\mathcal{O}_\S)_{\operationalstate}$ and $(\mathcal{O}_\S,\Omega,\operationalstate)$ satisfies the weak relational cluster decomposition property, then $\his$ has the cluster property. It is sufficient to show the cluster property for all $\ket{\Psi},\ket{\Phi} \in \hi$ with $\ket{\psi_0} \equiv \ket{\Omega}$. Since $\ket{\Omega}$ is cyclic for $\mathscr{P}(\mathcal{O}_{\S})_{\operationalstate}$, it follows that for all $\ket{\Psi} \in \his$ there exists a $k,l \in \mathbb{N}$ and $c_{ij} \in \mathbb{C}$ (without loss of generality take them to be nonzero) and $\phi_{ij} \in \mathcal{O}_\S$ and $\omega_{ij} \in \operationalstate$, $i=1,\cdots,k$ and $j = 1,\cdots,l$, such that $\sum_{i=1}^k \prod_{j=1}^l c_{ij} \hat{\Phi}^\R_{ij}(\omega_{ij}) \ket{\Omega} = \ket{\Psi}$. Likewise, for $\ket{\Psi} \in \his$ there exists a $\alpha,\beta \in \mathbb{N}$ and $d_{\mu \nu} \in \mathbb{C}$ (without loss of generality take them to be nonzero) and $\phi_{\mu \nu} \in \mathcal{O}_\S$ and $\omega_{\mu \nu} \in \operationalstate$, $\mu=1,\cdots,\alpha$ and $\nu = 1,\cdots,\beta$, such that $\sum_{\mu=1}^\alpha \prod_{\beta=1}^\nu d_{\mu\nu} \hat{\Phi}^\R_{\mu\nu}(\omega_{\mu\nu}) \ket{\Omega} = \ket{\Phi}$. Since $\mathcal{O}_\S$ is closed under adjoints $\exists \phi_{\mu\nu} \in \mathcal{O}_{\S}$ such that $\phi_{\mu\nu} = \psi_{\mu\nu}^\dagger$ for all $\mu,\nu$ and so since $\yen^\R_\omega$ is adjoint-preserving for all $\omega \in \framestate$, $\sum_{\mu=1}^\alpha \prod_{\nu=1}^\beta d_{\mu \nu} \hat{\Phi}^\R_{\mu\nu}(\omega_{\mu\nu})^\dagger \ket{\Omega} = \ket{\Phi}$. Then
        \begin{align}
            \lim_{\lambda \to \infty} \bra{\Phi}U_\S(\lambda a,e) \ket{\Psi} &= \lim_{\lambda \to \infty} \bra{\Omega} \left(\sum_{\mu=1}^\alpha \prod_{\nu=1}^\beta \overline{d_{\mu \nu}} \hat{\Phi}^\R_{\mu\nu}(\omega_{\mu\nu})\right) U_\S(\lambda a,e) \left(\sum_{i=1}^k \prod_{j=1}^l c_{ij} \hat{\Phi}^\R_{ij}(\omega_{ij})\right)\ket{\Omega} \\
            &= \lim_{\lambda \to \infty} \sum_{\mu=1}^\alpha \sum_{i=1}^k \bra{\Omega} \left( \prod_{\nu=1}^\beta \overline{d_{\mu \nu}} \hat{\Phi}^\R_{\mu\nu}(\omega_{\mu\nu})\right) (\lambda a,e) \cdot \left( \prod_{j=1}^l c_{ij} \hat{\Phi}^\R_{ij}(\omega_{ij})\right)\ket{\Omega} \\
            &= \lim_{\lambda \to \infty} \sum_{\mu=1}^\alpha \sum_{i=1}^k \bra{\Omega} \left( \prod_{\nu=1}^\beta \overline{d_{\mu \nu}} \hat{\Phi}^\R_{\mu\nu}(\omega_{\mu\nu})\right)  \left( \prod_{j=1}^l c_{ij} \hat{\Phi}^\R_{ij}((\lambda a,e)\cdot \omega_{ij})\right)\ket{\Omega} \, .
        \end{align}
        Let $\kappa_{\mu,\beta} := \prod_{\nu=1}^\beta \overline{d_{\mu\nu}}$ and $\eta_{i,l} := \prod_{j=1}^l c_{ij}$. Then
        \begin{align}
            \lim_{\lambda \to \infty} \bra{\Phi}U_\S(\lambda a,e) \ket{\Psi} &= \lim_{\lambda \to \infty} \sum_{\mu=1}^\alpha \sum_{i=1}^k \kappa_{\mu,\beta} \eta_{i,l} \bra{\Omega}\left(\prod_{\nu=1}^\beta \hat{\Phi}^\R_{\mu\nu}(\omega_{\mu\nu})\right) \left(\prod_{j=1}^l \hat{\Phi}_{ij}^\R((\lambda a,e) \cdot \omega_{ij})\right)\ket{\Omega} \\
            &= \lim_{\lambda \to \infty} \sum_{\mu=1}^\alpha \sum_{i=1}^k \kappa_{\mu,\beta} \eta_{i,l} \Wscr_{\beta+l}^{(\Omega,\R)}[\omega_{\mu 1},\cdots,\omega_{\mu \beta},(\lambda a,e) \cdot\omega_{i1},\cdots,(\lambda a,e) \cdot \omega_{il}] \nonumber \\ & \qquad \qquad \qquad \qquad \qquad \qquad \qquad(\phi_{\mu 1},\cdots,\phi_{\mu \beta},\phi_{i1},\cdots,\phi_{il}) \, .
        \end{align}
        Supposing $(\mathcal{O}_\S,\Omega,\operationalstate)$ satisfies the weak relational cluster decomposition property, this is
        \begin{align}
            \lim_{\lambda \to \infty} \bra{\Phi}U_\S(\lambda a,e) \ket{\Psi} &= \sum_{\mu=1}^\alpha \sum_{i=1}^k \kappa_{\mu,\beta} \eta_{i,l}\Wscr_{\beta}^{(\Omega,\R)}[\omega_{\mu 1},\cdots,\omega_{\mu \beta}](\phi_{\mu 1},\cdots,\phi_{\mu \beta}) \nonumber \\ & \qquad \qquad \qquad \qquad \qquad\cdot \Wscr_{l}^{(\Omega,\R)}[\omega_{i1},\cdots,\omega_{il}](\phi_{i1},\cdots,\phi_{il}) \\
            &= \sum_{\mu=1}^\alpha \sum_{i=1}^k \kappa_{\mu,\beta} \eta_{i,l} \expval{\left(\prod_{\nu=1}^\beta \hat{\Phi}^\R_{\mu\nu}(\omega_{\mu\nu})\right)}{\Omega} \expval{\left( \prod_{j=1}^l c_{ij} \hat{\Phi}^\R_{ij}(\omega_{ij})\right)}{\Omega} \\
            &= \expval{\left(\sum_{\mu=1}^\alpha \prod_{\nu=1}^\beta \overline{d_{\mu \nu}} \hat{\Phi}^\R_{\mu\nu}(\omega_{\mu\nu})\right)}{\Omega}\expval{\left(\sum_{i=1}^k \prod_{j=1}^l c_{ij} \hat{\Phi}^\R_{ij}(\omega_{ij})\right)}{\Omega} \\
            &= \braket{\Phi}{\Omega}\braket{\Omega}{\Psi}
        \end{align}
        which concludes the proof.
\end{proof}

\subsection{Proof of Lemma \ref{lem:time-ordered}}\label{proof 5.11}

\begin{proof}
    \begin{enumerate}
        \item This follows directly from Eqns. \eqref{def: time-ordering} and \eqref{def:Wightman kernels}.
        
        \item For all $\phi_1,\cdots,\phi_n \in \mathcal{O}_\S$, we have
        \begin{equation}\begin{aligned}
            \Delta_n^{(\Omega,\R)}&[\omega_1 \cdot (a,\Lambda),\cdots, \omega_n \cdot (a,\Lambda); x_1,\cdots,x_n](\phi_1,\cdots,\phi_n) \nonumber \\ &\stackrel{\eqref{eqn:time-ordered in terms of Wightman}}{=} \sum_{\sigma \in S_n} \left(\prod_{i=1}^{n-1} \Theta(\tau_{\sigma(i)} - \tau_{\sigma(i+1)})\right) \nonumber \\ & \qquad \qquad \qquad  W^{(\Omega,\R)}_n[\omega_{\sigma(1)} \cdot (a,\Lambda),\cdots,\omega_{\sigma(n)} \cdot (a,\Lambda);x_{\sigma(1)},\cdots,x_{\sigma(n)}](\phi_{\sigma(1)},\cdots,\phi_{\sigma(n)}) \\
            &\stackrel{\eqref{eqn:Poincare covariance Wightman}}{=} \sum_{\sigma \in S_n} \left(\prod_{i=1}^{n-1} \Theta(\tau_{\sigma(i)} - \tau_{\sigma(i+1)})\right)  \nonumber \\ & \qquad \qquad \qquad W^{(\Omega,\R)}_n[\omega_{\sigma(1)},\cdots,\omega_{\sigma(n)};\Lambda x_{\sigma(1)} + a,\cdots,\Lambda x_{\sigma(n)} + a](\phi_{\sigma(1)},\cdots,\phi_{\sigma(n)}) \\
            &= \Delta_n^{(\Omega,\R)}[\omega_1,\cdots, \omega_n; \Lambda x_1 + a,\cdots, \Lambda x_n+a](\phi_1,\cdots,\phi_n)
        \end{aligned}\end{equation}
        since $\Theta(\tau_i-\tau_j) \stackrel{(a,\Lambda)}{\mapsto} \Theta(\tau_i - \tau_j)$, which holds for all $\phi_1,\cdots,\phi_n$.

        \item We have
        \begin{align*}
\Delta_2^{(\Omega,\R)}&[\omega_1,\omega_2;x_1,x_2](\phi_1,\phi_2) \nonumber \\ &= \Tr\left[\Omega \, \T^\R_2[\omega_1,\omega_2;x_1,x_2](\phi_1,\phi_2)\right] \\
&= \Tr\left[\Omega \left( \Theta(\tau_{x_1} - \tau_{x_2}) (\hat{\phi}_1)^\R_{\omega_1}(x_1)(\hat{\phi}_2)_{\omega_2}^\R(x_2) + \Theta(\tau_{x_2} - \tau_{x_1}) (\hat{\phi}_2)^\R_{\omega_2}(x_2)(\hat{\phi}_1)^\R_{\omega_1}(x_1) \right)\right] \\
&= \Theta(\tau_{x_1} - \tau_{x_2}) \Tr\left[\Omega (\hat{\phi}_1)_{\omega_1}^\R(x_1)(\hat{\phi}_2)^\R_{\omega_2}(x_2)\right] + \Theta(\tau_{x_2} - \tau_{x_1}) \Tr\left[\Omega (\hat{\phi}_2)^\R_{\omega_2}(x_2)(\hat{\phi}_1)^\R_{\omega_1}(x_1)\right] \\
&= \Theta(\xi^0) W_2^{(\Omega,\R)}[\omega_1,\omega_2;x_1,x_2](\phi_1,\phi_2) + \Theta(-\xi^0) W_2^{(\Omega,\R)}[\omega_2,\omega_1;x_2,x_1](\phi_2,\phi_1) \, .
\end{align*}
    \end{enumerate}
\end{proof}

\subsection{Proof of Lem. \ref{lem: yenRchi properties}}
    \label{proof lem yenRchi properties}
    \begin{proof}
        \begin{enumerate}
        \item Let $a,b \in \mathbb{C}$ and $\chi_1,\chi_2 \in \thr$. First, note that $\forall X \in \Bor(F)$, $\mu^{\E_\R}_{a\chi_1 + b \chi_2}(X) = \Tr[(a\chi_1 + b \chi_2)\E_\R(X)] = a \Tr[\chi_1 \E_\R(X)] + b \Tr[\chi_2 \E_\R(X)] = a\mu^{\E_\R}_{\chi_1}(X) + b \mu^{\E_\R}_{\chi_2}(X)$. Thus, $\forall \phi \in \bhs$,
        \begin{multline}
            \yen^\R_{a\chi_1 + b \chi_2}(\phi) = \int_{\Poincup} (a,\Lambda) \cdot \phi \, d\mu^{\E_\R}_{a \chi_1 + b \chi_2}(g) = a\int_{\Poincup} (a,\Lambda) \cdot \phi \, d\mu^{\E_\R}_{\chi_1}(g) + b \int_{\Poincup} (a,\Lambda) \cdot \phi \, d\mu^{\E_\R}_{\chi_2}(g) \\ = a \yen^\R_{\chi_1}(\phi) + b \yen^{\R}_{\chi_2}(\phi) \, . 
        \end{multline}
        \qed
        \item Let $a,b \in \mathbb{C}$ and $\phi_1,\phi_2 \in \bhs$. The case where $a=0$ is trivial ($b$ (a scalar in $\mathbb{C}$) commutes with $U_\S(a,\Lambda)$ for every $(a,\Lambda) \in \Poincup$), so suppose $a \neq 0$. Then
        \begin{multline}
            \yen^\R_{\chi}(a \phi_1 + b \phi_2) = \int_G U_\S(a,\Lambda) (a \phi_1 + b \phi_2) U_\S(a,\Lambda)^\dagger \, d\mu^{\E_\R}_{\chi}(g) = a\int_G U_\S(a,\Lambda) (\phi_1 + \frac{b}{a} \phi_2) U_\S(a,\Lambda)^\dagger \, d\mu^{\E_\R}_{\chi}(g) \\ = a \left(\int_G U_\S(a,\Lambda) \phi_1 U_\S(a,\Lambda)^\dagger d\mu^{\E_\R}_{\chi}(g) + \frac{b}{a} \int_G U_\S(a,\Lambda) \phi_2 U_\S(a,\Lambda)^\dagger \, d\mu^{\E_\R}_{\chi}(g) \right) \\= a \int_G (a,\Lambda) \cdot \phi_1 \, d\mu^{\E_\R}_{\chi}(g) + b \int_G (a,\Lambda) \cdot \phi_2 \, d\mu^{\E_\R}_{\chi}(g)  = a \yen^\R_{\chi}(\phi_1) + b \yen^\R_\chi(\phi_2) \, .
        \end{multline}
        \qed
        \item Since $\yen^\R$ is unital, $\yen^\R(\mathbb{1}_{\bhs}) = \mathbb{1}_{\bhs} \otimes \mathbb{1}_{\bhr}$, so
        \begin{equation}
            \yen^\R_{\chi}(\mathbb{1}_{\bhs}) = \Gamma_{\chi}(\mathbb{1}_{\bhs} \otimes \mathbb{1}_{\bhr}) = \mathbb{1}_{\bhs} \omega_\chi(\mathbb{1}_{\bhr}) = \Tr[\chi] \mathbb{1}_{\bhs} \, ,
        \end{equation}
        so $\yen^\R_{\chi}$ is unital exactly when $\Tr[\chi] = 1$. \qed
        \item First, note that $\forall X \in \Bor(F)$, $\mu^{\E_\R}_{\chi^\dagger}(X) = \Tr[\chi^\dagger \E_\R(X)] = \overline{\Tr[\chi \E_\R(X)]} = \overline{\mu^{\E_\R}_{\chi}(X)}$. Moreover, $\forall \phi \in \bhs$ and $\forall (a,\Lambda) \in \Poincup$, $(U_\S(a,\Lambda) \phi U_\S(a,\Lambda)^\dagger)^\dagger = U_\S(a,\Lambda) \phi^\dagger U_\S(a,\Lambda)^\dagger$. Thus,
        \begin{equation}
            \yen^\R_\chi(\phi)^\dagger = \left(\int_G (a,\Lambda) \cdot \phi \, d\mu^{\E_\R}_{\chi}(g) \right)^\dagger = \int_G ((a,\Lambda) \cdot \phi)^\dagger \, d\overline{\mu^{\E_\R}_{\chi}(g)} = \int_G (a,\Lambda) \cdot (\phi^\dagger) \, d\mu^{\E_\R}_{\chi^\dagger}(g) = \yen^\R_{\chi^\dagger}(\phi^\dagger) \, . 
        \end{equation}
        In particular, when $\chi^\dagger = \chi$, $\yen^\R_{\chi}(\phi)^\dagger = \yen^\R_{\chi}(\phi^\dagger)$. \qed
        \item Note that $\yen^\R$ is completely positive \cite{loveridge_symmetry_2018}, and if $\chi \geq 0$ it follows that $\Gamma_\chi = \mathbb{1}_{\bhs} \otimes \omega_\chi$ is also completely positive. Thus, since the composition of two completely positive maps is positive, the result follows. \qed
        \item Let $\phi \in \ths$. Note that $\mu^{\E_\R}_{\chi}(F) = \Tr[\chi \E_\R(F)] = \Tr[\chi]$. Then 
        \begin{multline}
            \Tr[\yen^\R_{\chi}(\phi)] = \Tr\left[\int_G U_\S(a,\Lambda) \phi U_\S(a,\Lambda) ^\dagger \, d\mu^{\E_\R}_{\chi}(g) \right] = \int_G \Tr[U_\S(a,\Lambda) \phi U_\S(a,\Lambda)^\dagger] \, d\mu^{\E_\R}_{\chi}(g) \\ = \int_G \Tr[\phi] \, d\mu^{\E_\R}_\chi(g) = \Tr[\phi] \int_G \, d\mu^{\E_\R}_{\chi}(g) = \Tr[\phi] \cdot \Tr[\chi] \, .
        \end{multline}
        Thus, $\yen^\R_{\chi}$ is trace-preserving iff $\Tr[\chi] = 1$. \qed 
        \item A map $\bhs \to \bhs$ is normal iff it is ultraweakly continuous. Note that $\yen^\R$ is normal, and $\Gamma_\chi = \mathbb{1} \otimes \omega_\chi$ where $\omega_\chi$ is a normal functional on $\bhr$ for any trace-class $\chi \in \thr$, hence so is $\Gamma_\chi$. The composition of normal maps is normal, so $\yen^\R_\chi$ is normal for every $\chi \in \thr$.
        \item $\yen^\R_{\chi}$ is effect-preserving iff $\mathbb{0} \leq A \leq \mathbb{1} \Rightarrow \mathbb{0} \leq \yen^\R_\chi(A) \leq \mathbb{1}$. For a positive map $\Phi : \bhs \to \bhs$, one always has $\mathbb{0} \leq A \leq \mathbb{1} \Rightarrow \mathbb{0} \leq \Phi(A) \leq \Phi(\mathbb{1})$. Thus, since $\yen^\R_\chi$ is completely positive (hence positive) if $\chi \geq 0$ then $\yen^\R_\chi(\phi) \geq 0$. Moreover, by complete positivity, $\norm{\yen^\R_\chi} = \norm{\yen^\R_\chi(\mathbb{1})} = \Tr[\chi] \geq 0$, so for $\Tr[\chi] \leq1$ and $\phi \leq \mathbb{1}$, $0 \leq \yen^\R_\chi(\phi) \leq \yen^\R_\chi(\mathbb{1}) \leq 1$. \qed
        \item First, note that $\norm{\omega_\chi} = \norm{\chi}_1$ (trace norm), so $\Gamma_\chi$ has norm $\norm{\Gamma_\chi} = \norm{\chi}_1$. Moreover, $\yen^\R$ is a contraction \cite{loveridge_symmetry_2018} i.e. $\norm{\yen^\R} \leq 1$. Thus,
        \begin{equation}
            \norm{\yen^\R_\chi} \leq \norm{\Gamma_\chi} \norm{\yen^\R} \leq \norm{\chi}_1
        \end{equation}
        so $\yen^\R_\chi$ is bounded (and linear thus continuous). If $\chi \geq 0$ then $\norm{\yen^\R_\chi} = \norm{\yen^\R_\chi(\mathbb{1})} = \Tr[\chi] = \norm{\chi}_1$ since for positive trace-class $\chi$, $\norm{\chi}_1 = \Tr[\chi]$. \qed
        \item Since $\norm{\yen^\R_\chi} \leq \norm{\chi}_1$, it follows that if $\norm{\chi}_1 \leq 1$, $\norm{\yen^\R_\chi} \leq 1$. Moreover, if $\chi \geq 0$, then $\norm{\yen^\R_\chi} = \norm{\yen^\R_\chi(\mathbb{1})} = \Tr[\chi]$ i.e. $\yen^\R_\chi$ is a contraction iff $\Tr[\chi] \leq 1$.
    \end{enumerate}
    \end{proof}

\subsection{Proof of Thm. \ref{thm: R-ireducibility}}\label{proof 6.4}
    \begin{proof}
    Let $\tilde{\omega}_i \in \operationalstate$ and $\omega_i := a \cdot \tilde{\omega}_i \in \operationalstate$ (since $\operationalstate$ is closed under translations) where $i=1,\cdots,n \in \mathbb{N}$ for any $a \in T(1,d-1)$. Suppose $A \in \mathcal{O}_\S$ is such that $\comm{A}{\hat{\Phi}^\R(\tilde{\omega}_i)} = 0$ for all $\tilde{\omega}_i \in \operationalstate$. Then for all $\rho \in \systemstate$,
    \begin{equation}
        \Tr[\rho A \hat{\Phi}^\R(\tilde{\omega}_1) \cdots \hat{\Phi}^\R(\tilde{\omega}_n)] = \Tr[\rho \hat{\Phi}^\R(\tilde{\omega}_1) \cdots \hat{\Phi}^\R(\tilde{\omega}_n) A] \, .
    \end{equation}
    In particular, by the existence of a pure vacuum, we have
    \begin{equation}\begin{aligned}
    \expval{A \hat{\Phi}^\R(a \cdot \omega_1) \cdots \hat{\Phi}^\R(a \cdot \omega_n)}{\Omega} &= \expval{\hat{\Phi}^\R(a \cdot \omega_1) \cdots \hat{\Phi}^\R(a \cdot \omega_n) A}{\Omega} \\
    \stackrel{\ref{thm:full covariance}}{\Leftrightarrow} \expval{A a \cdot \hat{\Phi}^\R(\omega_1) \cdots a \cdot \hat{\Phi}^\R(\omega_n)}{\Omega} &= \expval{a \cdot \hat{\Phi}^\R(\omega_1) \cdots a \cdot \hat{\Phi}^\R(\omega_n) A}{\Omega} \\
    \Leftrightarrow \expval{A U_\S(a,e) \hat{\Phi}^\R(\omega_1) \cdots \hat{\Phi}^\R(\omega_n)}{\Omega} &= \expval{\hat{\Phi}^\R(\omega_1) \cdots \hat{\Phi}^\R(\omega_n) U_\S(-a,e) A}{\Omega} \\
    \Leftrightarrow \expval{A \E(X) \hat{\Phi}^\R(\omega_1) \cdots \hat{\Phi}^\R(\omega_n)}{\Omega} &= \expval{\hat{\Phi}^\R(\omega_1) \cdots \hat{\Phi}^\R(\omega_n) \E(-X) A}{\Omega}
    \end{aligned}\end{equation}
    where the last equation is obtained by Fourier transforming in $a$, where $\E$ is the spectral projection of the energy-momentum generators and $X \in \Bor(\mathbb{R}^d)$. But since the spectrum of energy-momentum lies in or on the plus cone (by the spectrum condition), if $-X$ lies in the physical spectrum and does not include $p=0$, the left-hand side vanishes, and \textit{vice-versa}. This means that $A \ket{\Omega}$ is orthogonal to all states $\E(-X) \hat{\Phi}^\R(\omega_n)^\dagger \cdots \hat{\Phi}^\R(\omega_1)^\dagger \ket{\Omega}$, which implies that $A \ket{\Omega} = c \ket{\Omega}$ for some $c \in \mathbb{C}$. Thus, for all $\eta \in \his$,
    \begin{equation}
        \bra{\eta}A \hat{\Phi}^\R(\omega_1) \cdots \hat{\Phi}^\R(\omega_n) \ket{\Omega} = \bra{\eta} \hat{\Phi}^\R(\omega_1) \cdots \hat{\Phi}^\R(\omega_n) A \ket{\Omega} = c \bra{\eta} \hat{\Phi}^\R(\omega_1) \cdots \hat{\Phi}^\R(\omega_n) \ket{\Omega}
    \end{equation}
    which, by $(\mathcal{O}_\S,\operationalstate)$-cyclicity, implies that $\forall \eta,\chi \in \his$,
    \begin{equation}
        \bra{\eta}A \ket{\chi} = c \braket{\eta}{\chi}
    \end{equation}
    so $A = c \mathbb{1}_{\mathcal{O}_\S}$, which concludes the proof.
\end{proof}    
\end{appendices}

\end{document}